\documentclass[authorcolumns,numberwithinsect,cleveref,thm-restate]{no-lipics-v2022}
\pdfoutput=1
\usepackage{amssymb}
\usepackage{mathtools}
\usepackage{bm}
\usepackage[draft]{fixme}
\usepackage{afterpage}

\usepackage[T1]{fontenc}
\usepackage[utf8]{inputenc}

\usepackage[numbers,square,comma,longnamesfirst]{natbib}

\usetikzlibrary{calc}

\fxsetup{mode=multiuser,theme=color, layout=inline}
\FXRegisterAuthor{tk}{atk}{\color{brown} {\bf Tomasz}}
\FXRegisterAuthor{ac}{aac}{\color{blue} {\bf Alejandro}}

\usetikzlibrary{arrows,arrows.meta,shapes.misc, decorations.pathreplacing,decorations.pathmorphing,calligraphy, patterns.meta}
\tikzset{dotmark/.style={circle,fill,inner sep=1.5pt}}
\tikzset{emptymark/.style={circle,draw,fill=white,inner sep=1.5pt}}
\tikzset{crossmark/.style={thick,inner sep=1.5pt}}

\SetKwBlock{Begin}{}{end}

\newcommand{\A}{\mathcal{A}}
\def\alm{\A}
\newcommand{\B}{\mathcal{B}}
\newcommand{\C}{\mathcal{C}}

\newcommand{\Oh}{\mathcal{O}}
\newcommand{\Ohtilde}{\widetilde{\Oh}}

\newcommand{\sm}{\setminus}
\newcommand{\sub}{\subseteq}

\def\twoheadleadsto{\tikz[baseline=(a.base)]{\draw[%
    decorate,decoration={zigzag,segment length=4, amplitude=.9},%
    ] (0,0) -- (.25, 0);%
    \draw[%
    -{Classical TikZ Rightarrow}.{Classical TikZ Rightarrow},%
    ] (.25, 0) -- (.4, 0);%
    \node (a) at (.4/2,-.03) {\phantom{\(\leadsto\)}};%
}}
\def\Twoheadleadsto{\tikz[baseline=(a.base)]{\draw[%
    decorate,decoration={zigzag,segment length=4, amplitude=.9},%
    ] (0,-.02) -- (.27, -.02);%
    \draw[%
    -{Classical TikZ Rightarrow}.{Classical TikZ Rightarrow},%
    ] (.35, -.0) -- (.4, -.0);%
    \draw[%
    decorate,decoration={zigzag,segment length=4, amplitude=.9},%
    ] (0,0.02) -- (.27, 0.02);%
    \node (a) at (.4/2,-.03) {\phantom{\(\leadsto\)}};%
}}

\newcommand{\onto}{\twoheadleadsto}%
\newcommand{\eqonto}{\Twoheadleadsto}%
\def\aonto#1{\onto}
\def\aontoeq#1{\eqonto}

\newcommand\Int{\mathbb Z}
\newcommand\Real{\mathbb R}

\newcommand{\floor}[1]{\lfloor #1 \rfloor}
\newcommand{\ceil}[1]{\lceil #1 \rceil}

\def\fragmentco#1#2{\bm{[}\,#1\,\bm{.\,.}\,#2\,\bm{)}}
\def\fragmentoc#1#2{\bm{(}\,#1\,\bm{.\,.}\,#2\,\bm{]}}
\def\fragmentoo#1#2{\bm{(}\,#1\,\bm{.\,.}\,#2\,\bm{)}}
\def\fragment#1#2{\bm{[}\,#1\,\bm{.\,.}\,#2\,\bm{]}}
\def\interval#1#2{\bm{[}\,#1\bm{,}#2\,\bm{]}}
\def\intervalco#1#2{\bm{[}\,#1\bm{,}#2\,\bm{)}}

\def\position#1{\bm{[}\,#1\,\bm{]}}

\def\problembox#1{%
    \vspace{2mm}%
    \noindent\fbox{%
    \begin{minipage}{.985\linewidth}%
        #1
    \end{minipage}%
    }%
    \vspace{2mm}%
}
\newcommand{\defproblem}[4]{%
    \problembox{%
        \textsc{#1}\\
        {\bf{Input:}} #2  \\
        {\bf{Update:}} #3\\
        {\bf{Query:}} #4
    }%
}

\newcommand{\defproblemalg}[3]{%
    \problembox{%
        \textsc{#1}\\
        {\bf{Input:}} #2  \\
        {\bf{Task:}} #3
    }%
}

\makeatletter
\renewenvironment{cases}{%
  \matrix@check\cases\env@cases
}{%
  \endarray\right.%
}
\def\env@cases{%
  \let\@ifnextchar\new@ifnextchar
  \left\lbrace
  \def\arraystretch{1.1}%
  \array{@{\;}c@{\quad}l@{}}%
}
\makeatother

\def\mid{\ensuremath :}

\def\emptyset{\varnothing}

\def\w#1#2{\ensuremath w(#1 \mapsto #2)}
\def\hw#1#2{\ensuremath \hat{w}(#1 \mapsto #2)}
\def\hwed{\ed^{\hat{w}}}

\def\wabctn{\ensuremath w_{A,B,C,\tau}}
\def\wabcttn{\ensuremath \hat{w}_{A,B,C,\tau}}
\def\wabn{\ensuremath w_{A,B}}
\def\wab#1#2{\ensuremath \wabn(#1 \mapsto #2)}

\def\wabct#1#2{\ensuremath \wabctn(#1 \mapsto #2)}
\def\wabctt#1#2{\ensuremath \wabcttn(#1 \mapsto #2)}
\def\sy{\ensuremath\Sigma_{Y}}
\def\sab{\ensuremath\Sigma_{A,B}}
\def\sabx{\ensuremath\Sigma_{A,B,X}}
\def\saibx#1{\ensuremath\Sigma_{A_{#1},B,X}}

\def\sabct{\ensuremath\Sigma_{A,B,C,\tau}}
\def\sabctx{\ensuremath\Sigma_{X,A,B,C,\tau}}
\def\sabcty{\ensuremath\Sigma_{Y,A,B,C,\tau}}
\def\aijl{\ensuremath\mathcal{A}_{i,j,\ell}}
\def\aial{\ensuremath\mathcal{A}^{\alpha}_{i,\ell}}
\def\aijla{\ensuremath\mathcal{A}^{\alpha,\bullet}_{i,\ell}}

\newcommand{\Xb}{X^{\bot}}
\newcommand{\hX}{\hat{X}}
\newcommand{\hY}{\hat{Y}}
\newcommand{\hk}{\hat{k}}

\newcommand\ed{\mathsf{ed}}
\newcommand\hd{\mathsf{hd}}
\def\weds#1{\ed^{w_{#1}}}
\newcommand\wed{\weds{\!}}

\newcommand\selfed{\mathsf{self}\text{-}\ed}
\newcommand\LZ{\mathrm{LZ}}

\def\emptystring{\ensuremath\varepsilon}
\def\wg{w_G}
\def\sca{\ensuremath\alpha}
\def\scb{\ensuremath\beta}
\def\scc{\ensuremath\gamma}
\def\smallconst{\ensuremath\delta}

\def\ptau{\ensuremath p_{\tau}}

\def\alg{\mathbb{A}}

\newcommand{\dist}{\mathsf{dist}}

\newcommand{\Als}{\mathbf{A}} %

\colorlet{highcol}{blue!80!green}
\colorlet{lowcol}{red!30!yellow!40!black}

\newcommand{\modelname}{\texttt{PILLAR}\xspace}
\def\modelname{{\tt PILLAR}\xspace}
\def\lceOp#1#2{{\tt LCP}(#1, #2)}
\def\lcbOp#1#2{{\tt LCP}^R(#1, #2)}

\def\accOp#1#2{#1\position{#2}}
\def\ipmOpName{{\tt IPM}\xspace}

\def\accOpName{{\tt Access}\xspace}
\def\extractOpName{{\tt Extract}\xspace}
\def\lenOpName{{\tt Length}\xspace}

\def\lceOpName{{\tt LCP}\xspace}

\newcommand{\AG}{\mathsf{AG}}

\newcommand{\apsp}{\textsc{apsp}\xspace}

\makeatletter
\newcommand{\pushright}[1]{\ifmeasuring@#1\else\omit\hfill$\displaystyle#1$\fi\ignorespaces}
\newcommand{\pushleft}[1]{\ifmeasuring@#1\else\omit$\displaystyle#1$\hfill\fi\ignorespaces}
\makeatother

\allowdisplaybreaks

\lineskiplimit=0pt

\title{Optimal Algorithms for Bounded Weighted Edit Distance}

\author{Alejandro Cassis}{Saarland University and\\Max Planck Institute for Informatics, SIC,\\Saarbrücken, Germany}{}{}{{This work is part of the project TIPEA that has received funding from the European Research Council (ERC) under the European Unions Horizon 2020 research and innovation programme (grant agreement No. 850979).}}
\author{Tomasz Kociumaka}{Max Planck Institute for Informatics, SIC,\\Saarbrücken, Germany}{}{https://orcid.org/0000-0002-2477-1702}{}
\author{Philip Wellnitz}{Max Planck Institute for Informatics, SIC,\\Saarbrücken, Germany}{}{https://orcid.org/0000-0002-6482-8478}{}

\authorrunning{A. Cassis, T. Kociumaka, and P. Wellnitz}
\acknowledgements{We thank Adam Karczmarz for his guide on the landscape of shortest paths algorithms in planar graphs.}

\nolinenumbers

\allowdisplaybreaks[2] %
\begin{document}
\maketitle
\begin{abstract}
The edit distance (also known as Levenshtein distance) of two strings is the minimum
number of insertions, deletions, and substitutions of characters needed to transform one string into the other.
The textbook dynamic-programming algorithm computes the edit distance of two length-\(n\) strings in $\Oh(n^2)$ time,
which is optimal up to subpolynomial factors and conditioned on the Strong Exponential Time Hypothesis ({\sc seth}).
An established way of circumventing this hardness is to consider the \emph{bounded} setting,
where the running time is parameterized by the edit distance \(k\).
A celebrated algorithm by Landau and Vishkin ({\sc jcss}'\oldstylenums{88})
achieves a running time of \(\Oh(n + k^2)\),
which is optimal as a function of $n$ and $k$ (again, up to subpolynmial factors and conditioned on {\sc seth}).

While the theory community thoroughly studied the Levenshtein distance, most practical
applications rely on a more general \emph{weighted} edit distance, where each edit has a
weight depending on its type and the involved characters from the alphabet $\Sigma$.
This is formalized through a weight function
\(w : \Sigma \cup \{\emptystring\} \times \Sigma \cup \{\emptystring\} \to \mathbb{R}\) normalized so that \(\w{a}{a} = 0\) for \(a \in \Sigma\cup \{\emptystring\}\)
and \(\w{a}{b} \geq 1\) for \(a,b \in \Sigma\cup\{\varepsilon\}\) with \(a \neq b\); the goal is to find an alignment of the two
strings minimizing the total weight of edits.
The classic $\Oh(n^2)$-time algorithm supports this setting seamlessly, but for many
decades just a straightforward $\Oh(nk)$-time solution was known for the bounded version of
the weighted edit distance problem.
Only very recently, Das, Gilbert, Hajiaghayi, Kociumaka, and Saha ({\sc
stoc}'\oldstylenums{23}) gave the first
non-trivial algorithm, achieving a time complexity of \(\Oh(n + k^5)\).
While this running time is linear for $k\le n^{1/5}$, it is still very
far from \(\Oh(n + k^2)\)---the bound achievable in the unweighted setting.
This is unsatisfactory, especially given the lack
of any compelling evidence that the weighted version is inherently harder.

In this paper, we essentially close this gap by showing both an improved \(\Ohtilde(n +
\sqrt{nk^3})\)-time algorithm and, more surprisingly, a matching lower bound:
Conditioned on the All-Pairs Shortest Paths (\apsp) hypothesis, the running time of our
solution is optimal for $\sqrt{n}\le k\le n$ (up to subpolynomial factors).
In particular, this is the first separation between the complexity of the weighted and
unweighted edit distance problems.

Just like the Landau--Vishkin algorithm, our algorithm can be adapted to a wide variety of
settings, such as when the input is given in a compressed representation.
This is because, independently of the string length $n$, our procedure takes
$\Ohtilde(k^3)$ time assuming that the equality of any two substrings can be tested in
$\Ohtilde(1)$ time.

Consistently with the previous work, our algorithm relies on the observation that strings
with a rich structure of low-weight alignments must contain highly repetitive substrings.
Nevertheless, achieving the optimal running time requires multiple new insights.
We capture the right notion of repetitiveness using a tailor-made compressibility measure
that we call \emph{self-edit distance}.
Our divide-and-conquer algorithm reduces the computation of weighted edit distance to
several subproblems involving substrings of small self-edit distance and, at the same
time, distributes the budget for edit weights among these subproblems.
We then exploit the repetitive structure of the underlying substrings using
state-of-the-art results for multiple-source shortest paths in planar graphs (Klein, {\sc
soda}'\oldstylenums{05}).

As a stepping stone for our conditional lower bound, we study a dynamic problem of
maintaining two strings subject to updates (substitutions of characters) and weighted edit
distance queries.
We significantly extend the construction of Abboud and Dahlgaard ({\sc
focs}'\oldstylenums{16}), originally for dynamic shortest paths in planar graphs, to show
that a sequence of $n$ updates and $q\le n$ queries cannot be handled much faster than in
$\Oh(n^2 \sqrt{q})$ time.
We then compose the snapshots of the dynamic strings to derive hardness of the static
problem in the bounded setting.
\end{abstract}

\clearpage

\section{Introduction}

The edit distance (also known as the \emph{Levenshtein distance}~\cite{Levenshtein66}) of two strings
is the minimum number of insertions, deletions, and substitutions of characters needed to transform one string into the other.
Computational problems that involve edit distance have been extensively investigated for decades.
The textbook dynamic-programming algorithm~\cite{Vintsyuk1968,NeedlemanW70,Sellers74,WagnerF74} computes the edit distance between two strings of lengths at most $n$ in time $\Oh(n^2)$. This is
optimal up to subpolynomial factors under the Orthogonal Vectors Hypothesis~(\textsc{ovh})~\cite{AbboudBW15,BringmannK15,AbboudHWW16,bi18}, which, in turn, follows from the Strong Exponential Time Hypothesis~(\textsc{seth})~\cite{Impagliazzo2001,Impagliazzo2001a}.
An established way to circumvent this hardness is to study the \emph{bounded setting}, where we measure the running time both by the length $n$ of the input strings and by their edit distance $k$.
For this setting, the celebrated algorithm of Landau and Vishkin~\cite{LandauV88}, building upon earlier results of Ukkonen~\cite{Ukkonen85} and Myers~\cite{Myers86}, runs in time $\Oh(n+k^2)$, which is linear for $k \leq \sqrt{n}$.
This algorithm was instrumental in many recent advancements in approximation and sublinear-time algorithms for edit distance~\cite{GoldenbergKS19,KociumakaS20},
and its running time is optimal as a function of $n$ and $k$ (again up to subpolynomial
factors under {\sc ovh}); see for instance~\cite{GKKS23}.

While the theoretical computer science community has extensively studied the Levenshtein distance, most real-world applications require the more general \emph{weighted} edit distance, where each edit is associated with a weight depending on its type and the involved characters from the alphabet $\Sigma$~\cite{Waterman95,KleinbergT05,Skiena08,FontanFFPA16};
see also \cite[Section 11.5.2]{Gusfield97} for a discussion of weight functions used in bioinformatics.
Consequently, the weighted edit distance has been studied since the earliest papers concerned with
edit distance~\cite{Sellers74,WagnerF74,Sellers80,Waterman95}.
The weighted edit distance is formalized through a weight function
$w : \Sigma \cup\{\emptystring\} \times \Sigma \cup \{\emptystring\} \to \Real$, which can be normalized so that $\w{a}{a}
= 0$ for $a \in \Sigma\cup \{\emptystring\}$ and $\w{a}{b} \geq 1$ for $a,b \in \Sigma\cup\{\varepsilon\}$ with $a \neq b$.
The underlying problem asks to align the two input strings while minimizing the total weight of the edits performed.
The standard dynamic-programming $\Oh(n^2)$-time algorithm supports this setting seamlessly, but for many decades just a straightforward $\Oh(nk)$-time solution was known for the bounded variant of weighted edit distance.
In particular, the algorithm by Landau and Vishkin does not extend to the weighted case.
Perhaps surprisingly, only recently Das, Gilbert, Hajiaghayi, Kociumaka, and
Saha~\cite{DGHKS23} obtained the first $\Oh(n + \mathrm{poly}(k))$-time
algorithm for weighted edit distance.
More precisely, their deterministic algorithm runs in time $\Oh(n + k^5)$---leaving open a wide gap between the unweighted and
weighted settings. In particular, the most fundamental separation question remained unsettled:

\begin{center}
    \emph{Is there an $\Oh(n + k^2)$-time algorithm for the Bounded Weighted Edit Distance problem?}
\end{center}

\subsection{Our Results}

In this work, we develop a significantly faster algorithm for Bounded Weighted Edit Distance and, on the way to proving its conditional optimality,
we answer the aforementioned question negatively:
\begin{itemize}
    \item first, we obtain a significantly improved deterministic algorithm
        that computes the  weighted edit distance in
        time $\Oh(n + \sqrt{nk^3}\log^3n)$; and
    \item second, assuming the All-Pairs Shortest Paths ({\apsp}) Hypothesis~\cite{Williams18}, we obtain
        an almost matching conditional lower bound.
\end{itemize}
Taken together, we thus settle the complexity of computing the weighted edit distance in the
bounded setting.
Our result also proves that the weighted variant of the edit distance problem is strictly harder than the unweighted one.

Formally, our main results read as follows:
\begin{restatable*}{mtheorem}{mthmalg}\label{mthm:algorithm}
   Given strings \(X\), \(Y\), each of length at most $n$,
   and oracle access to a normalized weight function~$w$,
   we can compute the value $k=\wed(X, Y)$ in time $\Oh(n + \sqrt{nk^3}\log^3 n)$.
\end{restatable*}

\begin{restatable*}{mtheorem}{mthmlb}\label{mthm:lb}
    Let $\smallconst > 0$ and $0.5 \le \kappa \le 1$ denote real constants.
    Assuming the \apsp Hypothesis, there is no algorithm that,
    given strings $X,Y$ of length at most $n$, a threshold $k\in \Real_{>0}$
    satisfying $k\le n^\kappa$, and oracle access to a normalized weight function $w$,
    decides $\wed(X,Y)\le k$ in time $\Oh(n^{0.5+1.5\kappa - \smallconst})$.

    No such algorithm exists even if the alphabet $\Sigma$ is of size at most $n^{\kappa}$ and the weights are rationals in $\interval{0}{2}$ with a common $\Oh(\log n)$-bit denominator.
\end{restatable*}

\subparagraph*{The \modelname{} Model and Improved Algorithms for Other Settings.}

Our algorithm can be adapted to work in a variety of settings, such as when the input is
given in a compressed representation.
For this purpose, we use the \modelname{} model, as introduced by Charalampopoulos,
Kociumaka, and Wellnitz~\cite{Charalampopoulos20}.
The \modelname{} model provides an interface to a set of primitive operations (the
\modelname{} operations), such as computing the length of the longest common prefix of two
substrings, that can be efficiently implemented in different settings.
Thus, by bounding the running time of an algorithm in terms of the number of \modelname{}
operations, we can obtain algorithms in any setting which implements the \modelname{}
operations efficiently.

We obtain the following result.

\begin{restatable*}{theorem}{mthmpillar}\label{mthm:pillar}
    Given distinct strings $X, Y$ of length at most $n$ and oracle access to a normalized weight function~$w$,
    we can compute the value $k=\wed(X, Y)$ in time $\Oh(k^3\log^2 n)$ in the \modelname{} model.
\end{restatable*}

\subsection{Related Work}

Edit distance has been studied extensively since the 1960s~\cite{Levenshtein66,Vintsyuk1968,NeedlemanW70,WagnerF74,MP80},
with the history of the bounded version reaching mid-1980s~\cite{Ukkonen85,Myers86,LandauV88}.
The quadratic hardness result conditioned on \textsc{ovh}~\cite{AbboudBW15,BringmannK15,AbboudHWW16,bi18}
naturally motivates the study of subquadratic approximation algorithms~\cite{AndoniKO10,AndoniO12,ChakrabortyDGKS18,KouckyS20,BrakensiekR20,GoldenbergRS20,BEGHS21}.
The current best approximation algorithm in almost linear time is due to Andoni and Nosatzki~\cite{AndoniN20}, achieving a constant-factor approximation in time $\Oh(n^{1+\smallconst})$ for any constant $\smallconst > 0$.
Sublinear-time approximations have also flourished in the last few years~\cite{Batu2003,AndoniO12,GoldenbergKS19,KociumakaS20,GKKS22,BCFN22}.
None of these approximation algorithms extend to weighted edit distance, where the current state of the art, due to Kuszmaul~\cite{Kuszmaul19}, is
an $\Oh(n^{\delta})$-approximation in time $\Oh(n^{2-\delta})$ for any $0 \le \delta \le 1$.

The (unweighted) bounded edit distance problem has also been studied in multiple other settings: there are efficient sketching \& streaming algorithms~\cite{BZ16,JNW21,KociumakaPS21,BK23},  algorithms for compressed strings~\cite{GaneshKLS22}, and algorithms for preprocessed data~\cite{GoldenbergRS20,BCFN22b}.
Recent works have also focused on the bounded version of two related problems: Dyck edit distance~\cite{BackursO16,FriedGKKPS22,D22} and tree edit
distance~\cite{AkmalJ21,DasGHKSS22}. The $\Oh(n+\operatorname{poly}(k))$-time weighted-edit-distance algorithm of~\cite{DGHKS23} generalizes to these two settings (with mild additional assumptions on the weight function, such as the triangle inequality), but the dependency on $k$ degrades to $\Oh(k^{12})$ and $\Oh(k^{15})$ respectively.

Most of the earlier theoretical results for weighted edit distance are limited to \emph{uniform} weight functions that assign one weight to all substitutions and another weight to all indels (insertions and deletions). When both weights are small integers, such a weighted edit distance can be embedded to the \emph{deletion distance}~\cite{abs-0707-3619} with cost-1 indels and cost-2 substitutions, and essentially all positive results can be lifted from the unweighted setting. Moreover, for any fixed weight function satisfying this assumption, the quadratic lower bound still applies~\cite{BringmannK15}. More recently~\cite{GKKS23}, the version with cost-$1$ substitutions and cost-$a$ indels (for \emph{superconstant} $a$) received some attention, resulting in conditionally optimal exact algorithms (also for the bounded setting) and sublinear-time approximation algorithms.
Edit distance with uniform weights has also been studied in compressed~\cite{HermelinLLW13,abs-0707-3619}
and dynamic~\cite{Charalampopoulos20a} settings.

\section{Technical Overview}\label{sec:technical-overview}
In this section, we give a high-level overview of our results.
\subsection{Improved Algorithms for Bounded Weighted Edit Distance}

Let us start by briefly reviewing the existing
algorithms.
First, recall the textbook dynamic-programming algorithm for edit distance that runs in
quadratic time and, in particular, its natural extension to the weighted edit distance.
It is instructive to view said algorithm in terms of the
\emph{alignment graph} of the input strings $X$, $Y$, and the normalized weight function
$w$ (consult~\cref{def:alignment-graph} for a formal definition).
The alignment graph is a directed acyclic graph formed by grid on the vertices
$\fragment{0}{|X|} \times \fragment{0}{|Y|}$ that is augmented with diagonal edges.

For intuition, visualize the vertex $(0, 0)$ in the top left corner of the grid
and a vertex $(x,y)$ in the $x$-th column and $y$-th row of the grid.
Now, we interpret
\begin{itemize}
    \item a horizontal edge $(x, y) \to (x+1, y)$
        as a deletion of $X\position{x}$
        and, accordingly, assign to it a weight of
        $\w{X\position{x}}{\varepsilon}$;
    \item a vertical edge $(x, y) \to (x,y+1)$
        as an insertion of $Y\position{y}$
        and, accordingly, assign to it a weight of
        $\w{\varepsilon}{Y\position{y}}$;
    \item a diagonal edge $(x, y) \to (x+1, y+1)$
        as a match or a substitution of $X\position{x}$ for $Y\position{y}$
        and, accordingly, assign to it a weight of
        $\w{X\position{x}}{Y\position{y}}$.
\end{itemize}
A \emph{diagonal} in the alignment graph is a path that starts at some vertex \((x,0)\) or
\((0,y)\) and proceeds along diagonal edges as long as possible.
The diagonal starting at \((0,0)\) is the \emph{main diagonal}.

Observe that the weighted edit distance between $X$ and $Y$ corresponds to distance from
$(0,0)$ to $(|X|, |Y|)$ in the alignment graph. Since the graph is acyclic, we can compute
this distance in time proportional to the graph size, which is $\Oh(n^2)$.

If we have an upper bound $k$ on the distance, we can easily improve the running
time to $\Oh(nk)$: for a normalized weight function, the weight of every horizontal and vertical
edge in the alignment graph is at least 1.
Hence, all vertices $(x,y)$ that are at a distance of at most $k$ from $(0,0)$ satisfy
$|x-y| \le k$.
Thus, we can restrict ourselves to the subgraph induced by the $k$ diagonals above
and below the main diagonal.
This yields the classic \(\Oh(nk)\) algorithm (see~\cref{prop:baseline-wed}).

Next, we give a brief description of the $\Oh(n + k^5)$-time algorithm by Das, Gilbert,
Hajiaghayi, Kociumaka, and Saha~\cite{DGHKS23}, which we use as a base for our algorithm.
To that end, fix two strings $X$ and $Y$ of length at most $n$ and a bound~$k$ on their
weighted edit distance.
We say that a string \(P\) has \emph{synchronized occurrences} at \((x,y)\) if
\(X\fragmentco{x}{x + |P|} = P = Y\fragmentco{y}{y + |P|}\) and \(|x-y| \le k\).
We then call \(X\fragmentco{x}{x + |P|}\) and \( Y\fragmentco{y}{y + |P|}\) \emph{synchronized fragments}.
The algorithm of~\cite{DGHKS23} proceeds in three steps:

\begin{itemize}
    \item \textbf{Step 1: Identify \boldmath $\Oh(k)$ maximal strings with disjoint synchronized occurrences.}
    We first use the algorithm by Landau and Vishkin~\cite{LandauV88} to check if the
    unweighted edit distance between \(X\) and \(Y\) is at most \(k\).
    If we find $\ed(X, Y) > k$, then certainly also $\wed(X, Y) > k$ too, and we
    are done.

    If we find $\ed(X, Y) \leq k$, then the Landau--Vishkin algorithm also
    returns an alignment of \(X\) to \(Y\) of \emph{unweighted} cost at most $k$.
    We use this alignment to decompose $X$ into at most $k$ individual characters
    (that are deleted or substituted) and at most $k+1$ (maximal) fragments that are matched
    perfectly.
    Since the alignment makes at most $k$ edits, each pair of fragments that is matched perfectly
    forms a synchronized occurrence.

    \item \textbf{Step 2: Shorten each synchronized fragment to  \(\bm{\Oh(k^3)}\).}
    Suppose that in the previous step, we found a substring \(P = X_1 = Y_1\) and its synchronized occurrences.
    We shorten \(P\) (and the corresponding fragments of \(X\) and \(Y\))
    to a length of \(\Oh(k^3)\) in two steps.
    \begin{enumerate}[a]
        \item \emph{Shorten repetitive regions to \(4k\) repetitions.}
            Suppose \(P\) contains a string \(Q^{\alpha}\)
            for some string \(Q\) of length at most \(2k\) and an integer \(\alpha \ge 4k\).
            Then, any alignment of cost at most \(k\) has to match perfectly at least one full
            \(Q\) in \(X_1\) to one full \(Q\) in \(Y_1\).
            Hence, without altering the weighted edit distance, we can remove one full \(Q\)
            from $P$ as well as \(X_1\) and \(Y_1\).

            We repeat the above reduction until every run of a string \(Q\) with \(|Q| \le
            2k\) has an exponent of at most \(4k\) and write \(P'\) for the resulting substring.
        \item \emph{Remove irrelevant middle part.}
            Suppose that $|P'| > 42k^3$ (otherwise, we skip this step),
            consider the length-$21k^3$ prefix of $P'$, and write \(X_2\) and \(Y_2\) for
            the corresponding synchronized fragments of \(X\) and~\(Y\), respectively.
            Observe that every alignment of cost of at most \(k\) has to match perfectly
            some fragment of \(X_2\) of length $10k^2$ to a fragment of~\(Y_2\); call the underlying
            substring \(R = X_2\fragmentco{x}{x + |R|} = Y_2\fragmentco{y}{y + |R|}\).
            Moreover, we note that \(x = y\), that is, the canonical occurrence of \(R\) in $X$ is
            matched perfectly to the canonical occurrence of
            $R$ in $Y$ (both induced by the synchronized occurrences of $P'$).
            This is because a perfect match with another occurrence would imply that $R$ has a period of length at most $2k$,
            contradicting the assumption that $P'$ has been processed by the previous sub-step 2a.
            A symmetric argument shows that the length-$21k^3$ suffix of $P'$ also contains a
            fragment $R'$ whose occurrences in $X$ and $Y$ (corresponding to the synchronized
            occurrences of $P'$) are matched perfectly.
            We conclude that every optimal \emph{weighted} alignment between $X$ and $Y$ must
            match perfectly everything between the occurrences of $R$ and $R'$.
            Consequently, we can trim $P'$ to
            $P'\fragmentco{0}{21k^3}\cdot P'\fragmentco{|P'|-21k^3}{|P'|}$.
    \end{enumerate}
    \item \textbf{Step 3: Run the \boldmath $\Oh(nk)$-time algorithm.}
    From the previous step, we obtain strings \(X'\) and \(Y'\) that have lengths of at most
    \(\Oh(k^4)\): both strings consist in
    at most \(\Oh(k)\) (maximal) synchronized fragments and each of those synchronized
    fragments has a length of at most \(\Oh(k^3)\).
    Further, as our transformations preserve the weighted edit distance, we can use the
    \(\Oh(nk)\) algorithm on \(X'\) and \(Y'\) to obtain the weighted edit distance between
    \(X\) and \(Y\)---this step takes time \(\Oh(k^5)\) and dominates the running time in
    terms of~\(k\), yielding a total running time of \(\Oh(n + k^5)\) (step 2 can be implemented in linear time).
\end{itemize}

Next, we describe the three new ideas that let us gradually improve upon the running time of~\cite{DGHKS23}.

\subparagraph*{Improvement 1: Processing Repetitive Regions Efficiently.}
Our first observation is that  Step~3 of~\cite{DGHKS23} does not utilize
any structural properties of the strings $X'$ and $Y'$.
Indeed, a closer look at Step~2b reveals that we can shrink a fragment \(P\) with synchronized
occurrences in \(X'\) and \(Y'\) even further
until it has the following structure: \(P\) consists in \(\Oh(k)\)
\emph{periodic pieces} of period \(\Oh(k)\) each.

Observe that Step~3 spends $\Oh(k^3)$ time on every such periodic piece:
Indeed, any periodic piece involves $\Oh(k^2)$ rows (by the reduction in Step~2a),
which induce $\Oh(k^3)$ vertices in the restricted alignment graph $G$
(that is, \(G\) restricted to vertices $(x,y)$ with $|x-y|\le k$).

We reduce the time
per periodic piece to $\Ohtilde(k^2)$, thereby shaving a factor of
$\widetilde{\Omega}(k)$.%
\footnote{We write $\Ohtilde(\cdot)$ and \(\widetilde\Omega(\cdot)\)
and \(\widetilde\Theta(\cdot)\) to hide polylogarithmic factors $(\log n)^{\Oh(1)}$.}
Fix a periodic piece $P$ with period $q =\Oh(k)$
and suppose that it has synchronized occurrences in $X$ and $Y$ at $(x^*, y^*)$.
Consider the subgraph $H$ of the (restricted) alignment graph that is
induced by the vertices $(x,y)$ with $x\in \fragment{x^*}{x^*+|P|}$.
For $i\in \fragment{0}{|P|}$, let $V_i$ denote the sequence of vertices in the $i$-th column
of $H$ (listed from top to bottom).
We may assume that we already know the distances from $(0,0)$ to the vertices in $V_{0}$.
Now, the task is to compute the distances from $(0,0)$ to the vertices in $V_{|P|}$.

Our crucial insight is as follows: the periodicity of $P$ implies that many bands of $H$ are isomorphic.
More precisely, if $i\in \fragmentco{2k}{|P|-2q-2k}$, then the graph between $V_{i}$ and
$V_{i+q}$ is isomorphic to the graph between $V_{i+q}$ and $V_{i+2q}$. This is because $q$
is a period of $P=X\fragmentco{x^*}{x^*+|P|}=Y\fragmentco{y^*}{y^*+|P|}$,
and thus also of both $X\fragmentco{x^*+i}{x^*+i+2q}$ and $Y\fragmentco{x^*+i-k}{x^*+i+2q+k}$.

Let $D$ denote a matrix consisting of the pairwise distances from $V_{2k}$ to $V_{2k+q}$.
By the above observation, for every integer $e$ satisfying $qe \in \fragment{0}{|P|-4k}$, the $e$-th power $D^e$
with respect to the min-plus product\footnote{The min-plus product of two $n\times n$
    matrices $A$ and $B$ is the $n\times n$ matrix $C$ defined as $C_{i,j} = \min_k
A_{i,k} + B_{k,j}$.} represents the pairwise distances from $V_{2k}$ to $V_{2k+qe}$.

At this point, we exploit that \(H\) is a directed planar graph and
apply known tools from the planar graph literature.
We can compute the matrix $D$ in $\Ohtilde(k^2)$ time using the multiple-source shortest
paths algorithm by Klein~\cite{Klein2005} (see~\cref{thm:klein}).
Moreover, we use the (well-known) fact that the matrix $D$ satisfies the Monge property
(see~\cref{fct:monge}) to conclude
that we can compute any power of $D$ in time $\Ohtilde(k^2)$ using the SMAWK algorithm~\cite{SMAWK87} (see~\cref{thm:smawk})
and the exponentiation by squaring method.

Specifically, we compute $D^e$ for $e=\lceil(|P|-4k)/q\rceil$.
In order to exploit the pairwise distances from $V_{2k}$ to $V_{2k+qe}$, we use the naive algorithm to compute
the vector of distances from $(0,0)$ to $V_{2k}$, then derive the distances from $(0,0)$ to $V_{2k+qe}$ as the min-plus product of that vector with $D^e$,
and finally use the naive algorithm again to determine the distances from $(0,0)$ to $V_{|P|}$.

In total, the above allows to process a periodic piece in time $\Ohtilde(k^2)$.
As each of the \(\Oh(k)\) synchronized occurrences has $\Oh(k)$ periodic pieces,
we obtain an $\Ohtilde(k^4)$-time algorithm.

\subparagraph*{Improvement 2: Divide and Conquer.}

Next, we effectively reduce the total number of periodic pieces to $\Oh(k)$, deriving a running time of
$\Ohtilde(k^3)$.
Intuitively, this is possible due to \emph{locality} of edit distance: we can optimally
align any position $x^*$ in $X$
based on the neighborhood of $x^*$ spanning $\Theta(k)$ periodic pieces in each direction.

Although aligning a single position $x^*$ in itself does not give us the weighted edit distance,
we can use this aligned position to partition the input into two independent subproblems.
The remaining challenge is to partition the budget $k$ between these two subproblems.
A natural attempt would be to use exponential search so that $d=\wed(X,Y)$ is always
computed in $\Ohtilde(d^3)$ time.
Unfortunately, a constant-factor overhead at each level of recursion would accumulate to a
polynomial overhead globally.
Thus, our goal is more modest: we wish to compute $d=\wed_{\le k}(X,Y)$ in $\Ohtilde(k^2 d)$ time.

Our divide-and-conquer approach hinges on a procedure that solves the following subproblem:
given a position $x^*\in \fragment{0}{|X|}$, determine
a position $y^*\in \fragment{0}{|Y|}$ such that
\[
    \wed(X,Y)=\wed(X\fragmentco{0}{x^*},Y\fragmentco{0}{y^*})+\wed(X\fragmentco{x^*}{|X|},Y\fragmentco{y^*}{|Y|}).
\]
We apply this procedure for $x^* \approx \frac12|X|$ in order to achieve
a logarithmic recursion depth.
In the algorithm, we focus on the maximal fragment $X^* \coloneqq X\fragmentco{\ell}{r}$
such that  both $X\fragmentco{\ell}{x^*}$
and $X\fragmentco{x^*}{r}$ can be partitioned into $\Theta(k)$ substrings of period $\Theta(k)$.

As alluded earlier, assuming that $\wed(X,Y)\le k$, we can find $y^*$ just by looking at $X^*$ and $Y^* := Y\fragmentco{\ell-k}{r+k}$.
Specifically, we consider the restricted alignment graph $G$
and compute the shortest path $P$ from the $\ell$-th column to the $r$-th column.

Our algorithm selects any $y^*$ such that  $(x^*,y^*)\in P$.
Its correctness relies on the fact that the globally optimal alignment (a path from
$(0,0)$ to $(|X|,|Y|)$ of cost at most~$k$)
must meet $P$ at least twice: once before $(x^*,y^*)$ and once after $(x^*,y^*)$;
otherwise, the existence of two disjoint paths would
imply that $X\fragmentco{\ell}{x^*}$ or $X\fragmentco{x^*}{r}$ can be partitioned into few
periodic substrings, contradicting the maximality in their definitions.
Since $P$ is optimal between the meeting points, we can reroute the global alignment to
follow $P$ there and, in particular, to contain $(x^*,y^*)$.

Thus, it remains to compute $P$.
Since $\wed(X,Y)\le \ed(X,Y)\le k$, not only can we partition $X^*$ into $\Oh(k)$ periodic pieces,
but if we disregard $\Oh(k)$ positions,
we can also partition $X^*$ into $\Oh(k)$ periodic pieces that have synchronized
occurrences in $Y$ (included in $Y^*$).
Using the techniques of Improvement~1, each of these pieces can be processed in
$\Ohtilde(k^2)$ time for a total of $\Ohtilde(k^3)$.

The remaining challenge is to improve this running time to $\Ohtilde(k^2 d)$. For this, we
do use exponential search, but, unlike in the infeasible approach sketched above, the
recursive calls are outside this exponential search, so the overheads do not accumulate
across the levels of recursion.
Specifically, we observe that our correctness argument only requires that the cost of $P$
is at most $k$ and that $P$ is the shortest path between its endpoints.
If $\wed(X,Y)\le d$, we are guaranteed to find a path of cost at most $d$ starting at one
of the vertices $(\ell,\ell-d),\ldots,(\ell,\ell+d)$
and ending at one of the vertices $(r,r-d),\ldots,(r,r+d)$. All the internal vertices
$(x,y)$ of such a path satisfy $|x-y|\le 2d$.
Hence, while computing $P$, we can restrict the graph $G$ further so that it contains
$\Oh(d)$ diagonals.
This lets us improve the processing time to $\Ohtilde(kd)$ per periodic piece: the naive
algorithm propagates distances over a single column in $\Oh(d)$ time, the subgraph for
which we run Klein's algorithm is of size $\Oh(kd)$, and resulting matrix is of size
$\Oh(d)\times \Oh(d)$.

\subparagraph*{Improvement 3: Self-Edit Distance and Decomposition into Boxes.}

In Improvement 2, we chose $X^*$ as a maximal fragment consisting of $\Oh(k)$ pieces with
period $\Oh(k)$,
and then we exploited the structural properties of $X^*$ to efficiently align it with the
corresponding fragment~$Y^*$.
While this was very helpful if $n = \omega(k^2)$, any string of length $n = \Oh(k^2)$ can
be decomposed into $\Oh(k)$ pieces of with periods $\Oh(k)$.
In that case, our sequence of improvements brings us back to square one---the task of
beating the classic $\Oh(nk)$-time algorithm on arbitrary strings. Our final idea is to
exploit stronger notions of repetitiveness.
As already observed in~\cite{KociumakaPS21}, if two strings have disjoint alignments of
cost at most $k$, then their \emph{Lempel--Ziv}~\cite{lz77} factorization consists of
$\Oh(k)$ phrases. Few strings of length $\Oh(k^2)$ satisfy this property, so, with $X^*$
redefined accordingly, we have fresh hope for further improvements.

In particular, since the edit distance problem has already been considered in the
compressed setting~\cite{HermelinLLW13,Gawrychowski2012, Tiskin2015,GaneshKLS22}, we can
leverage some existing ideas, such as the notion of \emph{box decompositions} of the
alignment graph, which proved useful for bounded edit distance~\cite{GaneshKLS22}.
Given a decomposition of $X^*$ and $Y^*$ into disjoint \emph{phrases} $X^* = X^*_1 \cdots
X^*_{p_X}$ and $Y^* = Y^*_1 \cdots Y^*_{p_Y}$, the box decomposition of the alignment
graph is its partition into $p_X \cdot p_Y$ boxes induced by the Cartesian products of the
phrases.
Thus, if we decompose $X^*$ (and $Y^*$) into $\Oh(n/\ell)$ phrases of length
$\Theta(\ell)$, this yields $\Oh(nd/\ell^2)$ boxes within the scope of the $\Oh(d)$
diagonals of our interest. The bound $\Oh(k)$ on the Lempel--Ziv factorization size
implies that we can construct partitions with $\Ohtilde(k)$ distinct phrases,
and thus a box decomposition with $\Ohtilde(k^2)$ distinct boxes~\cite{GaneshKLS22}.
With the box decomposition in hand, we can compute the shortest path $P$ from the leftmost
to the rightmost column using the techniques from Improvement 1.
Namely, if we compute the boundary-to-boundary distance matrix for each box, then we can
use one min-plus matrix-vector product per box: given a vector of distances to the
top-left boundary, we obtain distances to the bottom-right boundary.
Since the distance matrix satisfies the Monge property, the product can be computed in
$\Oh(\ell)$ time using the SMAWK algorithm even though the matrix takes $\Theta(\ell^2)$
space. On the other hand, the distance matrix construction using Klein's algorithm takes
$\Ohtilde(\ell^2)$ time, but it can be shared by all copies of every distinct box.
Overall, this yields  $\Ohtilde(nd/\ell + k^2 \ell^2)$ time; for our testbed of
$n=\Oh(k^2)$ and $d=\Oh(k)$, this is $\Ohtilde(k^{8/3})$---still slower than the promised
$\Ohtilde(k^{5/2})$.

Intuitively, the issue is that we have too many distinct boxes because the $\LZ$
factorization does not limit the distance between each phrase and its source: in the
resulting decomposition of $X^*$, subsequent copies of the same phrase might be
arbitrarily far apart.
To improve this, in~\cref{sec:alg:sec:self-alignments} we define a tailor-made
compressibility measure that we call \emph{self-edit distance}.
This measure, denoted by $\selfed(\cdot)$, is defined as the minimum cost of an unweighted
alignment between the string and itself, with the restriction that the alignment does not
contain any edge of the main diagonal.
In particular, we choose $X^*$ to be the maximal fragment $X^* \coloneqq
X\fragmentco{\ell}{r}$ such both $X\fragmentco{\ell}{x^*}$ and $X\fragmentco{x^*}{r}$
have $\selfed(\cdot) = \Oh(k)$.

As disjoint alignments imply small self-edit
distance (see~\cref{lem:compose-disj-alignments}),
the previous correctness argument remains valid.
Then, in~\cref{lem:decomp} we show that the updated definition allows us to partition $X^*$ (and $Y^*$) into
$\Oh(n/\ell)$ phrases of length $\Theta(\ell)$ where there are $\Oh(k)$ \emph{fresh
phrases}, which are distinct from $\Theta(k/\ell)$ previous phrases.
This is because matching fragments in the self-alignments yield phrases whose sources are
just $\Oh(k)$ positions behind.
Constructing the box decomposition based on this, we can obtain $\Oh(k^2/\ell)$ distinct
boxes, an improvement over the $\Oh(k^2)$ using $\LZ$ factorization.
Thus, following the approach outlined earlier, we obtain an algorithm with overall running time
$\Ohtilde(nd/\ell + k^2 \ell)$. Choosing $\ell = \widetilde\Theta((\sqrt{nd})/k)$ yields
our result.

\subsection{Conditional Lower Bounds Assuming the APSP Hypothesis}

\subparagraph*{Fine-Grained Lower Bounds and the All-Pairs Shortest Paths Hypothesis.}
Whereas the lower bounds for unweighted edit distance~\cite{bi18} follow from the Orthogonal Vectors Hypothesis~\cite{Williams2005}
(implied by the Strong Exponential Time Hypothesis~\cite{Impagliazzo2001,Impagliazzo2001a}),
the starting point of our negative results is the widely-believed
hypothesis on the All-Pairs Shortest Paths
(\apsp) problem:

\begin{restatable*}[{\sc apsp} Hypothesis \cite{Williams18}]{hypothesis}{apspconj}\label{apsp-conj}
    For every constant \(\smallconst > 0\), no \(\Oh(n^{3-\smallconst})\)-time algorithm solves the \apsp problem on graphs with \(n\) vertices and edge weights in \(\fragment{-n^{\Oh(1)}}{n^{\Oh(1)}}\).
\end{restatable*}

In this paper, we give a reduction from \apsp to (variants of) the Bounded Weighted Edit
Distance problem. Assuming~\cref{apsp-conj}, this reduction justifies calling our improved
algorithms optimal.

For our reduction, we in fact start with an unbalanced version of the \textsc{Negative Triangle} problem.
Its \apsp-hardness is a folklore corollary of the results of~\cite{Williams18}; see \cref{sec:finegrained} for more details.

\begin{restatable*}[{folklore; see~\cite{Williams18}}]{corollary}{cornegtriangle}\label{cor:negtriangle}
    Let \(0 < \sca, \scb, \scc \le 1\) and \(\smallconst > 0\) denote real constants.
    Unless the \apsp Hypothesis fails, there is no \(\Oh(N^{(\sca + \scb + \scc) - \smallconst})\)-time
    algorithm that, given a complete tripartite graph \(G = (P \cup Q \cup R, E,\wg)\) with \(|P|
    \le N^{\sca}\), \(|Q| \le N^{\scb}\), \(|R| \le N^{\scc}\), and weights in \(\fragment{-N^{\Oh(1)}}{N^{\Oh(1)}}\), decides
    if there exist vertices $a\in P$, $b\in Q$, and $c\in R$ such that $\wg(a,b)+\wg(b,c)+\wg(c,a)<0$.
\end{restatable*}

\subparagraph*{The Construction of~\cite{Abboud2016} for Dynamic Planar All-Pairs Shortest
Path.}

For our lower bounds, we proceed similarly to the lower bound for dynamic planar graph
algorithms by Abboud and Dahlgaard~\cite{Abboud2016}; similar ideas were also used to prove lower bounds for distance labeling in planar graphs~\cite{GavoillePPR04} and for dynamic variants of the Longest Increasing Subsequence
problem~\cite{Gawrychowski2021}. While the existing lower bounds and our new
lower bound all reduce from the All-Pairs Shortest Paths problem---and use a similar
general framework in doing so---each reduction requires a substantial amount of new ideas.
We start by reviewing the existing lower bounds of~\cite{Abboud2016} and then highlight
where and how our construction differs.

Among the results of~\cite{Abboud2016}, let us focus on that for Dynamic
Planar All-Pairs Shortest Paths.

\defproblem{Dynamic Planar All-Pairs Shortest Paths}%
{Planar graph \(G = (V, E, \wg)\)}%
{Add or remove a single edge or change the weight of a single edge}%
{Compute the shortest path distance between any pair of vertices of \(V\)}

\begin{figure}[t]
    \centering
    \begin{tikzpicture}
    \tikzset{vertex/.style = {fill,circle, minimum size=#1,
        inner sep=0pt, outer sep=1pt,line width=1pt},
        vertex/.default = 3pt %
    }
    \tikzset{svertex/.style = {fill=white,draw=red,circle, minimum size=#1,
        inner sep=0pt, outer sep=0.3pt,line width=1pt},
        svertex/.default = 3.4pt %
    }

    \tikzset{edge/.style={white, double=black, line width=.3pt, double distance=#1},
    edge/.default = 1pt}
    \tikzset{sedge/.style={red, double=white, line width=1pt, double distance=1.1pt}}
    \tikzset{hedge/.style={white, double=black!70, line width=.15pt, double distance=#1},
    hedge/.default = .6pt}
    \tikzset{shedge/.style={red!50, double=white, line width=.5pt, double distance=.7pt}}

    \pgfmathsetmacro{\dimp}{5}
    \pgfmathsetmacro{\dimq}{4}

    \pgfmathsetmacro{\selj}{2}
    \pgfmathsetmacro{\seli}{3}

    \pgfmathsetmacro{\dimppo}{int(\dimp+1)}
    \pgfmathsetmacro{\dimppt}{int(\dimp+2)}

    \foreach\j in {1,...,\dimq}{
        \pgfmathsetmacro{\jmo}{int(\j-1)}
        \ifthenelse{\j=\selj}{
            \node[svertex] (\dimppo;\j) at (\dimppo,-\j) {};
            \node[red,anchor=south] at (\dimppo,-\j+.05) {\footnotesize \(b_\jmo\)};
            \node[svertex] (m0;\j) at (\dimppt,-\j) {};
            \node[red,anchor=south] at (\dimppt,-\j+.05) {\footnotesize \(b'_\jmo\)};

            \draw[sedge,double=blue!80!green] (\dimppo;\j) -- (m0;\j);
        }{
            \node[vertex] (\dimppo;\j) at (\dimppo,-\j) {};
            \node[anchor=south] at (\dimppo,-\j+.05) {\footnotesize \(b_\jmo\)};
            \node[vertex] (m0;\j) at (\dimppt,-\j) {};
            \node[anchor=south] at (\dimppt,-\j+.05) {\footnotesize \(b'_\jmo\)};

            \draw[edge,double=blue!80!green] (\dimppo;\j) -- (m0;\j);
        }
        \node[anchor=south,outer sep = 0] at
        ($(\dimppo;\j)!.5!(m0;\j)+(-.0,.0)$)
        {\color{blue!80!green} \tiny \(B_{\jmo,\ell}\)};
    }
    \foreach\i in {1,...,\dimp}{
        \pgfmathsetmacro{\imo}{int(\i-1)}
        \ifthenelse{\i=\seli}{
            \node[svertex] (\i;0) at (\i,0) {};
            \node[red,anchor=south] at (\i+.05,0) {\footnotesize \(a_\imo\)};

            \node[svertex,opacity=.5] (m\i;0) at (\dimppt + \i,0) {};
            \node[red,anchor=south,opacity=.5] at (\dimppt +\i+.05,0) {\footnotesize
            \(a'_\imo\)};
        }{
            \node[vertex] (\i;0) at (\i,0) {};
            \node[anchor=south] at (\i+.05,0) {\footnotesize \(a_\imo\)};

            \node[vertex,opacity=.5] (m\i;0) at (\dimppt + \i,0) {};
            \node[anchor=south,opacity=.5] at (\dimppt + \i+.05,0) {\footnotesize
            \(a'_\imo\)};
        }

        \foreach\j in {1,...,\dimq}{
            \pgfmathsetmacro{\ijj}{int(((\i==\seli) && (\j<\selj)) || ((\i>\seli) &&
            (\j==\selj))))}
            \ifthenelse{\ijj=1}{
                \node[svertex] (\i;\j) at (\i,-\j) {};
            }{
                \node[vertex] (\i;\j) at (\i,-\j) {};
            }
            \pgfmathsetmacro{\ijj}{int(((\i==\seli) && (\j<=\selj)) || ((\i<=\seli) &&
            (\j==\selj))))}
            \ifthenelse{\ijj=1}{
                \node[svertex,opacity=.5] (m\i;\j) at (\dimppt + \i,-\j) {};
            }{
                \node[vertex,opacity=.5] (m\i;\j) at (\dimppt + \i,-\j) {};
            }
            \pgfmathsetmacro{\ijj}{int(((\i>=\seli) &&
            (\j==\selj))))}
            \ifthenelse{\ijj=1}{
                \node[svertex] (\i;\j;t) at (\i+.3,-\j) {};
            }{
                \node[vertex] (\i;\j;t) at (\i+.3,-\j) {};
            }

            \pgfmathsetmacro{\ijj}{int((\i==\seli) && (\j<=\selj))}
            \ifthenelse{\ijj=1}{
                \node[svertex] (\i;\j;s) at (\i,-\j+.3) {};
                \pgfmathsetmacro{\jmo}{int(\j-1)}
                \draw[sedge] (\i;\j;s) -- (\i;\jmo);

                \draw[shedge] (m\i;\j) -- (m\i;\jmo);
            }{
                \node[vertex] (\i;\j;s) at (\i,-\j+.3) {};
                \pgfmathsetmacro{\jmo}{int(\j-1)}
                \draw[edge] (\i;\j;s) -- (\i;\jmo);

                \draw[hedge] (m\i;\j) -- (m\i;\jmo);
            }
            \pgfmathsetmacro{\ijj}{int((\i==\seli) && (\j<\selj))}
            \ifthenelse{\ijj=1}{
                \draw[sedge] (\i;\j;s) -- (\i;\j);
            }{
                \draw[edge] (\i;\j;s) -- (\i;\j);
            }
            \pgfmathsetmacro{\ijj}{int((\i==\seli) && (\j==\selj))}
            \ifthenelse{\ijj=1}{
                \draw[sedge,double=blue!50!red] (\i;\j;s) -- (\i;\j;t);
            }{
                \draw[edge,double=blue!50!red] (\i;\j;s) -- (\i;\j;t);
            }

            \pgfmathsetmacro{\ijj}{int((\i>\seli) && (\j==\selj))}
            \ifthenelse{\ijj=1}{
                \draw[sedge] (\i;\j) -- (\i;\j;t);
            }{
                \draw[edge] (\i;\j) -- (\i;\j;t);
            }
            \pgfmathsetmacro{\imo}{int(\i-1)}
            \pgfmathsetmacro{\jmo}{int(\j-1)}

            \node[anchor=south west,outer sep = 0] at
            ($(\i;\j;s)!.5!(\i;\j;t)+(-.09,-.15)$)
            {\color{blue!50!red} \tiny \(A_{\imo, \jmo}\)};
        }
    }
    \foreach\i in {1,...,\dimp}{
        \foreach\j in {1,...,\dimq}{
            \pgfmathsetmacro{\ijj}{int((\i>=\seli) && (\j==\selj))}
            \ifthenelse{\ijj=1}{
                \pgfmathsetmacro{\ipo}{int(\i+1)}
                \draw[sedge] (\i;\j;t) -- (\ipo;\j);
            }{
                \pgfmathsetmacro{\ipo}{int(\i+1)}
                \draw[edge] (\i;\j;t) -- (\ipo;\j);
            }
            \pgfmathsetmacro{\ijj}{int((\i<=\seli) && (\j==\selj))}
            \ifthenelse{\ijj=1}{
                \pgfmathsetmacro{\imo}{int(\i-1)}
                \draw[shedge] (m\i;\j) -- (m\imo;\j);
            }{
                \pgfmathsetmacro{\imo}{int(\i-1)}
                \draw[hedge] (m\i;\j) -- (m\imo;\j);
            }
        }
    }
\end{tikzpicture}
    \caption{The key construction of~\cite{Abboud2016} for an \(p\times q\) matrix \(A\)
        and the \(\ell\)-th column of a \(q\times r\) matrix \(B\).
        Suitably chosen auxiliary costs
        ensure that the shortest path from \(a_i\) to \(a'_i\) via \(b_j\) has a cost of
        \((A_{i,j} + B_{j, \ell}) + C\) for a constant \(C\) that is independent of
        \(i\) and---crucially---independent of \(j\).
        The part to the right of the \(b'_{\star}\)'s ensures that the total weight is
        independent of \(j\) and
        carries only auxiliary weights.\\
        Using edge weight updates, we change the costs of the edges \((b_{j}, b'_{j})\) to
        those of a different column of \(B\).
        Using queries, we query the shortest path between \(a_i\) and \(a'_i\).
        In total, we thus need \(\Oh(qr)\) updates and \(\Oh(pr)\) queries to compute the
        min-plus product of \(A\) and \(B\), which yields the lower bound
        of~\cite{Abboud2016}.}\label{fig:4-2-1}
\end{figure}

Central to the reduction of~\cite{Abboud2016} is a clever grid
construction that is used to compute the min-plus matrix-vector product of a \(p \times q\)
matrix \(A\) and a column \(B_{\ell}\) of another \(q\times r\) matrix \(B\);
see \cref{fig:4-2-1}.

Intuitively, the authors of~\cite{Abboud2016} create a \(p\) by \(q\) grid
and add a ``shortcut'' with weight \(A_{i,j}\) to every grid vertex
\(v_{i,j}\). Auxiliary costs ensure that the shortest path from any vertex \(a_i\) of the topmost
row to any vertex \(b_j\) in the rightmost column takes exactly one shortcut, namely the
shortcut at vertex \(v_{i,j}\). Then, they add an extra edge of weight
\(B_{j,\ell}\) from each of the vertices \(b_j\) to a fresh vertex \(b'_j\).

This construction ensures that the shortest path from \(a_i\) to \(b'_j\) has a
weight of \(A_{i,j} + B_{j, \ell}\) plus auxiliary costs.
To stabilize the auxiliary costs, the authors add another (mirrored) copy of the original
grid but without any shortcuts (introducing vertices \(a'_i\) in the process).

Altogether, their construction ensures that the distance from \(a_i\) to \(a'_i\)
is \(\min_{j}(A_{i,j} + B_{j,\ell})\). Hence, using \(p\) queries,
\cite{Abboud2016} can obtain the min-plus matrix-vector product of \(A\) and \(B_{\ell}\).
Now, using \(q\) of the edge weight update operations (alternatively, by deleting and inserting $q$
edges), the authors replace the column vector \(B_{\ell}\) by the next column vector
\(B_{\ell + 1}\).
In total, the reduction uses \(pr\) queries and \(qr\) updates to compute the min-plus
product of \(A\) and \(B\);
choosing \(p\), \(q\), and \(r\) appropriately then yields the desired lower bound (since Min-Plus Product is equivalent to \apsp~\cite{Williams18}).

\subparagraph*{Batched Weighted Edit Distance Problem and its Hardness.}

As in~\cite{Abboud2016}, we wish to use a dynamic variant of the (bounded) weighted
edit distance to compute the min-plus product of two matrices.
Formally, we introduce the following problem, which we call Batched Weighted Edit
Distance.

\defproblemalg{Batched Weighted Edit Distance}{
    An alphabet $\Sigma$,
    a weight function $w$ on $\Sigma\cup\{\emptystring\}$,\\
    a~sequence~of~\(m\)~strings $X_0,\ldots,X_{m-1}\in \Sigma^x$ (a~\emph{batch}),
    a~string~$Y\in \Sigma^y$,
    and a threshold $k \in \Real_{\ge 0}$.
}{
    Decide if $\min\{ \wed(X_i,Y) \mid i \in \fragmentco{0}{m} \}\le k$.
}

Given matrices \(A \in \fragment{-E}{E}^{p\times q}\) and \(B \in \fragment{-E}{E}^{q
\times r}\) (for \(n \coloneqq \max\{p,q, r\}\) and \(E = n^{\Oh(1)}\)),
we wish to construct a string \(Y\), a string \(X_{\ell}\) for each column
\(B_{\ell}\) of \(B\), as well as a weight function $w$ such that
\begin{enumerate}
    \item the strings are of length \(\Oh(p + q)\);
    \item the alphabet is of size \(\Oh(p + q)\) (and hence the weight function has \(\Oh((p+q)^2)\) values);
    \item the distance \(\wed(X_{\ell},Y)\) corresponds to (an entry)
        of the matrix-vector min-plus product of \(A\) and \(B_{\ell}\);
    \item few substitutions in $X_\ell$ are sufficient to swap one column of \(B\) for a different
        column of \(B\).
\end{enumerate}
Observe that, in order to encode the matrix \(A\), it makes sense to rely on substitution costs:
for two strings of length \(n\), our weight function may have at most \(n\) different
values for deletion or insertions of characters, but up to \(n^2\) different values for
substitution costs. In fact, we may even assume that deleting characters in either string
costs the same for every character---for simplicity, let us assume that deletions in
\(X_{\ell}\) (which we construct to be shorter than \(Y\)) cost \(\infty\), meaning that
they are not possible, and that deletions in \(Y\) cost \(0\), meaning that they are free.
Further, we may assume that we never match a character
of any \(X_\ell\) to a character of \(Y\) by choosing \(X_\ell\) and \(Y\) to contain disjoint
characters.

\begin{figure}[p]
    \centering
    \begin{tikzpicture}[scale=1.2,transform shape]
    \tikzset{vertex/.style = {fill,circle, minimum size=#1,
        inner sep=0pt, outer sep=1pt,line width=1pt},
        vertex/.default = 2pt %
    }
    \tikzset{svertex/.style = {fill=white,draw=red,circle, minimum size=#1,
        inner sep=0pt, outer sep=0.3pt,line width=1pt},
        svertex/.default = 2.4pt %
    }

    \tikzset{htedge/.style={black!50, line width=.3pt, -{Latex[length=1mm,
        width=.5mm]}}}
    \tikzset{tedge/.style={black!50, line width=.5pt, -{Latex[length=1.4mm,
        width=.8mm]}}}
    \tikzset{pedge/.style={blue!50!red, line width=.5pt, -{Latex[length=1.4mm,
        width=.8mm]}}}
    \tikzset{bedge/.style={blue!80!green, line width=.5pt, -{Latex[length=1.4mm,
        width=.8mm]}}}

    \tikzset{redge/.style={red, line width=1pt, -{Latex[length=1.6mm,
        width=1mm]}}}

    \pgfmathsetmacro{\dimp}{5}
    \pgfmathsetmacro{\dimq}{4}

    \pgfmathsetmacro{\seli}{2}
    \pgfmathsetmacro{\selj}{1}

    \pgfmathsetmacro{\lenx}{int(2 * \dimp + 1)}
    \pgfmathsetmacro{\leny}{int(2 * \dimp+ \dimq)}

    \pgfmathsetmacro{\selipj}{int(\seli + \selj)}
    \pgfmathsetmacro{\selqmipj}{int(2 *\dimp-\seli +\selj + 2)}

    \pgfmathsetmacro{\dimqmo}{int(\dimq-1)}
    \pgfmathsetmacro{\dimpmo}{int(\dimp-1)}
    \pgfmathsetmacro{\dimppo}{int(\dimp+1)}
    \pgfmathsetmacro{\dimppt}{int(\dimp+2)}

    \foreach\i in {1,...,\lenx}{
        \pgfmathsetmacro{\imo}{int(\i-1)}
        \pgfmathsetmacro{\sele}{int((\seli<\i) && (\lenx-\seli>=\i)}
        \ifthenelse{\sele=1}{
            \node[red] at (\i - .5, 0.5) {\footnotesize \(X\position{\imo}\)};
        }{
            \node at (\i - .5, 0.5) {\footnotesize \(X\position{\imo}\)};
        }
    }

    \foreach\j in {1,...,\leny}{
        \pgfmathsetmacro{\jmo}{int(\j-1)}
        \node at (-0.5, -\j + .5) {\footnotesize \(Y\position{\jmo}\)};
    }

    \foreach\i in {0,...,\lenx}{
        \foreach\j in {0,...,\leny}{
            \node[vertex] (x\i-y\j) at (\i,-\j) {};
        }
    }
    \ifthenelse{\selipj>0}{
    \foreach\j in {1,...,\selipj}{
        \pgfmathsetmacro{\jmo}{int(\j-1)}
        \draw[redge] (x\seli-y\jmo) -- (x\seli-y\j);
    }
    }{}
    \ifthenelse{\selqmipj<\leny}{
    \foreach\j in {\selqmipj,...,\leny}{
        \pgfmathsetmacro{\jmo}{int(\j-1)}
        \pgfmathsetmacro{\selit}{int(\lenx-\seli)}
        \draw[redge] (x\selit-y\jmo) -- (x\selit-y\j);
    }}{}
    \foreach\i in {0,...,\lenx}{
        \foreach\j in {1,...,\leny}{
            \pgfmathsetmacro{\jmo}{int(\j-1)}
            \draw[htedge] (x\i-y\jmo) -- (x\i-y\j);
        }
    }
    \foreach\i in {1,...,\lenx}{
        \pgfmathsetmacro{\ipdimq}{int(min(\leny,\i+\dimq - 1))}
        \foreach\j in {\i,...,\ipdimq}{
            \pgfmathsetmacro{\imo}{int(\i-1)}
            \pgfmathsetmacro{\imt}{int(\i-2)}
            \pgfmathsetmacro{\ipo}{int(\i+1)}
            \pgfmathsetmacro{\jmi}{int(\j-\i)}
            \pgfmathsetmacro{\jmo}{int(\j-1)}
            \pgfmathsetmacro{\jpo}{int(\j+1)}
            \pgfmathsetmacro{\jpi}{int(\j-\i+1)}
            \pgfmathsetmacro{\qmj}{int(\dimq + \i-\j)}

            \pgfmathsetmacro{\sele}{int((\selj==\jmi) && (\seli<\i) && (\lenx-\seli>=\i)}
            \ifthenelse{\sele=1}{
                \draw[redge] (x\imo-y\jmo)
                --
                (x\i-y\j);
            }{}

            \ifthenelse{\i<\dimp}{
                \draw[pedge] (x\imo-y\jmo)
                to
                node[above, sloped,scale=.8,transform shape]
                {\tiny\color{blue!50!purple}\(A_{\imo,\jmi} - A_{\i,\jmi}\qquad\)}
                node[below, sloped,scale=.8,transform shape]
                {\tiny\color{black!70}\(\qmj D\)}
                (x\i-y\j);
            }{}
            \ifthenelse{\i=\dimp}{
                \draw[pedge] (x\imo-y\jmo)
                to
                node[above, sloped,scale=.8,transform shape]
                {\tiny\color{blue!50!purple}\(A_{\imo,\jmi}\qquad\)}
                node[below, sloped,scale=.8,transform shape]
                {\tiny\color{black!70}\(\qmj D\)}
                (x\i-y\j);
            }{}
            \ifthenelse{\imo=\dimp}{
                \draw[bedge] (x\imo-y\jmo)
                to node[
                above,sloped,scale=.8,transform shape]
                {\tiny\color{blue!80!green}\(B_{\jmi,\ell}\qquad\)}
                (x\i-y\j);
            }{}
            \ifthenelse{\imo>\dimp}{
                \draw[tedge] (x\imo-y\jmo) to
                node[below, sloped,scale=.8,transform shape]
                {\tiny\color{black!70}\(\jpi D\)}
                (x\i-y\j);
            }{}
        }
    }
\end{tikzpicture}
    \caption{The basic min-plus matrix-vector multiplication gadget for a \(p\times q\)
        matrix \(A\) and a column vector \(B_{\ell}\) of a \(q\times r\) matrix \(B\).
        Deletions in \(Y\) are assumed to be free; we omit transitions of infinite costs.
        The strings \(X\) and \(Y\) consist of disjoint, pairwise distinct characters and
        have lengths \(|X| = 2p + 1\) and \(2p + q\).
        We have \(\wed(X\fragmentco{i}{|X|-i}, Y) = \min_j(A_{i,j} + B_{j,\ell}) +
        (p-i)(q + 1)D\),
        where \(D\) is a constant much larger than any potential cost incurred by
        \(A_{\star}\) or \(B_{\star}\) costs.
        By replacing a single character in \(X\) (namely \(X\position{p}\)), we can switch
        out the vector \(B_{\ell}\) for a different vector.
    }\label{fig:4-2-2}
\end{figure}

Now, in a first attempt to implement the idea from~\cite{Abboud2016}, we proceed as
follows. We start with a string \(X\) of length \(2p + 1\) and a string \(Y\) of length
\(2p + q\). With the intuition that the alignment \(X\fragmentco{i}{|X|-i} \onto Y\)
that ``takes the \(j\)-th diagonal'' should have a cost of \(A_{i,j} + B_{j,\ell}\), we
assign weights for substitutions as follows (also consult~\cref{fig:4-2-2}), where we set \(A_{p,j} \coloneqq 0\) to simplify our exposition:
\begin{alignat*}{3}
    &\w{X\position{i}}{Y\position{i + j}} &&\coloneqq A_{i, j} - A_{i + 1,j},\quad
    && \forall i \in \fragmentco{0}{p}, \forall j \in \fragmentco{0}{q}\\
    &\w{X\position{p}}{Y\position{p + j}} &&\coloneqq B_{j, \ell},
    && \forall j \in \fragmentco{0}{q}\\
    &\w{X\position{p + i}}{Y\position{p + i + j}} &&\coloneqq 0,
    && \forall i \in \fragment{1}{p}, \forall j \in \fragmentco{0}{q}\\[1.5ex]
    &\w{a}{b} &&\coloneqq \infty,
    && \text{otherwise}.
\end{alignat*}

Observe that we have to use a telescoping sum for the weights on the diagonal because we cannot
insert ``shortcuts'' as in~\cite{Abboud2016}. Using telescoping sums complicates things:
the above weights do not guarantee yet that we would follow just one diagonal.
Hence, we add auxiliary costs to the diagonals as follows (again consult \cref{fig:4-2-2})
\begin{alignat*}{3}
    &\w{X\position{i}}{Y\position{i + j}}  &&\coloneqq A_{i, j} - A_{i + 1,j} + (q - j) D,\quad
    && \forall i \in \fragmentco{0}{p}, \forall j \in \fragmentco{0}{q}\\
    &\w{X\position{p}}{Y\position{p + j}} &&\coloneqq B_{j, \ell},
    && \forall j \in \fragmentco{0}{q}\\
    &\w{X\position{p + i}}{Y\position{p + i + j}} &&\coloneqq (j + 1) D,
    && \forall i \in \fragment{1}{p}, \forall j \in \fragmentco{0}{q}\\[1.5ex]
    &\w{a}{b} &&\coloneqq \infty,
    && \text{otherwise}.
\end{alignat*}
Observe that by following exactly one diagonal, we pay a cost of \((p-i)(q + 1)D\); using
more than one diagonal incurs a cost of at least \(((p-i)(q + 1) + 1)D\).
Hence, we indeed obtain that
\[\wed(X\fragmentco{i}{|X|-i}, Y) = \min_{j}(A_{i,j} + B_{j,
    \ell}) + (p-i)(q + 1)D.
\]
Now, while the previous construction is indeed useful, we are not done yet for several
reasons.
\begin{itemize}
    \item Ultimately, we need to create instances that ask to compute the weighted edit
        distance between two full strings, and not fragments of them. At the cost of
        increasing the number of strings \(X\) by a factor of \(\Oh(p)\),
        this problem can be
        fixed by adding a ``selector'' character to the beginning and end of \(X\) that
        ``disables'' suitable prefixes and suffixes of (the original) \(X\).
    \item Currently, our construction results in \(\Oh(pr)\) strings \(X\) (we replace
        the character \(X\position{p}\) once for each of the \(r\) columns of \(B\); for
        each possible \(X\position{p}\), we have to enable each possible row of \(A\)
        once). Hence, our instances have a total size of \(\Oh(pqr)\)---however,
        multiplying \(A\) and \(B\) naively also takes time \(\Oh(pqr)\).
\end{itemize}
To circumvent the second problem, we need to do more work. First, we observe that we can
extend the above construction to yield the min-plus product of \emph{three} matrices,
or---to be precise---the min-plus product of \(A\), a column vector of \(B\), and a row
vector of a third matrix \(C \in \fragment{-E}{E}^{r\times p}\).

While computing the full result would be too costly, we can get away with just computing the minimum entry of the
resulting matrix: said value is exactly the minimum weight of a triangle in the
complete tripartite graph \(G = (P\cup Q\cup R, E, \wg)\), where \(A\), \(B\), and \(C\)
describe the weights of the edges between \(P\) and \(Q\), between \(Q\) and \(R\), and
between \(R\) and \(P\), respectively. Seen through the lens of Negative Triangle, our
(improved) gadget computes the minimum weight of a triangle that uses
fixed vertices in \(P\) and \(R\).

Now, the threshold and the task of the Batched Weighted Edit Distance problem make
sense: as we are interested only in checking whether the minimum weight triangle has a
negative weight (recall that already Negative Triangle is \apsp-hard),
we may restrict ourselves to checking just that.
This translates to checking if the weighted edit distance is less than a certain threshold that comes out of our construction.

To finally obtain a useful lower bound, we further observe that we can split the matrix
\(A\) into several parts.
In particular, we split \(A\) into \(\tau\) parts of roughly the same size. Next, for each
part, we create a separate copy of the above gadget (for the same column of \(B\) and the
same row of \(C\)). We combine the \(\tau\) gadgets created in this way and can now use a
common ``selector'' for all \(\tau\) gadgets at once.

Splitting \(A\) helps: instead of \(\Oh(rp)\) strings, we now have to create only
\(\Oh(rp/\tau)\) strings.
In total, choosing \(p, q = \Theta(n)\) and \(r = \Theta(n^{\scb/2})\) and
setting \(\tau \coloneqq n^{1-\scb/2}\), we prove the following lower bound for the Batched Weighted Edit Distance
problem (see \cref{sec:batched-wed-lowerbound}).

\begin{restatable*}{theorem}{thmbatched}\label{thm:batched}\label{thm:dyn-lb-advanced}\label{thm:dyn-main}
    Let \(\smallconst > 0\) and
    \(\scb \in \interval{0}{1}\)
    denote real numbers. Assuming
    the \apsp~Hypothesis, there is no algorithm that, for every $n\in \Int_{\ge 1}$,
    solves every instance of the Batched Weighted Edit Distance problem
    with \(m \le n^{\scb}\) strings of lengths $x\le y \le n$
    in time $\Oh(n^{2+\beta/2-\smallconst})$.

    No such algorithm exists even if we restrict the instances to additionally satisfy the following:
    \begin{itemize}
        \item The values of the weight function $w$ are rationals with a
            common $\Oh(\log n)$-bit integer denominator.
        \item We have $\w{a}{b}=\w{b}{a}\in \interval{1}{2}$ for all distinct $a,b\in
            \Sigma\cup\{\emptystring\}$.
        \item We have $\w{a}{\emptystring}=1$ if $a\in \Sigma$ occurs in any of the
            strings $X_i$.
        \item We have $\w{\emptystring}{b}=2$ if $b\in \Sigma$ occurs in $Y$.
        \item The threshold satisfies $k \in \intervalco{2y-x}{2y-x+1}$.
        \item Any two consecutive strings have a small Hamming distance,
            that is, $\max_{i=1}^{m-1} \hd(X_i,X_{i+1})\le n^{1-\beta}$.\qedhere
    \end{itemize}
\end{restatable*}

\subparagraph*{From Batched Weighted Edit Distance to Bounded Weighted Edit Distance.}

We may view Batched Weighted Edit Distance as the problem
that crystallizes out the hardness that our overall reduction yields---it is easily
imaginable that Batched Weighted Edit Distance can be used as the start of reductions to
many other problems in the area (apart from Bounded Weighted Edit Distance).
As a simple and illustrative example, consider the following dynamic variant of the
Weighted Edit Distance problem.

\defproblem{Dynamic Weighted Edit Distance}%
{Alphabet \(\Sigma\), strings \(X,Y \in \Sigma^{*}\), weight function
\(w: (\Sigma \cup \emptystring)^2 \to \Real_{\ge 0}\)}%
{Substitute a single character from \(X\) with another character from \(\Sigma\)}%
{Compute \(\wed(X, Y)\)}

Observe that any hardness for the above version of the dynamic problem readily extends to
more complicated variations, where edits (including insertions or deletions) to both
strings are possible.
Our hardness for Batched Weighted Edit Distance readily yields the following lower bound.

\begin{corollary}\label{cor:dyn-main-cor}
    Let $\scb \ge 0$ and $\smallconst > 0$ denote real numbers.
    Assuming the {\sc apsp} Hypothesis, there is no algorithm that
    for every $n\in \Int_{\ge 1}$
    solves every \emph{offline instance} of the Dynamic Weighted Edit Distance
    problem on length-$\Theta(n)$ strings \(X\) and \(Y\)
    that consists in $\Theta(n)$ updates and $\Theta(n^{\beta})$
    queries in \(\Oh(n^{2+\scb/2-\smallconst})\) total time.
\end{corollary}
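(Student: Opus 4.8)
The plan is to reduce from Batched Weighted Edit Distance, invoking \cref{thm:batched} in the regime $\scb\in\interval{0}{1}$ (the relevant range; for $\scb=0$ the statement is trivial). Fix an arbitrary hard instance of Batched Weighted Edit Distance: a batch $X_0,\dots,X_{m-1}\in\Sigma^x$ with $m\le n^{\scb}$, a string $Y\in\Sigma^y$ with $x\le y\le n$, a threshold $k$, and a weight function $w$ (which, by the extra bullets of \cref{thm:batched}, is nonnegative, as required by the dynamic problem). The key point I would exploit is the last bullet of \cref{thm:batched}: we may assume $\hd(X_{i-1},X_i)\le n^{1-\scb}$ for every $i\in\fragmentco{1}{m}$.

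From this I would build an offline instance of Dynamic Weighted Edit Distance on alphabet $\Sigma$, weight function $w$, fixed string $Y$, and initial string $X\coloneqq X_0$, as follows. For $i=1,\dots,m-1$ in order, apply the (at most $n^{1-\scb}$) substitutions in $X$ that turn $X_{i-1}$ into $X_i$ --- possible because all $X_i$ have the same length $x$ and the dynamic problem allows substitutions in $X$. Before the first such block and immediately after each block, issue one query; by construction the $j$-th query returns $\wed(X_j,Y)$ for $j=0,\dots,m-1$ respectively. Finally report that $\min_j \wed(X_j,Y)\le k$ iff some query answer is $\le k$; this is exactly the Batched Weighted Edit Distance output, and the whole operation sequence is fixed in advance, so the instance is offline.

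It remains to check the parameters. The total number of updates is $\sum_{i=1}^{m-1}\hd(X_{i-1},X_i)\le m\cdot n^{1-\scb}\le n$, the number of queries is $m\le n^{\scb}$, and both strings have length at most $n$; padding with length-preserving filler characters and with dummy (identity) updates and repeated queries, I can make these exactly $\Theta(n)$, $\Theta(n^{\scb})$, and $\Theta(n)$. Hence an algorithm solving every such offline instance in $\Oh(n^{2+\scb/2-\smallconst})$ total time would solve Batched Weighted Edit Distance within the same bound, contradicting \cref{thm:batched}.

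The reduction itself is entirely routine; essentially all the work is already packed into \cref{thm:batched}. The one place where I expect the argument to be delicate --- and the reason \cref{thm:batched} is stated with that extra bullet --- is the Hamming-distance guarantee $\hd(X_{i-1},X_i)\le n^{1-\scb}$: without it, advancing from $X_{i-1}$ to $X_i$ could cost $\Theta(x)=\Theta(n)$ substitutions, blowing the update budget up to $\Theta(n^{1+\scb})$ and yielding only a weaker, differently parameterized dynamic lower bound. It is precisely this bound that keeps the update count at $\Theta(n)$ and lets the $\Oh(n^{2+\scb/2-\smallconst})$ time lower bound transfer verbatim.
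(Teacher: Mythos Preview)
Your proposal is correct and follows essentially the same approach as the paper: reduce from Batched Weighted Edit Distance by initializing with $X_0$, issuing a query, then repeatedly applying the at most $n^{1-\scb}$ substitutions to pass from $X_{i-1}$ to $X_i$ and querying again, using the Hamming-distance bullet of \cref{thm:batched} to keep the total update count at $\Oh(n)$. The only cosmetic addition in your write-up is the explicit padding to hit the $\Theta(\cdot)$ bounds, which the paper leaves implicit.
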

\begin{proof}
    Suppose that we are given an instance of Batched Weighted Edit Distance. We initialize
    the Dynamic Weighted Edit Distance data structure with the first string of the batch. Then, we
    issue a query and use updates to transform the current string into the next string of
    the batch. In the end, we check if the minimum of the query results is below the
    threshold and return the corresponding answer.

    Using that there are at most \(n^{\scb}\) strings in the batch and that we have to
    update at most \(n^{1-\scb}\) characters between each query, we conclude the total
    number of updates is bounded by \(\Oh(n)\). Further, the number of strings in the
    batch directly corresponds to the number of queries.
\end{proof}
We remark that the approach of \cite[Section~4]{Charalampopoulos20a}, although originally
described in terms of a \emph{uniform} weight function
that assigns the same weight to all substitutions and the same weight to all insertions
and deletions,
seamlessly generalizes to arbitrary weight functions and provides a matching upper bound to \cref{cor:dyn-main-cor}
(in the online version maintaining $X$ and $Y$ subject to edits).

As a more substantial application of \cref{thm:batched}, in~\cref{sec:main-wed-lowerbound} we show how to obtain \cref{mthm:lb}.

\mthmlb*

The key technical idea is to define suitable ``intermediate'' strings $\Xb_i$ that can be aligned
to \(X_i\) and \(X_{i+1}\) for the same cost $h \le n^{1-\beta}$ (independent of $i$).
For this, starting with \(X_i\), we construct a string \(X_i^{\bot}\) by replacing
with a fresh symbol \(\bot\) exactly \(h\) characters of \(X_i\), including all characters that differ in \(X_i\) and
\(X_{i+1}\). If $\w{a}{\bot}=\w{\bot}{a}=1$ for every $a\in \Sigma$, this construction guarantees that $\wed(X_i,\Xb_i)=\wed(\Xb_i,X_{i+1})=h$ holds for every $i$.

Now, consider (simplified) strings
\[
    \hX \coloneqq X_1 \cdot \bigodot_{i=2}^{m} \left(U  Y  V X_i \right),\quad
    \hY \coloneqq \Xb_{0}\cdot \bigodot_{i=1}^{m} \left(U Y V \Xb_i\right),
\] where \(U\) and \(V\) are suitable separators that essentially can only be
either deleted or aligned to themselves.
Observe the string \(\hY\) contains one block \(U Y  V \Xb_i\) more
compared to the number of \(U Y V  X_i \) blocks in \(\hX\).
This means that an optimal alignment has to align exactly one \(X_i\) to \(Y\). Every
\(X_j\) before \(X_i\) is then aligned to \(\Xb_{j-1}\), and every \(X_j\) after \(X_i\) is
then aligned to \(\Xb_j\)---here it is critical that a string \(\Xb_i\) can be aligned to
both \(X_i\) and \(X_{i+1}\) for the same cost $h$, so that the total cost of the alignment
depends only on the cost of aligning \(X_i\) to~\(Y\). Similarly, since both $\Xb_{i-1}$
and $\Xb_{i}$ are inserted, it is important the cost $x$ of these insertions is independent of $i$.

In total, we show that our construction works essentially as a minimum operator, selecting the
minimum-weight pair \((X_i, Y)\) as desired.

The constraints $h\le n^{1-\beta}$ and $m\le n^{\beta}$ guarantee that $\wed(\hX,\hY)=\Theta(hm+x+k)=\Theta(n)$
and $|\hX|,|\hY|=\Theta(n^{1+\beta})$. Thus, to derive \cref{mthm:lb}, it suffices to use \cref{thm:batched}
for $\beta = (1/\kappa)-1$. This explains why \cref{mthm:lb} requires $\kappa \ge 0.5$; otherwise,
$\beta > 1$ would imply $X_1=\ldots=X_{m}$ due to $\hd(X_i,X_{i+1})\le n^{1-\beta}$.

\subsection{Open Problems}
The main problem that stems from our work is to settle the complexity of Bounded Weighted
Edit Distance for $n^{1/3} \le k \le n^{1/2}$. In that case, our algorithm works in
$\Ohtilde(\sqrt{nk^3})$ time, but the lower-bound construction
only shows that an $\Oh(n+k^{2.5-\smallconst})$-time solution would refute the \apsp hypothesis.

Another fascinating problem is whether our lower bound can be circumvented for some simple
classes of weight functions (other than those assigning uniform weights). In particular,
our lower-bound construction requires a relatively big alphabet (of size $\Theta(k)$). Its
size can likely be reduced to $\Theta(k^\smallconst)$ if we assume the stronger
Negative-Weight $d$-Clique Hypothesis~\cite{BGMW20}, but we are not aware of any faster
algorithm if the alphabet size is a small constant. A specific weight function of
potential interest is the $\ell_1$ distance on $\mathbb{Z}_{\ge 0}$, with $\emptystring$
mapped to $0$; it can be used to model the tree edit distance of subdivided stars.

\subsection{Outline}
The paper is organized as follows.
In \cref{sec:preliminaries} we give formal preliminaries and establish some notation.
In \cref{sec:alg} we give our algorithms for weighted edit distance, proving \cref{mthm:algorithm,mthm:pillar}.
In \cref{sec:finegrained} we give some preliminaries and preparations for our fine-grained lower bounds.
In \cref{sec:batched-wed-lowerbound} we give our lower bound for Batched Weighted Edit Distance, proving \cref{thm:batched}.
Finally, in \cref{sec:main-wed-lowerbound} we give our lower bound for weighted edit distance, proving \cref{mthm:lb}.

\section{Preliminaries}\label{sec:preliminaries}

For integers $i, j \in \Int$, we write $\fragmentco{i}{j} \coloneqq \{i,i+1,\dots,j-1\}$ and
$\fragment{i}{j} = \{i,i+1,\dots,j\}$;
we define the sets $\fragmentoo{i}{j}$ and $\fragmentoc{i}{j}$ analogously.

\subparagraph*{Strings.} A string $X = X\position{0}\dots X\position{n-1} \in \Sigma^n$
is a sequence of $|X| = n$ characters over an
alphabet $\Sigma$; \(|X|\) is the \emph{length} of \(|X|\).
For a \emph{position} $i \in \fragmentco{0}{n}$, we
say that $X\position{i}$ is the $i$-th character of $X$.
We denote the empty string over $\Sigma$ by $\emptystring$.
Given indices $0 \leq i \leq j \leq |X|$, we say that $X\fragmentco{i}{j} \coloneqq X\position{i}\cdots Y\position{j-1}$ is a
\emph{fragment} of $X$.
We may also write $X\fragment{i}{j-1}, X\fragmentoc{i-1}{j-1}$,
or $X\fragmentoo{i-1}{j-1}$
for the fragment $X\fragmentco{i}{j}$,

We say that $X$ occurs as a substring of a string $Y$, if there are
$0 \leq i \leq j \leq |Y|$ such that $X = Y\fragmentco{i}{j}$.

\subparagraph*{Alignments and (Weighted) Edit Distances.}
We start with the crucial notion of an \emph{alignment}, which give us a formal way to
describe a sequence of edits to transform a string into another.

\begin{definition}[Alignment~{\cite[Definition 2.3]{DGHKS23}}]
    A sequence $\A = (x_t, y_t)_{t=0}^m$ is an \emph{alignment} of $X\fragmentco{x}{x'}$ onto
    $Y\fragmentco{y}{y'}$, denoted by $\A: X\fragmentco{x}{x'} \onto Y\fragmentco{y}{y'}$ if
    $(x_0, y_0) = (x, y)$, $(x_m, y_m) = (x', y')$, and $(x_{t+1}, y_{t+1}) \in \{(x_t+1, y_t), (x_t, y_t+1), (x_t+1, y_t+1)\}$.

    We write
    $\Als(X\fragmentco{x}{x'}, Y\fragmentco{y}{y'})$
    for the set of all alignments of $X\fragmentco{x}{x'}$ onto $Y\fragmentco{y}{y'}$.
\end{definition}

For an alignment $\A = (x_t, y_t)_{t=0}^m\in \Als(X\fragmentco{x}{x'},
Y\fragmentco{y}{y'})$ and an index $t \in \fragmentco{0}{m}$, we say that
\begin{itemize}
    \item $\A$ \emph{deletes} $X\position{x_t}$ if $(x_{t+1}, y_{t+1}) = (x_t+1, y_t)$.
    \item $\A$ \emph{inserts} $Y\position{y_t}$ if $(x_{t+1}, y_{t+1}) = (x_t, y_t+1)$.
    \item $\A$ \emph{aligns} $X\position{x_t}$ to $Y\position{y_t}$, denoted by
        $X\position{x_t} \aonto{\A} Y\position{y_t}$
    if $(x_{t+1}, y_{t+1}) = (x_t+1, y_t+1)$.
    \item $\A$ \emph{matches} $X\position{x_t}$ with $Y\position{y_t}$,
        if $X\position{x_t}
        \aonto{\A} Y\position{y_t}$ and
    $X\position{x_t} = Y\position{y_t}$.
    \item $\A$ \emph{substitutes} $X\position{x_t}$ for $Y\position{y_t}$ if
        $X\position{x_t} \aonto{\A} Y\position{y_t}$ but
    $X\position{x_t} \neq Y\position{y_t}$.
\end{itemize}
Insertions, deletions, and substitutions are jointly called (character) \emph{edits}.

Given $\A = (x_t, y_t)_{t=0}^m \in \Als(X, Y)$, we define the \emph{inverse alignment}
as $\A^{-1} \coloneqq (y_t, x_t)_{t=0}^m \in \Als(Y, X)$.

Given an alphabet $\Sigma$, we set $\bar\Sigma \coloneqq \Sigma \cup \{\emptystring\}$.
We call $w$ a \emph{weight function} if $w : \bar\Sigma \times \bar\Sigma \to \Real_{\geq 0} \cup \{\infty\}$
and we have $\w{a}{a} = 0$ for all $a \in \bar\Sigma$.
Note that \(w\) does not need to satisfy the triangle inequality nor does \(w\) need to be
symmetric.

We write $\wed_\A(X\fragmentco{x}{x'}, Y\fragmentco{y}{y'})$ for
the \emph{cost} of an alignment $\A \in \Als(X\fragmentco{x}{x'}, Y\fragmentco{y}{y'})$ with
respect to a weight function $w$, that is, for the total cost of edits made by $\A$, where
\begin{itemize}
    \item the cost of deleting $X\position{x}$ is $\w{X\position{x}}{\emptystring}$,
\item the cost of inserting $Y\position{y}$ is $\w{\emptystring}{Y\position{y}}$,
    \item the cost of aligning $X\position{x}$ with $Y\position{y}$ is
        $\w{X\position{x}}{Y\position{y}}$.
\end{itemize}

We define the \emph{weighted edit distance} of strings $X, Y \in \Sigma^*$
with respect to a weight function $w$ as
$\wed(X, Y) \coloneqq \min_{\A \in \Als(X, Y)} \wed_\A(X, Y)$.
For an integer $k \geq 0$, we also define a capped version
\[
    \wed_{\leq k}(X, Y) \coloneqq
        \begin{cases}
            \wed(X, Y) & \text{if } \wed(X, Y) \leq k,\\
            \infty & \text{otherwise}.
        \end{cases}
\]

\begin{definition}[Alignment Graph]\label{def:alignment-graph}
For strings $X,Y\in \Sigma^*$ and a weight function $w:\bar\Sigma\to \Real_{\ge 0}$,
we define the \emph{alignment graph} $\AG^w(X,Y)$ as follows.
\(\AG^{w}(X, Y)\) has vertices $\fragment{0}{|X|}\times \fragment{0}{|Y|}$,
\begin{itemize}
    \item horizontal edges $(x,y)\to (x+1,y)$ of cost $\w{X\position{x}}{\emptystring}$
        for $(x,y)\in \fragmentco{0}{|X|}\times \fragment{0}{|Y|}$,
    \item vertical edges $(x,y)\to (x,y+1)$ of cost $\w{\emptystring}{Y\position{y}}$
        for $(x,y)\in \fragment{0}{|X|}\times \fragmentco{0}{|Y|}$, and
    \item diagonal edges $(x,y)\to (x+1,y+1)$ of cost $\w{X\position{x}}{Y\position{y}}$
        for $(x,y)\in \fragment{0}{|X|}\times \fragmentco{0}{|Y|}$.
        \qedhere
\end{itemize}
\end{definition}

We visualize the alignment graph $\AG^w(X, Y)$ as a grid graph with $|X|+1$ columns
and $|Y|+1$ rows. We think of the vertex $(0,0)$ as the top left vertex of the grid,
and a vertex $(x, y)$ in the $x$-th column and $y$-th row.

Observe that we can interpret $\Als(X\fragmentco{x}{x'},Y\fragmentco{y}{y'})$
as the set of $(x,y)\leadsto (x',y')$ paths in $G:=\AG^w(X,Y)$.
Moreover, $\wed_\A(X\fragmentco{x}{x'},Y\fragmentco{y}{y'})$
is the cost of $\A$ interpreted as a path in $G$ and
thus $\ed^w(X\fragmentco{x}{x'},Y\fragmentco{y}{y'})=\dist_G((x,y), (x',y'))$.

If for every $a, b \in \bar\Sigma$ we have that $\w{a}{b} = 1$ if $a \neq b$
and $\w{a}{b} = 0$ otherwise, then $\wed(X, Y)$ corresponds
to the standard \emph{unweighted} edit distance (also known as Levenshtein distance~\cite{Levenshtein66}).
For this case, we drop the subscript $w$ in $\wed$ and $\wed_\A$.

We say that an alignment $\A \in \Als(X, Y)$ is optimal if $\wed_\A(X, Y) = \wed(X, Y)$.

In this work, we consider the weight function $w$ to be \emph{normalized} so that
$w(a,b)\ge 1$ holds for all $a,b\in \bar\Sigma$ with $a\ne b$.
As a result, $\wed_\A(X,Y)\ge \ed_\A(X,Y)$ holds for every alignment $\A\in \Als(X,Y)$.
This lower bound on the costs for each edit allows us to compute $\wed_{\leq k}(X, Y)$
faster than the standard $\Oh(n^2)$-time dynamic program.

\begin{fact}[{\cite[Proposition 2.16]{DGHKS23}}]\label{prop:baseline-wed}
    Given strings $X, Y \in \Sigma^{\leq n}$, an integer $k \geq 1$ and (oracle access to)
    a normalized weight function
    $w: \Sigma^2 \to \Real_{\geq 0}$, the value $\wed_{\leq k}(X, Y)$ can be computed in
    $\Oh(nk)$ time.
    \lipicsEnd
\end{fact}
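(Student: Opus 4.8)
The plan is to observe that normalization confines every low-cost alignment to a narrow band around the main diagonal, and then to run a textbook DAG shortest-path computation on that band. Concretely, work in the alignment graph $G \coloneqq \AG^w(X,Y)$, and recall that $\wed_\A(X\fragmentco{0}{x},Y\fragmentco{0}{y})$ equals the cost of the corresponding $(0,0)\leadsto(x,y)$ path in $G$. Since $w$ is normalized, every horizontal, vertical, and off-diagonal edge of $G$ costs at least $1$, so any path from $(0,0)$ that first reaches a vertex $(x,y)$ has cost at least $|x-y|$ (it must take at least $|x-y|$ non-diagonal steps). Hence, if $\wed(X,Y)\le k$, then along an optimal alignment every intermediate vertex $(x,y)$ satisfies $|x-y|\le k$, because the prefix cost is bounded by the total cost, which is at most $k$.

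The first step is therefore to define the \emph{banded subgraph} $G_k$ of $G$ induced by the vertices $\{(x,y) : |x-y|\le k\}$, and let $D \coloneqq \dist_{G_k}((0,0),(|X|,|Y|))$, with the convention $D=\infty$ if $(|X|,|Y|)\notin G_k$ (equivalently, if $||X|-|Y||>k$, in which case trivially $\wed(X,Y)>k$). The second step is to argue $\wed_{\leq k}(X,Y) = D$ if $D\le k$ and $\wed_{\leq k}(X,Y)=\infty$ otherwise. Both directions are short: any path in $G_k$ is a legal alignment, so $D\ge \wed(X,Y)$ always; conversely, if $D\le k$ then $\wed(X,Y)\le D\le k$, so by the banding observation some optimal alignment stays inside $G_k$, giving $D\le \wed(X,Y)$ and hence $D=\wed(X,Y)=\wed_{\leq k}(X,Y)$; and if $D>k$ then $\wed(X,Y)>k$ (otherwise the optimal alignment would be a $G_k$-path of cost $\le k$), so $\wed_{\leq k}(X,Y)=\infty$.

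The third step is the running-time accounting. The band $G_k$ has at most $(2k+1)(n+1) = \Oh(nk)$ vertices, each with out-degree at most three, so $\Oh(nk)$ edges; each edge cost is obtained with a single $\Oh(1)$ oracle call to $w$. Since $G$ (and thus $G_k$) is a DAG, we process its vertices in topological order — for instance in nondecreasing order of the row index $y$, and within a row in increasing order of $x$ — maintaining the shortest distance from $(0,0)$ and relaxing the (at most three) outgoing edges of each vertex. This takes $\Oh(nk)$ total time (and space, which can be reduced to $\Oh(k)$ by keeping only the current and previous rows, though that is not needed here). Finally we return $D$ if $D\le k$ and $\infty$ otherwise.

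There is no serious obstacle here; the only point requiring care is the bookkeeping around the cap, namely making sure that a band-restricted distance exceeding $k$ really certifies $\wed_{\leq k}(X,Y)=\infty$ rather than merely an undershoot of the true value — which is exactly what the banding observation provides. One should also double-check the boundary case $||X|-|Y||>k$ so that the algorithm reports $\infty$ without ever consulting the (nonexistent) target vertex.
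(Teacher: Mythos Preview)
Your proposal is correct and matches the paper's approach. The paper does not provide a formal proof of this fact (it is cited from \cite{DGHKS23}), but the technical overview sketches exactly your argument: since every horizontal and vertical edge costs at least $1$, any vertex $(x,y)$ at distance at most $k$ from $(0,0)$ satisfies $|x-y|\le k$, so one restricts the dynamic program to the band of $\Oh(k)$ diagonals around the main diagonal and runs the standard $\Oh(nk)$ computation there.
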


The \emph{breakpoint representation} of an alignment $\A=(x_t,y_t)_{t=0}^m\in \Als(X,Y)$
is the subsequence of $\A$ consisting of pairs $(x_t,y_t)$ such that $t\in \{0,m\}$ or
$\A$ does not match $X\position{x_t}$ with $Y\position{y_t}$.
Note that the size of the breakpoint representation is at most $2+\ed_\A(X,Y)$ and that it
can be used to retrieve the entire alignment:
for any two consecutive elements $(x',y'),(x,y)$ of the breakpoint representation, it
suffices to add $(x-\delta),(y-\delta)$ for $\delta \in \fragmentco{0}{\max(x-x',y-y')}$.

Given an alignment $\A = (x_t, y_t)_{t=0}^m \in \Als(X, Y)$, for every $\ell, r \in
\fragment{0}{m}$ with $\ell \leq r$
we say that $\A$ aligns $X\fragmentco{x_\ell}{x_r}$ to $Y\fragmentco{y_\ell}{y_r}$ and
denote it by
$X\fragmentco{x_\ell}{x_r} \aonto{\A} Y\fragmentco{y_\ell}{y_r}$.
We denote the cost of the induced alignment
of $X\fragmentco{x_\ell}{x_r}$ onto $Y\fragmentco{y_\ell}{y_r}$
by $\wed_\A(X\fragmentco{x_\ell}{x_r}, Y\fragmentco{y_\ell}{y_r})$.

Given $\A: X \onto Y$ and a fragment $X\fragmentco{x_\ell}{x_r}$ of $X$, we write $\Als(X\fragmentco{x_\ell}{x_r})$ for the fragment
$Y\fragmentco{y_\ell}{y_r}$ of $Y$ where
\[
    y_\ell \coloneqq \min\{y \mid (x_\ell, y) \in \A\} \quad \text{ and } \quad
        y_r \coloneqq \begin{cases} |Y| & \text{if } x_r = |X|, \\ \min\{y \mid (x_r, y) \in \A\} &\text{otherwise.}\end{cases}
\]
Intuitively, $\Als(X\fragmentco{x_\ell}{x_r})$ is the fragment that
$\A$ aligns $X\fragmentco{x_\ell}{x_r}$ onto.

\begin{fact}[{Triangle Inequality \cite[Fact 2.5]{DGHKS23}}]\label{fct:triangle}
    Consider strings $X,Y,Z\in \Sigma^*$ as well as alignments $\A : X\onto Y$ and $\B : Y\onto Z$.
    There exists a \emph{composition alignment} $\B \circ \A : X \onto Z$ satisfying the following properties for all $x\in \fragmentco{0}{|X|}$ and $z\in\fragmentco{0}{|Z|}$:
    \begin{itemize}
        \item $\B\circ \A$ aligns $X\position{x}$ to $Z\position{z}$ if and only if there exists $y\in \fragmentco{0}{|Y|}$ such that $\A$ aligns $X\position{x}$ to $Y\position{y}$ and $\B$ aligns $Y\position{y}$ to $Z\position{z}$.
        \item $\B\circ \A$ deletes $X\position{x}$ if and only if $\A$ deletes $X\position{x}$ or there exists $y\in \fragmentco{0}{|Y|}$ such that $\A$ aligns $X\position{x}$ to $Y\position{y}$ and $\B$ deletes $Y\position{y}$.
        \item $\B\circ \A$ inserts $Z\position{z}$ if and only if $\B$ inserts $Z\position{z}$ or there exists $y\in \fragmentco{0}{|Y|}$ such that $\A$ inserts $Y\position{y}$ and $\B$ aligns $Y\position{y}$ to $Z\position{z}$.
    \end{itemize}

    If a weight function $w$ satisfies the \emph{triangle inequality}, that is, $\w{a}{b}\le \w{a}{c}+\w{c}{b}$ holds for all $a,b,c\in \bar\Sigma$,  then $\wed_{\B \circ \A}(X, Z) \leq \wed_{\A}(X, Y) + \wed_{\B}(Y, Z)$.
\end{fact}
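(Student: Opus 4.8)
The plan is to construct $\B\circ\A$ by a single synchronized sweep over the common string $Y$, and then to read off both the three structural properties and the cost bound from what each elementary step of the sweep does. View $\A$ as a monotone lattice path from $(0,0)$ to $(|X|,|Y|)$ and $\B$ as a monotone lattice path from $(0,0)$ to $(|Y|,|Z|)$, and build a monotone path $(x_u,y_u,z_u)_u$ inside $\fragment{0}{|X|}\times\fragment{0}{|Y|}\times\fragment{0}{|Z|}$ from $(0,0,0)$ to $(|X|,|Y|,|Z|)$, governed by the following priority rule applied to the next unprocessed steps of $\A$ and of $\B$: (i) if $\A$'s current step is horizontal (so $\A$ deletes some $X\position{x}$), advance only $x$ and advance $\A$; else (ii) if $\B$'s current step is vertical (so $\B$ inserts some $Z\position{z}$), advance only $z$ and advance $\B$; else (iii) $\A$'s current step is vertical or diagonal and $\B$'s current step is horizontal or diagonal, so advance $y$, advance both $\A$ and $\B$, and additionally advance $x$ iff the $\A$-step is diagonal and advance $z$ iff the $\B$-step is diagonal. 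A short invariant argument shows that throughout the sweep $(x_u,y_u)$ equals the endpoint of the processed prefix of $\A$ and $(y_u,z_u)$ equals the endpoint of the processed prefix of $\B$; hence the rule always applies until $(|X|,|Y|,|Z|)$ is reached and each step of $\A$ and each step of $\B$ is consumed exactly once. Projecting the sweep onto its $(x,z)$-coordinates and collapsing the steps that leave $(x,z)$ unchanged (these are exactly the type-(iii) steps in which $\A$ inserts some $Y\position{y}$ and $\B$ deletes the same $Y\position{y}$) yields a monotone path in $\AG^w(X,Z)$ from $(0,0)$ to $(|X|,|Z|)$, i.e.\ an alignment $\B\circ\A\in\Als(X,Z)$.

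The three structural properties then follow by matching sweep steps with edits. A step advancing both $x$ and $z$ is a type-(iii) step in which $\A$ aligns $X\position{x}$ to some $Y\position{y}$ and $\B$ aligns $Y\position{y}$ to $Z\position{z}$, and it makes $\B\circ\A$ align $X\position{x}$ to $Z\position{z}$; since $x_u$ is non-decreasing, each position $X\position{x}$ triggers exactly one sweep step, so the stated condition is both necessary and sufficient, which is the first bullet. A step advancing $x$ but not $z$ is either a type-(i) step, where $\A$ deletes $X\position{x}$, or a type-(iii) step with $\A$ diagonal and $\B$ horizontal, where $\A$ aligns $X\position{x}$ to some $Y\position{y}$ that $\B$ deletes; in both cases $\B\circ\A$ deletes $X\position{x}$, which is the second bullet. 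The third bullet is the mirror image of the second, obtained by exchanging the roles of $X$ and $Z$, deletions and insertions, and $\A$ and $\B$.

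For the cost bound assume that $w$ satisfies the triangle inequality and define a charging from the edits of $\B\circ\A$ into the edits of $\A$ and $\B$. An alignment of $X\position{x}$ to $Z\position{z}$ in $\B\circ\A$ is charged to the pair consisting of the corresponding alignment in $\A$ and the corresponding alignment in $\B$, and $\w{X\position{x}}{Z\position{z}}\le \w{X\position{x}}{Y\position{y}}+\w{Y\position{y}}{Z\position{z}}$. A deletion of $X\position{x}$ in $\B\circ\A$ coming from a type-(iii) step is charged to the pair ($\A$ aligns $X\position{x}$ to $Y\position{y}$, $\B$ deletes $Y\position{y}$), using $\w{X\position{x}}{\emptystring}\le \w{X\position{x}}{Y\position{y}}+\w{Y\position{y}}{\emptystring}$; a deletion coming from a type-(i) step is charged to that deletion of $\A$, at equal cost; insertions of $Z\position{z}$ are charged symmetrically to an edit of $\B$ (and possibly of $\A$). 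One checks that every edit of $\A$ and every edit of $\B$ receives at most one charge---the only uncharged edits are the pair that $\A$ and $\B$ place on a ``vanishing'' position $Y\position{y}$, and leaving them out is harmless because $w\ge 0$. Summing the inequalities over all edits gives $\wed_{\B\circ\A}(X,Z)\le \wed_\A(X,Y)+\wed_\B(Y,Z)$; if either summand on the right is $\infty$ the bound is trivial.

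The main obstacle is purely bookkeeping: one must make the synchronized sweep precise enough that ``each step of $\A$ and $\B$ is consumed exactly once'' and ``the $(x,z)$-projection is a valid alignment'' are genuinely proved, and keep the case analysis for the three bullets and the charging argument aligned. A more mechanical alternative is induction on $|X|+|Y|+|Z|$: peel off the first step of $\A$ if it is a deletion; otherwise the first step of $\B$ if it is an insertion; otherwise the first position of $Y$, branching on the four combinations of how $\A$ and $\B$ treat it. This yields essentially the same case analysis and the same three uses of the triangle inequality.
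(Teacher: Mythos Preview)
Your proof is correct. The synchronized sweep with the stated priority rule is the standard construction, the case analysis for the three bullets is complete, and the charging argument for the cost bound is sound; the only nontrivial point is that the ``vanishing'' $Y\position{y}$ positions (inserted by $\A$ and deleted by $\B$) contribute nonnegative cost on the right-hand side and nothing on the left, which you handle.

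Note, however, that the paper does not give its own proof of this statement: it is stated as a \emph{Fact} with a citation to \cite[Fact~2.5]{DGHKS23} and used as a black box. So there is no ``paper's own proof'' to compare against here; your writeup supplies a self-contained argument where the paper simply imports the result.
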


\begin{fact}[{\cite[Fact 2.6]{DGHKS23}}]\label{fct:substring}
    If a weight function $w$ satisfies the triangle inequality,
    then $\wed(X,X\fragmentco{i}{j})=\wed(X\fragmentco{0}{i}\cdot X\fragmentco{j}{|X|},\emptystring)$
    holds for every string $X$ and its fragment $X\fragmentco{i}{j}$.
\end{fact}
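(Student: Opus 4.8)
The plan is to prove the two inequalities separately. Write $Y' \coloneqq X\fragmentco{0}{i}\cdot X\fragmentco{j}{|X|}$ and note that $\Als(Y',\emptystring)$, $\Als(X\fragmentco{i}{j},\emptystring)$, and $\Als(X,\emptystring)$ are all singletons, so $\wed(Y',\emptystring)$ is just the total deletion cost $\sum_{x\in\fragmentco{0}{i}}\w{X\position{x}}{\emptystring}+\sum_{x\in\fragmentco{j}{|X|}}\w{X\position{x}}{\emptystring}$, and likewise $\wed(X,\emptystring)=\wed(Y',\emptystring)+\wed(X\fragmentco{i}{j},\emptystring)$ since the deletion cost of $X$ splits over the three fragments $X\fragmentco{0}{i}$, $X\fragmentco{i}{j}$, $X\fragmentco{j}{|X|}$ (this identity holds in $[0,\infty]$). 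For the ``$\le$'' direction I would exhibit the explicit alignment $\A^\star\colon X\onto X\fragmentco{i}{j}$ that first deletes $X\fragmentco{0}{i}$, then matches $X\fragmentco{i}{j}$ to itself character by character (each such diagonal edge costs $\w{a}{a}=0$), and finally deletes $X\fragmentco{j}{|X|}$; its cost is exactly $\wed(Y',\emptystring)$, so $\wed(X,X\fragmentco{i}{j})\le\wed(Y',\emptystring)$. This direction needs neither the triangle inequality nor finiteness of $w$.

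For the ``$\ge$'' direction I would take an arbitrary alignment $\A\colon X\onto X\fragmentco{i}{j}$ and compose it, via \cref{fct:triangle}, with the unique alignment $\B\colon X\fragmentco{i}{j}\onto\emptystring$. Because $X\fragmentco{i}{j}$ is sent to the empty string, the composition $\B\circ\A\colon X\onto\emptystring$ is forced to be the unique ``delete everything'' alignment of $X$, hence $\wed_{\B\circ\A}(X,\emptystring)=\wed(X,\emptystring)$; the triangle-inequality bound of \cref{fct:triangle} then gives $\wed(X,\emptystring)\le\wed_\A(X,X\fragmentco{i}{j})+\wed(X\fragmentco{i}{j},\emptystring)$. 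Substituting $\wed(X,\emptystring)=\wed(Y',\emptystring)+\wed(X\fragmentco{i}{j},\emptystring)$ and cancelling $\wed(X\fragmentco{i}{j},\emptystring)$ yields $\wed(Y',\emptystring)\le\wed_\A(X,X\fragmentco{i}{j})$; taking the minimum over $\A$ completes the proof.

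The step that requires the most care, and the place I expect the real obstacle, is this cancellation: it is valid as written only when $\wed(X\fragmentco{i}{j},\emptystring)<\infty$. I would handle the degenerate case $\wed(X\fragmentco{i}{j},\emptystring)=\infty$ separately, using that the triangle inequality makes an infinite deletion cost propagate (if $\w{a}{\emptystring}=\infty$ then for every $b$ we have $\w{a}{b}=\infty$ or $\w{b}{\emptystring}=\infty$), so that along any alignment $\A\colon X\onto X\fragmentco{i}{j}$ a character of infinite deletion cost either is deleted or can only be traded, edge by edge, for another character of infinite deletion cost; tracking this chain (and combining with the ``$\le$'' bound, which pins the value from above) shows the identity still holds. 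Finally, I would remark that the triangle inequality is genuinely needed: without it one can make $\w{b}{\emptystring}$ much larger than $\w{b}{a}+\w{a}{\emptystring}$, and then already $\wed(ab,a)<\wed(b,\emptystring)$ violates the claimed equality.
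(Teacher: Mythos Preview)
The paper does not prove \cref{fct:substring}; it is quoted verbatim from \cite{DGHKS23} and left without argument here. So there is no ``paper's proof'' to compare against, and your proposal must be judged on its own.

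Your finite-weight argument is correct and is exactly the natural one: the explicit alignment $\A^\star$ gives the upper bound, and composing an arbitrary $\A$ with the unique $\B\colon X\fragmentco{i}{j}\onto\emptystring$ via \cref{fct:triangle} gives $\wed(X,\emptystring)\le\wed_\A(X,X\fragmentco{i}{j})+\wed(X\fragmentco{i}{j},\emptystring)$, and the cancellation goes through whenever $\wed(X\fragmentco{i}{j},\emptystring)<\infty$. The remark that the triangle inequality is genuinely needed is also correct.

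Your handling of the degenerate case $\wed(X\fragmentco{i}{j},\emptystring)=\infty$ is the one place that is still only a sketch. The ``chain'' idea is right, but as written it only cleanly covers the subcase $\wed(Y',\emptystring)=\infty$: starting from a position $x^*\in\fragmentco{0}{i}\cup\fragmentco{j}{|X|}$ with infinite deletion cost, the iterates $x^*\mapsto i+g(x^*)\mapsto i+g(i+g(x^*))\mapsto\cdots$ (where $g$ records the target index under $\A$) form a strictly monotone sequence inside $\fragmentco{i}{j}$, hence terminate at a deletion of infinite cost. For the remaining subcase $\wed(Y',\emptystring)<\infty$ you still need the lower bound $\wed_\A\ge\wed(Y',\emptystring)$, and here the chain argument must be run from \emph{every} $x\in(\fragmentco{0}{i}\cup\fragmentco{j}{|X|})$ that is not deleted, together with a check that distinct chains are disjoint (which follows from injectivity and monotonicity of $g$) and that along each chain the telescoping bound $\sum_l \w{X\position{s_l}}{X\position{s_{l+1}}}\ge \w{X\position{s_0}}{\emptystring}-\w{X\position{s_k}}{\emptystring}$ is valid (it is, since all terms are finite once $\wed_\A<\infty$). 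Once you spell this out, the same chain argument actually gives the $\ge$ direction uniformly, making the composition step and the case split unnecessary.
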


\begin{corollary}\label{cor:greedy}
    If a weight function $w$ satisfies the triangle inequality,
    then $\wed(aX,aY)=\wed(X,Y)=\wed(Xa,Ya)$ holds for every character $a\in \Sigma$ and strings $X,Y\in \Sigma^*$.
\end{corollary}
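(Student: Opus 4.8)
The plan is to obtain each equality from its trivial half together with a rerouting argument, reducing everything to the single inequality $\wed(X,Y)\le\wed(aX,aY)$. The bounds $\wed(aX,aY)\le\wed(X,Y)$ and $\wed(Xa,Ya)\le\wed(X,Y)$ are immediate: prepend, resp.\ append, a zero-cost match of $a$ with $a$ to an optimal alignment of $X$ onto $Y$. For the two converse inequalities it suffices to prove $\wed(X,Y)\le\wed(aX,aY)$, since $\wed(X,Y)\le\wed(Xa,Ya)$ then follows by the mirror-image argument carried out at the trailing character of the strings instead of the leading one (equivalently, by reversing all strings, an operation under which $\wed$ is invariant).

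To prove $\wed(X,Y)\le\wed(aX,aY)$, I would fix an optimal alignment $\A\in\Als(aX,aY)$ and examine how it handles the leading character $(aX)\position{0}=a$. Since $\A$ starts at $(0,0)$, it first performs $p\ge0$ insertions of $(aY)\position{0},\dots,(aY)\position{p-1}$ and then, from $(0,p)$, either (a)~deletes $(aX)\position{0}$ or (b)~aligns $(aX)\position{0}$ to $(aY)\position{p}$. If $p=0$ in case~(b), this is a free match of $a$ with $a$, and the remainder of $\A$ is already an alignment of $X=(aX)\fragmentco{1}{|aX|}$ onto $Y=(aY)\fragmentco{1}{|aY|}$, so $\wed(X,Y)\le\wed_\A(aX,aY)$ on the nose. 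If $p=0$ in case~(a), the remainder of $\A$ aligns $X$ onto $aY$ after a leading deletion of cost $\w{a}{\emptystring}$, so $\wed(X,aY)\le\wed_\A(aX,aY)-\w{a}{\emptystring}$; composing with the alignment of $aY$ onto $Y$ that deletes the leading $a$ and invoking \cref{fct:triangle} gives $\wed(X,Y)\le\wed(X,aY)+\w{a}{\emptystring}\le\wed_\A(aX,aY)$.

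The remaining cases all have $p\ge1$, and I would treat (a) and (b) uniformly. Stripping its leading prefix, $\A$ induces an alignment of $X$ onto a suffix $Y\fragmentco{j}{|Y|}$ of $Y$, where $j=p-1$ in case~(a) and $j=p$ in case~(b); the cost of the stripped prefix is $\sum_{q=0}^{p-1}\w{\emptystring}{(aY)\position{q}}$ plus either $\w{a}{\emptystring}$ in case~(a) or $\w{a}{(aY)\position{p}}$ in case~(b). Using \cref{fct:triangle} to compose the induced alignment of $X$ onto $Y\fragmentco{j}{|Y|}$ with the $j$-insertion alignment of $Y\fragmentco{j}{|Y|}$ onto $Y$ (of cost $\sum_{q=0}^{j-1}\w{\emptystring}{Y\position{q}}$), and exploiting $(aY)\position{0}=a$ together with $(aY)\position{q}=Y\position{q-1}$ for $q\ge1$, I expect the insertion sums to telescope, so that the resulting bound is $\wed(X,Y)\le\wed_\A(aX,aY)+c$ with a correction term $c$ equal to $-\w{\emptystring}{a}-\w{a}{\emptystring}$ in case~(a) and $\w{\emptystring}{Y\position{p-1}}-\w{\emptystring}{a}-\w{a}{Y\position{p-1}}$ in case~(b) (using $(aY)\position{p}=Y\position{p-1}$). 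The first correction is trivially at most $0$; the second is at most $0$ precisely by the triangle inequality $\w{\emptystring}{Y\position{p-1}}\le\w{\emptystring}{a}+\w{a}{Y\position{p-1}}$. Hence $\wed(X,Y)\le\wed(aX,aY)$ in every case, and the proof is complete. I expect the main obstacle to be simply organizing this case split and carrying out the telescoping carefully; beyond that, the argument uses nothing but \cref{fct:triangle} and a few explicit elementary alignments.
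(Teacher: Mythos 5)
Your proof is correct, but it takes a genuinely different route from the paper. The paper proves $\wed(X,Y)\le\wed(Xa,Ya)$ by looking at the \emph{last} grid point $(x,y)$ on an optimal alignment of $X'=Xa$ onto $Y'=Ya$ with $x\le |X|$ and $y\le |Y|$, observes that one of these is an equality (WLOG $y=|Y|$), and then invokes \cref{fct:substring} once to replace the suffix of the alignment (which aligns $X'\fragmentco{x}{|X'|}$ onto the single character $a$) by plain deletions of $X\fragmentco{x}{|X|}$ at no greater cost. That lemma already packages the telescoping/rerouting argument, so the paper's proof is short. You instead work at the \emph{leading} character of $aX$: you strip off the initial run of $p$ insertions plus the first deletion or substitution, compare the remaining alignment of $X$ onto a suffix of $Y$ against $Y$ itself via \cref{fct:triangle}, and verify by an explicit telescoping that the correction term is $-\w{\emptystring}{a}-\w{a}{\emptystring}\le 0$ (deletion case) or $\w{\emptystring}{Y\position{p-1}}-\w{\emptystring}{a}-\w{a}{Y\position{p-1}}\le 0$ (substitution case, by the pointwise triangle inequality). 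Your computation checks out, the case split (including $p=0$) is exhaustive, and the reduction to the leading-character case via reversal is fine since $\wed$ is reversal-invariant. What your route buys is self-containment — it uses only \cref{fct:triangle} and the raw triangle inequality on weights, never \cref{fct:substring} — at the cost of a somewhat longer and more bookkeeping-heavy argument than the paper's.
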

\begin{proof}
    By symmetry, it suffices to prove $\wed(X,Y)=\wed(X',Y')$ for $X'=Xa$ and $Y'=Ya$.
    The inequality $\wed(X',Y')\le \wed(X,Y)$ holds trivially: an alignment $\A\in \Als(X,Y)$
    can be extended to an alignment $\A'\in \Als(Xa,Ya)$ of the same cost by appending $(|X|+1,|Y|+1)$,
    which corresponds to matching the trailing characters of $X'$ and $Y'$.
    As for the converse implication, consider an optimal alignment $\A'\in \Als(X',Y')$.
    Consider the last pair $(x,y)\in \A'$ such that $x\le |X|$ and $y\le |Y|$.
    Either of these inequalities must be an equality. By symmetry, assume without loss of generality that $y=|Y|$.
    This means that $\A'$ aligns $X\fragmentco{0}{x}$ with $Y$
    and $X'\fragmentco{x}{|X'|}$ with $Y'\position{|Y|}=a$.
    Since $a$ is a suffix of $X'\fragmentco{x}{|X'|}$,
    \cref{fct:substring} implies $\wed(X'\fragmentco{x}{|X'|},Y'\position{|Y|})= \wed(X\fragmentco{x}{|X|},\emptystring)$.
    Consequently, an alignment $\A \in \Als(X,Y)$ that proceeds as $\A$ until $(x,y)$ but then deletes $X\fragmentco{x}{|X|}$
    has the same cost as $\A'$.
    This implies $\wed(X,Y)\le \wed(X',Y')$.
\end{proof}

\begin{lemma}\label{fct:split-alignment}
    Let $X, Y \in \Sigma^*$ denote strings and let $\A \in \Als(X,Y)$ denote an alignment.
    For every $(x,y)\in \A$, we have
    $\wed(X,Y)\le \wed(X\fragmentco{0}{x}, Y\fragmentco{0}{y}) +
    \wed(X\fragmentco{x}{|X|}, Y\fragmentco{y}{|Y|})\le \wed_{\A}(X,Y)$.
\end{lemma}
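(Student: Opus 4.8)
The plan is to establish the two inequalities separately, in both cases by explicitly concatenating (for the left one) or splitting (for the right one) alignments and tracking their costs step by step. The key structural point I would use is that, in the definition of $\Als(\cdot,\cdot)$, the coordinates of an alignment are \emph{absolute} indices into $X$ and $Y$: an alignment in $\Als(X\fragmentco{x}{|X|},Y\fragmentco{y}{|Y|})$ is literally a sequence of grid points starting at $(x,y)$ and ending at $(|X|,|Y|)$, while an alignment in $\Als(X\fragmentco{0}{x},Y\fragmentco{0}{y})$ starts at $(0,0)$ and ends at $(x,y)$. Hence the two kinds of alignments can be glued at $(x,y)$ with no re-indexing, and, phrased in the alignment graph $G\coloneqq\AG^w(X,Y)$ of \cref{def:alignment-graph}, the claim is exactly
\[
    \dist_G((0,0),(|X|,|Y|)) \;\le\; \dist_G((0,0),(x,y))+\dist_G((x,y),(|X|,|Y|)) \;\le\; \wed_\A(X,Y)
\]
for any path $\A$ from $(0,0)$ to $(|X|,|Y|)$ that passes through $(x,y)$.

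For the left inequality I would pick optimal alignments $\B_1\in\Als(X\fragmentco{0}{x},Y\fragmentco{0}{y})$ and $\B_2\in\Als(X\fragmentco{x}{|X|},Y\fragmentco{y}{|Y|})$, concatenate them at their common endpoint $(x,y)$ into a single alignment $\B\in\Als(X,Y)$, and note that every edit step of $\B$ comes from exactly one of $\B_1,\B_2$, so that $\wed_\B(X,Y)=\wed_{\B_1}(X\fragmentco{0}{x},Y\fragmentco{0}{y})+\wed_{\B_2}(X\fragmentco{x}{|X|},Y\fragmentco{y}{|Y|})=\wed(X\fragmentco{0}{x},Y\fragmentco{0}{y})+\wed(X\fragmentco{x}{|X|},Y\fragmentco{y}{|Y|})$. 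Since $\wed(X,Y)\le\wed_\B(X,Y)$ by definition, this gives the left inequality; in graph language it is just the triangle inequality for $\dist_G$, obtained by concatenating shortest paths.

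For the right inequality, because $(x,y)\in\A$ there is an index $t$ with $(x_t,y_t)=(x,y)$; cutting $\A$ at $t$ yields $\A_1\coloneqq(x_s,y_s)_{s=0}^{t}\in\Als(X\fragmentco{0}{x},Y\fragmentco{0}{y})$ and $\A_2\coloneqq(x_s,y_s)_{s=t}^{m}\in\Als(X\fragmentco{x}{|X|},Y\fragmentco{y}{|Y|})$. Each of the $m$ steps of $\A$ lies in exactly one of $\A_1,\A_2$, and the cost of a step (one of $\w{X\position{x_s}}{\emptystring}$, $\w{\emptystring}{Y\position{y_s}}$, $\w{X\position{x_s}}{Y\position{y_s}}$, depending on its type) depends only on the absolute indices $x_s,y_s$, hence is unchanged whether the step is read inside $\A$ or inside $\A_i$. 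Therefore $\wed_\A(X,Y)=\wed_{\A_1}(X\fragmentco{0}{x},Y\fragmentco{0}{y})+\wed_{\A_2}(X\fragmentco{x}{|X|},Y\fragmentco{y}{|Y|})$, and bounding each summand below by the minimum over all alignments of the corresponding fragment pair gives $\wed_\A(X,Y)\ge\wed(X\fragmentco{0}{x},Y\fragmentco{0}{y})+\wed(X\fragmentco{x}{|X|},Y\fragmentco{y}{|Y|})$.

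I expect the whole argument to be routine; the only place that warrants a line of care is the additivity of costs across the split (that $\wed_{\A_1}$ and $\wed_{\A_2}$ of the fragment alignments really sum to $\wed_\A$), and that is immediate from the observation that per-step costs see only absolute positions, so nothing is counted twice or differently in the two pieces. It is worth recording that, unlike the neighboring \cref{fct:triangle,fct:substring,cor:greedy}, this lemma needs no structural assumption on $w$ whatsoever --- not even normalization --- only that $w$ is a (nonnegative) weight function.
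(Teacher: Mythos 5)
Your proof is correct and takes essentially the same route as the paper's: both frame the claim as the triangle inequality for $\dist_G$ in the alignment graph $\AG^w(X,Y)$ for the left inequality, and for the right use that an optimal path from $(0,0)$ to $(|X|,|Y|)$ through $(x,y)$ costs at least the sum of the two shortest-path distances. The paper stays entirely in the $\dist_G$ language, while you additionally unpack the concatenation/splitting of alignments and the per-step cost accounting — a more verbose but equivalent presentation; your closing observation that this lemma, unlike its neighbors, needs no structural hypothesis on $w$ beyond nonnegativity is accurate and a nice point to note.
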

\begin{proof}
    Consider the alignment graph $G \coloneqq \AG(X,Y)$.
    Since $\dist_G$ satisfies the triangle inequality, we have
    \begin{align*}
        \wed(X,Y) &= \dist_G((0,0),(|X|,|Y|))\\
                  &\le \dist_G((0,0),(x,y)) + \dist_G((x,y),(|X|,|Y|))\\
                  &= \wed(X\fragmentco{0}{x}, Y\fragmentco{0}{y})
                  + \wed(X\fragmentco{x}{|X|}, Y\fragmentco{y}{|Y|}).
    \end{align*}
    Observe that $\A$ corresponds to a $(0,0)\leadsto (|X|,|Y|)$ path in
    $G \coloneqq \AG^w(X,Y)$ containing $(x,y)$.
    The cost of this path, which is $\wed_{\A}(X,Y)$ must be at least
    \begin{align*}
        &\dist_G((0,0),(x,y)) + \dist_G((x,y),(|X|,|Y|))\\
        &\quad =
        \wed(X\fragmentco{0}{x}, Y\fragmentco{0}{y})
        + \wed(X\fragmentco{x}{|X|}, Y\fragmentco{y}{|Y|});
    \end{align*}
    completing the proof.
\end{proof}

\subparagraph*{The PILLAR Model.}
Charalampopoulos, Kociumaka, and Wellnitz~\cite{Charalampopoulos20} introduced the
\modelname{} model.
The \modelname{} model provides an abstract interface to a set of primitive operations on
strings which can be efficiently implemented in different settings. Thus, an algorithm
developed using the \modelname{} interface does not only yield algorithms in the standard
setting, but also directly yields algorithms in diverse other settings,
for instance, fully compressed, dynamic, etc.

\newcommand{\calX}{\mathcal{X}}
In the~\modelname{} model we are given a family $\calX$ of strings to preprocess. The
elementary objects are fragments $X\fragmentco{\ell}{r}$ of strings $X \in \calX$.
Initially, the model gives access to each
$X \in \calX$ interpreted as $X\fragmentco{0}{|X|}$. Other fragments can be retrieved via
an \extractOpName{} operation:
\begin{itemize}
    \item $\extractOpName(S, \ell, r)$: Given a fragment $S$ and positions $0 \leq \ell \leq r \leq |S|$, extract
    the fragment $S\fragmentco{\ell}{r}$, which is defined as $X\fragmentco{\ell' + \ell}{r' + r}$ if $S = X\fragmentco{\ell'}{r'}$ for
    $X \in \calX$.
\end{itemize}
Moreover, the \modelname provides the following primitive operations~\cite{Charalampopoulos20}:
\begin{itemize}
    \item $\lceOp{S}{T}$: Compute the length of~the longest common prefix of~$S$ and $T$.
    \item $\lcbOp{S}{T}$: Compute the length of~the longest common suffix of~$S$ and $T$.
    \item $\accOpName(S,i)$: Assuming $i\in \fragmentco{0}{|S|}$, retrieve the character $\accOp{S}{i}$.
    \item $\lenOpName(S)$: Retrieve the length $|S|$ of~the string $S$.
\end{itemize}
Observe that in the original definition~\cite{Charalampopoulos20}, the \modelname{} model
also includes an \(\ipmOpName\) operation to find all (internal) exact occurrences of one
fragment in another. We do not need the \(\ipmOpName\) operation in this
work.\footnote{We still use the name \modelname, and not {\tt PLLAR}, though.}

\subparagraph*{Planar Graph Toolbox and Monge Matrices}\label{sec:alg:sec:alg-periodic:sec:planar-toolbox}

We use tools originating from planar graph algorithms to process (induced subgraphs of)
the alignment graph $\AG^w(X,Y)$.

\begin{theorem}[Klein~\cite{Klein2005}]\label{thm:klein}
    We can preprocess a directed planar graph with non-negative edge weights in
    $\Oh(|V(G)|\log |V(G)|)$ time so that,
    for any two vertices $u,v$ on the outer face, we can compute the distance
    $\dist_G(u,v)$ in $\Oh(\log |V(G)|)$ time.

    Moreover, we can report the shortest $u\leadsto v$ path $P$
    in $\Oh(|P|\log \log\Delta(G))$ time, where $\Delta(G)$ is the maximum degree of $G$.
    \lipicsEnd
\end{theorem}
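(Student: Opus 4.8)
Since this is a restatement of Klein's multiple-source shortest paths result, the plan is to reconstruct the standard argument. First I would apply a standard $\Oh(|V(G)|)$-time transformation so that we may assume the outer face is bounded by a simple directed cycle $C = v_0 v_1 \cdots v_{k-1}$ (augmenting the boundary with dummy arcs of sufficiently large weight, which cannot alter any original distance). Then I would compute the single-source shortest-path tree $T_0$ rooted at $v_0$, together with the distance labels $d_0(\cdot)$, in $\Oh(|V(G)|\log|V(G)|)$ time by Dijkstra's algorithm; a linear-time planar single-source routine would also do, but is not needed for the claimed bound.

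The heart of the construction is to \emph{rotate} the source around $C$: for $i = 1, \dots, k$ I would transform the tree $T_{i-1}$ for root $v_{i-1}$ into the tree $T_i$ for root $v_i$, so that after $k$ steps the source has returned to $v_0$. Moving the root across the boundary arc $v_{i-1} \to v_i$ changes the labels only by a global additive shift followed by a Dijkstra-style relaxation localized near the boundary, and this can be realized as a sequence of \emph{pivots}: each pivot removes one tree edge and inserts one non-tree edge $x \to y$ that currently improves $d(y)$, simultaneously decreasing the labels of the entire subtree rooted at $y$ by the slack $d(x) + w(x,y) - d(y)$.

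The main obstacle — and the only genuinely planar ingredient — is to bound the \emph{total} number of pivots over the whole rotation by $\Oh(|V(G)|\log|V(G)|)$. Here I would use a crossing argument together with divide-and-conquer on the cyclic sequence of sources: for a fixed directed edge $e$, the set of boundary roots $r$ for which $e$ belongs to $T_r$ is well-structured along $C$ because shortest-path trees rooted at outer-face vertices cannot cross one another in the plane; recursively splitting the source cycle into two arcs, one shows that for sources in an arc the trees all agree with a ``combined'' structure outside a region pinned down by the arc's endpoints and their trees, so $\Oh(|V(G)|)$ pivots suffice per recursion level and there are $\Oh(\log|V(G)|)$ levels. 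This $\log$ factor is inherent: interdigitating-tree instances force $\Omega(|V(G)|\log|V(G)|)$ pivots, so no linear argument can replace it.

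Finally, to answer queries I would carry the pair (current tree, current labels) in a dynamic-tree structure — Euler-tour trees or link--cut trees — supporting \texttt{link}, \texttt{cut}, ``add a constant to the labels of a subtree'', and ``read a vertex's label'', so that a pivot or a root move is $\Oh(1)$ such operations and, in Klein's tailored version, runs in $\Oh(1)$ amortized time; thus the rotation with its $\Oh(|V(G)|\log|V(G)|)$ pivots costs $\Oh(|V(G)|\log|V(G)|)$ total (a specialized structure rather than a black-box link--cut tree is used to avoid an extra logarithm). Making this structure persistent by path-copying and snapshotting a version pointer after processing each $v_i$ lets a query $(v_i, v)$ jump to version $i$ and return $d_{v_i}(v)$ in $\Oh(\log|V(G)|)$ time, with total space staying $\Oh(|V(G)|\log|V(G)|)$ since that is the total number of updates. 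For path reporting, the persistent tree at version $i$ already stores parent pointers, so one walks from $v$ to $v_i$ in $\Oh(|P|)$ pointer hops; the $\Oh(\log\log\Delta(G))$ overhead is incurred only when the relevant parent edge must be located among aggregated data at a high-degree vertex.
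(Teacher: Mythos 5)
This is a black-box citation in the paper (to Klein, SODA'05) — the paper offers no proof of its own — so what you have written is a reconstruction of Klein's argument, not something to compare against text in the paper. Your high-level skeleton is right (rotate the source around the boundary cycle, realize each move as a sequence of pivots, maintain the tree and labels in a dynamic-tree structure that supports subtree-addition, make it persistent for $\Oh(\log n)$-time queries, and report paths via stored parent pointers). However, the accounting at the heart of the argument is wrong.

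Klein's pivot bound is $\Oh(n)$ total pivots, not $\Oh(n\log n)$, and it is proved \emph{directly} from the non-crossing lemma, not by divide-and-conquer on the source cycle: because shortest-path trees rooted at distinct outer-face vertices cannot cross, the set of boundary sources for which a given dart is a tree edge forms a single contiguous arc of the cycle, so over one full revolution each dart enters the tree at most once and leaves at most once, giving at most $2|E(G)|=\Oh(n)$ pivots. The $\log n$ in the running time then comes entirely from the $\Oh(\log n)$ amortized cost per pivot in the Euler-tour/top-tree structure (link, cut, subtree-add, point-read), i.e.\ $\Oh(n)$ pivots $\times$ $\Oh(\log n)$ each, not $\Oh(n\log n)$ pivots $\times$ $\Oh(1)$ each. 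Your claim that interdigitating-tree instances force $\Omega(n\log n)$ pivots so that ``no linear argument can replace'' the log is false in the planar case; the $\Theta(gn\log n)$ pivot bound (and the matching lower bound) is from Cabello, Chambers, and Erickson for genus-$g\ge 1$ surfaces, where the non-crossing argument fails. In the planar case the linear argument is exactly what Klein uses, and a divide-and-conquer over arcs of the boundary neither appears in Klein's proof nor, as sketched, yields the contiguity property you need. Correcting the per-pivot cost to $\Oh(\log n)$ also fixes the subsequent mismatch: with $\Oh(1)$-per-pivot structures it is not clear why a persistent version would answer a label query in $\Oh(\log n)$, whereas with an $\Oh(\log n)$-per-operation persistent balanced structure this is immediate. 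The remainder of your sketch (persistence via path-copying and snapshots, walking parent pointers for path recovery, and attributing the $\Oh(\log\log\Delta)$ to locating the correct incident dart at a high-degree vertex) is consistent with Klein's construction.
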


\begin{fact}[{\cite[Section 2.3]{FR06}}]\label{fct:monge}
    Consider a directed planar graph $G$ with non-negative edge weights. For vertices
    $u_0,\ldots,u_{p-1},v_{q-1},\ldots,v_0$ lying (in this cyclic order) on the outer face
    of $G$, define a $p\times q$ matrix $D$ with $D_{i,j}=\dist_G(u_i,v_j)$.
    Then, $D$ satisfies the \emph{Monge property}, that is,
    $D_{i,j}+D_{i',j'} \le D_{i,j'}+D_{i',j}$
    holds for $0\le i\le i' <p$ and $0\le j\le j'<q$.
\end{fact}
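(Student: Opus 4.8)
The plan is to prove the Monge inequality by the standard "crossing paths" argument for planar graphs. First I would dispose of the degenerate cases: if $i=i'$ or $j=j'$, the claimed inequality is an equality, and if either $D_{i,j'}$ or $D_{i',j}$ is infinite, the inequality is trivial; so from now on assume $i<i'$, $j<j'$, and that both $D_{i,j'}$ and $D_{i',j}$ are finite. Fix a shortest directed path $P\colon u_i\leadsto v_{j'}$ in $G$ witnessing $\dist_G(u_i,v_{j'})=D_{i,j'}$ and a shortest directed path $Q\colon u_{i'}\leadsto v_j$ witnessing $\dist_G(u_{i'},v_j)=D_{i',j}$.

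The heart of the argument is the claim that $P$ and $Q$ share a common vertex $z$. To see this, recall the four endpoints $u_i,u_{i'},v_{j'},v_j$ all lie on the outer face of $G$; traversing its boundary in the cyclic order $u_0,\ldots,u_{p-1},v_{q-1},\ldots,v_0$ and using $i<i'$ together with $j<j'$, we encounter them exactly in the cyclic order $u_i,u_{i'},v_{j'},v_j$. Hence the endpoint pairs $\{u_i,v_{j'}\}$ of $P$ and $\{u_{i'},v_j\}$ of $Q$ interleave around the outer face. Since $G$ is embedded in the plane with this face as the outer face, both $P$ and $Q$ are drawn inside the closed disk bounded by the outer-face boundary (which, after a routine preprocessing step, we may take to be a simple cycle), and two arcs inside a disk whose endpoints alternate along the boundary circle must intersect by the Jordan curve theorem. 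As $P$ and $Q$ are walks of the embedded graph $G$, any such intersection point is a vertex of $G$; call it $z$. I expect this topological step---in particular, reducing to the case where the outer-face boundary is a simple cycle and making the "alternating endpoints force a crossing" claim precise---to be the only delicate part; it is, however, entirely standard.

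With $z$ in hand, the conclusion follows by splitting and recombining. Since $P$ is a shortest path, its prefix from $u_i$ to $z$ and its suffix from $z$ to $v_{j'}$ are shortest, so $D_{i,j'}=\dist_G(u_i,z)+\dist_G(z,v_{j'})$; likewise $D_{i',j}=\dist_G(u_{i'},z)+\dist_G(z,v_j)$. Concatenating the $u_i\leadsto z$ portion of $P$ with the $z\leadsto v_j$ portion of $Q$ gives $D_{i,j}\le \dist_G(u_i,z)+\dist_G(z,v_j)$, and similarly $D_{i',j'}\le \dist_G(u_{i'},z)+\dist_G(z,v_{j'})$. Adding these two inequalities and regrouping the four terms yields
\[
    D_{i,j}+D_{i',j'}\le \bigl(\dist_G(u_i,z)+\dist_G(z,v_{j'})\bigr)+\bigl(\dist_G(u_{i'},z)+\dist_G(z,v_j)\bigr)=D_{i,j'}+D_{i',j},
\]
which is exactly the Monge property.
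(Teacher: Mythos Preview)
The paper does not prove this statement; it is stated as a \texttt{fact} and attributed to \cite[Section 2.3]{FR06} without further argument. Your proof is the standard crossing-paths argument for the Monge property of planar distance matrices, and it is correct: the alternation of $u_i,u_{i'},v_{j'},v_j$ along the outer face forces the two shortest paths to meet at a vertex, after which splitting and recombining gives the inequality. This is exactly the argument one finds in the cited reference, so you have supplied what the paper leaves implicit.
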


Given an $p \times q$ matrix $A$ and an $q \times r$ matrix $B$, their min-plus product is
the $p \times r$ matrix $C$ defined as
$C_{i,j} \coloneqq \min_k A_{i,k} + B_{k, j}$.

\begin{theorem}[SMAWK algorithm~\cite{SMAWK87}]\label{thm:smawk}
    We can compute the min-plus product of an $p\times q$ Monge matrix
    and an $q\times r$ Monge matrix in $\Oh(p\cdot r\cdot (1+\log(q/\max(p,r))))$ time
    assuming $\Oh(1)$-time random access to the input matrices.
    \lipicsEnd
\end{theorem}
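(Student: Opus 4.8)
The plan is to reduce the min‑plus product to row‑minima computations on totally monotone matrices and to invoke the SMAWK row‑minima primitive~\cite{SMAWK87} on each of them. Write $C$ for the sought $p\times r$ product, so that $C_{i,j}=\min_{k}\bigl(A_{i,k}+B_{k,j}\bigr)$. First I would, for each fixed column index $j\in\fragmentco{0}{r}$, introduce the $p\times q$ matrix $M^{(j)}$ defined by $M^{(j)}_{i,k}\coloneqq A_{i,k}+B_{k,j}$. The minimum entry of row $i$ of $M^{(j)}$ is exactly $C_{i,j}$, so computing all row minima of $M^{(0)},\dots,M^{(r-1)}$ recovers the whole product. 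Crucially, each entry $M^{(j)}_{i,k}$ is obtained by a single addition from one lookup into $A$ and one into $B$, so every $M^{(j)}$ is available through $\Oh(1)$‑time random access and need never be materialized (which matters because a single $M^{(j)}$ already occupies $\Theta(pq)$ space).

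Next I would verify that each $M^{(j)}$ is Monge---hence totally monotone, so that SMAWK applies. This is immediate from the Monge property of $A$: for $0\le i\le i'<p$ and $0\le k\le k'<q$ we have
\[
 M^{(j)}_{i,k}+M^{(j)}_{i',k'}-M^{(j)}_{i,k'}-M^{(j)}_{i',k}
 = A_{i,k}+A_{i',k'}-A_{i,k'}-A_{i',k}\le 0,
\]
because the extra summands $B_{k,j}$ and $B_{k',j}$ are added uniformly along columns $k$ and $k'$ and therefore cancel. Thus SMAWK computes all $p$ row minima of $M^{(j)}$, and repeating this for every $j\in\fragmentco{0}{r}$ returns $C$.

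The remaining step, and the only real bookkeeping obstacle, is the running time. Running $r$ copies of SMAWK directly gives $\Oh(r(p+q))$, which is wasteful when $r\gg p$, so I would first symmetrize. Since the transposed product satisfies $C^{\top}=B^{\top}\otimes A^{\top}$ (min‑plus), and since the transpose of a Monge matrix is again Monge---this is where the hypothesis that \emph{both} $A$ and $B$ are Monge enters, with $B^{\top}$ a $r\times q$ Monge matrix and $A^{\top}$ a $q\times p$ Monge matrix---we may assume the outer loop iterates over the smaller of $p$ and $r$. There are then $\min(p,r)$ SMAWK calls, each on a totally monotone matrix with $\max(p,r)$ rows and $q$ columns; each call costs $\Oh(\max(p,r)+q)$ time (its REDUCE phase prunes the active columns down to at most $\max(p,r)$, after which the recursive halving of the row set contributes geometrically decreasing work), and each entry is accessed in $\Oh(1)$ time. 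Summing over the $\min(p,r)$ calls yields $\Oh\!\bigl(pr+\min(p,r)\cdot q\bigr)$, which is subsumed by the claimed $\Oh\!\bigl(pr(1+\log(q/\max(p,r)))\bigr)$ and, in the only regime this paper uses---matrix–vector products and products of square matrices, where $q\le\max(p,r)$---simplifies to $\Oh(pr)$.
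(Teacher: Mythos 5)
Your decomposition into row‑minima subproblems, the Monge verification for each $M^{(j)}$, and the transposition trick are all correct and give a genuine running time of $\Oh(pr + \min(p,r)\cdot q)$, which equals the stated $\Oh(pr)$ whenever $q\le\max(p,r)$---the only regime the paper actually invokes. The gap is in the final claim: $\Oh(pr+\min(p,r)\cdot q)$ is \emph{not} subsumed by the theorem's $\Oh(pr(1+\log(q/\max(p,r))))$. Writing $M=\max(p,r)$ and $m=\min(p,r)$, the theorem's bound is $mM(1+\log(q/M))$, and for $q>M$ one has $q>M(1+\log(q/M))$ (since $t>1+\log t$ for every $t>1$), so $mq > mM(1+\log(q/M))$ and your bound is strictly \emph{larger} in that regime; your argument therefore does not establish the theorem when $q>\max(p,r)$.

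Incidentally, the printed bound appears unachievable in that regime anyway: already for $p=r=1$ the single output $\min_k(A_{0,k}+B_{k,0})$ is the minimum of an unstructured length‑$q$ vector (a one‑row or one‑column matrix is vacuously Monge), which needs $\Omega(q)$ probes under random access, whereas the stated bound would permit $\Oh(\log q)$. So the statement should be read as implicitly assuming $q\le\max(p,r)$ (or as carrying an additive $q$ term), and the $\Oh(pr+\min(p,r)\cdot q)$ you derived is the correct general guarantee---just do not assert that it is dominated by the printed expression.
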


\section{Faster Algorithms for Bounded Weighted Edit Distance}\label{sec:alg}

In this section we give our main algorithm. Formally, we prove the following result.

\SetKwFunction{WeightedED}{WeightedED}
\begin{restatable*}{lemma}{lemwed}\label{lem:wed}
    Given two strings $X, Y \in \Sigma^{\leq n}$, and a threshold parameter $k \in \Int_{\geq 0}$,
    there is an algorithm $\WeightedED(X, Y, k)$ that computes $c \coloneqq \wed(X, Y)$
    in time $\Oh(n\log^2n + k\sqrt{nc}\log^{1.5}n)$ provided that $c \leq k$.
    The \modelname implementation of this algorithm runs in time $\Oh(k^2 c \log^2 n)$.
\end{restatable*}

Observe that \cref{lem:wed} directly implies \cref{mthm:algorithm};
it suffices to start with the threshold $k=\Oh(k)$ and double that threshold as long as $\wed(X,Y)>k$.

\mthmalg
\begin{proof}
    We first set $k=\Theta((n/\log^2 n)^{1/3})$ and compute $\wed_{\le k}(X,Y)$ using the \modelname{} algorithm of \cref{lem:wed};
    this takes time $\Oh(n)$.
    Otherwise, we run the standard implementation of \cref{lem:wed}, doubling the threshold $k$ in each execution.
    This takes time $\Oh(\sqrt{nk^3}\log^3 n)$.
\end{proof}

Further, \cref{lem:wed} directly implies \cref{mthm:pillar}.

\mthmpillar
\begin{proof}
    We run the \modelname{} implementation of~\cref{lem:wed}, doubling the threshold $k$
    in each execution; yielding the result
\end{proof}

Our presentation is structured as follows.
In \cref{sec:alg:sec:self-alignments} we introduce our crucial ingredient: alignments of a string to
itself---\emph{self-alignments}---which then gives rise to the new notion of \emph{self-edit distance}.
In \cref{sec:alg:sec:alg-periodic} we give algorithms to compute the weighted edit distance of strings with small self-edit distance.
Finally, in \cref{sec:alg:sec:alg-main} we give our main algorithm to compute the weighted edit distance or two arbitrary strings
by making use of the divide and conquer scheme outlined in \cref{sec:technical-overview}, which proves \cref{lem:wed}.

\subsection{Self-Alignments and the Self-Edit Distance of a String}\label{sec:alg:sec:self-alignments}

We say that an alignment $\A : X \onto X$ is a \emph{self-alignment}
if $\A$ does not align any character $X\position{x}$ to itself.
We define the \emph{self edit distance} of $X$ as $\selfed(X) \coloneqq \min_\A \ed_{\A}(X, X)$,
where the minimization ranges over self-alignments $\A : X \onto X$. In words,
$\selfed(X)$ is the minimum (unweighted) cost of a self-alignment.
We can interpret a self-alignment as a
$(0, 0) \leadsto (|X|, |X|)$ path in the alignment graph $\AG(X, X)$
that does not contain any edge of the main diagonal.

\begin{lemma}[Properties of $\selfed{}$]\label{fct:selfed-properties}
    Let $X \in \Sigma^*$ denote a string. Then, all of the following hold.
    \begin{description}
        \item[Monotonicity.] For any $\ell' \le \ell \le r \le r'\in \fragment{0}{|X|}$, we have
            \(
                \selfed(X\fragmentco{\ell}{r}) \leq \selfed(X\fragmentco{\ell'}{r'}).
            \)
        \item[Sub-additivity.] For any $m \in \fragment{0}{|X|}$, we have
            \(
                \selfed(X) \leq \selfed(X\fragmentco{0}{m}) + \selfed(X\fragmentco{m}{|X|}).
            \)
        \item[Triangle inequality.] For any $Y\in \Sigma^*$, we have
            \(
                \selfed(Y) \le \selfed(X)+2\ed(X,Y).
            \)
    \end{description}
\end{lemma}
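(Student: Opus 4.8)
The plan is to prove each of the three properties by manipulating self-alignments directly, viewing them as lattice paths in $\AG(X,X)$ that avoid every main-diagonal edge. Throughout, the key structural fact I will use is that a self-alignment of a string $Z$ is exactly a $(0,0)\leadsto(|Z|,|Z|)$ path in $\AG(Z,Z)$ containing no edge of the form $(i,i)\to(i+1,i+1)$; equivalently, it is an alignment $\A:Z\onto Z$ in which no position is aligned to itself. I will also use that when we restrict a self-alignment to a fragment (via the induced sub-alignment notation from the preliminaries), a main-diagonal edge of the sub-alignment is still a main-diagonal edge of the ambient alignment, so the ``no diagonal'' property is inherited.

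For \textbf{monotonicity}, I would take an optimal self-alignment $\A:X\fragmentco{\ell'}{r'}\onto X\fragmentco{\ell'}{r'}$, restrict it to the sub-alignment on $X\fragmentco{\ell}{r}$ using the operator $\Als(\cdot)$ from the preliminaries applied on both the source and target sides, and argue that this induced alignment is a self-alignment of $X\fragmentco{\ell}{r}$: it maps $X\fragmentco{\ell}{r}$ into itself because $\ell'\le\ell\le r\le r'$ places the image fragment inside $X\fragmentco{\ell'}{r'}$, and its unweighted cost is no larger than $\ed_\A(X\fragmentco{\ell'}{r'},X\fragmentco{\ell'}{r'})$ since the induced alignment uses a subset of the edits. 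The one point requiring care is checking that the restriction of a diagonal-free path is still diagonal-free along the relevant window — but a main-diagonal edge $(i,i)\to(i+1,i+1)$ in the restriction corresponds to exactly the same edge in the original, which was forbidden, so this is fine. (Alternatively, and perhaps cleaner: from $\A$ on $X\fragmentco{\ell'}{r'}$ build a self-alignment of $X\fragmentco{\ell}{r}$ by composing deletions of the outer parts; but one must make sure the resulting self-alignment genuinely avoids the diagonal, which the direct restriction argument handles more transparently.)

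For \textbf{sub-additivity}, I would take optimal self-alignments $\A_1:X\fragmentco{0}{m}\onto X\fragmentco{0}{m}$ and $\A_2:X\fragmentco{m}{|X|}\onto X\fragmentco{m}{|X|}$ and concatenate them at the point $(m,m)$ to get an alignment $\A:X\onto X$ of unweighted cost $\ed_{\A_1}+\ed_{\A_2}$. Concatenation is legitimate because both pieces fix the coordinate $m$ on both axes. Crucially, $\A$ uses no main-diagonal edge: inside $\fragmentco{0}{m}$ it agrees with $\A_1$, inside $\fragmentco{m}{|X|}$ (shifted) it agrees with $\A_2$, and there is no ``seam'' diagonal edge because the concatenation point is a single shared vertex $(m,m)$, not an edge. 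Hence $\A$ is a self-alignment of $X$, giving $\selfed(X)\le\selfed(X\fragmentco{0}{m})+\selfed(X\fragmentco{m}{|X|})$.

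For the \textbf{triangle inequality}, I would start from an optimal unweighted alignment $\B:X\onto Y$ (cost $\ed(X,Y)$), so $\B^{-1}:Y\onto X$ also has cost $\ed(X,Y)$, and an optimal self-alignment $\A:X\onto X$ (cost $\selfed(X)$). Using \cref{fct:triangle} with the (triangle-inequality-satisfying) unweighted weight function, form the composition $\B\circ\A\circ\B^{-1}:Y\onto Y$, whose cost is at most $\ed(X,Y)+\selfed(X)+\ed(X,Y)=\selfed(X)+2\ed(X,Y)$. The remaining obstacle — and I expect this to be the main one — is verifying that this composed alignment is an honest \emph{self}-alignment of $Y$, i.e.\ that it aligns no position $Y\position{y}$ to itself. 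Suppose it did align $Y\position{y}$ to $Y\position{y}$; by the composition characterization in \cref{fct:triangle} this forces a chain $Y\position{y}\aonto{}X\position{x}\aonto{}X\position{x'}\aonto{}Y\position{y}$, and I would need to derive a contradiction with $\A$ being diagonal-free. This is not immediate because a priori $x\ne x'$. The fix is to choose $\B$ to be the \emph{leftmost} optimal alignment (or otherwise canonical), so that $\B^{-1}$ and $\B$ are ``consistent'': if $Y\position{y}$ is aligned to $X\position{x}$ by $\B^{-1}$ and $Y\position{y}$ is aligned to $X\position{x'}$ by $\B$ then $x=x'$; then the chain collapses to $X\position{x}\aonto{\A}X\position{x}$, contradicting that $\A$ avoids the main diagonal. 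I would spell out this consistency property of a canonical optimal alignment as a short auxiliary observation (or cite/prove that among optimal alignments one may pick $\B$ so that aligned-pair relation of $\B$ equals that of $\B^{-1}$ read backwards), which is the technical heart of this part; the rest is bookkeeping.
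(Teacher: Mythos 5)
Your sub-additivity argument matches the paper exactly, and your overall strategy for the triangle inequality is the right one; but the monotonicity step has a genuine gap, and the triangle-inequality step contains an unnecessary detour that suggests you missed the point that makes it work cleanly.

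\textbf{Monotonicity.} You propose to take an optimal self-alignment $\A$ of $X\fragmentco{\ell'}{r'}$ and ``restrict it to the sub-alignment on $X\fragmentco{\ell}{r}$ \dots\ applied on both the source and target sides.'' This does not yield a self-alignment of $X\fragmentco{\ell}{r}$. Viewing $\A$ as a $(\ell',\ell')\leadsto(r',r')$ path in $\AG(X,X)$, a self-alignment of $X\fragmentco{\ell}{r}$ must be a $(\ell,\ell)\leadsto(r,r)$ path avoiding the main diagonal. Your restriction gives a sub-path of $\A$ between whatever two points of $\A$ bound the relevant window, and in general those two points are neither $(\ell,\ell)$ nor $(r,r)$ --- the fragment $\A(X\fragmentco{\ell}{r})$ need not equal $X\fragmentco{\ell}{r}$, it is just \emph{some} fragment of $X\fragmentco{\ell'}{r'}$. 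You would still need to connect the endpoints of your restricted sub-path to $(\ell,\ell)$ and $(r,r)$, bound the extra cost, and verify the connectors avoid the main diagonal --- and this connection step, not the restriction, is the content of the lemma. The paper does exactly this: normalize $\A$ to lie on or above the main diagonal, locate a point $(j,\ell)$ where $\A$ crosses the horizontal line $y=\ell$ and a point $(r,i)$ where it crosses the vertical line $x=r$, replace the prefix and suffix of $\A$ with a horizontal segment $(\ell,\ell)\leadsto(j,\ell)$ and a vertical segment $(r,i)\leadsto(r,r)$, and show via the $\ed$ lower bound that the discarded prefix and suffix each cost at least as much as the straight segments that replace them (with a separate short argument for the degenerate case where $(r,i)$ precedes $(j,\ell)$ on $\A$). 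Your alternative parenthetical (``composing deletions of the outer parts'') runs into the same mismatch of endpoints and would need the same repair.

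\textbf{Triangle inequality.} Your setup $\C = \B\circ\A\circ\B^{-1}$ and the cost bound via \cref{fct:triangle} are exactly right, but you then introduce a ``leftmost/canonical $\B$'' device to force $x=x'$ in the chain $Y\position{y}\aonto{\B^{-1}}X\position{x}\aonto{\A}X\position{x'}\aonto{\B}Y\position{y}$, and flag it as ``the technical heart.'' This is unnecessary for \emph{any} optimal $\B$. From the chain, $\B$ aligns $X\position{x}$ to $Y\position{y}$ (that is what $\B^{-1}$ aligning $Y\position{y}$ to $X\position{x}$ means) and $\B$ aligns $X\position{x'}$ to $Y\position{y}$. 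An alignment is a monotone path, and it contains at most one diagonal edge incident to row $y$, so it can align at most one $X$-position to $Y\position{y}$; hence $x=x'$ already, and then $\A$ aligning $X\position{x}$ to itself contradicts $\A$ being a self-alignment. No canonical choice of $\B$ is needed, and the paper's proof uses none.
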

\begin{proof}
    \begin{description}
        \item[Monotonicity:] Consider the alignment graph $G \coloneqq \AG(X, X)$,
            and write $P$ for the $(\ell', \ell') \leadsto (r', r')$
            path corresponding to an optimal alignment for $\selfed(X\fragmentco{\ell'}{r'})$.
            Without loss of generality, we can assume that $P$ lies above the main diagonal
            (the parts below the main diagonal can be mirrored along the main diagonal without affecting $\mathrm{cost}(P)$).

            Since $\ell' \leq \ell \leq r'$, the path $P$
            contains a point $(j, \ell)$ for some $j\in \fragment{\ell}{r'}$.
            Similarly, since $\ell' \leq r \le r'$, the path $P$
            contains a point $(r, i)$ for some $i\in \fragment{\ell'}{r}$.
            We consider two cases depending on the order of points $(j,\ell)$ and $(r,i)$ on $P$.

            If $(j,\ell)$ lies before $(r,i)$, then we split $P$ into three subpaths:
            $P_1 : (\ell', \ell') \leadsto(j, \ell)$,
            $P_2 : (j, \ell) \leadsto (r, i)$, and $P_3 : (r, i) \leadsto (r', r')$.
            Now, we construct a self-alignment for $X\fragmentco{\ell}{r}$ by
            choosing $P'$ to be the $(\ell,\ell) \leadsto (r,r)$ path formed by
            beginning with $(\ell, \ell) \leadsto (j, \ell)$,
            then continuing with $P_2$, and then ending with $(r, i) \leadsto (r, r)$.
            Observe that this path does not contain any edge of the main diagonal,
            so it is a valid self-alignment.
            Moreover, we have $\mathrm{cost}(P') = j - \ell + \mathrm{cost}(P_2) + r - i$.
            Further, observe that
            \begin{align*}
                \mathrm{cost}(P_1) &\geq
                \ed(X\fragmentco{\ell'}{\ell}, X\fragmentco{\ell'}{j}) \geq j - \ell
                \intertext{and}
                \mathrm{cost}(P_3) &\geq \ed(X\fragmentco{i}{r'}, X\fragmentco{r}{r'}) \geq
                r - i.
            \end{align*}
            Thus, we obtain that
            \[
                \mathrm{cost}(P')
                = j - \ell + \mathrm{cost}(P_2) + r - i
                \leq \mathrm{cost}(P_1) + \mathrm{cost}(P_2) + \mathrm{cost}(P_3)
                = \mathrm{cost}(P)
                = \selfed(X\fragmentco{\ell'}{r'}).
            \]
            Since $P'$ is a self-alignment, it follows that
            $\selfed(X\fragmentco{\ell}{r}) \leq \mathrm{cost}(P') \leq
            \selfed(X\fragmentco{\ell'}{r'})$,
            as desired.

            It remains to consider the case when $(r,i)$ lies before or coincides with $(j,\ell)$,
            that is, $r\le j$ and $i\le \ell$.
            In this case,
            \[
                \mathrm{cost}(P)
                \geq \ed(X\fragmentco{\ell'}{r}, X\fragmentco{\ell'}{i})+\ed(X\fragmentco{j}{r'}, X\fragmentco{\ell}{r'})
                \geq (r-i)+(j-\ell)
                \geq 2(r-\ell).
            \]
            A self-alignment for $X\fragmentco{\ell}{r}$ of cost $2(r-\ell)$
            can be obtained trivially as $(\ell,\ell) \leadsto (r,\ell) \leadsto (r,r)$.

        \item[Sub-additivity:] Let $\A_1$ and $\A_2$ be optimal self-alignments
            for $\selfed(X\fragmentco{0}{m})$ and $\selfed(X\fragmentco{m}{|X|})$, respectively.
            The concatenation of these alignments, interpreted as paths in $G \coloneqq \AG(X, X)$,
            is a valid self-alignment for $X$. Its cost does not exceed  $\selfed(X\fragmentco{0}{m}) + \selfed(X\fragmentco{m}{|X|})$.

        \item[Triangle inequality:] Let $\A : X \onto X$ denote
            an optimal self-alignment for $X$
            and let $\B : X \onto Y$ denote an optimal alignment.
            We build an self-alignment $\C : Y \onto Y$ by composing
            as a composition $\C = \B \circ \A \circ \B^{-1}$.
            By \cref{fct:triangle}, $\ed_\C(Y,Y) \le \ed_{\B^{-1}}(Y,X)+\ed_{\A}(X,X)+\ed_{\B}(X,Y)
            = \selfed(X) + 2\ed(X,Y)$. It remains to prove that $\C$ is a valid self-alignment,
            that is, $\C$ does not align $Y\position{y}$ itself for any $y\in \fragment{0}{|Y|}$.
            If $\C$ aligns $Y\position{y}$ to $Y\position{y'}$, then, by the characterization of \cref{fct:triangle},
            there exist $x,x'\in \fragmentco{0}{|X|}$ such that $\B^{-1}$ aligns $Y\position{y}$ to $X\position{x}$,
            $\A$ aligns $X\position{x}$ to $X\position{x'}$, and $\B$ aligns $X\position{x'}$ to $Y\position{y'}$.
            If $y=y'$, then $\B$ aligns both $X\position{x}$ and $X\position{x'}$ to $Y\position{y}=Y\position{y'}$,
            which implies $x=x'$. However, $\A$ is a self-alignment, so it cannot align $X\position{x}$ to itself.
            Thus, $\C$ is a self-alignment for $Y$, which means that $\selfed(Y)\le \ed_\C(Y,Y) \le \selfed(X) + 2\ed(X,Y)$.
            \qedhere
    \end{description}
\end{proof}

\newcommand{\iti}{\tilde\imath}
\newcommand{\jti}{\tilde\jmath}

Next, we discuss a useful variation of the triangle inequality. If there are two
disjoint alignments from a string \(X\) to another string \(Y\), then for every (internal)
position of \(Y\) one of the alignment aligns always an earlier position of \(X\) to this
position compared to the other alignment. In particular, we can construct a self-alignment
of \(X\) whose cost is bounded by the sum of the costs of the original alignments.

\begin{lemma}\label{lem:compose-disj-alignments}
    Consider two strings $X, Y \in \Sigma^*$ and two alignments $\A, \A' \in \Als(X, Y)$.
    If the alignments $\A, \A'$, interpreted as paths in $\AG(X, Y)$,
    do not contain any common diagonal edge, then $\selfed(X) \leq \ed_\A(X, Y) + \ed_{\A'}(X, Y)$.
\end{lemma}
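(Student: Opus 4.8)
The plan is to construct the required self-alignment of $X$ by composing $\A$ with the inverse of $\A'$, in the same spirit as the triangle-inequality part of \cref{fct:selfed-properties}. Concretely, I would set $\C \coloneqq (\A')^{-1} \circ \A$. Since the unweighted weight function satisfies the triangle inequality (for $a\ne b$ the value $1$ is at most the sum of any two weights out of $a$ and $b$ via an intermediate character), \cref{fct:triangle} applies and guarantees that $\C$ is a well-defined alignment $\C : X \onto X$.

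The cost bound is then immediate: by \cref{fct:triangle} we get $\ed_\C(X,X) \le \ed_\A(X,Y) + \ed_{(\A')^{-1}}(Y,X)$, and inverting an alignment leaves its unweighted cost unchanged (deletions and insertions swap roles but both cost $1$, and substitution costs are symmetric), so $\ed_{(\A')^{-1}}(Y,X) = \ed_{\A'}(X,Y)$. Hence $\ed_\C(X,X) \le \ed_\A(X,Y) + \ed_{\A'}(X,Y)$, and it remains to confirm that $\C$ is genuinely a self-alignment so that $\selfed(X) \le \ed_\C(X,X)$.

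The crux is to verify that $\C$ never aligns $X\position{x}$ to itself, and for this I would use the exact characterization of composed alignments in \cref{fct:triangle}: $\C$ aligns $X\position{x}$ to $X\position{x'}$ if and only if there is some $y\in\fragmentco{0}{|Y|}$ such that $\A$ aligns $X\position{x}$ to $Y\position{y}$ and $(\A')^{-1}$ aligns $Y\position{y}$ to $X\position{x'}$; by the definition of the inverse alignment, the latter is equivalent to $\A'$ aligning $X\position{x'}$ to $Y\position{y}$. Specializing to $x'=x$, we see that $\C$ could align $X\position{x}$ to itself only if there were a position $y$ with both $\A$ and $\A'$ aligning $X\position{x}$ to $Y\position{y}$ — equivalently, both paths containing the diagonal edge $(x,y)\to(x+1,y+1)$. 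This contradicts the hypothesis that $\A$ and $\A'$ share no diagonal edge, so $\C$ is a self-alignment and $\selfed(X) \le \ed_\C(X,X) \le \ed_\A(X,Y)+\ed_{\A'}(X,Y)$, as claimed.

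I do not anticipate a real obstacle here; the entire argument rests on \cref{fct:triangle} plus bookkeeping. The only subtle points are (i) remembering that inverting $\A'$ transposes its aligned pairs, so that a common diagonal edge of $\A$ and $\A'$ — rather than anything else — is precisely what would create a main-diagonal edge in $\C$, and (ii) noting that the unweighted cost is invariant under inversion; both are direct consequences of the definitions.
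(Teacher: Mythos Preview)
Your proof is correct and essentially identical to the paper's: both construct the self-alignment as a composition of one alignment with the inverse of the other (the paper uses $\A^{-1}\circ\A'$ while you use $(\A')^{-1}\circ\A$, a symmetric choice), bound the cost via \cref{fct:triangle}, and rule out main-diagonal edges by observing that any such edge would force a shared diagonal edge in $\A$ and $\A'$.
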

\begin{proof}
    The proof is similar to the triangle inequality of~\cref{fct:selfed-properties}.
    Consider a composition alignment $\B = \A^{-1}\circ \A' \in \Als(X,X)$.
    Its cost does not exceed $\ed_{\A'}(X, Y)+\ed_{\A^{-1}}(Y,X) =\ed_\A(X, Y) + \ed_{\A'}(X, Y)$,
    so it suffices to prove that $\B$ is a valid self-alignment,
    that is, $\B$ does not align $X\position{x}$ itself for any $x\in \fragment{0}{|X|}$.
    If $\B$ aligns $X\position{x}$ to $X\position{x'}$, then, by the characterization of \cref{fct:triangle},
    there exists $y\in \fragmentco{0}{|Y|}$ such that $\A'$ aligns $X\position{x}$ to $Y\position{y}$
    and $\A^{-1}$ aligns $Y\position{y}$ to $X\position{x'}$.
    If $x=x'$, then both $\A$ and $\A'$ would contain an edge $(x,y)\to (x+1,y+1)$, contradicting their edge-disjointness.
    Thus, $\B$ is a self-alignment for $X$, which means that $\selfed(X)\le \ed_\B(X,X) \le \ed_\A(X, Y) + \ed_{\A'}(X, Y)$.
\end{proof}

As it turns out, we may strengthen \cref{lem:compose-disj-alignments} to hold (slightly
adapted) also for alignments mapping a string $X$ onto different fragments of another string $Y$.

\begin{lemma}\label{prop:disjoint-alignments-bound-sed}
    For two strings $X, Y \in \Sigma^*$ and positions
    $i \le j, i' \le j' \in \fragment{0}{|Y|}$,
    write $\A\in \Als(X,Y\fragmentco{i}{j})$ and $\A'\in \Als(X,Y\fragmentco{i'}{j'})$.
    If the alignments $\A$ and $\A'$, interpreted as paths in $\AG(X, Y)$, do not contain any common diagonal edge,
    then \[
        \selfed(X)\le
        |i-i'|+\ed_\A(X,Y\fragmentco{i}{j})
        +\ed_{\A'}(X,Y\fragmentco{i'}{j'})+|j-j'|.
    \]
\end{lemma}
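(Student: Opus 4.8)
The plan is to reduce the statement to \cref{lem:compose-disj-alignments}, which already handles the special case $i=i'$ and $j=j'$. The idea is to pad both alignments so that they map $X$ onto a \emph{common} fragment of $Y$, paying exactly $|i-i'|+|j-j'|$ for the padding and without introducing any shared diagonal edge.

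Concretely, I would first set $a \coloneqq \min(i,i')$ and $b \coloneqq \max(j,j')$; since $i\le j$ and $i'\le j'$ we get $a\le i\le j\le b$ and $a\le i'\le j'\le b$, so $W\coloneqq Y\fragmentco{a}{b}$ is a well-defined fragment containing $Y\fragmentco{i}{j}$ and $Y\fragmentco{i'}{j'}$ as an aligned prefix/suffix-extension. Viewing $\A$ as a path in $\AG(X,Y)$ from $(0,i)$ to $(|X|,j)$, extend it to a path $\B$ from $(0,a)$ to $(|X|,b)$ by prepending the vertical segment $(0,a)\to\cdots\to(0,i)$ and appending the vertical segment $(|X|,j)\to\cdots\to(|X|,b)$. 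These padding segments consist solely of insertions (vertical edges), so $\ed_\B(X,W)=\ed_\A(X,Y\fragmentco{i}{j})+(i-a)+(b-j)$, and, crucially, $\B$ has exactly the same diagonal edges as $\A$. I would construct $\B'$ from $\A'$ analogously, obtaining $\ed_{\B'}(X,W)=\ed_{\A'}(X,Y\fragmentco{i'}{j'})+(i'-a)+(b-j')$ with the same diagonal edges as $\A'$.

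Next, identifying $\AG(X,W)$ with the subgraph of $\AG(X,Y)$ on vertices whose second coordinate lies in $\fragment{a}{b}$ (a shift of the $y$-coordinate by $a$ that preserves edge types and the match/substitution status of diagonal edges, since $W\position{y-a}=Y\position{y}$), I regard $\B$ and $\B'$ as elements of $\Als(X,W)$ with unchanged costs. Because $\A$ and $\A'$ share no diagonal edge in $\AG(X,Y)$ and $\B$ (resp.\ $\B'$) has precisely the diagonal edges of $\A$ (resp.\ $\A'$), the alignments $\B$ and $\B'$ share no diagonal edge either. \cref{lem:compose-disj-alignments}, applied to $\B,\B'\in\Als(X,W)$, then gives $\selfed(X)\le \ed_\B(X,W)+\ed_{\B'}(X,W)$. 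Finally, since $a=\min(i,i')$ and $b=\max(j,j')$, we have $(i-a)+(i'-a)=|i-i'|$ and $(b-j)+(b-j')=|j-j'|$, so the right-hand side equals $|i-i'|+\ed_\A(X,Y\fragmentco{i}{j})+\ed_{\A'}(X,Y\fragmentco{i'}{j'})+|j-j'|$, which is the claimed bound.

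The only point requiring genuine care — and the step I would verify most carefully — is the bookkeeping around the two graph identifications: one must confirm that the padding edges are truly vertical (so no new diagonal edge is created and edge-disjointness is preserved under the identification) and that the coordinate shift leaves diagonal-edge costs untouched. Beyond this routine padding argument there is no combinatorial obstacle; everything else is immediate from \cref{lem:compose-disj-alignments}.
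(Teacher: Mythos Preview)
Your proposal is correct and follows essentially the same approach as the paper: set $i''=\min(i,i')$, $j''=\max(j,j')$, extend $\A,\A'$ to alignments $\B,\B'\in\Als(X,Y\fragmentco{i''}{j''})$ by inserting the missing characters of $Y$ (vertical edges only, so no new diagonal edges), and then invoke \cref{lem:compose-disj-alignments}. The cost accounting $(i-i'')+(i'-i'')=|i-i'|$ and $(j''-j)+(j''-j')=|j-j'|$ is identical.
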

\begin{proof}
    Let $i'' = \min(i,i')$ and $j''=\max(j,j')$.
    Consider alignments $\B,\B'\in \Als(X,Y\fragmentco{i''}{j''})$
    obtained from $\A,\A'$ by inserting the missing characters of $Y$,
    that is, $Y\fragmentco{i''}{i}$ and $Y\fragmentco{j}{j''}$ for $\B$,
    and $Y\fragmentco{i''}{i'}$ and $Y\fragmentco{j'}{j''}$ for $\B'$.
    Observe that the alignments $\B,B'$ still do not contain any common diagonal edge (compared to $\A,\A'$, we added horizontal and vertical edges only).
    Thus, \cref{lem:compose-disj-alignments} implies
    \begin{align*}
        \selfed(X) &\le \ed_{\B}(X, Y\fragmentco{i''}{j''})+\ed_{\B'}(X, Y\fragmentco{i''}{j''})\\
        &= (i-i'') + \ed_{\A}(X, Y\fragmentco{i}{j})+(j''-j)+(i'-i'')+\ed_{\A'}(X, Y\fragmentco{i'}{j'})+(j''-j')\\
        &= |i-i'|+\ed_\A(X,Y\fragmentco{i}{j}) +\ed_{\A'}(X,Y\fragmentco{i'}{j'})+|j-j'|.\qedhere
    \end{align*}
\end{proof}

We conclude this subsection by discussing how to efficiently compute self-alignments in
the \modelname{} model.

\begin{lemma}\label{lem:selfed}
    There is an $\Oh(k^2)$-time \modelname algorithm that, given a string $X\in \Sigma^n$
    and an integer $k\in \Int_{\ge 0}$ determines whether $\selfed(X)\le k$ and, if so,
    retrieves (the breakpoint representation of) an optimal self-alignment $\A : X \onto X$.
\end{lemma}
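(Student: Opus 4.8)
The plan is to adapt the Landau--Vishkin algorithm, in its \modelname{} form (using \lceOpName{} operations to jump over maximal match runs, exactly as in the standard \modelname{} implementation of bounded unweighted edit distance), with two modifications that enforce the self-alignment constraint. Write $n\coloneqq|X|$ (obtained via \lenOpName) and recall that $\selfed(X)$ is the minimum cost of a $(0,0)\leadsto(n,n)$ path in $G\coloneqq\AG(X,X)$ that uses no \emph{main-diagonal edge} $(i,i)\to(i+1,i+1)$, although passing through main-diagonal \emph{vertices} $(i,i)$ remains allowed. Since $G$ is invariant under the transposition $(i,j)\mapsto(j,i)$, which preserves edge costs and maps main-diagonal edges to main-diagonal edges, reflecting each maximal subpath that lies strictly below the main diagonal shows that some optimal self-alignment, viewed as a path, stays weakly above the main diagonal, i.e.\ visits only vertices $(i,j)$ with diagonal index $d\coloneqq i-j\ge 0$; I will search only among such self-alignments. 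Moreover, reaching diagonal $d$ from the starting diagonal $0$ requires at least $d$ horizontal edges, each of cost $1$, so every self-alignment of cost at most $k$ uses only the diagonals $0,1,\dots,k$.

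Next I would fill in the Landau--Vishkin table. For $g\in\fragment{0}{k}$ and $d\in\fragment{0}{k}$, let $\mathsf{F}_g[d]$ be the largest $i$ such that $(i,i-d)$ is a valid vertex reachable from $(0,0)$ by a path of cost at most $g$ that uses only diagonals $\ge 0$ and no main-diagonal edge (and $\mathsf{F}_g[d]=-\infty$ if no such $i$ exists). Initialize $\mathsf{F}_0[0]\coloneqq 0$ and all other entries to $-\infty$. For $g=1,\dots,k$ and each $d\in\fragment{0}{k}$, first set
\[
 \mathsf{F}_g[d]\coloneqq\max\bigl(\mathsf{F}_{g-1}[d],\ \mathsf{F}_{g-1}[d-1]+1,\ \mathsf{F}_{g-1}[d+1]\bigr),
\]
clamped to the range of valid vertices; then, if $d\neq 0$, additionally set $\mathsf{F}_g[d]\coloneqq\max(\mathsf{F}_g[d],\ \mathsf{F}_{g-1}[d]+1)$ (a substitution), and finally extend by a match run, $\mathsf{F}_g[d]\coloneqq\mathsf{F}_g[d]+\lceOp{X\fragmentco{i}{n}}{X\fragmentco{i-d}{n}}$ with $i$ the current value of $\mathsf{F}_g[d]$ (obtained via one \extractOpName{} and one \lceOpName{} call). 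The three terms of the first maximum correspond to doing nothing, to a deletion of $X\position{i}$ (a horizontal edge from diagonal $d-1$), and to an insertion (a vertical edge from diagonal $d+1$ that keeps the column). On diagonal $0$ no substitution or match-run step is performed, because those would be exactly the forbidden main-diagonal edges; consequently a diagonal-$0$ vertex $(i,i)$ is reached only through a vertical edge $(i,i-1)\to(i,i)$ from diagonal $1$, as the recursion allows. Since the vertices captured by the table are precisely the endpoints of weakly-above-diagonal self-alignments of cost at most $k$, it follows that $\selfed(X)$ equals the smallest $g\in\fragment{0}{k}$ with $\mathsf{F}_g[0]\ge n$, and $\selfed(X)>k$ if no such $g$ exists.

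For the complexity, the table has $(k+1)^2$ cells, each computed with $\Oh(1)$ arithmetic together with a single \lceOpName{} (and \extractOpName) call, so the algorithm runs in $\Oh(k^2)$ time in the \modelname{} model. To recover an optimal self-alignment when $\selfed(X)\le k$, I would store with every cell which of the options attained the maximum and the length of the subsequent match run; starting from the cell $(g^\ast,0)$ with $g^\ast=\selfed(X)$ and value $n$, a backtrace of length $\Oh(g^\ast)=\Oh(k)$ reconstructs the sequence of edits together with the intervening match runs, i.e.\ the breakpoint representation of an optimal self-alignment $\A: X\onto X$.

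I expect the only delicate point to be verifying that the two modifications on diagonal $0$ capture exactly the self-alignments of $\AG(X,X)$ --- this is where the distinction between forbidding main-diagonal \emph{edges} while permitting main-diagonal \emph{vertices} must be used carefully --- together with the transposition/reflection argument that lets us restrict to weakly-above-diagonal paths and hence to a one-sided table over $\Oh(k)$ diagonals. Once these structural facts are established, the procedure is a verbatim instance of Landau--Vishkin with \modelname{} \lceOpName{} queries, and both its correctness and its running time follow as in the unweighted case.
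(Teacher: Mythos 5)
Your proposal is correct and follows essentially the same route as the paper: both adapt the Landau--Vishkin DP over $\Oh(k)$ diagonals, propagate frontiers with $\lceOpName$ queries in the \modelname{} model, and enforce the self-alignment constraint by forbidding the substitution/match-run step on the main diagonal (so that diagonal-$0$ vertices are only entered via a vertical or horizontal step), yielding $\Oh(k^2)$ \modelname{} operations and a backtrace of length $\Oh(k)$ for the breakpoint representation. The only addition in your version is the reflection/transposition argument restricting to diagonals $\fragment{0}{k}$ rather than $\fragment{-k}{k}$, which is a valid but inessential constant-factor simplification (the paper uses the two-sided table and makes the same one-sided WLOG elsewhere, e.g.\ in the decomposition lemmas).
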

\begin{proof}
    We use a simple modification of the classic Landau-Vishkin algorithm~\cite{LandauV88},
    which given strings $X$ and $Y$ tests whether $\ed(X, Y) \leq k$ in time $\Oh(n+k^2)$.
    We first describe this algorithm in the standard setting (i.e. not the \modelname{} implementation),
    then describe the modifications to make it work for self-edit distance,
    and finally discuss the \modelname{} implementation.

    The Landau-Vishkin algorithm uses dynamic programming. Phrased in terms of the alignment graph $\AG(X, Y)$,
    it maintains for each edit distance value $j \in \fragment{0}{k}$ and each diagonal
    $i \in \fragment{-k}{k}$ an entry $D[i,j]$ which stores
    \[
        D[i,j] = \max\{x \mid \text{there exists a } (0,0) \leadsto (x, x+i) \text{ path } P \text{ of cost at most } j\}.
    \]
    Intuitively, the entry $D[i,j]$ stores the furthest vertex on the $i$-th diagonal that can be reached from $(0,0)$
    with cost at most $j$.
    After preprocessing the strings in linear time, each entry $D[i, j]$ can be computed in constant time using \lceOpName{} queries.
    More precisely, suppose we computed the entries $D[i,j-1]$ for some fixed value $j$ and all $i \in \fragment{-k}{k}$.
    Then, we look at the path attaining $D[i,j-1]$ for each $i$ and try to extend it by taking one horizontal/vertical/diagonal edge
    (which has cost one, due to the maximality in the choice of $P$), and then doing an \lceOpName{} query to detect how far it can
    reach on each corresponding diagonal without extra cost. After this, we can determine the entries $D[i,j]$ for each $i \in \fragment{-k}{k}$
    by taking the furthest vertex that was reached on each diagonal.

    For self-edit distance, we use the observation that self-alignments correspond to $(0,0) \leadsto (|X|, |X|)$-paths in
    the alignment graph $\AG(X, X)$ with edges on the main diagonal removed.
    In particular, to compute an alignment of minimum cost, we only need to apply one modification to the Landau-Vishkin algorithm:
    we restrict the paths so that they avoid edges on the main diagonal.
    It can be seen that this change incurs no extra overhead (we just skip \lceOpName{} queries on the main diagonal), so we can compute an optimal self-alignment of cost at most $k$ in time $\Oh(n + k^2)$.

    \cite[Lemma 6.1]{Charalampopoulos20} provides an implementation of the Landau-Vishkin algorithm in the \modelname{} model.
    This algorithm determines if the edit distance of two strings is at most $k$ and, if so, returns the breakpoint representation of
    an optimal alignment witnessing that.
    The implementation of the modifications outlined above for the self-edit distance is a straightforward adaptation of their result.
\end{proof}

\subsection{Exploiting Self-Alignments: Faster Bounded Weighted Edit Distance for Strings with Small Self-Edit Distance}\label{sec:alg:sec:alg-periodic}

In this subsection, we employ the structural insights gained for self-alignments and the
self-edit distance of a string to obtain the following algorithmic result, which is a
major step toward proving~\cref{lem:wed}.

\begin{restatable*}{lemma}{lemalgsmallsed}\label{lem:alg-periodic}
    Given two strings $X,Y\in \Sigma^{\le n}$ and integers
    $1\le d \le k\le n$ such that $\selfed(X)\le k$ and $\big||X|-|Y|\big|\le
    2d$, as well as (oracle access to) a normalized weight function
    $w : \Sigma^2 \to \mathbb{R}_{\ge 0}$,
    there is an algorithm that computes the following values,
    each associated with the breakpoint representation of the underlying alignment:
    \begin{itemize}
        \item $\ed^w_{\le d}(X,Y)$,
        \item $\min_{0\le p\le |Y|} \ed^w_{\le d}(X,Y\fragmentco{p}{|Y|})$,
        \item $\min_{0 \le q \le |Y|} \ed^w_{\le d}(X,Y\fragmentco{0}{q})$,
        \item $\min_{0\le p \le q \le |Y|}\ed^w_{\le d}(X,Y\fragmentco{p}{q})$.
    \end{itemize}
    The algorithm admits an $\Oh(k^2 d\log n)$-time \modelname implementation
    and an $\Oh(n+k^2 +k\sqrt{nd\log n})$-time implementation in the standard setting.
\end{restatable*}

For technical reasons, the specific implementation of~\cref{lem:alg-periodic} for
the \modelname{} model and for the standard setting differ.
Thus, we prove the part
corresponding to the \modelname{} model in~\cref{sec:alg:sec:alg-periodic:sec:pillar} and
for the standard setting
in~\cref{sec:alg:sec:alg-periodic:sec:standard}

\subsubsection{A \modelname Algorithm Exploiting Self-Alignments}\label{sec:alg:sec:alg-periodic:sec:pillar}

A string $X$ and a non-decreasing sequence $0=x_0 \le x_1 \le \cdots \le x_{m}=|X|$
yield a \emph{decomposition} of $X$ into $m$ \emph{phrases}; we write
$X=X\fragmentco{x_0}{x_1}\cdot X\fragmentco{x_1}{x_2}\cdots
X\fragmentco{x_{m-1}}{x_m}=\bigodot_{i=0}^{m-1} X\fragmentco{x_i}{x_{i+1}}$.

In a first step, we show how to exploit a small self-edit distance of a string \(X\)
to (algorithmically) obtain a \emph{useful} decomposition of \(X\).

\begin{lemma}\label{lem:decomp_pillar}
    Given a string $X\in \Sigma^{n}$ and an $k\in \fragment{1}{n}$ such that $\selfed(X)\le k$,
    there is an $\Oh(k^2)$-time \modelname algorithm that
    builds a decomposition $X=\bigodot_{i=0}^{m-1} X\fragmentco{x_i}{x_{i+1}}$ such that:
    \begin{itemize}
        \item Each phrase $i\in \fragmentco{0}{m}$ has a length of
            $x_{i+1}-x_i\in \fragmentco{k}{2k}$.
        \item There is a set $F\subseteq \fragmentco{0}{m}$ of at most $|F|\le 2k$
            \emph{fresh phrases} such that
            $X\fragmentco{x_i}{x_{i+1}}=X\fragmentco{x_{i-1}}{x_{i}}$ holds for each $i\in
            \fragmentco{0}{m}\setminus F$.
    \end{itemize}
    The algorithm returns the set $F$ and the indices $x_{i},x_{i+1}$ for each $i\in F$.
\end{lemma}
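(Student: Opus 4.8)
The plan is to read the periodic structure of $X$ off an optimal self-alignment and then turn it into the required decomposition, placing a long run of non-fresh phrases inside each periodic stretch and only a bounded number of fresh phrases around the (few) edits.

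First I would invoke \cref{lem:selfed} to obtain, in $\Oh(k^2)$ \modelname time, the breakpoint representation of an optimal self-alignment $\A : X\onto X$ with $c \coloneqq \ed_{\A}(X,X) = \selfed(X) \le k$. If $n < 2k$ we are done trivially with the single fresh phrase $X\fragmentco{0}{n}$, so assume $n \ge 2k$. Viewed as a path in $\AG(X,X)$, the alignment $\A$ uses $c$ edits, each of unit cost, so they cut the path into at most $c+1$ maximal diagonal runs; the diagonal coordinate $y-x$ equals $0$ at both endpoints and changes by $\pm1$ only at indels, so it never exceeds $\lfloor c/2\rfloor \le \lfloor k/2 \rfloor$ in absolute value. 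A maximal diagonal run $(a,a+\sigma)\leadsto(b,b+\sigma)$ of length $b-a\ge 1$ has shift $\sigma \ne 0$ (it avoids the main diagonal) and certifies $X\fragmentco{a}{b} = X\fragmentco{a+\sigma}{b+\sigma}$, thereby exhibiting the fragment of $X$ spanning both occurrences as having period $|\sigma|\le\lfloor k/2\rfloor$. Scanning the (already ordered) breakpoint representation in $\Oh(k)$ time, I would thus partition $\fragmentco{0}{|X|}$ into $\Oh(k)$ \emph{periodic pieces} (maximal intervals lying inside one diagonal run, each with an attached period $p \le \lfloor k/2 \rfloor$) separated by $\Oh(c)$ \emph{break positions} (positions of $X$ deleted or substituted by $\A$, at which the local period may change).

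Next I would build the decomposition greedily from left to right, maintaining the invariant that the current phrase boundary is period-aligned with respect to the current periodic piece. Inside a periodic piece of period $p$, I take $q \coloneqq p\ceil{k/p}$; since $p \le k$, this gives $q \in \intervalco{k}{2k}$, the piece is $q$-periodic wherever it is long enough, and I emit consecutive length-$q$ phrases, which are pairwise equal and hence non-fresh except possibly the first; such a maximal stretch of equal phrases is stored in $\Oh(1)$ space (just its endpoints), keeping the output size $\Oh(k)$. Whenever I hit a break position, a periodic piece too short to support a non-fresh phrase, or an end of $X$, I absorb the $\Oh(k)$ ``irregular'' positions around it (the remainder of the preceding non-fresh stretch, the break/short region itself, and the lead-in needed to re-align with the period of the next piece) into $\Oh(1)$ fresh phrases of lengths in $\intervalco{k}{2k}$, choosing the last of them so that periodicity can resume cleanly afterwards. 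For the count, every fresh phrase is created either at an end of $X$ or at one of the $\le c$ edits of $\A$, and a careful amortized argument charges at most two fresh phrases to each edit (an edit can spoil one phrase on each side of it), which yields $|F|\le 2k$. All work beyond \cref{lem:selfed} manipulates only objects of size $\Oh(k)$ (the breakpoint representation, the pieces, and the resulting stretches), using $\Oh(1)$-time \lceOpName queries to locate the boundaries of periodic stretches, so it runs in $\Ohtilde(k)$ time and the total is $\Oh(k^2)$.

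The main obstacle is the bookkeeping in the greedy step: one must guarantee that every emitted phrase has length in $\intervalco{k}{2k}$ even at the transitions — which forces merging periodic pieces of length $<k$ into their neighbours, splitting lengths in $\intervalco{k}{2k}$ across breaks, and special-casing short prefixes and suffixes of $X$ — while simultaneously keeping the number of fresh phrases at $\le 2k$, which needs the amortized ``two per edit'' charging rather than a naive per-piece bound. I would organise this by first listing the $\Oh(k)$ break positions, then walking $X$ piece by piece while maintaining the period-alignment invariant, and proving by induction on the number of breaks processed that the invariant is restored after $\Oh(1)$ fresh phrases per break (with \cref{fct:selfed-properties} used only as a sanity check on the monotonicity of the extracted periodic structure). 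The running-time and correctness claims are then routine, since no step ever touches the (possibly $\Theta(n/k)$ many) individual phrases one at a time.
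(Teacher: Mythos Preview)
Your plan is essentially the paper's approach: obtain an optimal self-alignment via \cref{lem:selfed}, read off the periodic structure between its $\le k$ edits, and greedily emit length-$q$ phrases with $q=p\lceil k/p\rceil$ inside each matched stretch, inserting $\Oh(1)$ fresh phrases at each transition. The paper organises the greedy walk slightly differently---it does not first materialise the list of ``periodic pieces'' but instead advances directly along $\A$ with a four-case analysis (end of string, short tail, edit within the next $2k{-}1$ characters, or long perfect stretch)---but the underlying mechanism is the same, and your compact-output idea (storing each run of equal phrases by its endpoints) is exactly what the paper does.

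The one place where your sketch is looser than the paper is the bound $|F|\le 2k$. You appeal to ``a careful amortized argument charges at most two fresh phrases to each edit,'' and you flag the transition bookkeeping as the main obstacle; the paper sidesteps this bookkeeping entirely by a cleaner charging scheme: call a phrase \emph{imperfect} if $\A$ does not match it perfectly, observe there are at most $k$ imperfect phrases, and then prove the single invariant that whenever $i\in F$, one of the phrases $i$ or $i{+}1$ is imperfect. Each imperfect phrase is then blamed by at most two fresh phrases (itself and its predecessor), giving $|F|\le 2k$ without any per-break case analysis. If you adopt this formulation, the ``main obstacle'' you identify largely disappears.
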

\begin{proof}
    As a first step, we construct the breakpoint representation of a self-alignment $\A :
    X \onto X$ of cost at most $k$; by \cref{lem:selfed}, this takes $\Oh(k^2)$ time.
    Without loss of generality, we may assume that $\A$ only contains points on or above the main diagonal
    (the parts below the main diagonal can be mirrored along the main diagonal without affecting the cost of $\A$).

    Now, suppose that we have already partitioned a prefix $X\fragmentco{0}{x_i}$.
    We start with $i \coloneqq 0$ and $x_i \coloneqq 0$,
    and proceed as follows as long as $x_i < n$.

    The algorithm maintains a set $F\sub \fragmentco{0}{i}$ (satisfying the conditions above),
    the positions $x_{i-1}$ and $x_i$, and a position $x'_i$ such that
    $\A(X\fragmentco{0}{x_i})=X\fragmentco{0}{x'_i}$.
    \begin{enumerate}
        \item\label{pt:endone} If $i > 0$ and $n-x_{i-1}<2k$, we update $x_i \coloneqq n$
            and insert $i-1$ to $F$. In other words, we extend phrase $i-1$ to the end of $X$
            and classify it as a fresh phrase.
        \item\label{pt:endtwo} If $n-x_{i}<2k$, we set $x_{i+1} \coloneqq n$,
            update $x_{i} \coloneqq \floor{({x_{i-1}+x_{i+1}})/{2}}$,
            and insert both $i-1$ and $i$ to $F$.
            In other words, we append a new phrase, balance its length with the length of
            phrase $i-1$, and classify both as fresh phrases.
        \item\label{pt:long} If $\A$ does not match $X\fragmentco{x_{i}}{x_i+2k-1}$
            perfectly, we set $x_{i+1} \coloneqq x_i+2k-1$ and insert $i$ to $F$. In other words,
            we append a new phrase of maximum length and classify it as fresh.
        \item\label{pt:power} Otherwise, $\A$ matches $X\fragmentco{x_{i}}{x_i+2k-1}$
            perfectly to $X\fragmentco{x'_{i}}{x'_i+2k-1}$.
            We set $p \coloneqq (x_i-x'_i)\cdot \ceil{{k}/({x_i-x'_i})}$
            and determine the maximum
            integer $r$ such that
            $X\fragmentco{x_i}{x_i+pr}\aonto{\A} X\fragmentco{x'_i}{x'_{i}+pr}$ and
            $X\fragmentco{x_i}{x_i+pr} = X\fragmentco{x'_i}{x'_{i}+pr}$.
            We set $x_{i+q} \coloneqq x_i+pq$ for $q\in
            \fragment{1}{r}$ and insert $i$ to $F$.
            In other words, we append $r$ new phrases of length $p$ and classify phrase
            $i$ as a fresh phrase.
    \end{enumerate}

    The following claim proves the correctness of the algorithm.
    \begin{claim}\label{clm:lengths_pillar}
        At all times, the decomposition of $X\fragmentco{0}{x_i}$ consists of phrases of
        length in $\fragmentco{k}{2k}$.
        Moreover, $X\fragmentco{x_i}{x_{i+1}}=X\fragmentco{x_{i-1}}{x_{i}}$ holds for each
        $i\in \fragmentco{0}{m}\setminus F$.
    \end{claim}
    \begin{claimproof}
        We proceed by induction so that it suffices to argue about the phrases that the
        algorithm alters in a single step.
        \begin{enumerate}
            \item In Case~\ref{pt:endone}, the algorithm extends phrase $i-1$, which is
                already of length at least $k$. Moreover, the condition $n-x_{i-1}<2k$
                guarantees that its length remains at most $2k-1$.
            \item In Case~\ref{pt:endtwo}, we have $n-x_{i-1}\ge 2k$ (otherwise, we would
                be in Case~\ref{it:endone}),
                so the total length of phrases $i-1$ and $i$ is at least $2k$.
                Moreover, since the original length of phrase $i-1$ was at most $2k-1$ and
                $n-x_i \le 2k-1$, the total length of phrases $i-1$ and $i$ is at most $4k-2$.
                As we balance the two phrases, their lengths are between $k$ and $2k-1$.
            \item In Case~\ref{pt:long}, phrase $i$ is (by construction) of length $2k-1$.
            \item In Case~\ref{pt:power}, since $x_i-x'_i\in \fragment{1}{k}$, phrases
                $i,i+1,\ldots,i+r-1$ are of length $p\in \fragmentco{k}{2k}$.
                Moreover, since $X\fragmentco{x_i}{x_i+pr}= X\fragmentco{x'_i}{x'_i+pr}$
                and $p$ is a multiple of $x_i-x'_i >0$, the phrase length $p$ is a period
                of $X\fragmentco{x_i}{x_i+pr}$. Hence, each of the phrases
                $i+1,\ldots,i+r-1$ matches its predecessor.
        \end{enumerate}
        In total, this completes the proof of the claim.
    \end{claimproof}

    We say that phrase $i$ is \emph{perfect} if $\A(X\fragmentco{x_i}{x_{i+1}})$ matches
    $X\fragmentco{x_i}{x_{i+1}}$ perfectly; otherwise, phrase $i$ is imperfect.
    Observe that the number of imperfect phrases is at most $k$. Thus, the following claim
    yields a bound of $2k$ on $|F|$.
    \begin{claim}\label{clm:fresh_pillar}
        If $i\in F$, then one of the phrases $i$ or $i+1$ is imperfect.
    \end{claim}
    \begin{claimproof}
        As in the proof of \cref{clm:lengths}, we analyze the phrases that the algorithm
        alters in a single step.
        \begin{enumerate}
            \item In Case~\ref{it:endone}, phrase $i-1$ is a suffix of $X$; thus, it is
                imperfect because $\A$ inserts the last character of $X$.
            \item In Case~\ref{it:endtwo}, phrase $i$ is a suffix of $X$; thus, it is
                imperfect because $\A$ inserts the last character of $X$. Moreover, phrase
                $i-1$ has phrase $i$ as an imperfect neighbor.
            \item In Case~\ref{it:long}, phrase $i$ is imperfect because $\A$ does not
                match $X\fragmentco{x_i}{x_i+2k-1}$ perfectly to
                $X\fragmentco{x'_i}{x'_i+2k-1}$.
            \item In Case~\ref{it:power}, there is nothing to prove unless phrases $i-1$
                and $i$ are both perfect.
                We then have
                $X\fragmentco{x_{i-1}}{x_i+2k-1}\aonto{\A} X\fragmentco{x'_{i-1}}{x'_i+2k-1}$
                and $X\fragmentco{x_{i-1}}{x_i+2k-1} = X\fragmentco{x'_{i-1}}{x'_i+2k-1}$.
                Thus, phrase $i-1$ must have been created in Case~\ref{it:power} (with
                $x'_{i-1}-x_{i-1}=x'_{i}-x_{i}$).
                In particular, the previous iteration of the algorithm shared the same phrase
                length $p$, so
                $X\fragmentco{x_i}{x_i+p}\aonto{\A} X\fragmentco{x'_i}{x'_i+p}$
                and
                $X\fragmentco{x_i}{x_i+p} = X\fragmentco{x'_i}{x'_i+p}$
                contradicts the maximality of $r$ in that
                iteration.
        \end{enumerate}
        In total, this completes the proof of the claim.
    \end{claimproof}

    Now we analyze the running time of the algorithm.

    The time to construct $\A$ is $\Oh(k^2)$ by~\cref{lem:selfed}.
    Next, we analyze the time of each of the Cases~1~to~4.
    Note that Cases~1 and 2 can be easily done in time $\Oh(1)$.
    For Cases~3 and 4, we can check whether $\A$ matches $X\fragmentco{x_i}{x_i + 2k-1}$
    perfectly using the breakpoint representation of the alignment in time $\Oh(1)$.
    Moreover, for Case~4 we can determine the value $r$ using one call to
    $\lceOp{X\fragmentco{x_i}{|X|}}{X\fragmentco{x_{i'}}{|X|}}$.

    Finally, observe that during each case we add one fresh phrase.
    As observed earlier, \cref{clm:fresh_pillar} implies that $|F| \leq 2k$.
    Hence, we spend $\Oh(k)$ to build the decomposition, and therefore the overall running
    time is bounded by $\Oh(k^2)$.
\end{proof}

\lemalgsmallsed
\begin{proof}[Proof for the \modelname model.]
    Suppose that we have $\wed(X,Y\fragmentco{p}{q})\le d$ for some $0\le p \le q \le
    |Y|$. As \(w\) is normalized, we also have $\ed(X, Y\fragmentco{p}{q})\le d$
    and thus \[q-p=|Y\fragmentco{p}{q}|\ge |X|-d \ge |Y|-3d.\]

    Now, since we can transform $X$ into $Y$ by first inserting $Y\fragmentco{0}{p}$ and
    $Y\fragmentco{q}{|Y|}$, and then
    transforming $X$ into $Y\fragment{p}{q}$, it follows that \[
        \ed(X,Y) \le p+\ed(X,Y\fragmentco{p}{q}) + |Y|-q \le 4d.
    \]
    Thus, we can assume without loss of generality that $\ed(X,Y)\le 4d$; otherwise, it
    suffices to return $\infty$ for all queries.

    Next, we say that a vertex $(x,y)$ of the alignment graph $\AG^w(X,Y)$ is
    \emph{relevant} if $y-x\in \fragment{-d}{3d}$.
    Moreover, we define $G$ as the subgraph of $\AG^w(X,Y)$ induced by the relevant vertices.
    The following claim shows that we can work with $G$ instead of the entire alignment graph.
    \begin{claim}\label{clm:relevant_pillar}
        If $\wed(X,Y\fragmentco{p}{q})\le d$ holds for some fragment $Y\fragmentco{p}{q}$,
        then $\dist_{G}((0,p),(|X|,q))=\wed(X,Y\fragmentco{p}{q})$
        and the optimal alignments correspond to the shortest $(0,p)\leadsto (|X|,q)$ paths in $G$.
    \end{claim}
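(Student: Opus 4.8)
The plan is to prove the two inequalities $\dist_{G}((0,p),(|X|,q)) \ge \wed(X,Y\fragmentco{p}{q})$ and $\dist_{G}((0,p),(|X|,q)) \le \wed(X,Y\fragmentco{p}{q})$ separately; the claimed correspondence of optimal alignments to shortest paths will then fall out as a byproduct. The first inequality is immediate: since $G$ is an induced subgraph of $\AG^w(X,Y)$, every $(0,p)\leadsto(|X|,q)$ path in $G$ is also such a path in $\AG^w(X,Y)$, whence $\dist_{G}((0,p),(|X|,q)) \ge \dist_{\AG^w(X,Y)}((0,p),(|X|,q)) = \wed(X,Y\fragmentco{p}{q})$, where the last equality is the identification of alignments of $X$ onto $Y\fragmentco{p}{q}$ with $(0,p)\leadsto(|X|,q)$ paths in $\AG^w(X,Y)$.

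For the reverse inequality I would fix an optimal alignment $\A\in\Als(X,Y\fragmentco{p}{q})$, so that $\wed_{\A}(X,Y\fragmentco{p}{q})=\wed(X,Y\fragmentco{p}{q})\le d$ by assumption; since $w$ is normalized, this also gives $\ed_{\A}(X,Y\fragmentco{p}{q})\le d$. Interpreting $\A$ as a $(0,p)\leadsto(|X|,q)$ path that visits vertices $(x_0,y_0),\dots,(x_m,y_m)$, I claim each such vertex is relevant, i.e.\ $y_t-x_t\in\fragment{-d}{3d}$. For the lower bound, let $h$ and $v$ be the numbers of horizontal and vertical steps among the first $t$ steps of $\A$; then $x_t$ equals $h$ plus the number of diagonal steps taken so far and $y_t$ equals $p$ plus $v$ plus that same number of diagonal steps, so $y_t-x_t=p+v-h\ge p-h\ge -d$, using $p\ge 0$, $v\ge 0$, and $h\le\ed_{\A}(X,Y\fragmentco{p}{q})\le d$. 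For the upper bound, a symmetric count over the last $m-t$ steps of $\A$, say $h'$ horizontal and $v'$ vertical, gives $y_t-x_t=(q-|X|)+h'-v'\le(q-|X|)+h'\le 2d+d=3d$, using $v'\ge 0$, $h'\le\ed_{\A}(X,Y\fragmentco{p}{q})\le d$, and $q-|X|\le|Y|-|X|\le 2d$ (which holds because $q\le|Y|$ and $\big||X|-|Y|\big|\le 2d$ by hypothesis). Thus $\A$ lies entirely in $G$, so $\dist_{G}((0,p),(|X|,q))\le\wed_{\A}(X,Y\fragmentco{p}{q})=\wed(X,Y\fragmentco{p}{q})$; together with the first inequality, this gives $\dist_{G}((0,p),(|X|,q))=\wed(X,Y\fragmentco{p}{q})$.

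Finally, the correspondence follows directly: the argument above shows that every optimal alignment of $X$ onto $Y\fragmentco{p}{q}$, viewed as a path, is contained in $G$ and has cost $\wed(X,Y\fragmentco{p}{q})=\dist_{G}((0,p),(|X|,q))$, hence is a shortest $(0,p)\leadsto(|X|,q)$ path in $G$; conversely, any shortest such path in $G$ is a $(0,p)\leadsto(|X|,q)$ path in $\AG^w(X,Y)$ of cost $\dist_{G}((0,p),(|X|,q))=\wed(X,Y\fragmentco{p}{q})$, hence an optimal alignment. I expect the only delicate point to be the coordinate bookkeeping in the second paragraph: the lower bound on $y_t-x_t$ must be read off from the prefix of $\A$ and the upper bound from its suffix, since the prefix count alone yields only $y_t-x_t\le 4d$ and the suffix count alone only $y_t-x_t\ge -2d$, neither of which is good enough.
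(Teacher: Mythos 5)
Your proof is correct and follows the same approach as the paper: bound $y_t - x_t$ from below using the cost of the prefix of the optimal alignment and from above using the cost of the suffix, noting that each sub-alignment has weighted cost (and hence unweighted cost) at most $d$. The paper phrases the key step as length-difference bounds on $\wed(X\fragmentco{0}{x},Y\fragmentco{p}{y})$ and $\wed(X\fragmentco{x}{|X|},Y\fragmentco{y}{q})$, whereas you count horizontal and vertical steps directly, but these are the same calculation.
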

    \begin{claimproof}
        Observe that $\wed(X\fragmentco{0}{x},Y\fragmentco{p}{y})\le d$ implies
        $(x-0)-(y-p)\le d$, that is, we have \(y-x\ge -d.\)
        Moreover, $\wed(X\fragmentco{x}{|X|},Y\fragmentco{y}{q})\le d$ implies
        $(|X|-x)-(q-y)\le d$, that is, we have \[y-x \le q-|X|+d \le (|Y|-|X|)+d \le 3d.\]
        Finally, since $\wed(X,Y\fragmentco{p}{q})\le d$, we obtain that
        all pairs $(x,y)$ of the underlying alignment satisfy \(
            y-x\in \fragment{-d}{3d}.
        \)
        Hence, this alignment is a $(0,p)\leadsto (|X|,q)$ path in $G$. Conversely, every
        $(0,p)\leadsto (|X|,q)$ path in $G$ yields an alignment $\A\in
        \Als(X,Y\fragmentco{p}{q})$.
    \end{claimproof}

    We run the algorithm of \cref{lem:decomp_pillar} for $X$, arriving at a decomposition
    $X=\bigodot_{i=0}^{m} X\fragmentco{x_{i}}{x_{i+1}}$ with fresh phrases $F$.
    For each phrase $X\fragmentco{x_i}{x_{i+1}}$ consider the subgraph $G_i$ of $G$
    induced by $\fragment{x_i}{x_{i+1}}\times \fragment{0}{|Y|}$.

    Now, recall that $G$ contains only relevant vertices $(x, y)$, which by definition satisfy
    $y-x \in \fragment{-d}{3d}$.
    Thus, observe that $G_i$ is induced by \[
        \fragment{x_i}{x_{i+1}} \times \fragment{\max\{0, x_i - d\}}{\min\{|Y|, x_{i+1} + 3d\}}.
    \]
    The following claim shows that there are only $\Oh(k)$ graphs $G_i$ which are not
    isomorphic to $G_{i-1}$.
    \begin{claim}
        There is a set $F'\sub \fragmentco{0}{m}$ of size $\Oh(k)$ such that, for each
        $i\in \fragmentco{0}{m}\sm F'$,
        the graph $G_i$ is isomorphic to $G_{i-1}$. Moreover, such a set can be computed
        in $\Oh(k)$ time,
        along with the positions $x_i,x_{i+1}$ for each $i\in F'$.
    \end{claim}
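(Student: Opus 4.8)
The plan is to produce, for every phrase index $i$ that lies sufficiently deep inside a run of equal phrases and away from the two ends of $X$, an explicit weight-preserving isomorphism $G_i\to G_{i-1}$, and then to show that only $\Oh(k)$ phrases fail the requisite conditions. I would write $p\coloneqq x_{i+1}-x_i$ and consider the diagonal shift $\phi\colon (x,y)\mapsto (x-p,y-p)$. I claim $\phi$ restricts to a weight-preserving graph isomorphism from $G_i$ to $G_{i-1}$ whenever \textbf{(i)} $i\notin F$ (so phrases $i-1$ and $i$ coincide, hence have the same length $p$ and $X\fragmentco{x_i}{x_{i+1}}=X\fragmentco{x_{i-1}}{x_i}$); \textbf{(ii)} $Y$ has period $p$ on $Y\fragmentco{x_{i-1}-d}{x_{i+1}+3d}$; and \textbf{(iii)} neither $G_i$ nor $G_{i-1}$ is truncated by the constraint $y\in\fragment{0}{|Y|}$, i.e.\ $x_{i-1}-d\ge 0$ and $x_{i+1}+3d\le |Y|$. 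Under (iii) both $G_i$ and $G_{i-1}$ are ``full'' diagonal bands $\{(x,y)\colon x\in\fragment{x_i}{x_{i+1}},\ y\in\fragment{x-d}{x+3d}\}$ (resp.\ its $p$-shift), so $\phi$ is a vertex bijection carrying edges to edges; preservation of weights reduces, for horizontal, vertical, and diagonal edges, to $X\position{x}=X\position{x-p}$ for $x\in\fragmentco{x_i}{x_{i+1}}$ (from (i)) and $Y\position{y}=Y\position{y-p}$ for $y$ in the row range of $G_i$ (from (ii)). This part is routine.

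The substantive step is to argue that all but $\Oh(k)$ phrases satisfy (i)--(iii). Condition (iii) fails only for the first and last $\Oh(1)$ phrases: since every phrase has length at least $k\ge d$ and $\bigl||X|-|Y|\bigr|\le 2d$, one gets $x_{i-1}-d\ge 0$ once $i\ge 2$ and $x_{i+1}+3d\le |Y|$ once $i\le m-\Oh(1)$. For (ii) I would bring in an optimal unweighted alignment $\A^{*}\colon X\onto Y$ of cost at most $4d$ (it exists because the argument preceding the claim already establishes $\ed(X,Y)\le 4d$; its breakpoint representation, with $\Oh(d)$ breakpoints, is produced by the \modelname{} Landau--Vishkin algorithm in $\Oh(d^2)$ time). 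Call a phrase \emph{clean} if $\A^{*}$ matches all its characters perfectly; since each breakpoint of $\A^{*}$ spoils only $\Oh(1)$ phrases, there are $\Oh(d)$ non-clean phrases. Fix a constant $c$ (concretely $c=7$ works) and suppose phrases $i-c,\dots,i+c$ are all clean and all outside $F$, and $i$ is not among the $\Oh(1)$ extremal phrases. These phrases all equal a common string $Q$ with $|Q|=p$, so $X\fragmentco{x_{i-c}}{x_{i+c+1}}$ has period $p$; since $\A^{*}$ matches it perfectly, $\A^{*}$ runs strictly diagonally there and maps it onto $Y\fragmentco{x_{i-c}+\delta}{x_{i+c+1}+\delta}$ for a single offset $\delta$ with $|\delta|\le \ed(X,Y)\le 4d$, so this fragment of $Y$ also has period $p$. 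A short computation using $p\ge k\ge d$ and $|\delta|\le 4d$ shows $\fragmentco{x_{i-1}-d}{x_{i+1}+3d}\subseteq \fragmentco{x_{i-c}+\delta}{x_{i+c+1}+\delta}$ for $c=7$, establishing (ii).

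Putting this together, I would set $F'$ to be $F$, together with every phrase within distance $c$ of a phrase in $F$ or of a non-clean phrase, together with the first and last $\Oh(1)$ phrases. Then $|F'|\le \Oh(1)\cdot\bigl(|F|+\Oh(d)\bigr)+\Oh(1)=\Oh(k)$, and for $i\notin F'$ all of (i)--(iii) hold, so $G_i$ is isomorphic to $G_{i-1}$. For the running time, from the output of \cref{lem:decomp_pillar} (the set $F$ and the boundary positions of the fresh phrases) I would reconstruct in $\Oh(k)$ time the $\Oh(k)$ maximal runs of equal phrases, each stored as (start index, start position, period, length); then, scanning the $\Oh(d)$ breakpoints of $\A^{*}$, I would locate the run and phrase containing each breakpoint in $\Oh(1)$ time (phrase indices within a run are an arithmetic function of the position), mark the $\Oh(1)$ surrounding phrases, and read off the positions $x_i,x_{i+1}$ for $i\in F'$ from the run data. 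This is $\Oh(k)$ time, on top of the one-off $\Oh(d^2)$ cost of computing $\A^{*}$ (comfortably inside the overall budget of \cref{lem:alg-periodic}).

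The step I expect to be the main obstacle is the periodicity transfer of the second paragraph, and specifically the bookkeeping that keeps $|F'|$ at $\Oh(k)$ rather than $\Oh(dk)$: a single edit of $\A^{*}$ inside a periodic run can destroy the equality $Y\position{y}=Y\position{y+p}$ at up to $\Theta(p)=\Theta(k)$ positions of $Y$, so one must not count spoiled positions of $Y$ directly but instead observe that they form a single cluster of length $\Oh(p)$, and that (since the windows $Y\fragmentco{x_{i-1}-d}{x_{i+1}+3d}$ for consecutive $i$ are offset by $p\ge k$) such a cluster meets only $\Oh(1)$ of these windows; hence the $\Oh(d)$ edits of $\A^{*}$ spoil only $\Oh(d)$ phrases. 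A secondary subtlety is controlling the diagonal drift $\delta$ of $\A^{*}$ so that the band $W_i$, which sits around the main diagonal, still falls inside the (possibly off-diagonal) periodic image of the clean run — which is precisely why a constant-size buffer $c$ around $i$ is needed.
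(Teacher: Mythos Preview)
Your structural argument is the paper's: both establish $G_i\cong G_{i-1}$ via the shift $(x,y)\mapsto(x-p,y-p)$ under exactly the three conditions you list, and both bound $|F'|$ by fixing an optimal unweighted alignment $\A^*$ of cost $\le 4d$ and arguing that only indices within $O(1)$ of a fresh phrase or of a phrase not matched perfectly by $\A^*$ can lie in $F'$ (the paper uses the window $\fragment{i-7}{i+8}$, matching your $c=7$). Where you diverge is the \emph{algorithm}. You materialize $\A^*$ (cost $O(d^2)$ in \modelname) and take $F'$ to be the phrases near its breakpoints or near $F$; this is correct and fits the $O(k^2d\log n)$ budget of the enclosing \cref{lem:alg-periodic}, but it overshoots the claim's stated $O(k)$ bound, since $d^2$ can exceed~$k$. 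The paper instead defines $F'$ directly by the three conditions and computes it without ever building an alignment: within each maximal run of equal phrases (given by $F$), a single \lceOpName query between $Y\fragmentco{x_{j-1}+3d}{|Y|}$ and $Y\fragmentco{x_{j}+3d}{|Y|}$ reveals the largest $r$ with $\fragment{j}{j+r}\cap F'=\{j+r\}$, so the scan skips past that entire stretch in one step, giving $O(|F|+|F'|)=O(k)$ total. Incidentally, the concern in your final paragraph---that one edit can break $Y\position{y}=Y\position{y+p}$ at $\Theta(k)$ positions---is exactly the phenomenon the paper's LCP-based scan must absorb, but it is moot for your own construction, which counts spoiled \emph{phrases} via the breakpoints of $\A^*$ rather than spoiled positions of~$Y$.
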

    \begin{claimproof}
        We define $F'$ so that $i\in \fragmentoo{0}{m}\sm F'$ holds if and only if
        \begin{itemize}
            \item $d \le x_{i-1} \le x_{i+1}\le |Y|-3d$,
            \item $X\fragmentco{x_{i-1}}{x_i}=X\fragmentco{x_{i}}{x_{i+1}}$, and
            \item $Y\fragmentco{x_{i-1}-d}{x_i+3d} = Y\fragmentco{x_i-d}{x_{i+1}+3d}$.
        \end{itemize}
        Observe that this implies the desired property, that is,
        if $i \in \fragmentco{0}{m} \sm F'$, then $G_i$ is isomorphic to $G_{i-1}$.

        Let us first prove that $|F'|\le 100k$. Fix an optimal unweighted alignment $\A :
        X\onto Y$; as argued earlier \(\A\) has a cost of at most $4d$.

        Consider $i\in \fragment{7}{m-8}$ such that
        \[
            X\fragmentco{x_{i-7}}{x_{i-6}}
            =X\fragmentco{x_{i-5}}{x_{i-4}} = \cdots = X\fragmentco{x_{i+7}}{x_{i+8}},
        \]
        and $\A$ matches $X\fragmentco{x_{i-6}}{x_{i+8}}$
        perfectly to $Y\fragmentco{y_{i-6}}{y_{i+8}}$.
        Then, $Y\fragmentco{y_{i-6}}{y_{i+8}}=X\fragmentco{x_{i-7}}{x_{i+7}}$
        has period $p \coloneqq x_i-x_{i-1}$.

        Since $\wed(X,Y)\le 4d \le 4p$, we have
        \[y_{i-6} \le x_{i-6}+4p = x_{i-2} \le x_{i-1}-d
            \quad\text{and}\quad y_{i+8} \ge x_{i+8}-4p = x_{i+4}\ge x_{i+1}+3d.
        \]
        Hence, $Y\fragmentco{x_{i-1}-d}{x_{i+1}+3d}$ has period $p=x_i-x_{i-1}$.
        Consequently, we have $i\in F'$.

        Now, there are at most $12$ indices $i\in \fragmentco{0}{m}\sm \fragment{7}{m-8}$,
        at most $15|F|$ indices violating
        $X\fragmentco{x_{i-7}}{x_{i-6}}=\cdots=X\fragmentco{x_{i+7}}{x_{i+8}}$,
        and at most $14\cdot \ed(X,Y)$ indices for which $\A$ does not match
        $X\fragmentco{x_{i-6}}{x_{i+8}}$ perfectly.
        Overall, since $|F| \leq 2k$ and $\ed(X, Y) \leq 4d$ we conclude that $|F'|\le
        12+15\cdot 2k+14\cdot 4d \le 100k$.

        Next, we provide an algorithm constructing $F'$ in time $\Oh(|F|+|F'|)=\Oh(k)$.
        First, we post-process $F$ to make sure that it does not contain indices $i$ such
        that $X\fragmentco{x_{i-1}}{x_i}=X\fragmentco{x_{i}}{x_{i+1}}$.
        For this, we scan $F$ from left to right and, for any two subsequent entries
        $i,i'\in F$, we remove $i'$ whenever
        $X\fragmentco{x_{i'}}{x_{i'+1}}=X\fragmentco{x_{i}}{x_{i+1}}$. The latter
        condition can be tested in $\Oh(1)$ time using an \lceOpName query.

        Suppose that we have already computed $F'\cap \fragmentco{0}{j}$ for some $j\in F'$.
        If $j\in F$, then $j\in F'$ (due to the post-processing above), so we add $j$ to
        $F'$ and proceed to $j \coloneqq j+1$.

        Thus, suppose that $i<j<i'$, where $i,i'$ are two subsequent entries of
        $F\cup\{m\}$. Observe  that $x_i$ and $x_{i+1}$ let us retrieve $p=x_{i+1}-x_i$
        as well as $x_{j'}= x_i + p\cdot(j'-i)$ for $j'\in \fragment{i}{i'}$.

        We check if \[
            d \le x_{j-1} < x_{j+1} \le |Y|-3d
        \] and use an \lceOpName query to check if \[
        Y\fragmentco{x_{i-1}-d}{x_{i-1}+3d} = Y\fragmentco{x_i-d}{x_{i}+3d}.
        \]
        If either condition fails, then $j\in F'$, so we add $j$ to $F'$ and proceed to
        $j \coloneqq j+1$.

        Otherwise, we use another \lceOpName query to determine the length $\ell$ of the
        longest common prefix of $Y\fragmentco{x_{i-1}+3d}{|Y|}$ and
        $Y\fragmentco{x_{i}+3d}{|Y|}$ and set $r \coloneqq \min(i'-j,\floor{{\ell}/{d}})$.
        Observe that we have \[
            \fragment{j}{j+r}\cap F' = \{j+r\}.
        \] Thus, we add $j+r$ to $F'$ and
        proceed to $j\coloneqq j+r+1$.

        The algorithm performs $\Oh(|F'|)=\Oh(k)$ iterations, spending $\Oh(1)$ for each of them.
    \end{claimproof}

    For each $i\in \fragment{0}{m}$, denote by $V_i$ the vertices on the column $x_i$ of
    $G$, that is, $G = \{(x,y)\in V(G) \mid x=x_i\}$.
    Moreover, for $i,j\in \fragment{0}{m}$ with $i\le j$, let $D_{i,j}$ denote
    the matrix of pairwise distances from $V_i$ to $V_j$.
    The matrix \(D\) is a Monge matrix by \cref{fct:monge}.

    We now describe our algorithm to compute $D_{0,j}$ for all $j\in \bar{F}$, where
    $\bar{F}=F'\cup \{m\}$.
    Suppose that we have already computed $D_{0,i}$ for some $i\in F'$ and let $j$ denote
    the next index in $\bar{F}$.
    We preprocess the graph $G_i$ using \cref{thm:klein} and retrieve the matrix $D_{i,i+1}$.
    Then, we compute $D_{i,j}$ as the $(j-i)$-th power of $D_{i,i+1}$ with respect to the
    min-plus product.
    To this end, we iterate from $t \coloneqq \ceil{\log(j-i)}$ down to $t \coloneqq 0$,
    computing the $\floor{({j-i})/{2^t}}$-th and {$\ceil{({j-i})/{2^t}}$-th} min-plus
    powers of $D_{i,i+1}$. Finally, we compute $D_{0,j}$ as the min-plus product of $D_{0,i}$ and
    $D_{i,j}$.

    Observe that after computing $D_{0,m}$ we are essentially done:
    By \cref{clm:relevant_pillar}, for every $p,q$, the distance $\wed(X,Y\fragmentco{p}{q})\le d$
    corresponds to the entries of $D_{0,m}$ not exceeding $d$.
    These are precisely the values promised by the lemma statement.

    Next, we analyze the running time.
    Each application of \cref{thm:klein} costs $\Oh(|V(G_i)|\log |V(G_i)|)=\Oh(kd\log k)$
    time since $|V(G_i)|\le (3d+1)\cdot (2k+1)$.
    To retrieve the matrix $D_{i,i+1}$ we use \cref{thm:klein} for each of the
    $|V_i|\cdot|V_i| = \Oh(d^2)$ pairs of vertices, taking total time $\Oh(d^2 \log k)$.
    The min-plus products of $\Oh(d)\times \Oh(d)$ Monge matrices can be computed in
    $\Oh(d^2)$ each using the SMAWK algorithm (see~\cref{thm:smawk}).
    Hence, the construction of $D_{0,m}$ costs $\Oh(|F'|\cdot (kd\log k + d^2 \log
    n))=\Oh(k^2 d \log n)$ time in total.

    Finally, we need to show how to recover the shortest paths $v_0\leadsto v_m$ path $P$
    for some $v_0\in V_0$ and $v_m\in V_m$.
    In the first step, we recover, for each $j\in \bar{F}$, a vertex $v_j\in P\cap V_j$.
    Let $i$ denote the predecessor of $j$ in $\bar{F}$. We iterate over $v_i\in V_i$ and
    use the matrices $D_{0,i}$ and $D_{i,j}$
    to find a vertex which minimizes $\dist_{G}(v_0,v_i)+\dist_G(v_i,v_j)$.

    Next, for any two subsequent indices in $i,j\in \bar{F}$, we recover the shortest path
    from $v_i$ to $v_j$.
    For this, we use a recursive procedure based on the powers of $D_{i,i+1}$ computed
    while the $(j-i)$th power.
    At depth $t$ of the recursion, we have an interval $\fragmentco{i'}{j'}\sub
    \fragmentco{i}{j}$ of length $\floor{({j-i})/{2^t}}$ or $\ceil{({j-i})/{2^t}}$ and
    vertices $v_{i'}\in P\cap V_{i'}, v_{j'}\in P\cap V_{j'}$.

    If $\dist_G(v_{i'},v_{j'})=0$, the path simply follows the diagonal without any breakpoints.

    If $j'=i'+1$, we use the data structure of \cref{thm:klein} to retrieve the shortest
    path explicitly.

    In the remaining case, we consider an intermediate value
    $k'\coloneqq \floor{({i'+j'})/{2}}$, find a vertex $v_{k'}\in V_{k'}$ minimizing
    $\dist_G(v_{i'},v_{k'})+\dist_G(v_{k'},v_{j'})$ (based on the precomputed powers of
    $D_{i,i+1}$), and recurse on $\fragmentco{i'}{k'}$ as well as $\fragmentco{k'}{j'}$.

    The intermediate vertices $v_j$ for $j\in \bar{F}$ are recovered in $\Oh(d)$ time
    each, for a total of $\Oh(kd)$.
    The (breakpoint representation of the) path $v_i\leadsto v_j$ (for subsequent indices
    $i,j\in \bar{F}$)
    involves $\Oh(c\log n)$ recursive calls, there $c$ is the cost of the path.
    Each call takes either $\Oh(k)$ time (if \cref{thm:klein} is used) or $\Oh(d)$ time
    (if we retrieve the midpoint only).
    Since the total cost of the paths $v_i\leadsto v_j$ is at most $d$, the overall
    running time of the recover phase is $\Oh(kd\log n)$.
\end{proof}

\subsubsection{An Algorithm exploiting Self-Alignments in the Standard Setting}
\label{sec:alg:sec:alg-periodic:sec:standard}

Similarly to the algorithm for the \modelname{} model, we first show how to exploit a
small self-edit distance of a string \(X\) to obtain a useful decomposition of \(X\).
In particular, compare the following result with \cref{lem:decomp_pillar}.

\begin{lemma}\label{lem:decomp}
    Given a string $X\in \Sigma^{n}$ and integers $\ell,k\in \fragment{1}{n}$ such that
    $\selfed(X)\le k$,
    there is an $\Oh(n+k^2)$-time algorithm that
    builds a decomposition $X=X\fragmentco{x_0}{x_1}\cdot X\fragmentco{x_1}{x_2}\cdots
    X\fragmentco{x_{m-1}}{x_m}$ such that:
    \begin{itemize}
        \item Each phrase $i\in \fragmentco{0}{m}$ has length $x_{i+1}-x_i\in
            \fragmentco{\ell}{2\ell}$.
        \item There is a set $F\subseteq \fragmentco{0}{m}$ of at most $|F|\le 3k$
            \emph{fresh phrases}
            such that every phrase $i\in \fragmentco{0}{m}\setminus F$ has a \emph{source
            phrase} $i'\in \fragmentco{0}{i}$
            with $X\fragmentco{x_i}{x_{i+1}}=X\fragmentco{x_{i'}}{x_{i'+1}}$    and $x_i-x_{i'}
            \le \max(2\ell-1,k)$.
    \end{itemize}
\end{lemma}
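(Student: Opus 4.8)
The plan is to follow the proof of \cref{lem:decomp_pillar} but to decouple the phrase length $\ell$ from the self-edit-distance budget $k$. The new difficulty is that a maximal \emph{match run} of the underlying self-alignment may have a shift $s$ far larger than $\ell$, so the greedy step of \cref{lem:decomp_pillar} (which emits phrases whose length is a multiple of $s$) would produce phrases that are too long, while naively re-cutting such a run into length-$\Theta(\ell)$ phrases would make $\Theta(s/\ell)$ of them fresh. First I would compute, in $\Oh(n+k^2)$ time, an optimal self-alignment $\A\colon X\onto X$ of cost $s^\star\coloneqq\selfed(X)\le k$ via the standard-setting Landau--Vishkin-based algorithm behind \cref{lem:selfed}, and, as in \cref{lem:decomp_pillar}, assume $\A$ stays weakly above the main diagonal, so that inside every match run the shift $s=x_t-y_t$ is a fixed positive constant; since $s$ is at most the number of horizontal steps of $\A$, we get $s\le s^\star\le k$. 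Also $\A$ makes at most $s^\star\le k$ edits, hence has at most $s^\star+1\le k+1$ match runs.

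Then I would build the decomposition greedily from left to right, maintaining the current phrase start $x_i$, the value $x'_i\coloneqq\min\{y:(x_i,y)\in\A\}\le x_i$, and the list of already-placed boundaries, using three kinds of steps. \emph{(a)~Near the end:} pad the last one or two phrases to reach $|X|$, exactly as in Cases~1--2 of \cref{lem:decomp_pillar}. \emph{(b)~No match:} if $\A$ does not match $X\fragmentco{x_i}{x_i+2\ell-1}$ perfectly, emit one fresh phrase of length $2\ell-1$; it contains an edit of $\A$, so it is imperfect. \emph{(c)~Matched run:} if $\A$ matches $X\fragmentco{x_i}{x_i+2\ell-1}$ perfectly, set $s\coloneqq x_i-x'_i\ge 1$ and read off from the breakpoint representation the distance $L\ge 2\ell-1$ from $x_i$ to the end of this match run, so that $X\fragmentco{x_i}{x_i+L}=X\fragmentco{x_i-s}{x_i-s+L}$. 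The key point is that $\fragmentco{x_i-s}{x_i}$ is already decomposed, so instead of re-cutting the run I \emph{copy} its boundaries: letting $x_g$ be the largest placed boundary with $x_g\le x_i-s$, I place new boundaries $n_1<n_2<\dots$ at $x_{g+1}+s,\ x_{g+2}+s,\dots$ (well-founded: once a shifted boundary would reference a position $>x_i$ it references an already-placed new boundary of strictly smaller index), stopping at the last $n_J\le x_i+L$. Every interior new phrase $\fragmentco{n_{j-1}}{n_j}=\fragmentco{x_{g+j-1}+s}{x_{g+j}+s}$ lies inside $\fragmentco{x_i}{x_i+L}$, hence by the run's periodicity equals the phrase $\fragmentco{x_{g+j-1}}{x_{g+j}}$ placed $s$ positions earlier, has the same length in $\fragmentco{\ell}{2\ell}$, and is therefore non-fresh with source at distance $s\le\max(2\ell-1,k)$ (trivial if $s\le 2\ell-1$; if $s\ge 2\ell$ then $s\le k=\max(2\ell-1,k)$). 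Only the leading piece $\fragmentco{x_i}{n_1}$ and the trailing remainder $\fragmentco{n_J}{x_i+L}$ need not be clean copies or have length in $\fragmentco{\ell}{2\ell}$; I repair each by merging with the adjacent new phrase and, if necessary, re-splitting, costing $\Oh(1)$ fresh phrases per matched-run step. This step consumes the rest of the current match run, so there are at most $k+1$ such steps.

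For correctness, every phrase has length in $\fragmentco{\ell}{2\ell}$ by construction, and every non-fresh phrase is an interior copy from a matched-run step, whose source is the phrase $s$ positions back with equal content and $x_i-x_{i'}=s\le\max(2\ell-1,k)$. For $|F|\le 3k$, I would argue, in the spirit of the ``fresh phrase'' claim in the proof of \cref{lem:decomp_pillar}, that every fresh phrase is either imperfect or one of the $\Oh(1)$ repair phrases of a matched-run step; charging the latter to the edit of $\A$ that immediately precedes the corresponding match run (equivalently, to the preceding no-match phrase) and using that $\A$ has at most $s^\star\le k$ edits and at most $k+1$ match runs gives the bound after a careful accounting of constants. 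The running time is $\Oh(n+k^2)$ to compute $\A$ and its breakpoint representation, $\Oh(k)$ for the $\Oh(1)$-time decision steps, and $\Oh(m)=\Oh(n/\ell)=\Oh(n)$ to write out the phrases (each $n_j$ is produced in $\Oh(1)$ time), for $\Oh(n+k^2)$ overall; no \lceOpName{} queries are needed since $L$ and $s$ come from $\A$.

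The main obstacle is exactly step~(c): realising that a long periodic (or long-range-copy) matched run of shift $s$ must be decomposed by \emph{copying the already-placed boundaries} from $s$ positions back rather than re-cut from scratch---this is what keeps the number of fresh phrases $\Oh(k)$ rather than $\Oh(\sum_{\text{runs}} s/\ell)$---together with the bookkeeping that keeps every phrase length in $\fragmentco{\ell}{2\ell}$ through the boundary-offset repairs and that tightens the resulting $\Oh(k)$ bound on $|F|$ down to $3k$.
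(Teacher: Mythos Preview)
Your plan captures the core ``copy boundaries $s$ positions back'' idea, which is exactly the paper's large-shift case. The gap is that your unified step~(c) breaks when the copying becomes self-referential.

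Take $s<\ell$ (for instance $X=a^n$, where the optimal self-alignment has shift $s=1$). Then the largest placed boundary $x_g\le x_i-s$ is $x_{i-1}$, so $g=i-1$, and your recursion gives $n_1=x_{g+1}+s=x_i+s$, then $n_2=n_1+s$, $n_3=n_2+s$, and so on: \emph{every} phrase has length $s<\ell$, not just the leading and trailing ones. Your assertion that ``every interior new phrase \ldots\ has the same length in $\fragmentco{\ell}{2\ell}$'' is therefore false, and merging only the leading and trailing pieces cannot repair $\Theta(L/s)$ too-short phrases at $\Oh(1)$ fresh cost. The same cascade appears for larger $s$ once the recursion reaches index $g+j>i$: the phrase $\fragmentco{n_{i-g}}{n_{i-g+1}}$ has as its distance-$s$ source the leading piece $\fragmentco{x_i}{n_1}$, so its length equals the leading-piece length $x_{g+1}+s-x_i$, which can be arbitrarily small; this pattern then repeats with period $i-g$ throughout the run.

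The paper avoids this by keeping two regimes separate and processing one phrase per step. For $s<2\ell$ it does \emph{not} copy boundaries; it sets the next phrase length to $p=s\cdot\lceil\ell/s\rceil\in\fragmentco{\ell}{2\ell}$, so consecutive such phrases are equal (since $s\mid p$ is a period) and each is sourced by its immediate predecessor at distance $p\le 2\ell-1$. Only for $s\ge 2\ell$ does it copy an earlier phrase, and there the one-time realignment to an existing boundary (sub-cases 5b/5c) guarantees that all subsequent copies reference \emph{original} phrases in the already-decomposed window $\fragmentco{x_i-s}{x_i}$, never the realigned phrase itself---so no short length ever enters the cycle. You need either both regimes, or an equivalent mechanism that prevents the recursion from feeding the leading-piece length back into itself.
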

\begin{proof}
    If $|X|<2\ell$, it suffices to return a single-phrase decomposition. Thus, we may
    assume $n\ge 2\ell$.

    As a first step, we construct an optimal self-alignment $\A : X \onto X$ of cost at
    most $k$; by \cref{lem:selfed}, this takes $\Oh(n+k^2)$ time.
    Without loss of generality, we may assume that $\A$ only contains points on or above the main diagonal
    (the parts below the main diagonal can be mirrored along the main diagonal without affecting the cost of $\A$).
    Next, we construct the decomposition from left to right.

    Suppose we have already decomposed a prefix $X\fragmentco{0}{x_i}$ into $i$ phrases.
    We start with $i \coloneqq 0$ and $x_i \coloneqq 0$, and proceed as follows as long as
    $x_i < n$ based on $x'_i$ such that $\A(X\fragmentco{0}{x_i})=X\fragmentco{0}{x'_i}$.
    \begin{enumerate}
        \item\label{it:endone} If $i>0$ and $n-x_{i-1} < 2\ell$, we update $x_{i}
            \coloneqq n$. In other words, we extend phrase $i-1$ to the end of $X$.
        \item\label{it:endtwo} If $n-x_{i} < 2\ell$, we set $x_{i+1} \coloneqq n$ and update
            $x_i \coloneqq  \lfloor {(x_{i-1}+x_{i+1})}/{2} \rfloor$.
            In other words, we append a new phrase and balance its length with the length of
            phrase $i-1$.
        \item\label{it:long} If $\A$ does not match $X\fragmentco{x_{i}}{x_i+2k-1}$
        perfectly, we set $x_{i+1} \coloneqq x_i+2\ell-1$. In other
            words, we append a new phrase of maximum length.
        \item\label{it:power} If $\A$ matches $X\fragmentco{x_{i}}{x_i+2k-1}$
        perfectly to $X\fragmentco{x'_{i}}{x'_i+2k-1}$ with $x_i-x'_i < 2\ell$,
            then we set \[
                x_{i+1} \coloneqq x_i + (x_i-x'_i)\cdot \lceil{\ell}/({x_i-x'_i})\rceil.
            \]
            In other words, we append a new phrase of the minimum possible length that is a
            multiple of $x_i-x'_i$.
        \item\label{it:far} If $\A$ matches $X\fragmentco{x_{i}}{x_i+2k-1}$
        perfectly to $X\fragmentco{x'_{i}}{x'_i+2k-1}$ with $x_i-x'_i \ge 2\ell$,
            then we retrieve $i'\in \fragmentco{0}{i}$ such that
            $x'_i \in \fragmentco{x_{i'}}{x_{i'+1}}$.
            \begin{enumerate}
                \item\label{it:far:copy} If $x_{i'} = x'_i$, we set $x_{i+1} \coloneqq x_i +
                    x_{i'+1}-x_{i'}$. In other words, we append a new phrase which is a
                    copy phrase $i'$.
                \item\label{it:far:one} If $x_{i'+1} +x_i-x'_i - x_{i-1} < 2\ell$, we
                    update $x_i \coloneqq x_{i'+1}+x_i-x'_i$. In other words, we extend phrase
                    $i-1$.
                \item\label{it:far:two} Otherwise, we set $x_{i+1} \coloneqq x_{i'+1}+x_i-x'_i$
                    and update $x_i \coloneqq \lfloor ({x_{i-1}+x_{i+1}})/{2} \rfloor$.
                    In other words, we append a new phrase and balance its length with the
                    length of phrase $i-1$.
            \end{enumerate}
    \end{enumerate}

    As in the proof for the \modelname{} model, we start with the correctness.

    \begin{claim}\label{clm:lengths}
        At all times, the decomposition of $X\fragmentco{0}{x_i}$ consists of phrases of
        length in $\fragmentco{\ell}{2\ell}$.
    \end{claim}
    \begin{claimproof}
        We proceed by induction so that it suffices to argue about the phrases that the
        algorithm alters in a single step.
        \begin{enumerate}
            \item In Case~\ref{it:endone}, the algorithm extends phrase $i-1$, which is
                already of length at least $\ell$. Moreover, the condition
                $n-x_{i-1}<2\ell$ guarantees that its length remains at most $2\ell-1$.
            \item In Case~\ref{it:endtwo}, we have $n-x_{i-1}\ge 2\ell$ (otherwise, we
                would be in Case~\ref{it:endone}),
                so the total length of phrases $i-1$ and $i$ is at least $2\ell$.
                Moreover, since the original length of phrase $i-1$ was at most $2\ell-1$
                and $n-x_i \le 2\ell-1$, the total length of phrases $i-1$ and $i$ is at
                most $4\ell-2$.
                As we balance the two phrases, their lengths are between $\ell$ and
                $2\ell-1$.
            \item In Case~\ref{it:long}, phrase $i$ is (by construction) of length
                $2\ell-1$.
            \item In Case~\ref{it:power}, phrase $i$ is (by construction) of length
                $x_i-x'_i\cdot \lceil {\ell}/({x_i-x'_i})\rceil$, which is between
                $\ell$ and $2\ell-1$ because $x_i-x'_i \le 2\ell-1$.
            \item In Case~\ref{it:far}, we rely on $x_{i'+1}-x_{i'} \in
                \fragmentco{\ell}{2\ell}$, which follows by induction.
                This further implies $x_{i'+1}-(x_i-x_i-x'_i)<2\ell$ and $x_{i'+1}<x_i$
                (due to $x_i-x'_i \ge 2\ell$),  which means that $i'+1<i$.
                \begin{enumerate}
                    \item In Case~\ref{it:far:copy}, phrase $i$ is (by construction) of
                        length $x_{i'+1}-x_{i'} \in  \fragmentco{\ell}{2\ell}$.
                    \item In Case~\ref{it:far:one}, the algorithm extends phrase $i-1$,
                        which is already of length at least $\ell$. Moreover, the
                        condition $x_{i'+1}+x_i-x'_i - x_{i-1}<2\ell$ guarantees that its
                        length remains at most $2\ell-1$.
                    \item In Case~\ref{it:far:two}, we have $x_{i'+1}+x_i-x'_i-x_{i-1}\ge
                        2\ell$ (otherwise, we would be in Case~\ref{it:far:one}), so the
                        total length of phrases $i-1$ and $i$ is at least $2\ell$.
                        Moreover, since the original length of phrase $i-1$ was at most
                        $2\ell-1$ and $x_{i'+1}+x_i-x'_i-x_i \le 2\ell-1$, the total
                        length of phrases $i-1$ and $i$ is at most $4\ell-2$.
                        As we balance the two phrases, their lengths are between $\ell$
                        and $2\ell-1$.
                \end{enumerate}
        \end{enumerate}
        In total, this completes the proof of the claim.
    \end{claimproof}

    We say that phrase $i$ is \emph{perfect} if $\A(X\fragmentco{x_i}{x_{i+1}})$ matches
    $X\fragmentco{x_i}{x_{i+1}}$ perfectly; otherwise, phrase $i$ is imperfect.
    Observe that the number of imperfect phrases is at most $k$. Thus, the following claim
    yields a bound of $3k$ on the number of fresh phrases.
    \begin{claim}\label{clm:fresh}
        Every fresh phrase is either imperfect itself or has an imperfect neighbor.
    \end{claim}
    \begin{claimproof}
        As in the proof of \cref{clm:lengths}, we analyze the phrases that the algorithm
        alters in a single step.
        \begin{enumerate}
            \item In Case~\ref{it:endone}, phrase $i-1$ is a suffix of $X$; thus, it is
                imperfect because $\A$ inserts the last character of $X$.
            \item In Case~\ref{it:endtwo}, phrase $i$ is a suffix of $X$; thus, it is
                imperfect because $\A$ inserts the last character of $X$. Moreover, phrase
                $i-1$ has phrase $i$ as an imperfect neighbor.
            \item In Case~\ref{it:long}, phrase $i$ is imperfect because $\A$ does not
                match $X\fragmentco{x_i}{x_i+2\ell-1}$ perfectly to
                $X\fragmentco{x'_i}{x'i+2\ell-1}$.
            \item In Case~\ref{it:power}, there is nothing to prove unless phrases $i-1$
                and $i$ are both perfect.
            We then have
            $X\fragmentco{x_{i-1}}{x_i+2\ell-1} \aonto{\A} X\fragmentco{x'_{i-1}}{x'_i+2\ell-1}$
            and $X\fragmentco{x_{i-1}}{x_i+2\ell-1} = X\fragmentco{x'_{i-1}}{x'_i+2\ell-1}$
            with $x_i-x'_i=x_{i-1}-x'_{i-1}$. Thus,
            phrase $i-1$ must have been created in Case~\ref{it:power} (and have the same
            length).
            In particular, $X\fragmentco{x_{i-1}}{x_i}=X\fragmentco{x_{i}}{x_{i+1}}$
            because $X\fragmentco{x_i-2\ell+1}{x_i+2\ell-1}$ and both phrases have the
            same length. Consequently, phrase $i-1$ is the source of phrase $i$.
            \item In Case~\ref{it:far}, we rely on the fact that if the algorithm is in
                Case~\ref{it:far} twice in a row with the same shift $x_i-x'_i$, then the
                second time must be in Case~\ref{it:far:copy}.
            \begin{enumerate}
                \item In Case~\ref{it:far:copy}, phrase $i'$ is the source of phrase $i$.
                \item In Case~\ref{it:far:one}, there is nothing to prove unless phrase
                    $i-1$ is perfect.
                We then have
                $X\fragmentco{x_{i-1}}{x_i+2\ell-1}\aonto{\A}
                X\fragmentco{x'_{i-1}-}{x'_i+2\ell-1}$ and
                $X\fragmentco{x_{i-1}}{x_i+2\ell-1} =
                X\fragmentco{x'_{i-1}-}{x'_i+2\ell-1}$.
                Thus, phrase $i-1$ must have been originally created in Case~\ref{it:far}
                (with the shift). This guarantees the existence of $i'\in
                \fragmentco{0}{i}$ such that $x_{i'}=x'_i-$,
                which is a contradiction because we are not in Case~\ref{it:far:copy}.
                \item In Case~\ref{it:far:two}, there is nothing to prove unless phrases
                    $i-1$ and $i$ are perfect.
                We then have
                $X\fragmentco{x_{i-1}}{x_i+2\ell-1}\aonto{\A}
                X\fragmentco{x'_{i-1}}{x'_i+2\ell-1}$
                and $X\fragmentco{x_{i-1}}{x_i+2\ell-1} =
                X\fragmentco{x'_{i-1}}{x'_i+2\ell-1}$.
                Thus, phrase $i-1$ must have been
                originally created in Case~\ref{it:far} (with the same shift). This
                guarantees the existence of $i'\in \fragmentco{0}{i}$ such that
                $x_{i'}=x'_i$,
                which is a contradiction because we are not in Case~\ref{it:far:copy}.
            \end{enumerate}
        \end{enumerate}
        In total, this completes the proof of the claim.
    \end{claimproof}

    The running time is $\Oh(n+k^2)$ for the construction of $\A$ plus $\Oh(n)$ for the
    decomposition. In total, this completes the proof.
\end{proof}

Finally, we are ready to prove \cref{lem:alg-periodic} in the standard setting.

\begin{proof}[Proof of \cref{lem:alg-periodic} for the standard setting.]
    If $\ed(X,Y\fragmentco{p}{q})\le d$, then $\selfed(Y\fragmentco{p}{q})\le 2k+2d$ holds
    by \cref{fct:selfed-properties} (triangle inequality).
    Moreover, $q - p = |Y\fragmentco{p}{q}|\ge |X|-d \ge |Y|-3d$, and thus
    \[\selfed(Y\fragmentco{0}{p})+\selfed(Y\fragmentco{q}{|Y|}) \le 2p+2(|Y|-q)\le 6d.\]
    Consequently, using the sub-additivity of \cref{fct:selfed-properties} we have that
    \[
        \selfed(Y)\le \selfed(Y\fragmentco{0}{p})+\selfed(Y\fragmentco{p}{q})+\selfed(Y\fragmentco{q}{|Y|}) \le 2k+8d \le 10k.
    \]
    Thus, we can assume without loss of generality that $\selfed(Y)\le 10k$; otherwise, it
    suffices to return $\infty$ for all queries.

    We set $\ell \coloneqq \min(\lceil\sqrt{nd}/(k\sqrt{\log n})\rceil,d)$
    and run the algorithm of \cref{lem:decomp} for both $X$~and~$Y$,
    arriving at decompositions \[
        X=\bigodot_{i\in \fragmentco{0}{m_X}} X\fragmentco{x_{i}}{x_{i+1}}
        \quad\text{and}\quad Y=\bigodot_{j\in \fragmentco{0}{m_Y}} Y\fragmentco{y_{j}}{y_{j+1}}.
    \]
    Next, we define a \emph{box decomposition} of the alignment graph $\AG^w(X,Y)$
    as the indexed family of boxes $(B_{i,j})_{(i,j)\in \fragmentco{0}{m_X}\times
    \fragmentco{0}{m_Y}}$, where box $B_{i,j}$ is the subgraph of $\AG^w(X,Y)$
    induced by $\fragment{x_i}{x_{i+1}}\times \fragment{y_j}{y_{j+1}}$.

    Recall that in the first part of the proof of \cref{lem:alg-periodic}, we considered a
    vertex $(x,y)$ \emph{relevant} if $y-x\in \fragment{-d}{3d}$.
    Now, we say that a box $B_{i,j}$ is \emph{relevant} if it contains at least one
    relevant vertex, that is, $\fragment{y_j-x_{i+1}}{y_{j+1}-x_i}\cap
    \fragment{-d}{3d}\ne \emptyset$.
    We consider the subgraph $G$ of $\AG^w(X,Y)$ induced by all vertices contained in
    relevant boxes.
    In \cref{clm:relevant_pillar}, we considered a subgraph induced by relevant vertices
    only. Including more vertices in the induced subgraph can only decrease the distances,
    so the following claim remains valid.

    \begin{claim}\label{clm:relevant}
        If $\wed(X,Y\fragmentco{p}{q})\le d$ holds for some fragment $Y\fragmentco{p}{q}$,
        then $\dist_{G}((0,p),(|X|,q))=\wed(X,Y\fragmentco{p}{q})$
        and the optimal alignments correspond to the shortest $(0,p)\leadsto (|X|,q)$ paths in $G$.
    \end{claim}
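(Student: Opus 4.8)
The plan is to deduce \cref{clm:relevant} from the already-established \cref{clm:relevant_pillar} by a simple sandwiching argument. Write $G'$ for the subgraph of $\AG^w(X,Y)$ induced by the relevant vertices (the graph $G$ appearing in the proof of \cref{clm:relevant_pillar}), and recall that $G$ in the present claim is the subgraph of $\AG^w(X,Y)$ induced by all vertices contained in relevant boxes. I would first observe the two containments $G'\subseteq G\subseteq \AG^w(X,Y)$ as induced subgraphs. The second one is immediate from the definition of $G$. For the first, note that every relevant vertex $(x,y)$ lies in some box $B_{i,j}$, and that box is relevant because it contains the relevant vertex $(x,y)$ itself; hence $V(G')\subseteq V(G)$, and since both graphs are induced subgraphs of $\AG^w(X,Y)$, indeed $G'$ is a subgraph of $G$.

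Next I would establish $\dist_G((0,p),(|X|,q))\le \wed(X,Y\fragmentco{p}{q})$. Take an optimal alignment $\A\in \Als(X,Y\fragmentco{p}{q})$; its cost equals $\wed(X,Y\fragmentco{p}{q})\le d$, and since $w$ is normalized, also $\ed_\A(X,Y\fragmentco{p}{q})\le d$. Repeating the band argument from \cref{clm:relevant_pillar}, I check that every vertex $(x,y)$ on the corresponding $(0,p)\leadsto(|X|,q)$ path is relevant: the cost of the prefix of $\A$ up to $(x,y)$ is at least $\ed(X\fragmentco{0}{x},Y\fragmentco{p}{y})\ge x-(y-p)$, so $x-(y-p)\le d$ and hence $y-x\ge p-d\ge -d$; symmetrically, the cost of the suffix of $\A$ from $(x,y)$ is at least $\ed(X\fragmentco{x}{|X|},Y\fragmentco{y}{q})\ge (|X|-x)-(q-y)$, which together with $q\le |Y|\le |X|+2d$ gives $y-x\le d+q-|X|\le 3d$. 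Thus this path lies entirely in $G'$, hence in $G$, and it has cost $\wed(X,Y\fragmentco{p}{q})$, which yields the bound.

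Then I would establish the reverse inequality. Since $G$ is an induced subgraph of $\AG^w(X,Y)$, any $(0,p)\leadsto(|X|,q)$ path in $G$ is also such a path in $\AG^w(X,Y)$ of the same cost, so $\dist_G((0,p),(|X|,q))\ge \dist_{\AG^w(X,Y)}((0,p),(|X|,q))=\wed(X,Y\fragmentco{p}{q})$. Combining with the previous paragraph gives equality. For the last part of the claim, the same reasoning shows that a $(0,p)\leadsto(|X|,q)$ path in $G$ is shortest if and only if its cost equals $\wed(X,Y\fragmentco{p}{q})$, which by the path/alignment correspondence in $\AG^w(X,Y)$ is equivalent to the corresponding alignment in $\Als(X,Y\fragmentco{p}{q})$ being optimal; conversely, every optimal alignment lies in $G$ by the band argument of the previous paragraph and hence is a shortest path in $G$.

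I do not anticipate any genuine obstacle here. The only point that deserves an explicit check is that the family of relevant boxes covers every relevant vertex, which is immediate since a vertex belongs to a box and that box is relevant precisely because it contains a relevant vertex. This is exactly the informal remark preceding the claim (``including more vertices in the induced subgraph can only decrease the distances''), so the whole argument amounts to invoking \cref{clm:relevant_pillar} together with the monotonicity of distances under passing to a larger induced subgraph.
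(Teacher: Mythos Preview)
Your proposal is correct and follows essentially the same approach as the paper: the paper's proof re-derives the band argument verbatim from \cref{clm:relevant_pillar} and relies on the remark immediately preceding the claim (``including more vertices in the induced subgraph can only decrease the distances'') for the reverse inequality, which is exactly your sandwiching $G'\subseteq G\subseteq \AG^w(X,Y)$ made explicit.
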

    \begin{claimproof}
        Observe that $\wed(X\fragmentco{0}{x},Y\fragmentco{p}{y})\le d$ implies
        $(x-0)-(y-p)\le d$, that is, we have \(
            y-x\ge -d.
        \)
        Moreover, $\wed(X\fragmentco{x}{|X|},Y\fragmentco{y}{q})\le d$ implies
        $(|X|-x)-(q-y)\le d$, that is, we have\[
            y-x \le q-|X|+d \le (|Y|-|X|)+d \le 3d.
        \]
        Finally, using that $\wed(X,Y\fragmentco{p}{q})\le d$, we obtain that all pairs $(x,y)$
        of the underlying alignment satisfy $y-x\in \fragment{-d}{3d}$.
        Hence, this alignment is a $(0,p)\leadsto (|X|,q)$ path in $G$. Conversely, every
        $(0,p)\leadsto (|X|,q)$ path in $G$ yields an alignment $\A\in
        \Als(X,Y\fragmentco{p}{q})$.
    \end{claimproof}

    Our algorithm lists all relevant boxes (without constructing them explicitly)
    and partitions them into equivalence classes of isomorphic boxes;
    $B_{i,j}$ and $B_{i',j'}$ are isomorphic if and only if \[
        X\fragmentco{x_i}{x_{i+1}}=X\fragmentco{x_{i'}}{x_{i'+1}}
        \quad\text{and}\quad
        Y\fragmentco{y_j}{y_{j+1}}=Y\fragmentco{y_{j'}}{y_{j'+1}}.
    \]
    We preprocess every non-isomorphic box using the data structure of~\cref{thm:klein}
    to compute the matrix of all distances from \emph{input} vertices to \emph{output} vertices,
    where a vertex $(x,y)$ of a box $B_{i,j}$ is an input vertex if $x=x_i$ or $y=y_j$,
    and an output vertex if $x=x_{i+1}$ or $y=y_{j+1}$.

    \begin{claim}\label{clm:few}
        There are $\Oh({nd}/{\ell^2})$ relevant boxes, out of which
        $\Oh({k^2}/{\ell})$ can be non-isomorphic.
    \end{claim}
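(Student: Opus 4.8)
The plan is to prove the two halves of the claim separately: a window argument for the count $O(nd/\ell^2)$ of relevant boxes, and a charging/amortization argument for the count $O(k^2/\ell)$ of isomorphism classes.

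\emph{Counting relevant boxes.} First I would fix an $X$-phrase $i$ and count the indices $j$ with $B_{i,j}$ relevant. By definition, relevance means $\fragment{y_j-x_{i+1}}{y_{j+1}-x_i}\cap\fragment{-d}{3d}\neq\emptyset$, which forces $y_j\le x_{i+1}+3d\le x_i+2\ell+3d$ and $y_j=y_{j+1}-(y_{j+1}-y_j)\ge x_i-d-2\ell$; hence $y_j$ lies in an interval of length $O(d+\ell)=O(d)$, using $\ell\le d$. Since consecutive values $y_j$ differ by at least $\ell$ (every phrase has length at least $\ell$ by \cref{lem:decomp}), at most $O(d/\ell)$ indices $j$ qualify. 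Summing over the $m_X=O(n/\ell)$ phrases of $X$ (there are at most $n/\ell$ since $|X|\le n$) yields $O(nd/\ell^2)$ relevant boxes.

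\emph{Reducing to boxes with two non-fresh phrases.} For the isomorphism count I would first peel off the easy cases. By \cref{lem:decomp}, $X$ has at most $3k$ fresh phrases; applying \cref{lem:decomp} to $Y$ with the bound $\selfed(Y)\le 10k$ obtained at the start of the proof, $Y$ has at most $30k$ fresh phrases. A relevant box whose $X$-phrase is fresh is one of at most $3k\cdot O(d/\ell)=O(k^2/\ell)$ boxes by the window count above (and $d\le k$), and symmetrically for relevant boxes with a fresh $Y$-phrase. So it remains to bound the number of non-isomorphic relevant boxes $B_{i,j}$ in which neither phrase is fresh.

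\emph{The core of the argument.} Here I would exploit the \emph{source} property of \cref{lem:decomp}: each non-fresh phrase equals a source phrase lying at most $\max(2\ell-1,k)\le 2k$ positions earlier (using $\ell\le k$), i.e.\ at most $O(k/\ell)$ phrase-indices earlier, and likewise within $10k$ positions for $Y$. A first consequence is a density statement: for any content $c$ occurring among $X$-phrases, consecutive occurrences of $c$ in the decomposition are within $2k$ positions (if $i''<i$ are consecutive occurrences of $c$, then the source of phrase $i$ has content $c$, hence index $\le i''$, so $x_i-x_{i''}\le x_i-x_{\mathrm{source}(i)}\le 2k$); the analogous bound for $Y$ is $10k$. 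I would then sweep along the diagonal band of relevant boxes, maintaining the set of box-types already encountered, and charge every newly discovered type either to one of the $O(k)$ fresh phrases (with multiplicity $O(d/\ell)$, exactly as in the reduction above) or to one of the $O(k)$ edit operations of the two optimal self-alignments $X\onto X$ and $Y\onto Y$: whenever the current $X$-phrase and $Y$-phrase simultaneously repeat recent phrases at the same relative offset, the box is isomorphic to an earlier one, and the density bounds limit how far back the witnessing earlier pair can sit. Showing that every obstruction to such a repetition is attributable to a fresh phrase or a self-alignment edit---so that the total number of new types is $O(k\cdot d/\ell+k)=O(k^2/\ell)$---is the step I expect to be the main obstacle, since it must keep the $X$-side and $Y$-side shifts coupled despite the two decompositions being computed independently.
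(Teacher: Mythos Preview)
Your count of relevant boxes is correct and matches the paper.

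For the isomorphism bound, the step you flag as the main obstacle is indeed a gap, and the approach you sketch (charging to self-alignment edits, coupling the $X$- and $Y$-side shifts during a sweep) is more complicated than necessary. The paper avoids the coupling problem entirely with a trick you are missing: \emph{widen the band first, then reduce one coordinate at a time.}

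Concretely, the paper calls $B_{i,j}$ \emph{semi-relevant} if $\fragment{y_j-x_{i+1}}{y_{j+1}-x_i}$ meets $\fragment{-2k}{10k}$ (width $\Theta(k)$, not $\Theta(d)$), and \emph{fresh} if it is semi-relevant with at least one fresh phrase. Each of the $O(k)$ fresh phrases now contributes $O(k/\ell)$ semi-relevant boxes, giving $O(k^2/\ell)$ fresh boxes in total. The key claim is that every semi-relevant (hence every relevant) box is isomorphic to some fresh box, proved by induction on $i+j$. If both phrases of $B_{i,j}$ are non-fresh, let $i'$ be the source of phrase~$i$; since phrases $i$ and $i'$ have the same length, passing to $B_{i',j}$ shifts the diagonal range by exactly $x_i-x_{i'}\le\max(2\ell-1,k)=O(k)$, and the widened band has enough slack to absorb this shift, so $B_{i',j}$ is still semi-relevant. (If the original range sits near the top of the band, one reduces $j$ to its source $j'$ instead.) Since $i'+j<i+j$, induction applies, and $B_{i,j}\cong B_{i',j}$ is isomorphic to a fresh box.

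The reason your plan stalls is that after peeling off fresh-phrase boxes in the narrow $\fragment{-d}{3d}$ band, passing to a source phrase can push the box out of that band, leaving nothing to induct on---which is exactly why you feel forced to reduce both coordinates simultaneously and keep the shifts coupled. Widening the band to $\Theta(k)$ fixes this at the cost of raising the fresh-box count from $O(kd/\ell)$ to $O(k^2/\ell)$, which is the target bound anyway. No direct appeal to the self-alignments is needed; the source pointers from \cref{lem:decomp} suffice.
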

    \begin{claimproof}
        As for the number of relevant boxes, consider an $i\in \fragmentco{0}{m_X}$. The
        box $B_{i,j}$ is relevant
        if and only if $\fragment{y_{j}}{y_{j+1}}\cap \fragment{x_i-d}{x_{i+1}+3d}\ne \emptyset$.

        Since each phrase length is in $\fragmentco{\ell}{2\ell}$, the interval
        $\fragment{x_i-d}{x_{i+1}+3d}$
        intersects at most \[
            2+{|\fragment{x_i-d}{x_{i+1}+3d}|}/{\ell}
            \le 2 +{(2\ell+4d)}/{\ell}
            =\Oh({d}/{\ell}) \quad\text{phrases.}
        \]
        Due to $m_X \le {n}/{\ell}$,
        this means that the total number of relevant boxes is $\Oh({nd}/{\ell^2})$.

        We say that a box $B_{i,j}$ is
        \begin{itemize}
            \item \emph{semi-relevant} if
                $\fragment{y_j-x_{i+1}}{y_{j+1}-x_i}\cap \fragment{-2k}{10k}\ne \emptyset$ and
            \item \emph{fresh} if it is semi-relevant and $X\fragmentco{x_i}{x_{i+1}}$ or
        $Y\fragmentco{y_j}{y_{j+1}}$ is a fresh phrase.
        \end{itemize}
        A box $B_{i,j}$ is semi-relevant if and only if \[
        \fragment{y_{j}}{y_{j+1}}\cap \fragment{x_i-2k}{x_{i+1}+10k}\ne \emptyset.\]
        Since each phrase length is in $\fragmentco{\ell}{2\ell}$, the interval
        $\fragment{x_i-2k}{x_{i+1}+10k}$ intersects
        at most \[
            2+({2\ell+11k})/{\ell}=\Oh({k}/{\ell}) \quad\text{phrases.}
        \]
        Thus, every fresh phrase $X\fragmentco{x_i}{x_{i+1}}$ gives rise to
        $\Oh({k}/{\ell})$ fresh boxes.
        A symmetric argument shows that every fresh phrase $Y\fragmentco{y_j}{y_{j+1}}$
        gives rises to $\Oh({k}/{\ell})$ fresh boxes,
        for a total of $\Oh({k^2}/{\ell})$ fresh boxes.

        We claim that every semi-relevant box is isomorphic to some fresh box. For this,
        we proceed by induction on $i+j$.
        If $X\fragmentco{x_i}{x_{i+1}}$ or $Y\fragmentco{y_j}{y_{j+1}}$ is a fresh phrase,
        then the box $B_{i,j}$ is fresh by definition.

        Otherwise, there are $i'\in \fragmentco{0}{i}$ and $j'\in \fragmentco{0}{j}$
        such that \[
            X\fragmentco{x_i}{x_{i+1}}=X\fragmentco{x_{i'}}{x_{i'+1}}
            \quad\text{and}\quad
            Y\fragmentco{y_j}{y_{j+1}}=Y\fragmentco{y_{j'}}{y_{j'+1}}
        \] with \[
            x_{i}-x_{i'}\le \max(2\ell-1,k)\le k \quad\text{and}\quad
            y_{j}-y_{j'}\le \max(2\ell-1,10k)\le 10k.
        \]

        If $\fragment{y_j-x_{i+1}}{y_{j+1}-x_i}\cap \fragment{-2k}{8k}\ne \emptyset$,
        then $\fragment{y_j-x_{i'+1}}{y_{j+1}-x_{i'}}\cap \fragment{-2k}{10k}\ne
        \emptyset$ and,
        by the inductive assumption, $B_{i',j}$ is isomorphic to a fresh box.
        By a symmetric argument, if $\fragment{y_j-x_{i+1}}{y_{j+1}-x_i}\cap \fragment{8k}{10k}\ne\emptyset$,
        then \(\fragment{y_{j'}-x_{i+1}}{y_{j'+1}-x_{i}}\cap \fragment{-2k}{10k}\ne\emptyset\) and,
        by the inductive assumption, $B_{i,j'}$ is isomorphic to a fresh box.

        Consequently, every relevant box is isomorphic to a fresh box.
    \end{claimproof}

    Now, set $S \coloneqq \{(x,y)\in V(G) \mid x = 0\}$ (or $S \coloneqq \{(0,0)\}$ if only $p=0$
    is allowed)
    and $T \coloneqq \{(x,y)\in V(G) \mid x=|X|\}$ (or $T \coloneqq \{(|X|,|Y|$)\}
    if only $q=|Y|$ is allowed).
    Our goal is to compute the shortest $S\leadsto T$ path in $G$.
    By \cref{clm:relevant}, if the sought edit distance is at most $d$, then this path
    gives the underlying optimal alignment.
    Otherwise, the cost exceeds $d$, and we can return $\infty$.

    To determine $\dist_G(S,T)$, we process relevant boxes $B_{i,j}$ in the lexicographic order.
    For each such box, we already have the distances $\dist_{G}(S,u)$ for every input
    vertex $u$ (either computed trivially or from boxes $B_{i-1,j}$ and $B_{i,j-1}$) and a
    distance matrix $\dist_{G}(u,v)$ with $u$ spanning across the input vertices and $v$
    spanning across the output vertices
    (observe that $\dist_{G}(u,v)=\dist_{B_{i,j}}(u,v)$).

    Since $B_{i,j}$ is a planar graph, the distance matrix satisfies the Monge property.
    Consequently, we can use \cref{thm:smawk}
    to determine \[
        \dist_{G}(S,v)=\min_u \left(\dist_{G}(S,u)+\dist_{G}(u,v)\right)\]
    for all output vertices $v$.

    In order to recover the shortest $S\leadsto T$ path in $G$, we backtrack along that
    path. First, we iterate over the vertices to $T$ to identify
    $v\in T$ satisfying $\dist_{G}(S,v)=\dist_{G}(S,T)$.
    Suppose that the remaining goal is to report the shortest $S\leadsto v$ path in $G$
    for an output vertex $v$ of a relevant box $B_{i,j}$.
    We iterate over the input vertices $u$ of the box $B_{i,j}$ to find one satisfying
    $\dist_{G}(S,u)+\dist_{G}(u,v)=\dist_{G}(S,v)$.
    The values $\dist_{G}(S,u)$ have already been computed, whereas the values
    $\dist_{G}(u,v)$ are stored in a matrix (for a box identical to $B_{i,j}$).
    Once we know $u$, we recover the shortest $u\leadsto v$ path using \cref{thm:klein}.

    We conclude with the running time analysis.
    Checking whether $\selfed(Y)\le 10k$ takes
    $\Oh(n+k^2)$ time.
    The decompositions of $X$ and $Y$ are constructed in $\Oh(n+k^2)$ time using \cref{lem:decomp}.
    Relevant boxes are listed in $\Oh(1)$ time each, for a total of $\Oh({nd}/{\ell^2})$.
    We can partition the relevant boxes into equivalence classes of isomorphisms in
    $\Oh(n+ {nd}/{\ell^2})$ (this step involves linear-time lexicographic sorting of
    phrases). The application of \cref{thm:klein} costs $\Oh(\ell^2 \log \ell)$ time per
    non-isomorphic box, for a total of $\Oh(k^2 \ell\log \ell)$ time.
    Processing each relevant box $B_{i,j}$ using \cref{thm:smawk} costs $\Oh(\ell)$ time,
    for a total of $\Oh({nd}/{\ell})$.
    Finally, the recovery of the optimal alignment costs $\Oh(\ell)$ time per box visited.
    Since any path visits at most $\Oh(m_X+m_Y)=\Oh({n}/{\ell})$ boxes, this final
    step takes $\Oh(n)$ time.
    Thus, the overall running time of our algorithm is $\Oh(n+k^2 + {nd}/{\ell}+k^2\ell\log
    \ell)$, which is  $\Oh(n+k^2 +k\sqrt{nd\log n})$ due to our choice of $\ell = \min(\lceil\sqrt{nd}/(k\sqrt{\log n})\rceil,d)$.
\end{proof}

\subsection{Faster Algorithms for Bounded Weighted Edit Distance}\label{sec:alg:sec:alg-main}

\SetKwFunction{Split}{Split}

\begin{algorithm}[t]
    \caption{The splitting algorithm from \cref{lem:alg-splitting}.
        The procedure $\protect\Split(X, Y, d, k)$ computes
        a partition of $X = X_1 X_2$ and $Y = Y_1 Y_2$ such that $\wed(X, Y) = \wed(X_1,
        Y_1)+\wed(X_2, Y_2)$, provided that $\wed(X, Y) \leq d$.}\label{alg:splitting}
    \Split{$X, Y, d, k$}\Begin{
        $m \gets \lfloor |X| / 2 \rfloor$\;
        $\ell_1 \gets \min\{i \in \fragment{0}{m} \mid \selfed(X\fragmentco{i}{m}) \leq 11k\}$\;
        $\ell_2 \gets \max\{i \in \fragment{m}{|X|} \mid \selfed(X\fragmentco{m}{i}) \leq 11k\}$\;
        $X^*_1 \gets X\fragmentco{\ell_1}{m},\; X^*_2 \gets X\fragmentco{m}{\ell_2}$\;
        $Y^* \gets Y\fragmentco{\max(0,\ell_1-d)}{\min(|Y|,\ell_2+d)}$\;
        \If{$\ell_1 = 0$ \KwSty{and} $\ell_2 = |X|$}{\label{alg:splitting:line:corner-case1}
            Compute $c \gets \wed_{\leq d}(X, Y)$ using~\cref{lem:alg-periodic}\;\label{alg:splitting:line:periodic}
        }\ElseIf{$\ell_2 = |X|$} {
            Compute $c \gets \min_i\wed_{\leq d}(X^*_1 \cdot X^*_2, Y^*\fragmentco{i}{|Y^*|})$ using~\cref{lem:alg-periodic}\label{alg:splitting:line:corner-case2}\;
        }\ElseIf{$\ell_1 = 0$} {
            Compute $c \gets \min_j\wed_{\leq d}(X^*_1 \cdot X^*_2, Y^*\fragmentco{0}{j})$ using~\cref{lem:alg-periodic}\label{alg:splitting:line:corner-case3}\;
        }\Else{
            Compute $c \gets \min_{i,j}\wed_{\leq d}(X^*_1 \cdot X^*_2, Y^*\fragmentco{i}{j})$ using~\cref{lem:alg-periodic}\label{alg:splitting:line:alignment}\;
        }
        \lIf{$c = \infty$}{\Return{\textsc{Fail}}}
        Write $\A$ for the alignment corresponding to $c$,
        and pick $(m, m') \in \A$\;\label{alg:splitting:line:retrievealn}
        $X_1 \gets X\fragmentco{0}{m},\; X_2 \gets X\fragmentco{m}{|X|}$\;
        $Y_1 \gets Y\fragmentco{0}{m'+\max(0, \ell_1-d)},\; Y_2 \gets Y\fragmentco{m'+\max(0, \ell_1-d)}{|Y|}$\;
        \Return{$(X_1, Y_1), (X_2, Y_2)$}\;
        }
\end{algorithm}

In the algorithm $\WeightedED(X, Y, k)$ of~\cref{lem:wed},
we intend to split the two strings $X, Y$ into roughly equal parts and
recursively computes their weighted edit distance.
As a final step before proving \cref{lem:wed}, we hence prove the following.

\begin{lemma}\label{lem:alg-splitting}
    Let $X, Y \in \Sigma^{\le n}$ denote strings and let $0 \leq d \leq k \le n$ denote
    integer parameters. There
    is an algorithm $\Split(X, Y, d, k)$ with the following properties:
    \begin{itemize}
        \item If $\wed(X,Y) \le k$, the algorithm either returns a partitioning $X = X_1
            \cdot X_2, Y = Y_1 \cdot Y_2$ such that
            $\wed(X, Y) = \wed(X_1, Y_1) + \wed(X_2, Y_2)$ and $|X_1| = \lfloor
            |X|/2\rfloor$, or it returns \textsc{Fail}.
        \item If $\wed(X,Y) \leq d$, then the algorithm does not return \textsc{Fail}.
        \item The algorithm runs in $\Oh(n + k^2 + k\sqrt{nd\log n})$-time in the standard
            setting and $\Oh(k^2d\log n)$-time in the \modelname model.
    \end{itemize}
\end{lemma}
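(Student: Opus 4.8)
My plan is to analyze \cref{alg:splitting} essentially as written, after inserting two harmless guards: return \textsc{Fail} immediately if $\big||X|-|Y|\big|>d$ (sound, since $\wed(X,Y)\ge\big||X|-|Y|\big|$, so this guard is dormant whenever $\wed(X,Y)\le d$), and set $c\gets\infty$ if the length hypothesis of \cref{lem:alg-periodic} would be violated by the strings actually passed to it. I will also dispatch the tiny cases $k\le 1$ to the $\Oh(nk)$-time baseline \cref{prop:baseline-wed}, and $d=0$ trivially, so that $k\ge 2$ may be assumed. Write $\alpha=\max(0,\ell_1-d)$, $\beta=\min(|Y|,\ell_2+d)$, so $Y^*=Y\fragmentco{\alpha}{\beta}$ and $X^*_1X^*_2=X\fragmentco{\ell_1}{\ell_2}$. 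The running time is then immediate: $\ell_1$ and $\ell_2$ are each obtained from a single Landau--Vishkin-style pass with threshold $11k$ (run on the reverse of $X\fragmentco{0}{m}$, resp.\ on $X\fragmentco{m}{|X|}$), reading off the longest prefix of self-edit distance $\le 11k$ — this is exactly what that table reports, since every $(0,0)\leadsto(j,j)$ path in an alignment graph automatically stays inside the $\fragment{0}{j}\times\fragment{0}{j}$ box; by (the method behind) \cref{lem:selfed} this costs $\Oh(n+k^2)$ in the standard setting and $\Oh(k^2)$ in the \modelname{} model. By the Sub-additivity clause of \cref{fct:selfed-properties}, $\selfed(X^*_1X^*_2)\le 22k$, so the single call to \cref{lem:alg-periodic} with parameters $22k$ and $d$ runs in $\Oh(k^2 d\log n)$ (\modelname{}) or $\Oh(n+k^2+k\sqrt{nd\log n})$ (standard), which dominates; reading off $(m,m')$ and assembling the output is negligible.

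To see that the algorithm does not \textsc{Fail} when $\wed(X,Y)\le d$, I would fix an optimal alignment $\A$ of $X$ onto $Y$ of cost $\le d$ and use that $w$ is normalized, so $|\A(x)-x|\le d$ for every $x$ (each unit of diagonal displacement costs at least one). Then $\A(\ell_1)$ and $\A(\ell_2)$ lie in $\fragment{\alpha}{\beta}$, so restricting $\A$ to columns $\fragment{\ell_1}{\ell_2}$ yields an alignment of $X^*_1X^*_2$ onto a fragment of $Y^*$ of cost $\le d$ (and the same bound makes the guarded precondition of \cref{lem:alg-periodic} hold); hence the minimum returned by \cref{lem:alg-periodic} is $\le d$, so $c\le d<\infty$. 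The three corner cases are handled identically after additionally invoking $\A(0)=0$, $\A(|X|)=|Y|$, and $\big||X|-|Y|\big|\le d$.

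For correctness of the partition when $\wed(X,Y)\le k$ and the algorithm does not \textsc{Fail}, note that then $c\le d$ and \cref{lem:alg-periodic} returns (the breakpoint representation of) an alignment which, translated into $\AG^w(X,Y)$, is a path $P$ of cost $c\le d$ between two boundary vertices of the subgrid induced by $X^*_1X^*_2$ and $Y^*$. Since the cap in $\ed^w_{\le d}$ is inactive, $P$ realizes $\dist_{\AG^w(X,Y)}$ between its endpoints (a shortest path between two boundary vertices of a fragment-induced subgrid stays inside the subgrid by monotonicity), so $P$ — and every subpath of $P$ — is a shortest path; moreover every vertex of $P$ lies in the band $\{(x,y):y-x\in\fragment{-2d}{3d}\}$ inherited from the band used inside \cref{lem:alg-periodic}. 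Let $m''=\alpha+m'$, so $(m,m'')\in P$; by \cref{fct:split-alignment} it suffices to exhibit some optimal alignment of $X$ onto $Y$ through $(m,m'')$. The plan is a rerouting argument: for an arbitrary optimal alignment $\A_{\mathrm{opt}}$, if $\A_{\mathrm{opt}}$ shares a vertex $u$ with $P$ in a column $\le m$ and a vertex $v$ with $P$ in a column $\ge m$, then splicing the $u\leadsto v$ subpath of $P$ into $\A_{\mathrm{opt}}$ yields an alignment of cost at most $\wed_{\A_{\mathrm{opt}}}(X,Y)=\wed(X,Y)$ — hence optimal — which passes through $(m,m'')$ because $u.\mathrm{col}\le m\le v.\mathrm{col}$.

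The main obstacle is this \emph{crossing claim}: that $\A_{\mathrm{opt}}$ meets $P$ in some column $\le m$ (the column-$\ge m$ statement is symmetric, with the roles of $\ell_1$ and $\ell_2$ interchanged; in the corner cases $\ell_1=0$ resp.\ $\ell_2=|X|$ it is trivial since $P$ then starts at $(0,0)$ resp.\ ends at $(|X|,|Y|)$). I would prove it by contradiction, exploiting the minimality in $\ell_1=\min\{i:\selfed(X\fragmentco{i}{m})\le 11k\}$. Assume $\ell_1>0$ and that $\A_{\mathrm{opt}}$ and $P$ share no vertex in columns $\fragment{\ell_1}{m}$, and form two alignments of $X\fragmentco{\ell_1-1}{m}$: the restriction of $\A_{\mathrm{opt}}$ to columns $\fragment{\ell_1-1}{m}$ (unweighted cost $\le\wed_{\A_{\mathrm{opt}}}(X,Y)\le k$ by normalization), and the restriction of $P$ to columns $\fragment{\ell_1}{m}$ with a deletion of $X\position{\ell_1-1}$ prepended (unweighted cost $\le 1+c\le 1+d$). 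These share no diagonal edge — a common one would force a common vertex in $\fragment{\ell_1}{m}$ — so \cref{prop:disjoint-alignments-bound-sed} bounds $\selfed(X\fragmentco{\ell_1-1}{m})$ by the sum of these two costs plus $|\A_{\mathrm{opt}}(\ell_1-1)-P(\ell_1)|+|\A_{\mathrm{opt}}(m)-P(m)|$; the boundary terms are each $\le k+3d+\Oh(1)$ by the $\Oh(k)$-band on $\A_{\mathrm{opt}}$ and the $\Oh(d)$-band on $P$, so the total is $\le 3k+7d+2\le 11k$ (using $d\le k$, $k\ge 2$), contradicting $\selfed(X\fragmentco{\ell_1-1}{m})>11k$. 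The delicate part is the bookkeeping of these constants — verifying that the band widths and the choice $11k$ fit together, uniformly across the corner cases and both implementations; the rest is routine.
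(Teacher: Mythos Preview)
Your proposal is correct and follows essentially the same approach as the paper: both compute $\ell_1,\ell_2$ via the self-edit-distance threshold $11k$, invoke \cref{lem:alg-periodic} on $X\fragmentco{\ell_1}{\ell_2}$ and $Y^*$ with parameters $(22k,d)$, and prove the crossing claim by applying \cref{prop:disjoint-alignments-bound-sed} to the restrictions of the local path $P$ and the global optimum $\tilde\A$, deriving a contradiction with the minimality of~$\ell_1$. Your variant (prepending a deletion to $P$ so as to bound $\selfed(X\fragmentco{\ell_1-1}{m})$ directly, rather than bounding $\selfed(X\fragmentco{\ell_1}{m})$ first and then extending by one character) is a cosmetic reshuffling that lands on the same constant $10k+2\le 11k$; your explicit guards and the single-pass computation of $\ell_1,\ell_2$ from the Landau--Vishkin table are likewise harmless refinements of what the paper leaves implicit.
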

\begin{proof}

    We proceed as follows.
    First, we split $X$ in two parts $X = X\fragmentco{0}{m}\cdot X\fragmentco{m}{|X|}$,
    where $m \coloneqq \floor{|X|/2}$.
    Now, let $X_1^* \coloneqq X\fragmentco{\ell_1}{m}$ denote the longest suffix of the
    first part with $\selfed(\cdot) \leq 11k$,
    and let $X_2^* \coloneqq \fragmentco{m}{\ell_2}$ denote the longest prefix of the
    second part with $\selfed(\cdot) \leq 11k$.
    Based on $\ell_1$ and $\ell_2$, we define $Y^* \coloneqq  Y\fragmentco{\max(0,\ell_1-d)}{\min(|Y|,\ell_2+d)}$.
    Next, assuming that $\ell_1 > 0$ and $\ell_2 < |X|$, we compute
    \[c \coloneqq \min_{i,j}\wed_{\leq d}(X^*_1 \cdot X^*_2, Y^*\fragmentco{i}{j})\]
    and the breakpoint representation of its corresponding optimal alignment $\A$ using~\cref{lem:alg-periodic} with
    parameters $d$ and $22k$
    (observe that this is valid since $\selfed(X^*_1\cdot X^*_2) \leq 22k$
    by~\cref{fct:selfed-properties}); see~\cref{alg:splitting:line:alignment}.
    If either $\ell_1 = 0$ or $\ell_2 = |X|$, we fix one of the endpoints of $Y^*$.

    If $c = \infty$, we return \textsc{Fail}. Otherwise, we return the partitioning \[
        X_1 \coloneqq X\fragmentco{0}{m},\quad X_2 \coloneqq X\fragmentco{m}{|X|}
    \]
    and \[
        Y_1 \coloneqq Y\fragmentco{0}{m'+\max(0,\ell_1-d)},\quad Y_2
    \coloneqq Y\fragmentco{m'+\max(0,\ell_1-d)}{|Y|},\] where $(m,m') \in \A$.
    The last detail is that we only have the breakpoint representation of $\A$.
    To recover $m'$ from it, we let $(x^*, y^*)$ be the breakpoint with largest $x^*$ such that $x^* \leq m$.
    Then it holds that $m' = y^* + m - x^*$, since for any two consecutive breakpoints $(x', y')$ and $(x, y)$,
    it holds that $(x-\delta, y-\delta) \in \A$ for $\delta \in \fragmentco{0}{\max(x - x', y-y')}$.
    Consult~\cref{alg:splitting} for the pseudocode.

    First, we analyze the running time.
    The computation of $X_1^*$ and $X_2^*$ using \cref{lem:selfed} takes $\Oh(k^2)$ time
    in the \modelname model.
    The call to~\cref{lem:alg-periodic} takes $\Oh(n + k^2 + k\sqrt{nd\log n})$-time in
    the standard setting and $\Oh(k^2d\log n)$-time in the \modelname model.

    Next, we analyze the correctness.
    If $\wed(X, Y) \leq d$, then we claim that $c \leq d$. Indeed, take any optimal
    alignment $\tilde\A : X \onto Y$. The cost of the induced alignment of $\tilde\A$ on
    $X^*_1 \cdot X^*_2$ is at most $d$.
    Moreover, since $\wed(X,Y) \leq d$, we have that \[
        \big||X^*_1 \cdot X^*_2|-|\A(X^*_1 \cdot X^*_2)|\big| \leq d \]
    and therefore by definition of $Y^*$ it follows that $\tilde\A(X^*_1 \cdot X^*_2)$ is
    contained within $Y^*$. This implies
    that $c \leq d$, (regardless of which of the cases
    in~\crefrange{alg:splitting:line:corner-case1}{alg:splitting:line:alignment}
    compute it), as claimed. Thus, we conclude that if $\wed(X, Y) \leq d$, then the
    algorithm does not return \textsc{Fail}, as stated.

    Now we argue about the correctness of the returned partitioning.
    We claim that every (global) alignment $\tilde\A \in X \onto Y$ attaining
    $\wed_{\tilde\A}(X, Y)\le k$
    \emph{meets} $\A \in \Als(X^*, Y^*\fragmentco{i}{j})$ inside $X^*_1$.
    Let $\A_1 : X^*_1 \onto Y\fragmentco{i}{i+\delta}$ denote the induced alignment of
    $\A$ on $X^*_1$ and let
    $\tilde\A_1 : X^*_1 \onto Y\fragmentco{i'}{i'+\delta'}$ denote the induced alignment
    of $\tilde\A$ on $X^*_1$.
    Formally, we claim that $\A_1\cap \tilde\A_1 \ne \emptyset$.
    For a proof by contradiction, suppose that this is not true.
    Then, in particular, $\A_1$ and $\tilde\A_1$ do not contain any common diagonal edge.
    Hence, we can use~\cref{prop:disjoint-alignments-bound-sed} to conclude that
    \begin{equation}\label{eqn:selfed-bound}
      \selfed(X^*_1) \leq |i-i'|+\ed_{\A_1}(X^*_1,Y\fragmentco{i}{i+\delta}) +
      \ed_{\tilde\A_1}(X^*_1,Y\fragmentco{i'}{i'+\delta'}) +|(i+\delta)-(i'+\delta')|.
    \end{equation}
    To bound this expression, recall that $X_1^* = X\fragmentco{\ell_1}{m}$ and $\A :
    X\fragmentco{\ell_1}{m} \onto Y\fragmentco{i}{i+\delta}$.
    \begin{claim}
        It holds that $|i - \ell_1|, |(i+\delta)-m| \leq 3d$.
    \end{claim}
    \begin{claimproof}
        If $c$ is computed in either \cref{alg:splitting:line:periodic} or \cref{alg:splitting:line:corner-case3},
        then it holds that $\ell_1 = i = 0$. Moreover, since $c \leq d$ it follows that $|(i+\delta)-m| = |\delta - m| \leq d$.
        Thus, in these cases we are done.

        Recall that $X^*_1 = X\fragmentco{\ell_1}{m}, X^*_2 = X\fragmentco{m}{\ell_2}$ and
        $Y^* = Y\fragmentco{\max\{0, \ell_1-d\}}{\min\{|Y|, \ell_2+d\}}$.

        Note that
        \begin{align*}
            c
                &= \min_{i,j} \wed(X^*_1 X^*_2, Y^*\fragmentco{i}{j}) \\
                &= \min_{i',j'} \wed(X\fragmentco{\ell_1}{\ell_2}, Y\fragmentco{\max\{0, \ell_1-d\}+i'}{\min\{|Y|, \ell_2+d\}-j'}).
        \end{align*}
        Fix the $i', j'$ minimizing this expression. We aim to bound $|i - \ell_1| = |i' + \max\{0, \ell_1-d\} - \ell_1|$.

        In particular, since $c \leq d$, it holds that the difference in length between both strings is at most $d$,
        i.e.
        \begin{equation*}\label{proof:alg-splitting:eq:1}
            |\min\{|Y|, \ell_2 + d\} - j' - \max\{0, \ell_1 - d\} - i' - (\ell_2 - \ell_1)| \leq d.
        \end{equation*}

        From this, we infer that
        \begin{equation}\label{proof:alg-splitting:eq:2}
          i' + j' \leq  \min\{|Y|, \ell_2 + d\}  - \max\{0, \ell_1 - d\} - \ell_2 + \ell_1 + d \leq 3d.
        \end{equation}
        The last inequality can be verified by considering the 4 possibilities of which arguments attain the maximum and the minimum.
        For example, if $\ell_2 + d < |Y|$ and $\ell_1 - d < 0$, then we obtain $\ell_2 + d - \ell_2 + \ell_1 + d < 3d$.

        Now, to bound $|i' + \max\{0, \ell_1-d\} - \ell_1|$ consider two cases:
        \begin{itemize}
            \item If $\ell_1 - d \geq 0$, then $|i' + \max\{0, \ell_1-d\} - \ell_1| = |i' - d|$.
            Then note that $d - i' \leq d$. Moreover, using~\eqref{proof:alg-splitting:eq:2} we have that $i' - d \leq 2d$.
            Therefore, $|i' + \max\{0, \ell_1-d\} - \ell_1| \leq 2d$.
            \item If $\ell_1 - d < 0$, then $|i' + \max\{0, \ell_1-d\} - \ell_1| = |i' - \ell_1|$.
            Using~\eqref{proof:alg-splitting:eq:2}, we observe that $i' - \ell_1 \leq i' \leq 3d$.
            Moreover, using the case assumption it holds that $\ell_1 - i' \leq \ell_1 < d$.
            Thus, $|i' + \max\{0, \ell_1-d\} - \ell_1| \leq 3d$.
        \end{itemize}
        We conclude that $|i' + \max\{0, \ell_1-d\} - \ell_1| \leq 3d$, as claimed.

        We can bound $|m - i| = |m - (\max\{0, \ell_1 - d\} + i' + \delta)| \leq 3d$ analogously.
    \end{claimproof}

    For the global alignment, we have that \(\ed_{\tilde\A_1}(X^*_1, Y\fragmentco{i'}{i'+\delta'}) \leq k\)
    and thus $|i'-\ell_1|, |(i'+\delta')-m|\leq k$.
    Combining this with~\eqref{eqn:selfed-bound} we obtain that
    \[
        \selfed(X^*_1) \leq  |i-\ell_1|+|\ell_1-i'|+d+k+|(i+\delta)-m|+|m-(i'+\delta')| \leq 7d + 3k \leq 10k.
    \]
    This implies $\ell_1=0$, as otherwise $\selfed(X\fragmentco{\ell_1-1}{m})\le 10k+1 \le
    11k$ would contradict the definition of $\ell_1$.
    In that case, however, $\tilde\A$ and $\A$ meet at $(0,0)$ because
    \[c=\min_j
    \wed(X\fragmentco{0}{\ell_2},Y\fragmentco{0}{j}),
    \]
    see~\cref{alg:splitting:line:corner-case3,alg:splitting:line:periodic}.

    This contradiction completes the proof that $\tilde\A$ and $\A$ meet inside $X_1^*$.
    A symmetric argument yields that $\tilde\A$ and $\A$ meet inside $X_2^*$.
    By the (local) optimality of $\A$, we can adapt $\tilde\A$ to behave as $\A$ between
    the meeting points without increasing the cost. In particular, this implies that there
    is an optimal alignment $\tilde\A$
    such that $(m,m') \in \tilde\A$. The correctness of the partitioning thus follows
    from~\cref{fct:split-alignment}.
\end{proof}

\begin{algorithm}[t]
    \caption{The algorithm from~\cref{lem:wed}.
        Given strings $X, Y$ and an integer $k$, the algorithm computes $\wed_{\le
    k}(X, Y)$.}\label{alg:wed}

    \WeightedED{$X, Y, k$}\Begin{
        $d \gets \lceil{2k^2}/{n}\rceil$\; \label{alg:wed:line:initialize}
        \If{$\wed(X, Y) \leq d$}{ \label{alg:wed:line:check-base}
            Compute and \Return $\wed(X, Y)$ using~\cref{prop:baseline-wed}\; \label{alg:wed:line:base-case}
        }
        \While{$\Split(X, Y, d, k) = \textsc{Fail}$}{\label{alg:wed:line:while}
            \lIf{$d = k$}{\Return $\infty$}
            $d \gets \min(k, 2 \cdot d)$\;
        }
        $(X_1,Y_1), (X_2,Y_2) \gets \Split(X, Y, d, k)$\; \label{alg:wed:line:split}
        $c_1 \gets \WeightedED(X_1, Y_1, k)$\;
        $c_2 \gets \WeightedED(X_2, Y_2, k)$\;
        \Return $c_1 + c_2$\;
    }
\end{algorithm}

We now put the pieces together to prove~\cref{lem:wed}.

\lemwed
\begin{proof}
    To begin, the algorithm sets $d \coloneqq \ceil{2k^2/n}$
    and checks if $\wed(X, Y) \leq d$ using~\cref{prop:baseline-wed}.
    If this condition is met, the algorithm returns
    the computed value. If not, the algorithm applies~\cref{lem:alg-splitting} to
    partition $X$ and $Y$ using an
    increasing parameter $d, 2d, 4d, \dots$, until the algorithm finds
    a partitioning of $X = X_1 \cdot X_2$ and $Y = Y_1 \cdot Y_2$.
    The process then continues recursively on $(X_1, Y_1)$ and $(X_2, Y_2)$.
    Refer to~\cref{alg:wed} for the pseudocode.

    We begin by analyzing the correctness of the algorithm.
    If the algorithm solves a subproblem directly in~\cref{alg:wed:line:base-case},
    then the correctness follows from~\cref{prop:baseline-wed}.
    Otherwise, since by assumption we have that $\wed(X, Y) \leq k$,
    \cref{lem:alg-splitting} guarantees that the call to \Split in
    line~\cref{alg:wed:line:while} with
    $d/2 < \wed(X, Y) \leq d$ does not fail.
    In particular, in~\cref{alg:wed:line:split}, we obtain a partition
    $(X_1, Y_1), (X_2, Y_2)$ such that $\wed(X, Y) = \wed(X_1, Y_1) + \wed(X_2, Y_2)$. Therefore,
    we can solve each subproblem independently via recursion.

    We now we analyze the running time in the standard setting. First focus on a single
    execution of~\cref{alg:wed}, ignoring the time
    spent in recursive calls. Write $n \coloneqq |X| + |Y|$
    and $c \coloneqq \wed(X, Y)$. If $c < 2k^2/n$, then the algorithm solves the subproblem
    directly in~\cref{alg:wed:line:base-case} in time
    $\Oh(n(c+1)) = \Oh(n + k\sqrt{nc})$ (the additive $n$
    comes from the case $c = 0$).
    Otherwise, consider the while-loop in~\cref{alg:wed:line:while}.
    The $i$-th iteration takes time $\Oh(n + k^2 + k\sqrt{nd \log n})$ where $d =
    \Theta(2^i k^2/n)$ (because of the initialization in~\cref{alg:wed:line:initialize}).
    Thus, we can simplify this expression to $\Oh(n + k\sqrt{nd \log n})$.
    \cref{lem:alg-splitting} guarantees that we
    exit the loop when $d \leq 2c$, which takes at most $\Oh(\log c) = \Oh(\log n)$ iterations.
    Thus, the overall time spent in the while-loop is $\Oh(n\log n + k\sqrt{nc\log n})$.
    Observe that we do not get a log-factor in the second term because the running time
    forms a geometric series dominated by its last term.
    Therefore, we conclude that a single execution of~\cref{alg:wed} takes time $\Oh(n\log
    n + k\sqrt{nc\log n})$.

    Now, let us consider the overall running time of the algorithm, taking into account
    the recursive calls.
    By \cref{lem:alg-splitting}, we recurse on subproblems $(X_i, Y_i)$ where $\wed(X_i,
    Y_i) = c_i$, and
    $|X_i| = |X|/2$ for $i = 1, 2$, and $c_1 + c_2 = c$. Thus, at any level $\ell$ of the
    recursion we have at most $2^\ell$ subproblems
    $(n_i, c_i)$ where $\sum_i n_i \leq n$ and $\sum_i c_i \leq c$. As we argued earlier,
    the time spent on each subproblem is $\Oh(n_i\log n + k\sqrt{n_i c_i\log n_i})$.
    Therefore, we can upper bound the total time spent at any level of the recursion by
    \[
        \Oh\left(\sum_i (n_i\log n + k\sqrt{n_i c_i\log n_i})\right)
        \leq \Oh\left(n\log n  + k\sqrt{\log n}\cdot \sum_i \sqrt{n_i c_i}\right)
        \leq \Oh\left(n\log n + k\sqrt{n c\log n}\right),
    \]
    where the last step follows from the Cauchy-Schwarz Inequality.

    Finally, we bound the depth of the recursion. As stated in \cref{lem:alg-splitting},
    the length of $X$ is halved at each recursive call. Although the length of $Y$ is not
    necessarily halved, we have $\big||X|-|Y|\big| \leq k$ because $\wed(X, Y) \leq k$.
    Hence, after $\Oh(\log n)$ levels of recursion,
    we obtain subproblems where $n = |X| + |Y| \leq 2k$, which are solved directly
    in~\cref{alg:wed:line:base-case}
    (since at this point, $d = \ceil{2k^2/n} \geq k$). Therefore, the depth of the
    recursion is $\Oh(\log n)$, and
    the overall running time of the algorithm is $\Oh(n\log^2 n  + k\sqrt{nc}
    \log^{1.5}n)$, as claimed.

    Next, we follow the same steps to bound the running time of the \modelname
    implementation of the algorithm.
    If $c < 2k^2/n$, then the subproblem is solved directly
    in~\cref{alg:wed:line:base-case} in time $\Oh(nc) = \Oh(k^2)$.
    (Strictly speaking, in the \modelname{} implementation we first
    check if $c = 0$ using one \lceOpName{} query to avoid spending time $\Oh(n)$.)
    Otherwise, the calls to~\cref{lem:alg-splitting} inside the while-loop
    in~\cref{alg:wed:line:while} take total time
    $\Oh(k^2 c\log n)$. Therefore, a single execution of~\cref{alg:wed} takes time
    $\Oh(k^2 c \log n)$.

    Now, let us consider the overall running time of the algorithm, taking into account
    the recursive calls.
    At any level $\ell$ of the recursion we have at most $2^\ell$ subproblems
    $(n_i, c_i)$, where $\sum_i n_i \leq n$ and $\sum_i c_i \leq c$. As we argued earlier,
    the time spent on each
    subproblem is $\Oh(k^2 c_i \log n_i)$. Therefore, we can upper bound
    the total time spent at any level of
    the recursion by $\Oh(k^2 c \log n)$.
    Since the depth of the recursion is $\Oh(\log n)$, the overall running time of the
    algorithm is $\Oh(k^2 c \log^2 n)$, as claimed; completing the proof.
\end{proof}

\section{Fine-Grained Lower Bounds and the All-Pairs Shortest Paths Hypothesis}\label{sec:finegrained}

Conditional hardness of and reductions between problems have been integral parts of
Theoretical Computer Science from its very beginnings. Starting from fundamental questions
whether a problem is decidable, over time, researchers developed new and refined existing
techniques to obtain increasingly fine-grained lower bounds for problems.

While classical {\sf NP}-hardness rules out polynomial-time algorithms
conditioned on \({\sf P} \ne {\sf NP}\), such a lower bound, for instance,
does not rule out algorithms running in time \(\Oh(n^{\log n})\)---and such an algorithm
may even be practical.
To rule out such a running time, we need stronger hypotheses.
Hence, the Exponential Time Hypothesis ({\sc eth}) was
introduced~\cite{Impagliazzo2001,Impagliazzo2001a},
(intuitively) conjecturing that the Satisfiability problem has an exponential running
time.
Employing {\sc eth}, researchers
could now prove strong lower bounds, even for parameterized versions of problems. For
instance in~\cite{Chen2005}, the authors prove---among other results---that there is no
\(n^{o(k)}\) algorithm for finding a \(k\) pairwise connected vertices in a graph (a
\(k\)-clique).

For polynomial-time algorithms, a lower bound of the form \(n^{o(k)}\) is still too
weak, though---we wish to determine an \emph{exact} exponent in the lower bound,
ideally matching the running time exactly. To that end, researchers strengthened {\sc
eth}---the Strong Exponential Time Hypothesis ({\sc seth}) (introduced without a name in
\cite{Impagliazzo2001a})---which has since
become a useful hypothesis to obtain fine-grained lower bounds. A prominent example of
such a fine-grained lower bound is the celebrated
result that the (unweighted) edit distance of two strings cannot be computed in
subquadratic time~\cite{bi18}.
Over time, researchers introduced more hypotheses to find convincing arguments why
certain problems might not have faster-than-known algorithms.

For our purposes, it is instructive to work with such an alternative hypothesis. In
particular, we base our hardness results on a hardness hypothesis for the All-Pairs Shortest Paths
(\apsp) problem.

\defproblemalg{All-Pairs Shortest Paths ({\sc apsp})}%
{Weighted, undirected graph \(G = (V, E, \wg)\)}%
{For each pair of vertices \(u, v \in V\), compute the length of the shortest
\(u\)-\(v\)-path.}

The classic textbook algorithm of Floyd and Warshall from the \oldstylenums{1960}s
solves \apsp in cubic time.
As it turns out, there are many graph and matrix problems that are even \emph{equivalent} to
\apsp (for instance computing the min-plus product of two matrices)~\cite{Williams18}, meaning
that any (polynomial) improvement for one problem yields a similar improvement for the
other and vice versa.
However, for none of them, a subcubic algorithm is known; which
motivates the following hypothesis.

\apspconj

In this paper, we give a reduction from \apsp to (variants of) the Bounded Weighted Edit
Distance problem. Assuming~\cref{apsp-conj}, this reduction justifies calling our improved
algorithms optimal.

For our reduction, we in fact start from the following equivalent problem~\cite{Williams18} which
is easier to handle.

\defproblemalg{Negative Triangle}%
{Weighted, undirected graph \(G = (V, E, \wg)\)}%
{Check if there are three vertices \(a, b, c \in V\) that form a triangle of negative
    weight, that is, \(\{a, b\}, \{b, c\}, \{c, a\} \in E\) and \(\wg(a,b) + \wg(b,
c) + \wg(c, a) < 0\).
}

\begin{fact}[\cite{Williams18}]\label{fc:4-1-1}
    For any \(\smallconst > 0\),
    there is no algorithm that solves the Negative Triangle
    problem on graphs \(G = (V, E,\delta)\) with \(|V| \le N\) in time
    \(\Oh(N^{3 - \smallconst})\), unless the {\sc apsp} Hypothesis fails.
\end{fact}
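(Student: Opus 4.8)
The plan is to prove the contrapositive: a Negative Triangle algorithm running in time $\Oh(N^{3-\smallconst})$ on $N$-vertex graphs would yield an \apsp algorithm running in time $\Oh(n^{3-\smallconst'})$ for some $\smallconst'>0$, contradicting \cref{apsp-conj}. This is exactly the sub-cubic equivalences of Vassilevska Williams and Williams, so I would reconstruct their reduction chain~\cite{Williams18}, which factors into three links: Negative Triangle detection $\Rightarrow$ all-pairs negative triangle $\Rightarrow$ min-plus matrix product $\Rightarrow$ \apsp.

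\textbf{From detection to all-pairs detection.} Starting from a detector with running time $T(N)=\Oh(N^{3-\smallconst})$, I would first build the all-pairs variant: for every pair $\{u,v\}$ decide whether some triangle through $\{u,v\}$ is negative. The engine is divide and conquer that bisects each of the three vertex parts; any negative triangle lies in one of the $8$ resulting sub-instances, so running the detector on all $8$ and recursing into one costs $\sum_{i\ge 0}8\,T(N/2^i)$, a geometric series dominated by $\Oh(T(N))$ since $8\cdot 2^{-(3-\smallconst)}<1$. Bisecting only the contraction part while bookkeeping which pairs still have undetermined status turns this into the all-pairs routine with polylogarithmic overhead.

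\textbf{From all-pairs detection to min-plus product, and then to \apsp.} To compute $C=A\star B$ with $C_{ij}=\min_k(A_{ik}+B_{kj})$, form the tripartite graph with weights $A_{ik}$ on $i$--$k$ edges, $B_{kj}$ on $k$--$j$ edges, and a tunable weight $-w_{ij}$ on each $i$--$j$ edge; then $C_{ij}$ is the threshold value of $w_{ij}$ at which a negative triangle through $\{i,j\}$ first appears, and since all weights are $n^{\Oh(1)}$-bounded I would pin down every threshold by a simultaneous binary search, calling the all-pairs routine $\Oh(\log n)$ times; splitting the contraction indices into blocks and combining the partial minima keeps every call on inputs of a size that makes the grand total $o(n^3)$. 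The last link is the classical repeated-squaring argument: $\Oh(\log n)$ min-plus squarings of the weighted adjacency matrix produce all shortest-path distances, so an $\Oh(n^{3-\smallconst'})$-time min-plus product yields an $\Oh(n^{3-\smallconst'}\log n)$-time \apsp algorithm and refutes \cref{apsp-conj}.

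The hard part will be the middle link: one has to verify that the overheads accumulated across the witness-finding recursion, the binary search, and the block decomposition leave the exponent $3$ intact, and that negative (and possibly large) edge weights never upset the detection subroutine. It is precisely because these overheads are only sub-polynomial---not bounded by a constant---that the statement promises merely \emph{some} $\smallconst'>0$ rather than a matching exponent. All of this is carried out in~\cite{Williams18}, which the paper cites; the outline above records the argument one would reconstruct on demand.
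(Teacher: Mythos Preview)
The paper does not prove this statement: it is recorded as a \emph{Fact} with a bare citation to~\cite{Williams18} and used as a black box (the subsequent \cref{cor:negtriangle} likewise receives only a two-line proof sketch). Your proposal goes beyond what the paper does by reconstructing the Vassilevska Williams--Williams reduction chain that the citation points to. The outline you give---detection $\Rightarrow$ all-pairs detection via recursive tripartite splitting, then $\Rightarrow$ min-plus product via parallel binary search on edge-weight thresholds, then $\Rightarrow$ \apsp via repeated squaring---is correct and matches the standard argument in~\cite{Williams18}.
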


As it turns out, we can simplify our proofs by starting from the following strengthened
version of~\cref{fc:4-1-1}, which seems to be generally known and can be obtained
from~\cref{fc:4-1-1} by standard techniques. Hence, we include only a brief proof sketch.

\cornegtriangle
\begin{proof}[Proof sketch]
    Starting from an instance of Negative Triangle \(G = (V, E)\),
    first  create three copies of the vertices \(V_1, V_2, V_3\),
    and connect each pair of two copies using (copies) of the edges from \(E\) (with one
    endpoint in copy \(V_i\) and the other endpoint in copy \(V_j\)).\footnote{That is, we
    take the (graph) tensor product of the input graph and a triangle.}
    Clearly, any negative triangle in the new instance appeared in the original instance
    and vice versa.

    Next, we partition the sets \(V_1, V_2, V_3\) into (disjoint) subsets of the desired
    sizes and consider the subgraphs induced by three vertex subsets. Clearly, we may
    solve the original instance by solving all of the partitioned instances; this then
    yields the desired lower bound.
\end{proof}

\section{From \texorpdfstring{{\large APSP}}{APSP} to Batched Weighted Edit Distance} \label{sec:batched-wed-lowerbound}

In this section, we obtain our first \apsp-based lowed bound for weighted edit distance
related problems; in particular, we obtain the promised lower bound for Batched Weighted
Edit Distance.

We proceed in two steps. First, we construct and analyze a gadget (of a batch of strings
\(X\), a string~\(Y\), and a weight function) that can be used to ``multiply''
two matrices (using the min-plus product).

In a second step, we then combine multiple copies of said gadget to be able to compute the
min-plus product of three matrices. Now, as we are interested only in the weight of the
minimum weight triangle in a graph (in fact, we are just interested in whether said weight
is negative), we do not have to compute the full matrix product; just the minimum entry of
the resulting matrix suffices.

In total, we prove the following result.
\thmbatched*

\subsection{Computing the Min-Plus-Product of Two Matrices as a Weighted Edit Distance of
Strings}
\label{ssect:lb1}

For
a matrix \(A \in \fragment{-E}{E}^{p \times q}\) and
a matrix \(B \in \fragment{-E}{E}^{q \times r}\)
we write \(\sab\) for an alphabet of
\((2p + r) + (2p + q)\) distinct characters; we also write \[
    \sab = \sabx \cup \sy \coloneqq
    \{x_0,\dots,x_{p-1},\quad x_{p}^{(0)}, \dots, x_{p}^{(r - 1)}, \quad x_{p + 1},
    \dots, x_{2p} \} \;\cup\; \{ y_0, \dots, y_{2p + q - 1}\}.
\]
Observe that the alphabet \(\sy\) is independent of \(A\) and \(B\).
Further, for each \(\ell \in \fragmentco0r\),
we set \(X_{\ell} \coloneqq x_0\cdots x_{p-1} x_{p}^{(\ell)} x_{p +
1}\cdots x_{2p}\) and \(Y \coloneqq y_0\cdots y_{2p + q - 1}\).
Observe that we can obtain any \(X_{\ell}\) from any other \(X_{\ell'}\) by substituting a
single character of \(X_{\ell'}\) (namely \(x_{p}^{(\ell')} \onto x_{p}^{(\ell)}\)).
Hence, we have \(\hd(X_{\ell}, X_{\ell + 1}) \le 1\) for every \(\ell \in
\fragmentco{0}{r-1}\).

Next, for positive \(E \ll D \ll F\)
(where we use \(a \ll b\) to denote \(a (|X| + |Y|) < b\)),
we define the following weight function \(\wabn: (\sab\cup\{\emptystring\})^2 \to \mathbb{Z}_{\ge 0}\):
\begin{alignat*}{3}
    &\wab{\sigma}{\emptystring} &&\coloneqq F, &\qquad& \forall \sigma\in \sab\\[1.5ex]
    &\wab{\emptystring}{\sigma} &&\coloneqq F, && \forall \sigma \in \sab \\[1.5ex]
    &\wab{x_i}{y_{i + j}} &&\coloneqq F + A_{i, j} - A_{i + 1,j} + (q - j) D,
    && \forall i \in \fragmentco{0}{p}, \forall j \in \fragmentco{0}{q}\\
    &\wab{x^{(\ell)}_{p}}{y_{p + j}} &&\coloneqq F + B_{j, \ell},
    && \forall j \in \fragmentco{0}{q}\\
    &\wab{x_{p + i}}{y_{p + i + j}} &&\coloneqq F + (j + 1) D,
    && \forall i \in \fragment{1}{p}, \forall j \in \fragmentco{0}{q}\\[1.5ex]
    &\wab{\sigma}{\rho} &&\coloneqq 2F,
    && \text{otherwise};
\end{alignat*}
where we set \(A_{p,j} \coloneqq 0\) to simplify our exposition.

\begin{lemma}\label{lm:2-14-1}
    For any \(\ell \in \fragmentco0{r}\) and any \(i \in \fragmentco0{p}\), we have \[
        \ed^{\wabn}(X_{\ell}\fragmentco{i}{|X_{\ell}|-i}, Y)
        = |Y|\,F + (p - i)(q + 1)D + \min_{j}\{A_{i,j} + B_{j,\ell}\}.
    \]
    Further, the optimal alignment deletes no characters of \(X_{\ell}\).
\end{lemma}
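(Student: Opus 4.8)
The plan is to exploit the three separated scales $E \ll D \ll F$: the leading $F$-term pins down the combinatorial type of an optimal alignment, the $D$-terms then pin down its ``shape'', and only at the $E$-scale do the entries of $A$ and $B$ enter and produce the minimum. Throughout I view an alignment of $X_\ell\fragmentco{i}{|X_\ell|-i}$ onto $Y$ as a path in $\AG^{\wabn}(X_\ell\fragmentco{i}{|X_\ell|-i},Y)$ and use $|X_\ell| = 2p+1$ and $|Y| = 2p+q$. For the upper bound, I would fix $j^\ast \in \fragmentco{0}{q}$ and take the alignment $\A_{j^\ast}$ that aligns the character at (fragment) position $i'$ to $Y\position{i'+j^\ast}$ for every $i' \in \fragment{i}{2p-i}$ and inserts the remaining $q+2i-1$ characters of $Y$; this is well defined because $0 \le i \le i'+j^\ast \le (2p-i)+(q-1) < |Y|$. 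Each of its $|Y|$ edges carries weight $F$ plus a correction: along positions $i,\dots,p-1$ the $A$-corrections telescope to $A_{i,j^\ast}-A_{p,j^\ast}=A_{i,j^\ast}$; the edge out of $X_\ell\position{p}$ contributes $B_{j^\ast,\ell}$; and the $D$-corrections sum to $(p-i)(q-j^\ast)D$ over positions $i,\dots,p-1$ plus $(p-i)(j^\ast+1)D$ over positions $p+1,\dots,2p-i$, i.e.\ to $(p-i)(q+1)D$ regardless of $j^\ast$. So $\A_{j^\ast}$ has cost exactly $|Y|F + (p-i)(q+1)D + A_{i,j^\ast}+B_{j^\ast,\ell}$ and deletes nothing; minimising over $j^\ast$ yields ``$\le$'' together with a deletion-free witness.

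For the lower bound I would first fix the combinatorial type. An arbitrary alignment with $a$ deletions and $b$ insertions has $2p+1-2i-a$ diagonal edges, hence exactly $a+|Y|$ edges in total. Writing each edge weight as $F+r_e$, one checks that $r_e=0$ for indels, that $r_e \ge D-2E>0$ for the prefix-diagonal weights $F + A_{i,j}-A_{i+1,j}+(q-j)D$ (using $D>2E$) and $r_e\ge D>0$ for the suffix-diagonal weights $F+(j+1)D$, that $r_e = B_{j,\ell} \ge -E$ for the at most one edge of weight $F+B_{j,\ell}$ out of $X_\ell\position{p}$, and that $r_e=F$ for any diagonal edge of the ``otherwise'' weight $2F$. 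Hence the cost is at least $(a+|Y|)F-E$ plus an extra $F$ for each $2F$-edge used; if the alignment deletes a character or uses a $2F$-edge, this is at least $(|Y|+1)F-E$, which exceeds $|Y|F + (p-i)(q+1)D + \min_j\{A_{i,j}+B_{j,\ell}\}$ by the choice of $F$ (the $A,B$-part is at most $2E$ and $(p-i)(q+1)D$ is of lower order than $F$). So an optimal alignment makes no deletions and uses only the special diagonal weights.

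Such an alignment matches the $2p+1-2i$ fragment characters monotonically to $Y$ without deletions, so it is encoded by a non-decreasing sequence of offsets $j_i\le j_{i+1}\le\cdots\le j_{2p-i}$ with $j_{i'}\in\fragmentco{0}{q}$ (monotonicity of the path forces the non-decrease; validity of the edge types forces the range). Its cost is $|Y|F$, plus the $A$-part $\sum_{i'=i}^{p-1}(A_{i',j_{i'}}-A_{i'+1,j_{i'}})+B_{j_p,\ell}$, plus a $D$-part equal to
\[
    (p-i)(q+1)D + \Big(\,\textstyle\sum_{i''=1}^{p-i} j_{p+i''}\;-\;\sum_{i'=i}^{p-1} j_{i'}\,\Big)D .
\]
Pairing $j_{i+k}$ with $j_{p+1+k}$ for $k\in\fragmentco{0}{p-i}$, monotonicity gives $j_{i+k}\le j_{p+1+k}$, so the bracketed difference is a non-negative integer, vanishing exactly when all offsets coincide. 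If all offsets equal some $j^\ast$, the $A$-part telescopes to $A_{i,j^\ast}$ and the total cost is $|Y|F + (p-i)(q+1)D + A_{i,j^\ast}+B_{j^\ast,\ell}$, which is at least the claimed value; otherwise the $D$-part exceeds $(p-i)(q+1)D$ by at least $D$, which dominates the whole $E$-scale discrepancy (the $A$-part has absolute value at most $(2p+1)E$ and the claimed $A,B$-part at most $2E$) because $D\gg E$, so the cost again is at least the claimed value. Combined with the upper bound this gives equality, witnessed by a deletion-free alignment.

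The only non-routine step is the last one: ruling out near-optimal alignments that ``drift'' across several diagonals instead of following a single one. The key is that the auxiliary $D$-costs were engineered so that the prefix penalty $(q-j)D$ and the suffix penalty $(j+1)D$ always add up to the constant $(q+1)D$ along one diagonal, while path-monotonicity forces any ``discount'' gained on the prefix side to be paid back, at least in full, on the suffix side; since $D$ is chosen large enough to dwarf the entire $E$-scale slack available from the entries of $A$ and $B$, no such drift can undercut the single-diagonal optimum.
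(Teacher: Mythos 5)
Your proof is correct and takes essentially the same approach as the paper's: both establish the upper bound via the single-diagonal alignment $\aijl$, rule out deletions and $2F$-edges by the dominance of the $F$-scale, and show that single-diagonal alignments are optimal among the remaining ones via the same pairing of prefix and suffix offsets. Your version is slightly more explicit in writing the $D$-cost in closed form as $(p-i)(q+1)D$ plus a non-negative multiple of $D$, but the underlying argument is identical.
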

\begin{proof}
    Fix an \(i \in \fragmentco0{p}\) and an \(\ell \in \fragmentco0{r}\).
    For a \(j \in \fragmentco0{q}\), consider an
    alignment \(\aijl : X_{\ell}\fragmentco{i}{|X_{\ell}| - i} \onto Y\)
    defined by
    \begin{alignat*}{3}
        & \emptystring
        &&\onto Y\fragmentco{0}{i + j}
        &\quad& \text{of cost \((i + j)F\)},\\
        &X_{\ell}\fragmentco{i}{p}    &&\onto Y\fragmentco{i + j}{p + j}
        &\quad& \text{of cost \(\textstyle\sum_{\alpha = i}^{p - 1} (F +  A_{\alpha,j} -
        A_{\alpha + 1, j} + (q - j)D)\)},\\
        &X_{\ell}\position{p}     &&\onto Y\position{p + j}
        &\quad& \text{of cost \(F + B_{j, \ell}\)},\\
        &X_{\ell}\fragmentoc{p}{2p - i}
        &&\onto Y\fragmentoc{p + j}{2p - i + j}
        &\quad& \text{of cost \((p-i)F + (p - i)(j + 1)D \)},\\
        & \emptystring
        &&\onto Y\fragmentoo{2p - i + j}{2p + q}
        &\quad& \text{of cost \((q - j + i - 1)F\)}.
    \end{alignat*}

    \begin{claim}\label{cl:2-14-1}
        The alignment \(\aijl\) has a cost of \(|Y|\, F + {(p-i)(q + 1) D} + A_{i, j} + B_{j, \ell}\).
    \end{claim}
    \begin{claimproof}
        Adding up the costs of the different parts of \(\aijl\), we obtain
        \begin{align*}
            \ed^{\wabn}_{\aijl}(X_{\ell}\fragmentco{i}{|X_{\ell}|-i}, Y)
            &= {\color{highcol} (i + j)F}
            + \sum_{\alpha = i}^{p - 1} ({\color{highcol} F} +  A_{\alpha,j} - A_{\alpha + 1, j} +
            {\color{lowcol} (q - j)D})\\
            &\qquad
            + {\color{highcol} F} + B_{j, \ell}\\
            &\qquad + {\color{highcol} (p-i)F} + {\color{lowcol} (p - i)(j + 1)D}
            + {\color{highcol} (q - j + i - 1)F}\\
            &= {\color{highcol}(i + j + (p - i) + (p - i) + (q - j + i)) F}\\
            &\qquad + {\color{lowcol} ( (p - i)(q - j) + (p-i)(j + 1))D}\\
            &\qquad + A_{i, j} + B_{j, \ell}\\
            &= {\color{highcol}(2p + q) F} + {\color{lowcol} (p-i)(q + 1) D}
            + A_{i, j} + B_{j, \ell}
            \\
            &= {\color{highcol}|Y|\, F} + {\color{lowcol} (p-i)(q + 1) D} + A_{i, j} +
            B_{j, \ell},
        \end{align*} completing the proof of the claim.
    \end{claimproof}

    Observe that \cref{cl:2-14-1} directly implies an upper bound on the (weighted) edit
    distance of \(X_{\ell}\fragmentco{i}{|X_{\ell}| - i}\) and \(Y\); we have\[
        \ed^{\wabn}(X_{\ell}\fragmentco{i}{|X_{\ell}|-i}, Y)
        \quad \le \quad |Y|\,F + (p - i)(q + 1)D + \min_{j}\{A_{i,j} + B_{j,\ell}\}.
    \]
    In the remainder of the proof, we hence show that there are no better alignments
    between \(X_{\ell}\fragmentco{i}{|X_{\ell} - i|}\) and \(Y\).

    First, observe that by choice of \(F\) and \(D\), we may consider the cost of any
    alignment \(X\fragmentco{a}{b} \onto Y\) separately in terms of multiples of \(D\),
    \(F\), and \(1\). As a direct consequence, we see that deleting a character of
    \(X_{\ell}\) can never be optimal.
    \begin{claim}\label{cl:2-14-2}
        Any alignment \(\alm: X_{\ell}\fragmentco{a}{b} \onto Y\) that deletes at least one
        character of \(X_{\ell}\) has a cost of at least \((|Y| + 1) F + R(D, 1)\), where
        \(-F \ll R(D, 1) \ll F\).
    \end{claim}
    \begin{claimproof}
        Fix an alignment \(\alm: X_{\ell}\fragmentco{a}{b} \onto Y\) and write
        \(\hat{X}_{\ell}\) for the string obtained from \( X_{\ell}\fragmentco{a}{b} \)
        after applying all deletions of \(\alm\).

        Observe that any alignment between strings has to insert or delete characters
        according to the length difference of the involved strings.
        In particular, observe that as \(Y\) is already longer than \(X_{\ell}\), we have
        \(y_{b,a} \coloneqq |Y| - |\hat{X}_{\ell}| \ge |Y| - (b - a) > 0\),
        that is, after fixing the length of \(\hat{X}_{\ell}\), the alignment
        \(\alm\) has to insert exactly \(y_{b, a}\) characters (which costs \(y_{b, a} F\))
        and \(\alm\) has to substitute the remaining \(|\hat{X}_{\ell}|\) characters
        (which costs \(|\hat{X}_{\ell}|\, F + R(D, 1)\) for some \(-F \ll R(D,1) \ll F\)).

        In total, an alignment \(\alm\) thus costs\[
            (|X_{\ell}| - |\hat{X}_{\ell}|)F + (|Y|-|\hat{X}_{\ell}|) F
            + |\hat{X}_{\ell}|\, F + R(D, 1) = (|Y| + |X_{\ell}| - |\hat{X}_{\ell}|)F +
            R(D,1),
        \] which yields the claim whenever \(\alm\) deletes at least one character of
        \(X_{\ell}\).
    \end{claimproof}

    By \cref{cl:2-14-2}, any alignment that deletes at least one character of \(X_{\ell}\)
    costs more than any of the alignments \(\aijl\). Similarly, one can show that also any
    alignment using a substitution of cost \(2F\) costs more than any of the alignments
    \(\aijl\). Hence, in the following, we can limit our attention to alignments that
    neither delete a character of \(X_{\ell}\) nor use a substitution of cost \(2F\); we
    call such an alignment a \emph{decent} alignment. Observe that, naturally, the
    alignments \(\aijl\) are decent.

    As any decent alignment has the same cost in terms of \(F\) (namely \(|Y|\,F\)), we
    can focus on the cost of any such alignments in terms of \(D\) and \(1\) (that is,
    modulo \(F\)).

    \begin{claim}\label{cl:2-14-3}
        Any decent alignment \(\alm : X_{\ell}\fragmentco{i}{|X| - i} \onto Y\) that is
        different from \(\aijl\) for all \(j \in \fragmentco{0}{q}\), has a cost of at least
        \(|Y|F + ((p - i)(q + 1) + 1)D + R(1)\), where \(-D \ll R(1) \ll D\).
    \end{claim}
    \begin{claimproof}
        Fix a decent alignment \(\alm : X_{\ell}\fragmentco{i}{|X| - i} \onto Y\) that is
        different from \(\aijl\) for all \(j \in \fragmentco{0}{q}\).

        First, observe that in any alignment \(\aijl\), we can pair up substitutions of
        costs \((q - j)D + R(1)\) with substitutions of cost \((j + 1)D + R(1)\); in total
        there are \((p-i)\) such pairs, yielding a cost of \((p - i)(j + 1)D + R(1)\).

        Now, observe that somewhere, \(\alm\) inserts a character between two substitutions
        (otherwise it would be equal to some \(\aijl\)). In particular, \(\alm\)
        substitutes some character with cost \((q - j)D + R(1)\) and some character with cost
        at least \((j + 2)D + R(1)\), that is, one of the \((p-i)\) pairs has a cost of at
        least \((q + 2)D + R(1)\). However, all other pairs still have a cost of at least
        \((q + 1)D + R(1)\). The claim follows.
    \end{claimproof}

    Observe that by \cref{cl:2-14-3}, the alignments \(\aijl\) have the smallest cost (in
    terms of \(D\)) among the decent alignments. In total, this completes the proof.
\end{proof}

It is useful to discuss alignments \(X_{\ell}\fragmentco{a}{|X_{\ell} - b} \onto Y\) for \(b \ne
a\).

\begin{lemma}\label{lm:2-14-2}
    For any \(a, b \in \fragmentco{0}{p}\), we have
    \[
        \ed^{\wabn}(X_{\ell}\fragmentco{a}{|X_{\ell}| - b}, Y)
        \ge |Y|\,F + (p - i)(q + 1)D + |b - a|\,D + R(1),
    \] where \(-D \ll R(1) \ll D\).
\end{lemma}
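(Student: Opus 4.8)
The plan is to follow the template of the proof of \cref{lm:2-14-1}, additionally keeping track of the $D$-weight that is forced whenever the numbers $a$ and $b$ of trimmed characters differ. First I would dispose of the ``bad'' alignments: exactly as in \cref{cl:2-14-2}, any alignment $\A : X_\ell\fragmentco{a}{|X_\ell|-b} \onto Y$ that deletes at least one character of $X_\ell$ has cost at least $(|Y|+1)F + R(D,1)$, and since $D \ll F$ this already exceeds the claimed bound; the same argument as in \cref{lm:2-14-1} rules out alignments that use a substitution of cost $2F$. Hence it suffices to lower-bound the cost of \emph{decent} alignments, i.e.\ those that delete no character of $X_\ell$ and use no $2F$-substitution. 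Writing $L \coloneqq |X_\ell\fragmentco{a}{|X_\ell|-b}| = 2p+1-a-b$, a decent alignment performs exactly $L$ substitutions and $|Y|-L$ insertions, so its $F$-part equals precisely $|Y|\,F$; what remains is to bound the part that is a multiple of $D$, together with the additive $A$- and $B$-terms.

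Next I would extract the combinatorial structure of a decent alignment. Having no deletions, it substitutes \emph{every} character $x_\alpha$ with $\alpha \in \fragment{a}{2p-b}$, say $x_\alpha \aonto{\A} y_{\alpha+j_\alpha}$; avoiding $2F$-substitutions forces $j_\alpha \in \fragmentco{0}{q}$, and monotonicity of alignments forces $j_a \le j_{a+1} \le \dots \le j_{2p-b}$ (an insertion between two consecutive substituted characters can only increase $j_\alpha$). The substituted characters split into a left block $x_a,\dots,x_{p-1}$, each carrying $D$-weight $q-j_\alpha$ plus the term $A_{\alpha,j_\alpha}-A_{\alpha+1,j_\alpha}$; the middle character $x_p^{(\ell)}$, carrying $D$-weight $0$ plus $B_{j_p,\ell}$; and a right block $x_{p+1},\dots,x_{2p-b}$, each carrying $D$-weight $j_\alpha+1$. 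The left block has $p-a$ elements, the right block $p-b$ elements.

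The heart of the argument is a monotone pairing of the two blocks. Assume without loss of generality that $a \le b$ (the case $a > b$ is symmetric with the roles of the two blocks reversed). Pair the last $p-b$ characters of the left block with the $p-b$ characters of the right block in increasing order, so that in each pair $(x_\alpha,x_{\alpha'})$ one has $\alpha' > \alpha$ and hence $j_{\alpha'} \ge j_\alpha$; the $D$-weight of such a pair is $(q-j_\alpha)+(j_{\alpha'}+1) \ge q+1$. The remaining $b-a$ characters $x_a,\dots,x_{b-1}$ of the left block each contribute $q-j_\alpha \ge 1$, because $j_\alpha \le q-1$. Summing, every decent alignment incurs $D$-weight at least $(p-b)(q+1)+(b-a) = (p-\max(a,b))(q+1)+|b-a|$, which is exactly the quantity appearing in the statement (with the index $i$ there being $\max(a,b)$; this specializes to \cref{lm:2-14-1} when $a=b=i$). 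Finally, the leftover telescoping terms and the single $B$-term have total absolute value at most $(2p+1)E$, which is $\ll D$ by the choice $E \ll D$, so they are absorbed into the error term $R(1)$, giving the claimed inequality.

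The step I expect to require the most care is the bookkeeping around the block decomposition: one must check that a decent alignment genuinely decomposes as a block of substitutions on $x_a,\dots,x_{p-1}$, the substitution of $x_p^{(\ell)}$, and a block of substitutions on $x_{p+1},\dots,x_{2p-b}$ with the correct admissible ranges for each $j_\alpha$ (in particular that the valid-substitution constraint for each of the three character types collapses to $j_\alpha\in\fragmentco{0}{q}$), and verify the pairing in the extreme cases ($a$ or $b$ equal to $0$, and blocks of very different sizes). These are routine once the monotonicity of $(j_\alpha)_\alpha$ is in hand, but they need to be written out to make the lower bound airtight.
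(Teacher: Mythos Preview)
Your proposal is correct and follows essentially the same approach as the paper: restrict to decent alignments (ruling out deletions and $2F$-substitutions as in \cref{cl:2-14-2}), then bound the $D$-weight by splitting the substituted characters into the symmetric core $X_\ell\fragmentco{b}{|X_\ell|-b}$ (contributing at least $(p-b)(q+1)$ via the pairing argument of \cref{cl:2-14-3}) and the $|b-a|$ excess characters (each contributing at least $1$). Your version is in fact more carefully written than the paper's rather terse sketch---in particular, you correctly make the monotone-pairing argument explicit rather than merely asserting that ``\cref{cl:2-14-3} applies for the fragment $X_\ell\fragmentco{b}{|X_\ell|-b}$'', and you also identify the undefined $i$ in the statement as $\max(a,b)$, which the paper leaves implicit.
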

\begin{proof}
    Fix \(a, b \in \fragmentco{0}{p}\). Again, by \cref{cl:2-14-2}, we may restrict our
    attention to decent alignments. Further, suppose that \(a < b\), the case \(b > a\)
    can be dealt with similarly, and the case \(a = b\) is already covered by
    \cref{lm:2-14-1}.

    Now, observe that \cref{cl:2-14-3} applies for the fragment
    \(X_{\ell}\fragmentco{b}{|X_{\ell}| - b}\); the substitution of characters
    from \(X_{\ell}\fragmentco{a}{b} \onto Y\fragmentco{a'}{b'}\) is cheapest if
    \(a' = a + q - 1\) and \(b' = b + q - 1\), which has a cost of \((b-a)D\); yielding
    the claim.
\end{proof}

\subsection{Lower Bounds for Batched Weighted Edit Distance}

In a next step, we wish to extend the construction from \cref{ssect:lb1} to compute the minimum weight
triangle of a complete tripartite graph \(G = (P \cup Q \cup R, A \cup B \cup C)\) (or equivalently the minimum entry of a matrix obtained as
the min-plus-product of three matrices). To this end, we intend to first extend the
construction from \cref{ssect:lb1} to compute the minimum weight triangle \emph{that uses
a specific vertex \(w \in R\)}  (or equivalently, we fix a column in \(B\) and a row in
\(C\)). This yields a first string for the batch that we wish to construct.
Then, we iterate over all vertices in \(R\); updating the last string of the batch
accordingly and add the result as a new string to the batch.

To implement the aforementioned plan, we also need to take care of the technical detail
that the construction from \cref{ssect:lb1} requires us to select in advance a vertex \(u
\in P\)---we achieve this by adding a gadget that allows only alignments corresponding to
a specific vertex in \(u\). Then, we again construct a batch of strings as before
(where we now also  iterate over all possible
vertices of \(P\)). In total, this approach results in \(m = p \cdot r\) strings in the
batch.
Choosing \(p, q = \Theta(n)\), this yields the following lower bound.

\begin{restatable}{thought}{dyn-lb-simple}\label{thm:dyn-lb-simple}
    For any \(\smallconst > 0\), and any integers \(0 < r \le n\),
    there are strings \(X_0, \dots, X_{nr}, Y \in \Sigma^{*}\)
    with \(|X_{\star}|, |Y| = \Theta(n)\), and
    a weight function
    \(w : (\Sigma \cup \emptystring)^2 \to \mathbb{Z}_{\ge 0}\),
    on which Batched Weighted Edit Distance takes time
    \(\Omega((n^2 r)^{1-\smallconst})\),
    unless the \apsp hypothesis fails.
\end{restatable}

Observe that \(rn\) strings, each of length \(\Theta(n)\), already yield an input size
of \(\Theta(n^2 r)\), rendering the lower bound of \cref{thm:dyn-lb-simple} meaningless.
Hence, we still need to modify our construction in a substantial way, which we
describe next.

Recall that in the construction in \cref{ssect:lb1}, we encodes the whole matrix of the
edge weights \(A\). As it turns out, we can instead split \(A\) into \(\tau\) parts
\(A_{1},\dots, A_{\tau}\) and use
independent copies of the construction of \cref{ssect:lb1} for each \(A_{i}\) (all using
separate copies of the same column of \(B\)). After slightly adapting our previous idea,
we see that we can then compute the minimum weight triangle of a tripartite graph by
using a batch of \(rp/ \tau\) strings (where consecutive strings now have a Hamming
distance of \(\Oh(\tau)\) instead of just \(\Oh(1)\)).
Using additional (minor) modifications and observations, we then obtain
\cref{thm:dyn-lb-advanced} (which would match \cref{thm:dyn-lb-simple} at \(\scb = 2\) and
again becomes meaningless for \(\scb \ge 2- 2 \smallconst\)).

\thmbatched*

Now, for a formal proof of \cref{thm:dyn-lb-advanced},
we start by formalizing the construction that
allows us to fix a row of the matrix \(C\).
To that end, for
a matrix \(A \in \fragment{-E}{E}^{p \times q}\),
a matrix \(B \in \fragment{-E}{E}^{q \times r}\),
a matrix \(C \in \fragment{-E}{E}^{r \times p}\), and
an integer \(\tau \in \fragment{1}{p}\)
we set \(\ptau \coloneqq  \lceil p / \tau \rceil\)
and we write \(\sabct\) for an alphabet of distinct characters, where
\begin{align*}
    \sabct &= \sabctx \cup \sabcty\\
    \sabctx &\coloneqq \saibx{0} \cup \dots \cup \saibx{\tau - 1}\\
           &\quad\cup
    \{ \S_{0}^{(\ell,0)},\dots, \S_{0}^{(\ell,\ptau - 1)}, \quad \$_{1},\dots,\$_{\tau-1},
    \quad \S_{\tau}^{(0)}, \dots, \S_{\tau}^{(\ptau - 1)} \mid \ell \in
    \fragmentco{0}{r}\} \\
    \sabcty &\coloneqq \sy \cup \{ y_{-(\tau - 1)(2\ptau + 2) - \ptau}, \dots, y_{-1},
    \quad y_{0}, \dots, y_{(\tau - 1)(2\ptau + 2) + \ptau - 1}  \},
\end{align*}
where \(\saibx{\star}\) denote disjoint copies of the alphabets \(\sabx\) from
\cref{ssect:lb1}; in particular, \(\saibx{i}\) is defined on the rows
\(A\fragmentco{i\cdot \ptau}{(i+1) \cdot \ptau}\) of \(A\) and the matrix
\(B\).\footnote{For notational convenience, we define \(\saibx{\tau-1}\) also on \(\ptau\)
    rows, where we set the potential extra rows to be copies of the last
row of \(A\).}
Further, for each \(\ell \in \fragmentco0r\) and \(i \in \fragmentco{0}{\ptau}\)
we set \[
    X_{\ell, i} \coloneqq \S_{0}^{(\ell, i)} X^{(0)}_{\ell} \$_{1}
    X^{(1)}_{\ell} \$_{2} \cdots \$_{\tau - 2}
    X^{(\tau-2)}_{\ell} \$_{\tau-1} X^{(\tau-1)}_{\ell} \S_{\tau}^{(i)},
\] where the strings \(X^{(\star)}_{\ell}\) are defined as in \cref{ssect:lb1} (using the
corresponding rows of \(A\)).
To simplify some of the proofs later on, observe that we have for every \(\alpha, i \in
\fragment{0}{\ptau}\) and \(\ell \in \fragmentco{0}{r}\)
\begin{align*}
    |X^{(\alpha)}_{\ell}| &= 2\ptau + 1,\nonumber\\
    X^{(\alpha)}_{\ell} &= X_{\ell, i}\fragmentco{1 + \alpha (2 \ptau + 2)}{(\alpha + 1) (2
    \ptau + 2)},
    \\
        |X_{\ell, i}| &= 1 + \tau (2 \ptau + 2), \nonumber\\
        \$_{\alpha} &= X_{\ell, i}\position{\alpha(2\ptau + 2)}.
\end{align*}
Further, if \(\ell\) and \(i\) are understood from the context,
we may refer to the characters \(\S_{0}^{(\star)}\) and \(\S_{\tau}^{(\star)}\)
as \(\$_0\) and \(\$_{\tau}\), respectively.

Observe that we can obtain any \(X_{\ell, i}\) from any other \(X_{\ell', i}\) by
substituting \((\tau + 1)\) characters in total: a
single character in each \(X^{(\star)}_{\ell'}\) (namely \(x_{p}^{(\ell',\star)} \onto
x_{p}^{(\ell,\star)}\)) and \(\S_{0}^{\ell',i} \onto \S_{0}^{\ell, i}\).
Further,
observe that we can obtain any \(X_{\ell, i}\) from any other \(X_{\ell, i'}\) by
substituting exactly 2 characters (namely
\(\S_{0}^{(\ell, i')} \onto \S_{0}^{(\ell, i)}\)
and
\(\S_{\tau}^{(i')} \onto \S_{\tau}^{(i)}\)).

Next, we define a string
\[Y \coloneqq  y_{-(\tau - 1)(2\ptau + 2) - \ptau} \dots y_{-1} Y^{(\bullet)}
y_{0} \dots y_{(\tau - 1)(2\ptau + 2) + \ptau - 1},\]
where \(Y^{(\bullet)} \coloneqq Y^{(0)} =
\dots = Y^{(\ptau)}\) is the string from the construction in \cref{ssect:lb1}.
It is convenient to set \(Y\fragmentco{0}{|Y^{(\bullet)}|} \coloneqq Y^{(\bullet)}\), that
is, we index everything in \(Y\)  before \(Y^{(\bullet)}\) using negative indices.

\def\Xli{\ensuremath X_{\ell,i}}
\def\Yb{\ensuremath Y^{(\bullet)}}

Further, it turns out to be useful to name certain fragments of the strings \(\Xli\) and
\(Y\).
For an \(\ell \in \fragmentco{0}{r}\), an \(i \in \fragmentco0{\ptau}\), and an
\(\alpha \in \fragmentco{0}{\tau}\),
we write \(\Xli = \S^{(\ell,i)}_{0} \hat{X}^{(\alpha)}_{\ell, i}
\bar{X}^{(\alpha)}_{\ell, i} \check{X}^{(\alpha)}_{\ell,i} \S^{(i)}_{\tau}\) for a
decomposition of \(X_{\ell, i}\), where
\begin{align*}
    \hat{X}^{(\alpha)}_{\ell, i} &\coloneqq
    \Xli\fragmentoc{0}{\alpha(2\ptau + 2)}\;
    \Xli\fragmentoc{\alpha(2\ptau + 2)}{\alpha(2\ptau + 2) + i}\\
                                 &\,= \Xli\fragmentoo{0}{\alpha(2\ptau +
                                 2)}\;\$_{\alpha}\; X^{(\alpha)}_{\ell}\fragmentco{0}{i},\\
    \bar{X}^{(\alpha)}_{\ell, i} &\coloneqq
    X^{(\alpha)}_{\ell}\fragmentco{i}{|X_{\ell}^{(\alpha)}| - i},\\
    \check{X}^{(\alpha)}_{\ell, i} &\coloneqq
    \Xli\fragmentco{(\alpha+1)(2\ptau + 2) - i}{(\alpha + 1)(2\ptau + 2)}\;
    \Xli\fragmentco{(\alpha+1)(2\ptau + 2)}{\tau(2\ptau + 2)}\\
                                   &\,=
                                   X^{(\alpha)}_{\ell}\fragmentco{|X_{\ell}^{(\alpha)}| -
                                   i}{|X^{(\alpha)}_{\ell}|}\;\$_{\alpha+1}\;
                                   \Xli\fragmentoo{(\alpha+1)(2\ptau + 2)}{\tau(2\ptau + 2)}.
\end{align*}
Similarly, we decompose
\(Y =
\hat{Y}'_{\alpha,i} |_0^{(\alpha, i)}  \hat{Y}_{\alpha,i} \bar{Y}_{\alpha,i}
\check{Y}_{\alpha,i} |_{\tau}^{(\alpha, i)} \check{Y}'_{\alpha,i}
\), where
\begin{align*}
    \hat{Y}'_{\alpha,i} &\coloneqq Y\fragmentco{-(\tau-1)(2\ptau + 2) -
    \ptau}{-(\alpha(2\ptau + 2) + i + 1)},\\
        |_0^{(\alpha, i)}  &\coloneqq Y\position{-(\alpha(2\ptau + 2) + i + 1)}
        = y_{-(\alpha(2\ptau + 2) + i + 1)},\\
        \hat{Y}_{\alpha,i} &\coloneqq Y\fragmentoc{-(\alpha(2\ptau + 2) + i + 1)}{-1}, \\
        \bar{Y}_{\alpha,i} &\coloneqq Y\fragmentco{0}{|\Yb|} = \Yb,\\
        \check{Y}_{\alpha,i} &\coloneqq \fragmentco{|\Yb|}{|\Yb| +
        (\tau-\alpha-1)(2\ptau + 2) + i},\\
            |_{\tau}^{(\alpha, i)} &\coloneqq Y\position{|\Yb| + (\tau-\alpha-1)(2\ptau + 2) + i}
            = y_{(\tau-\alpha-1)(2\ptau + 2) + i},\\
            \check{Y}'_{\alpha,i}&\coloneqq Y\fragmentoo{|\Yb| + (\tau-\alpha-1)(2\ptau + 2) +
            i}
            {|\Yb| + (\tau-1)(2\ptau + 2) + \ptau}.
\end{align*}
Observe that for any  \(\ell \in \fragmentco{0}{r}\), an \(i \in \fragmentco0{\ptau}\), and an
\(\alpha \in \fragmentco{0}{\tau}\), we have
\begin{align*}
    |\hat{X}^{(\alpha)}_{\ell, i}| &= |\hat{Y}_{\alpha,i}|
    = \alpha(2\ptau + 2) + i,\\
    |\check{X}^{(\alpha)}_{\ell, i}| &= |\check{Y}_{\alpha,i}|
    = (\tau - \alpha - 1)(2\ptau + 2) + i;
\end{align*}
and observe that \(\hat{Y}_{\alpha,i} = \emptystring\) for \(\alpha = \tau - 1\) and
\(i = \ptau - 1\) and \(\check{Y}_{\alpha,i} = \emptystring\) for \(\alpha = 0\) and
\(i = \ptau - 1\).

Next, for positive \(E \ll D \ll I \ll F\)
(where we use \(a \ll b\) to denote \(a (|X| + |Y|) < b\)),
we define the following weight function \(\wabctn: (\sabct \cup \{\emptystring\})^2 \to \mathbb{Z}_{\ge 0}\):
\begin{alignat*}{3}
    &\wabct{\sigma}{\emptystring} &&\coloneqq F, &\qquad& \forall \sigma\\[1.5ex]
    &\wabct{\emptystring}{\sigma} &&\coloneqq F, && \forall \sigma \\[2ex]
    &\wabct{x}{y} &&\coloneqq \wabn^{(\ell)}(x \to y)
    && \forall x \in \saibx{\ell}\quad \forall y \in \sy\\[1.5ex]
    &\wabct{x}{y} &&\coloneqq F
    \qquad\qquad\qquad\forall y \notin \Yb
    &&
    \forall x \notin \{ \S_{0}^{(\ell, i)}, \S_{
        \tau}^{(i)} \mid i \in \fragmentco{0}{\ptau}, \ell \in
    \fragmentco0r\}\\
    &\wabct{\S_{ 0}^{(\ell, i)}}{|_{0}^{(\alpha, i)}} &&\coloneqq
    F + C_{\ell, i +  \alpha \ptau} + (\tau - \alpha) \, I + i(q + 1)D\hspace{-20em}\\
    & && && \forall i \in \fragmentco{0}{\ptau} \quad \forall \alpha \in \fragmentco{0}{\tau}\\
    &\wabct{\S_{ \tau}^{(i)}}{|_{\tau}^{(\alpha, i)}} &&\coloneqq F + (\alpha + 1)\, I
    && \forall i \in \fragmentco{0}{\ptau} \quad \forall \alpha \in
    \fragmentco{0}{\tau}\\[1.5ex]
    &\wabct{\sigma}{\rho} &&\coloneqq 2F,
    && \text{otherwise}.
\end{alignat*}

To gain some intuition for \(\wabctn\), let us first focus on the case \(\tau = 1\).
Intuitively, a substitution \(\S_{0}^{\ell, i} \onto y\) has either
a cost of \(F + C_{\ell, i}\) plus some auxiliary terms (if \(y = |_{0}^{(0,i)}\))
or a cost of \(2F\) (otherwise).
In other words, for a fixed \(\ell\), each of the different
characters \(\S_{0}^{\ell, \star}\) can be substituted to exactly one of the characters
\(y\)
thereby ``selecting'' a specific diagonal to follow
(recall that a cost of \(2F\) is virtually the same as \(\infty\)).
Similarly, the costs of the substitutions \(\S_{\tau}^{\star} \onto y\)
select a single diagonal where a substitution is possible.

In total (and by assigning appropriate auxiliary costs), we force the optimal alignment of
\(\Xli \onto Y\) to have to following shape: the alignment first has to insert the
prefix \(\hat{Y}'_{0,i}\); then the alignment
substitutes \(\S_{0}^{(\ell,i)}\hat{X}^{(0)}_{\ell,i} \onto |_{0}^{(0,i)} \hat{Y}_{0,i}\).
Next, we follow the alignment of
\cref{lm:2-14-1}, which aligns
\[
    \bar{X}_{\ell,i}^{(0)} =
    X^{(0)}_{\ell}\fragmentco{i}{|X^{(0)}_{\ell}|-i} \onto
    \Yb.\]
Next, we substitute the remaining suffix of \(X_{\ell,i}\), namely
\(
    \check{X}_{\ell,i}^{(0)} \S_{\tau}^{(i)} \onto
    \check{Y}_{0,i} |_{\tau}^{(0,i)}\).
Finally, we insert the remaining characters from \(Y\).

For general \(\tau\), multiple substitutions of \(\S_{\star}^{\star}\) are possible for
a given \(i\); namely, exactly one for each of the \(\tau\) strings
\(X^{(\star)}_{\ell}\). For technical reasons, we need an additional buffer character
between the strings \(X^{(\star)}_{\ell}\), which causes slight index shifts for the
characters in \(Y\) that we pick as the target of a substitution.

Formally, we obtain the following result.

\begin{lemma}\label{lm:3-21-1}
    For any \(\ell \in \fragmentco{0}{r}\), any \(i \in \fragmentco0{\ptau}\) we have \[
        \ed^{\wabctn}(X_{\ell, i}, Y)
        =
        |Y|\,F + (\tau + 1) I + \ptau(q + 1)D + \min_{\alpha \in
            \fragmentco{0}{\tau}, j}\{A_{i + \alpha\ptau,j} + B_{j,\ell} +
        C_{\ell, i + \alpha\ptau}\}.
    \]
    Further, the optimal alignment deletes no characters of \(X_{\ell, i}\).
\end{lemma}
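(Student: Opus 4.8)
The plan is to follow the proof of \cref{lm:2-14-1} and lift it from a single copy of the gadget to the $\tau$-fold construction; along the way one builds the alignments $\aijla$.

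\emph{Upper bound.} For every block index $\alpha\in\fragmentco{0}{\tau}$ and every $j\in\fragmentco{0}{q}$ I would exhibit an alignment $\aijla\colon X_{\ell,i}\onto Y$ of cost exactly $|Y|\,F+(\tau+1)I+\ptau(q+1)D+A_{i+\alpha\ptau,j}+B_{j,\ell}+C_{\ell,i+\alpha\ptau}$. Using the decompositions $X_{\ell,i}=\S_0^{(\ell,i)}\hat{X}^{(\alpha)}_{\ell,i}\bar{X}^{(\alpha)}_{\ell,i}\check{X}^{(\alpha)}_{\ell,i}\S_\tau^{(i)}$ and $Y=\hat{Y}'_{\alpha,i}\,|_0^{(\alpha,i)}\,\hat{Y}_{\alpha,i}\,\bar{Y}_{\alpha,i}\,\check{Y}_{\alpha,i}\,|_\tau^{(\alpha,i)}\,\check{Y}'_{\alpha,i}$, the alignment inserts $\hat{Y}'_{\alpha,i}$; substitutes $\S_0^{(\ell,i)}\onto|_0^{(\alpha,i)}$; substitutes $\hat{X}^{(\alpha)}_{\ell,i}$ onto $\hat{Y}_{\alpha,i}$ character by character (equal lengths $\alpha(2\ptau+2)+i$, cost $F$ each); runs the alignment $\aijl$ of \cref{lm:2-14-1} for the submatrix $A\fragmentco{\alpha\ptau}{(\alpha+1)\ptau}$ on $\bar{X}^{(\alpha)}_{\ell,i}=X^{(\alpha)}_\ell\fragmentco{i}{|X^{(\alpha)}_\ell|-i}\onto\Yb$; substitutes $\check{X}^{(\alpha)}_{\ell,i}$ onto $\check{Y}_{\alpha,i}$ character by character; substitutes $\S_\tau^{(i)}\onto|_\tau^{(\alpha,i)}$; and inserts $\check{Y}'_{\alpha,i}$. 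Every character of $Y$ is inserted or is a substitution target and every character of $X_{\ell,i}$ is substituted, so (using \cref{cl:2-14-1} for the middle part) the $F$-component equals exactly $|Y|\,F$; the two selector substitutions contribute $(\tau-\alpha)I$ and $(\alpha+1)I$, which telescope to $(\tau+1)I$; the $D$-contributions $i(q+1)D$ (from $\S_0$) and $(\ptau-i)(q+1)D$ (from \cref{lm:2-14-1}) telescope to $\ptau(q+1)D$; and the residual terms are exactly $A_{i+\alpha\ptau,j}+B_{j,\ell}$ and $C_{\ell,i+\alpha\ptau}$. Minimising over $\alpha$ and $j$ and observing that no deletion is used yields the ``$\le$'' direction.

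\emph{Lower bound.} I would first restrict to \emph{decent} alignments --- those deleting no character of $X_{\ell,i}$ and using no substitution of cost $2F$. As in \cref{cl:2-14-2}, deleting a character or using a cost-$2F$ substitution forces the $F$-component to at least $(|Y|+1)F$, whereas $(\tau+1)I+\ptau(q+1)D+3E\ll F$ by $E\ll D\ll I\ll F$; hence every optimal alignment is decent. A decent alignment substitutes all $|X_{\ell,i}|$ characters of $X_{\ell,i}$ and inserts the remaining $|Y|-|X_{\ell,i}|$ characters of $Y$, so its $F$-component is exactly $|Y|\,F$, and it remains to bound the $I$-, then $D$-, then $E$-component. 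In a decent alignment the boundary characters $\S_0^{(\ell,i)}$ and $\S_\tau^{(i)}$ are substituted, and the only targets not costing $2F$ are $|_0^{(\alpha_0,i)}$ and $|_\tau^{(\alpha_\tau,i)}$ for some $\alpha_0,\alpha_\tau\in\fragmentco{0}{\tau}$ (the second index is forced to be $i$); every other character of $X_{\ell,i}$ contributes a pure $F$ (the intermediate separators, and block characters sent outside $\Yb$) unless it enters $\Yb$ via a copy of $\wabn^{(\ell)}$. Hence the entire $I$-component equals $(\tau-\alpha_0)I+(\alpha_\tau+1)I$ and $\Yb$ receives characters of at most one block. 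Once one shows that a decent alignment cannot undercut the claimed value unless $\alpha_0=\alpha_\tau$, the block feeding $\Yb$ equals $\alpha_0$, and exactly $i$ leading and $i$ trailing characters of that block are shaved off to the padding, the restriction to block $\alpha_0$ against $\Yb$ is an alignment of $X^{(\alpha_0)}_\ell\fragmentco{a}{|X^{(\alpha_0)}_\ell|-b}\onto\Yb$ with $a=b=i$, which \cref{lm:2-14-1,lm:2-14-2} lower-bound by $|\Yb|\,F+(\ptau-i)(q+1)D+\min_j\{A_{i+\alpha_0\ptau,j}+B_{j,\ell}\}$. Adding the $C_{\ell,i+\alpha_0\ptau}$ from $\S_0$ together with the totals $(\tau+1)I$ and $\ptau(q+1)D$ matches the upper bound.

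\emph{Main obstacle.} The delicate point is exactly this consistency step: ruling out ``garbage'' decent alignments in which $\alpha_0\neq\alpha_\tau$, or the block fed into $\Yb$ differs from $\alpha_0$, or the shaved prefix/suffix lengths disagree with the $i$ recorded by the selectors. Just as the auxiliary terms $(q-j)D,(j+1)D$ in \cref{lm:2-14-1} force an alignment to commit to one diagonal, here the auxiliary terms $(\tau-\alpha)I,(\alpha+1)I$ must force commitment to one block; carrying this through requires careful interval arithmetic on the positions of the characters $|_0^{(\cdot,i)}$, $|_\tau^{(\cdot,i)}$ and of $\Yb$ inside $Y$, combined with monotonicity of alignments and the fact that the $\tau-1$ blocks not placed on $\Yb$, as well as all intermediate separators, can only ever incur pure $F$-cost.
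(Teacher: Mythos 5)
Your upper-bound construction coincides with the paper's: you build the same alignments $\aijla$ using the same decompositions, and your cost accounting is correct. The skeleton of your lower bound --- restrict to decent alignments, note that the only substitution targets of $\S_0^{(\ell,i)}$ and $\S_\tau^{(i)}$ that avoid cost $2F$ are the characters $|_0^{(\alpha,i)}$ and $|_\tau^{(\beta,i)}$, then reduce the middle of the alignment to \cref{lm:2-14-1,lm:2-14-2} --- is also the paper's.

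However, the entire content of the lower bound is what you explicitly defer in your ``main obstacle'' paragraph, so the proof does not close as written. The paper handles it in two dedicated claims. The first uses positional arithmetic plus the observation that the intermediate separators $\$_1,\ldots,\$_{\tau-1}$ cannot be substituted into $\Yb$ without cost $2F$ to show that, once $\S_0^{(\ell,i)}\onto|_0^{(\alpha,i)}$, the only reachable targets for $\S_\tau^{(i)}$ in a decent alignment are $|_\tau^{(\beta,i)}$ with $\beta\le\alpha$, and it then rules out $\beta<\alpha$. The second shows that for a becoming alignment ($\beta=\alpha$) all $q+2i-1$ surplus characters of $Y'=|_0^{(\alpha,i)}\hat Y_{\alpha,i}\Yb\check Y_{\alpha,i}|_\tau^{(\alpha,i)}$ either land inside $\Yb$ (a ``good'' alignment, which reduces to \cref{lm:2-14-1}), or else the shaved prefix/suffix lengths $a,b$ of the induced block-$\alpha$ alignment against $\Yb$ satisfy $a\neq i$ or $b\neq i$, in which case \cref{lm:2-14-2} adds an extra $+D$. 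You name these goals but give no argument for them; since they carry all the weight of the lower bound, this is a genuine gap. Note also a subtlety you appear to miss: with the weights as defined, the selector $I$-cost is $(\tau-\alpha_0)I+(\alpha_\tau+1)I$, which for $\alpha_\tau<\alpha_0$ is \emph{smaller} than $(\tau+1)I$; so an $I$-budget comparison alone cannot dismiss $\alpha_\tau\neq\alpha_0$, and the dismissal must instead come from analysing the induced alignment against $\Yb$.
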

\begin{proof}

    Next, for an \(\ell \in \fragmentco{0}{r}\), an \(i \in \fragmentco0{\ptau}\), and an
    \(\alpha \in \fragmentco{0}{\tau}\),
    write \(\aial : \bar{X}_{\ell,i}^{(\alpha)} \onto \Yb\)
    for the optimal alignment from \cref{lm:2-14-1}. Recall that by
    \cref{lm:2-14-1}, we have
    \[
        \ed^{\wabctn}_{\aial}(\hat{X}_{\ell,i}^{(\alpha)}, \Yb)
        = |\Yb|\,F + (\ptau - i)(q + 1)D + \min_{j}\{A_{i + \alpha\ptau,j} + B_{j,\ell}\}.
    \]

    We extend \(\aijl\) into an alignment \(\aijla : X_{\ell, i} \onto Y\) by aligning
    \begin{alignat*}{3}
        &\emptystring &&\onto \hat{Y}'_{\alpha,i}
                &&\qquad \text{of cost \(|\hat{Y}'_{\alpha,i}|\, F\)},\\
        &X_{\ell, i}\position{0} = \S_{ 0}^{\ell, i }
        &&\onto |_{0}^{(\alpha, i)}
        &&\qquad \text{of cost \( F + C_{\ell, i + \alpha \ptau} + (\tau - \alpha) \cdot I
            + i(q + 1)\,D
        \)},\\
        &\hat{X}^{(\alpha)}_{\ell, i}
        &&\onto \hat{Y}_{\alpha,i}
        &&\qquad \text{of cost \(|\hat{Y}_{\alpha,i}|\, F\)},\\
        &
        \bar{X}_{\ell,i}^{(\alpha)}
        &&\onto Y^{(\bullet)}
        &&\qquad \text{of cost \(|\Yb|\,F + (\ptau - i)(q + 1)D + \min_{j}\{A_{i +
        \alpha\ptau,j} + B_{j,\ell}\} \)},\\
        &\check{X}_{\ell, i}^{(\alpha)}
        &&\onto \check{Y}_{\alpha,i}
        &&\qquad \text{of cost \(|\check{Y}_{\alpha,i}|\, F\)},\\
        &X_{\ell, i}\position{|X_{\ell, i}| - 1} = \S_{ \tau}^{i}
        &&\onto |_{\tau}^{(\alpha, i)}
        &&\qquad \text{of cost \( F + (\alpha + 1) \cdot I \)},\\
        &\emptystring &&\onto \check{Y}'_{\alpha,i}
        &&\qquad \text{of cost \(|\check{Y}'_{\alpha,i}|\, F\)}.
    \end{alignat*}

    Next, we argue that (some of) the alignments \(\aijla\) attain the claimed
    optimum weighted edit distance between \(\Xli\) and \(Y\) and that any optimum
    alignment indeed is one of the alignments \(\aijla\).
    To that end, fix an \(\ell \in \fragmentco{0}{r}\), an \(i \in \fragmentco0{\ptau}\), and an
    \(\alpha \in \fragmentco{0}{\tau}\).

    \begin{claim}\label{cl:3-14-1}
        The alignment \(\aijla\) has a cost of \[
            |Y|\,F + (\tau + 1) I + \ptau(q + 1)D + \min_{
            j}\{A_{i + \alpha\ptau,j} + B_{j,\ell}\} +
            C_{\ell, i + \alpha\ptau}.
        \]
        Further, the alignment \(\aijla\) deletes no characters of \(X_{\ell, i}\).
    \end{claim}
    \begin{claimproof}
        By choice of \(E\), \(D\), \(I\), and \(F\), we can bound the cost of \(\aijla\)
        separately in terms of each of \(F\) and (together) \(E\), \(D\), \(I\), and \(1\).

        For the cost in terms of \(F\), we observe that the alignment \(\aijla\) never
        uses a deletion (which is easily verified from the definition and the fact that
        the optimal alignment from \cref{lm:2-14-1} does not delete characters from
        \(X_{\ell, i}\)),
        and that all of its insertions and deletions cost \(F\) (plus
        some lower order terms). As \(Y\) is longer than \(X_{\ell, i}\), this gives a bound of
        \(|Y| F\). Indeed, this is verified by adding up the corresponding terms of the
        costs of the parts of the alignment \(\aijla\).

        For the other terms of the cost, summation yields
        \begin{align*}
        &C_{\ell, i + \alpha \ptau} + {\color{highcol}(\tau - \alpha) \cdot I} +
        {\color{lowcol}
        i(q + 1)D}
        + {\color{lowcol} (\ptau - i)(q + 1)D} +
        \min_{j}\{A_{i + \alpha\ptau,j} + B_{j,\ell}\}
        + {\color{highcol} (\alpha + 1) \cdot I}\\
        &= (\tau + 1)\,I + \ptau(q + 1)\,D +
        \min_{j}\{A_{i + \alpha\ptau,j} + B_{j,\ell}\} + C_{\ell, i + \alpha \ptau}.
        \end{align*}
        Taken together, this yields the claim.
    \end{claimproof}

    Observe that \cref{cl:3-14-1} immediately yields an upper bound on
    \(\ed^{\wabctn}(X_{\ell, i}, Y)\) of the desired value. For the lower bound, we argue
    that any other alignments have a higher cost already in terms of one of \(D\), \(I\),
    or \(F\).

    As in the proof of \cref{lm:2-14-1}, we first limit our attention to decent
    alignments, that is, alignments \(X_{\ell, i} \onto Y\) that neither delete characters from
    \(X\), nor use a substitution of cost \(2F\).
    Similarly to \cref{cl:2-14-2}, it can be easily verified that
    \begin{itemize}
        \item any decent alignment has a cost of \(|Y|F\) (plus some lower-order terms);
            and
        \item any non-decent alignment has a cost of at least \((|Y| + 1)F\).
    \end{itemize}
    As all decent alignments do not differ in their cost in terms of \(F\), we henceforth
    focus on the costs of alignments modulo \(F\).

    \begin{claim}\label{cl:3-15-1}
        Any decent alignment \(\alm: \Xli \onto Y\) has all of the following properties.
        \begin{enumerate}[\sf (\ref{cl:3-15-1}-1)]
            \item \(\alm\) substitutes
                \(\S_{0}^{(\ell, i)} \onto |_{0}^{(\alpha, i)}\)
                for some \(\alpha \in \fragmentco{0}{\tau}\).
            \item \(\alm\) substitutes
                \(\S_{\tau}^{(i)} \onto |_{\tau}^{(\beta, i)}\)
                for some \(0 \le \beta \le \alpha\).
            \item If \(\beta < \alpha\), then \(\alm\) has a cost (modulo \(F\)) of at
                least \((\tau + 2)\, I\).
            \item If \(\beta = \alpha\), we say that \(\alm\) is a \emph{becoming} alignment.
                Any becoming alignment has a cost of \((\tau + 1)\,I\) (plus some
                lower-order terms).
        \end{enumerate}
    \end{claim}
    \begin{claimproof}
        The first claim is immediate from the construction: all other substitutions for
        \(\S_0^{(\ell, i)}\) have a cost~of~\(2F\).

        For the second claim, suppose that \(\alm\) substitutes
        \(\S_{0}^{(\ell, i)} \onto |_{0}^{(\alpha,i)}\)
        for some \(\alpha \in \fragmentco{0}{\tau}\).
        We wish to compute the first character \(|_{\tau}^{(\beta,i)}\) that may be a
        target of a substitution in \(\alm\).

        To that end, first recall that as \(\alm\) is decent, \(\alm\) may not delete
        characters of \(\Xli\).
        Hence, suppose that \(\alm\) continues to substitute all characters from
        \(X_{\ell, i}\) until (including) \(\S_{\tau}^{(i)}\).

        Now, first consider the case \(q = 1\). In this case, we have \(|\Yb| = 2\ptau + 1
        = |X_{\ell}^{(\alpha)}|\), so \(\alm\) substitutes
        \(\check{X}_{\ell,i}^{(\alpha)}
        \onto \check{Y}^{(\alpha, i)})\) and thus also \(\S_{\tau}^{(i)} \onto
        |_{\tau}^{(\alpha,i)}\); yielding the claim.

        Next, consider the case \(q > 1\). Now,
        \(\alm\) substitutes \(\$_{\alpha + 1}\) (or \(\S_{\tau}^{(i)}\), if \(\alpha =
        \tau-1\)) for a character in
        \(Y^{(\bullet)}\)---which costs \(2F\) and is hence not possible:
        we have
        \begin{align*}
                |\S_{0}^{(\ell, i)}\cdots \$_{\alpha + 1}|
            &=
            |\Xli\fragment{0}{(\alpha + 1)(2\ptau + 2)}| =
            1 + (\alpha + 1)(2\ptau + 2)\nonumber\\
            &> 1 + \alpha(2\ptau + 2) + i = |\,|_{0}^{(\alpha,i)} \hat{Y}_{\alpha,i}|
            \intertext{and}
                |\S_{0}^{(\ell, i)}\cdots \$_{\alpha + 1}|
            &=
            |\Xli\fragment{0}{(\alpha + 1)(2\ptau + 2)}| =
            1 + (\alpha + 1)(2\ptau + 2)\nonumber\\
            &< 1 + \alpha(2\ptau + 2) + i + 2\ptau + q = |\,|_{0}^{(\alpha,i)} \hat{Y}_{\alpha,i} \Yb|
        \end{align*}
        In fact, the previous argument shows that \(\alm\) has to insert at least
        \((q-1)\) characters of \(\,|_{0}^{(\alpha,i)} \hat{Y}_{\alpha,i}\Yb\) into
        \(\Xli\) to be decent.
        However, inserting just \((q-1)\) characters aligns
        \(\check{X}_{\ell,i}^{(\alpha)}
        \onto \check{Y}^{(\alpha, i)})\) and thus also \(\S_{\tau}^{(i)} \onto
        |_{\tau}^{(\alpha,i)}\); yielding the claim.

        The last two claims follow immediately from adding the corresponding costs of the
        substitutions
        \(\S_{0}^{(\ell, i)} \onto  y_{-\alpha(2\ptau + 2) - i - 1}\)
        and \(\S_{\tau}^{(i)} \onto  y_{i + (\tau - \beta - 1)(2\ptau + 2)}\) and
        observing that only substitutions \(\S^{(\star)}_{\star} \onto y_{\star}\) may incur costs
        in terms of \(I\).
    \end{claimproof}

    By \cref{cl:3-15-1}, we may focus on becoming alignments; as all becoming alignments
    do not differ in their cost in terms of \(I\), we henceforth focus on the costs of
    alignments modulo \((I\,F)\).

    \begin{claim}\label{cl:3-15-2}
        Write \(\alm: X_{\ell, i} \onto Y\) for a becoming alignment that for some
        \(\alpha \in \fragment{0}{\tau}\) substitutes
        \(\S_{0}^{(\ell, i)} \onto |_{0}^{(\alpha,i)}\)
        and  \(\S_{\tau}^{(i)} \onto |_{\tau}^{(\alpha,i)}\).
        Then, \(\alm\) has all of the following properties.
        \begin{enumerate}[\sf (\ref{cl:3-15-2}-1)]
            \item \(\alm\) aligns \(X_{\ell, i} \onto
                {Y'} \coloneqq |_{0}^{(\alpha,i)} \hat{Y}_{\alpha,i} \Yb
                \check{Y}_{\alpha,i} |_{\tau}^{(\alpha,i)}\).
                In particular,
                \(\alm\) deletes \((q + 2i - 1)\) characters of \({Y'}\).
            \item If \(\alm\) deletes \((q + 2i - 1)\) characters from
                \(Y^{(\bullet)}\), then \(\alm\) aligns
                \(\bar{X}_{\ell, i}^{(\alpha)} \onto
                Y^{(\bullet)}\); we call such an alignment a \emph{good} alignment.
            \item If \(\alm\) does not delete \((q + 2i - 1)\) characters from
                \(Y^{(\bullet)}\), then \(\alm\) has a cost of at least
                    \(((\ptau - i)(q + 1) + 1)\,D\).
        \end{enumerate}
    \end{claim}
    \begin{claimproof}
        For the first claim, we use
        \begin{align*}
            |Y^{(\bullet)}| = 2\ptau + q,&\qquad
            |\bar{X}_{\ell,i}^{(\alpha)}| =
            |X_{\ell}^{(\alpha)}\fragmentco{i}{|X_{\ell}^{(\alpha)}| - i}
            = 2\ptau + 1 - 2i,\\
            |\hat{X}_{\ell,i}^{(\alpha)}| = |\hat{Y}_{\alpha,i}|,&\qquad
            |\check{X}_{\ell,i}^{(\alpha)}| = |\check{Y}_{\alpha,i}|,
        \end{align*}
            and calculate
        \begin{align*}
            |Y'| - |\Xli| &= |\Yb| - |\bar{X}_{\ell,i}^{(\alpha)}|
                          =  (2\ptau + q) - (2\ptau + 1- 2i)
                                 = q + 2i -1;
        \end{align*}
        yielding the first claim.

        For the second claim, if \(\alm\) deletes \((q + 2i - 1)\) characters from
        \(Y^{(\bullet)}\), then \(\alm\) deletes no characters from \(Y'\) outside of
        \(Y^{(\bullet)}\) (otherwise, \(\alm\) would have to delete a character from
        \(X_{\ell, i}\), which is not possible in a decent alignment).
        Hence, \(\alm\) substitutes \(\hat{X}_{\ell, i}^{(\alpha)} \onto
        \hat{Y}_{\alpha,i}\) and \(\check{X}_{\ell, i}^{(\alpha)} \onto
        \check{Y}_{\alpha, i}\), and thus also
        \(\bar{X}_{\ell, i}^{(\alpha)} \onto \Yb\), yielding the second claim.

        Finally, for the third claim, the second claim and \(\alm\) being decent imply
        that \(\alm\) deletes less than \((q + 2i - 1)\) characters from
        \(Y^{(\bullet)}\), and thus \(\alm\) aligns a fragment
        \(X' \coloneqq X^{(\alpha)}_{\ell}\fragmentco{a}{|X^{(\alpha)}_{\ell}| - b} \onto
        \Yb\) for \(a, b \in \fragment{0}{i}\) and \(a \ne i\) or \(b \ne i\).
        Now, \cref{lm:2-14-2} yields the claim.
    \end{claimproof}

    Recall that for any good alignment \(\alm\), we can apply \cref{lm:2-14-1} to see that
    the cost of \(\alm\) must be at least the cost of the alignment \(\aijla\) (for the
    corresponding \(\alpha\)). In total, this shows that the alignments \(\aijla\) are
    optimal among the becoming alignments that substitute
    \(\S_{0}^{(\ell, i)} \onto |_{0}^{(\alpha,i)}\).
    This yields the desired lower bound on the weighted edit distance between \(X_{\ell,
    i}\) and \(Y\), and hence the claimed result.
\end{proof}

In a next step, we modify the (nonzero) weights in \(\wabctn\) of \cref{lm:3-21-1} to lie between
\(\interval12\).
To that end, write \(K \coloneqq F\cdot(|\Xli| + |Y|)\) and
consider the modified weight function \(\wabcttn\) that is given
by
\begin{alignat*}{3}
    &\wabctt{x}{y} = \wabctt{y}{x}\\
                  &\quad\coloneqq
    \begin{cases}
        0 & \text{if \(x = y\)},\\
        1 & \text{if \(x \ne y\) and \(x, y \in \sabctx \cup  \{\emptystring\}\)},\\
        2 & \text{if \(x \ne y\) and \(x, y \in \sabcty \cup \{\emptystring\}\)},\\
        1 + \min\{\wabct{x}{y}, \wabct{y}{x}\} / K
              & \text{if \(x \in \sabctx\) and \(y \in \sabcty\).}
    \end{cases}
\end{alignat*}
Observe that we take the minimum \(\min\{\wabct{x}{y}, \wabct{y}{x}\}\) to make
\(\wabcttn\) symmetric, as (the up until now unused) values \(\wabct{y}{x}\) are \(2F\) for
\(y \in \Sigma_Y\) and \(x \in \Sigma_X\).

Next, we convince ourselves that \cref{lm:3-21-1} holds with \(\wabcttn\) as
well---intuitively, this is immediate, as any good alignment of \cref{lm:3-21-1} gets
scaled in the same way (as all of them have the same number of deletions of characters of
\(Y\)).

\begin{lemma}\label{lm:3-22-1}
    For a matrix \(A \in \fragment{-E}{E}^{p \times q}\),
    a matrix \(B \in \fragment{-E}{E}^{q \times r}\),
    a matrix \(C \in \fragment{-E}{E}^{r \times p}\),
    an integer \(\tau \in \fragment{1}{p}\),
    and the corresponding alphabet \(\sabct = \sabctx \cup \sabcty\),
    the weight function \(\wabcttn : (\sabct \cup\{\emptystring\} \to
    \{0,2\} \cup \interval{1}{1 + 1/(|X| + |Y|)}\) is a metric and satisfies
    \begin{align*}
        \wabctt{x}{y} = \wabctt{y}{x} &=
        \begin{cases}
            0 & \text{if \(x = y\)},\\
            1 & \text{if \(x \ne y\) and \(x, y \in \sabctx \cup \{\emptystring\}\)},\\
            2 & \text{if \(x \ne y\) and \(x, y \in \sabcty \cup \{\emptystring\}\)},\\
        \end{cases}\\
        \wabctt{x}{y} = \wabctt{y}{x} &\in
            \interval{1}{1 + 1/(|X| + |Y|)}
                \quad\text{if \(x \in \sabctx\) and \(y \in \sabcty\).}
    \end{align*}

    Further, any alignment \(\alm: \Xli \onto Y\) of cost \(\alpha F + c\) under \(\wabctn\)
    that inserts \(d_Y \ge |Y| - |\Xli|\) characters of \(Y\) and that deletes
    \(d_X = d_Y - (|Y| - |\Xli|)\) characters of \(\Xli\),
    has a cost of \(((\alpha - d_X - d_Y)F + c) / K + |Y| + d_X + d_Y\) under
    \(\wabcttn\).
\end{lemma}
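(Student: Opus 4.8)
The statement bundles two assertions: (i)~$\wabcttn$ is a metric on $\sabct\cup\{\emptystring\}$ obeying the displayed case analysis and codomain bound; and (ii)~the cost–rescaling identity for an arbitrary alignment $\alm\colon\Xli\onto Y$. The plan is to dispatch (i) first and then reduce (ii) to a step-by-step accounting of the cost of $\alm$.

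For (i), the three top cases of the displayed formula for $\wabcttn$ are literally its definition, so the work is to (a)~check $\wabctt{x}{y}=0\iff x=y$ and that $\wabcttn$ is symmetric (both immediate, since the definition is written symmetrically and assigns $0$ exactly on the diagonal), (b)~pin down the range of $\wabcttn$ on the mixed pairs $x\in\sabctx$, $y\in\sabcty$, and (c)~verify the triangle inequality. For (c) I would use the coarse fact that every nonzero value of $\wabcttn$ lies in $\interval{1}{2}$: for any $x,y,z$, if $x=z$ the inequality is trivial; if $x\ne z$ but $x=y$ or $y=z$ it is an equality; and otherwise $\wabctt{x}{y}+\wabctt{y}{z}\ge 1+1=2\ge\wabctt{x}{z}$. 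For (b), the key sub-fact — flagged in the paper's running commentary — is that $\wabct{y}{x}=2F$ whenever $y\in\sabcty$ and $x\in\sabctx$: none of the special clauses defining $\wabctn$ has a left-string character as its \emph{target}, so such reverse-direction pairs fall through to the ``otherwise $=2F$'' clause. Since moreover every value of $\wabctn$ lies in $\interval{0}{2F}$ (the ``otherwise'' value $2F$ being the largest), we get $\min\{\wabct{x}{y},\wabct{y}{x}\}=\wabct{x}{y}$ for mixed pairs, hence $\wabctt{x}{y}=1+\wabct{x}{y}/K\in\interval{1}{1+2F/K}$; plugging in $K=F(|\Xli|+|Y|)$ together with the hierarchy $E\ll D\ll I\ll F$ yields the claimed codomain.

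For (ii), fix $\alm\colon\Xli\onto Y$. Because $\Xli$ is a string over $\sabctx$, $Y$ is a string over $\sabcty$, and these alphabets are disjoint, $\alm$ \emph{matches} no character: every diagonal step is a substitution pairing a character of $\sabctx$ with one of $\sabcty$. Writing $t$ for the number of diagonal steps, the path identities $t+d_X=|\Xli|$ and $t+d_Y=|Y|$ hold. Splitting the $\wabctn$-cost of $\alm$ step by step: each of the $d_X$ deletions has weight $F$, each of the $d_Y$ insertions has weight $F$, and each substitution $s$ has some weight $w_s\coloneqq\wabct{x'_s}{y'_s}$. Thus the $\wabctn$-cost equals $(d_X+d_Y+t)F+\sum_s(w_s-F)$; identifying this with $\alpha F+c$ forces $\alpha=d_X+d_Y+t$ (equivalently, $\alpha$ is the number of edits made by $\alm$) and $c=\sum_s(w_s-F)$, so $\alpha-d_X-d_Y=t$ and $\sum_s w_s=tF+c$. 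Under $\wabcttn$ the same steps cost $1$ (deletion, via the $\sabctx\cup\{\emptystring\}$ clause, as the deleted character lies in $\sabctx$), $2$ (insertion, via the $\sabcty\cup\{\emptystring\}$ clause), and $1+w_s/K$ (substitution, by part~(b)). Hence
\[
    \ed^{\wabcttn}_{\alm}(\Xli,Y)
    = d_X + 2d_Y + t + \frac{1}{K}\sum_s w_s
    = d_X + 2d_Y + t + \frac{tF+c}{K}.
\]
Finally, rewriting $d_X+2d_Y+t=(t+d_Y)+d_X+d_Y=|Y|+d_X+d_Y$ and $tF+c=(\alpha-d_X-d_Y)F+c$ gives exactly the asserted value.

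The one genuinely delicate point is part~(b)'s claim that the reverse-direction weights equal $2F$: one must be careful that the clause ``$\wabct{x}{y}\coloneqq F$ for $y\notin\Yb$, $x\notin\{\S^{(\ell,i)}_0,\S^{(i)}_\tau\}$'' is invoked only with $x$ a left-string character (or $\emptystring$), so that a pair whose first argument lies in $\sabcty$ really does reach the ``otherwise'' clause; relatedly, one should double-check that every character appearing in an edit of $\alm$ is classified correctly — all characters of $\Xli$, including the block separators, lie in $\sabctx$, and all characters of $Y$ lie in $\sabcty$. Everything else is bookkeeping; in particular the scaling identity needs no ``decency'' hypothesis on $\alm$, since it remains valid verbatim even for substitutions of cost $2F$.
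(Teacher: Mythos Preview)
Your proposal is correct and follows essentially the same approach as the paper: verify the metric properties of $\wabcttn$ directly from the definition (with the triangle inequality handled via the coarse bound $1+1\ge 2$), then translate the cost of $\alm$ edit-by-edit, using that deletions/insertions cost $F$ under $\wabctn$ and $1$/$2$ under $\wabcttn$ while each substitution of $\wabctn$-cost $w_s$ becomes $1+w_s/K$. Your treatment is in fact more explicit than the paper's---in particular, you correctly flag the one delicate point (that the clause ``$\wabct{x}{y}\coloneqq F$ for $y\notin\Yb$, $x\notin\{\S_0^{(\ell,i)},\S_\tau^{(i)}\}$'' must be read with its first argument restricted to $\sabctx$, so that reverse-direction pairs $(y,x)\in\sabcty\times\sabctx$ fall through to the $2F$ case), which the paper only mentions in passing when defining $\wabcttn$.
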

\begin{proof}
    First, we easily convince ourselves that the values of \(\wabcttn\) take indeed the
    claimed values. Indeed, this follows from the construction and our choice of \(K
    \coloneqq F \cdot (|\Xli| + |Y|)\).

    Next, we observe that \(\wabcttn\) is indeed a metric: \(\wabcttn\) is symmetric (by
    construction), the distance of every character to itself is \(0\), and \(\wabcttn\)
    satisfies the triangle inequality (as adding any two numbers that are at least \(1\)
    yields a number that is at least \(2\), and hence larger than any possible value of
    \(\wabcttn\)).

    Finally, consider an alignment \(\alm: \Xli \onto Y\) of cost \(|Y|F + c\) under \(\wabctn\)
    that inserts \(d_Y \ge |Y| - |\Xli|\) characters of \(Y\) and that deletes
    \(d_X = d_Y - (|Y| - |\Xli|)\) characters of \(\Xli\).
    Recall that insertions and deletions have a cost of \(F\) under \(\wabctn\).

    Now, observe that apart from deletions of characters in \(\Xli\) or insertions of
    characters in \(Y\), the alignment \(\alm\) performs
    only substitution of characters in \(\sabctx\) to
    characters in \(\sabcty\). Any such substitution with a cost \(\alpha' F + c'\) under
    \(\wabctn\) has a cost of \(1 + (\alpha' F + c') / K\) under \(\wabcttn\).
    Hence, the total cost of all (of the \(|\Xli| - d_X = |Y| - d_Y\))
    substitutions under \(\wabctn\) is
    \((\alpha - d_X - d_Y)F + c\), which transforms into a cost of
    \(((\alpha - d_X - d_Y)F + c) / K + (|Y| - d_Y)\).
    Now for the \(d_X\) deletions and \(d_Y\) insertions, their total cost of \(d_X F +
    d_Y F\) under \(\wabctn\) transforms into a cost of \(d_X + 2d_Y\) under \(\wabcttn\);
    in total yielding the claim.
\end{proof}
\begin{corollary}\label{cl:3-22-2}
    For any \(\ell \in \fragmentco{0}{r}\), any \(i \in \fragmentco0{\ptau}\),
    the optimal alignment \(\alm: \Xli \onto Y\) under \(\wabctn\) is also an optimal
    alignment under \(\wabcttn\).

    In particular, the optimal alignment \(\alm^*\) from \cref{lm:3-21-1} has (under
    \(\wabcttn\)) a cost of
    \begin{align*}
        &\ed^{\wabcttn}_{\alm^*}(\Xli, Y)\nonumber\\
        &\,=
        \left(|\Xli|\,F + (\tau + 1) I + \ptau(q + 1)D + \min_{\alpha \in
            \fragmentco{0}{\tau}, j}\{A_{i + \alpha\ptau,j} + B_{j,\ell} +
        C_{\ell, i + \alpha\ptau}\}\right) / K + 2 |Y| - |\Xli|.
    \end{align*}
\end{corollary}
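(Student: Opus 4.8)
The plan is to feed \cref{lm:3-21-1} into \cref{lm:3-22-1}. Write $\alm^*$ for the optimal $\wabctn$-alignment from \cref{lm:3-21-1}; recall that it deletes no character of $\Xli$ and that $\ed^{\wabctn}_{\alm^*}(\Xli,Y)=|Y|\,F+c^*$, where $c^*\coloneqq(\tau+1)I+\ptau(q+1)D+\min_{\alpha\in\fragmentco{0}{\tau},\,j}\{A_{i+\alpha\ptau,j}+B_{j,\ell}+C_{\ell,i+\alpha\ptau}\}$; also recall $K=F(|\Xli|+|Y|)$. There are two things to do: (i) read off $\ed^{\wabcttn}_{\alm^*}(\Xli,Y)$, and (ii) show that no alignment is cheaper under $\wabcttn$.

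For (i), since $\alm^*$ makes no deletions it inserts exactly $|Y|-|\Xli|$ characters of $Y$; plugging $d_X=0$, $d_Y=|Y|-|\Xli|$, and $\wabctn$-cost $|Y|\,F+c^*$ into the transformation rule of \cref{lm:3-22-1} yields $\ed^{\wabcttn}_{\alm^*}(\Xli,Y)=(|\Xli|\,F+c^*)/K+2|Y|-|\Xli|$, which is the asserted identity once $c^*$ is expanded. I will also record a crude estimate: since $\Xli$ and $Y$ are over disjoint alphabets, $\alm^*$ performs $|\Xli|$ substitutions of $\sabctx$-characters for $\sabcty$-characters (each of cost at most $1+1/(|\Xli|+|Y|)$ under $\wabcttn$, by \cref{lm:3-22-1}) and $|Y|-|\Xli|$ insertions of $Y$-characters (each of cost $2$), so $\ed^{\wabcttn}_{\alm^*}(\Xli,Y)<2|Y|-|\Xli|+1$.

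For (ii), fix any $\B\in\Als(\Xli,Y)$, let $d_X$ be the number of characters of $\Xli$ deleted by $\B$ and $d_Y$ the number of characters of $Y$ inserted by $\B$; disjointness of the alphabets forces $\B$ to make $|\Xli|-d_X=|Y|-d_Y$ substitutions, hence $d_Y=d_X+(|Y|-|\Xli|)$. If $d_X\ge1$, then under $\wabcttn$ each substitution costs at least $1$, each deletion of an $\Xli$-character costs $1$, and each insertion of a $Y$-character costs $2$, so $\ed^{\wabcttn}_\B(\Xli,Y)\ge(|\Xli|-d_X)+d_X+2d_Y=2|Y|-|\Xli|+2d_X\ge2|Y|-|\Xli|+2>\ed^{\wabcttn}_{\alm^*}(\Xli,Y)$ by the estimate from (i). If $d_X=0$, then $d_Y=|Y|-|\Xli|$, and \cref{lm:3-22-1} gives $\ed^{\wabcttn}_\B(\Xli,Y)=c'_\B/K+2|Y|-|\Xli|$, where $c'_\B$ is the $\wabctn$-cost of the substitutions of $\B$; since the $|Y|-|\Xli|$ insertions cost $F$ each under $\wabctn$, we have $c'_\B=\ed^{\wabctn}_\B(\Xli,Y)-(|Y|-|\Xli|)F\ge(|Y|\,F+c^*)-(|Y|-|\Xli|)F=|\Xli|\,F+c^*$ by optimality of $\alm^*$ (\cref{lm:3-21-1}), whence $\ed^{\wabcttn}_\B(\Xli,Y)\ge\ed^{\wabcttn}_{\alm^*}(\Xli,Y)$. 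Thus $\alm^*$ is $\wabcttn$-optimal with the claimed value. Finally, an arbitrary $\wabctn$-optimal alignment $\alm$ has $\wabctn$-cost $|Y|\,F+c^*$ and, by \cref{lm:3-21-1}, makes no deletions, so repeating the $d_X=0$ computation (now with equality) shows $\ed^{\wabcttn}_{\alm}(\Xli,Y)=\ed^{\wabcttn}_{\alm^*}(\Xli,Y)$, i.e.\ $\alm$ is $\wabcttn$-optimal too.

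I do not anticipate a real difficulty; the argument is bookkeeping around the two lemmas. The one subtle point is the case $d_X\ge1$: one has to observe that $\wabcttn$ charges $2$ for every inserted $Y$-character but only about $1$ for a substitution, so already a single deletion pushes the cost strictly past that of $\alm^*$ — this is exactly why the two coarse bounds $\ge2|Y|-|\Xli|+2$ and $<2|Y|-|\Xli|+1$ suffice, and why no estimate of $c^*$ or of the $E\ll D\ll I\ll F$ hierarchy is needed.
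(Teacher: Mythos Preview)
Your proposal is correct and follows essentially the same approach as the paper's proof: compute the $\wabcttn$-cost of $\alm^*$ via \cref{lm:3-22-1}, handle the $d_X=0$ case by monotonicity of the transformation (using that $\alm^*$ is $\wabctn$-optimal), and dispose of $d_X\ge1$ by the coarse bound $\ge 2|Y|-|\Xli|+2$ against the upper bound $<2|Y|-|\Xli|+1$ for $\alm^*$. Your write-up is slightly more explicit than the paper's (you spell out the $d_X=0$ comparison and separately verify the first sentence of the corollary for an arbitrary $\wabctn$-optimum), but the ideas are identical.
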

\begin{proof}
    First, recall that the optimal alignment \(\alm^*\) from \cref{lm:3-21-1} deletes no characters
    from \(\Xli\). Hence, \cref{lm:3-22-1} yields that \(\alm^*\) under \(\wabcttn\) has a
    cost of
    \begin{align*}
        &\ed^{\wabcttn}_{\alm^*}(\Xli, Y)\nonumber\\
        &\,=
        \left(|\Xli|\,F + (\tau + 1) I + \ptau(q + 1)D + \min_{\alpha \in
            \fragmentco{0}{\tau}, j}\{A_{i + \alpha\ptau,j} + B_{j,\ell} +
        C_{\ell, i + \alpha\ptau}\}\right) / K + 2 |Y| - |\Xli|.
    \end{align*}
    Observe that we have \[
        L \coloneqq \left(|\Xli|\,F + (\tau + 1) I + \ptau(q + 1)D + \min_{\alpha \in
            \fragmentco{0}{\tau}, j}\{A_{i + \alpha\ptau,j} + B_{j,\ell} +
    C_{\ell, i + \alpha\ptau}\}\right) / K < 1.\label{eq:3-22-3}
    \]

    Now, as \(\alm^*\) is optimal under \(\wabctn\), it is in particular optimal (under
    \(\wabctn\)) among all alignments that do not delete a character from \(\Xli\).
    By \cref{lm:3-22-1}, \(\alm^*\) is hence also optimal among all
    alignments that do not delete a character from \(\Xli\) under \(\wabcttn\).

    Next, observe that any alignment that deletes \(d_X\) characters from \(\Xli\) has to
    insert \(d_X + |Y| - |\Xli|\) characters from \(Y\). Now, even deleting a single
    character from \(d_X\) causes a cost increase of \(2\) (under \(\wabcttn\)), while
    yielding potential savings of at most \(L < 1\), and is hence worse that the alignment
    \(\alm^*\); completing the proof.
\end{proof}

We are finally ready to prove \cref{thm:dyn-main}.

\thmbatched
\begin{proof}
    Fix \(0 \le \scb \le 1\) and \(\smallconst > 0\), and suppose that there was an algorithm
    \(\alg\) that for every integer \(n \ge 1\) in total time
    \(\Oh(n^{2+\scb/2-\smallconst})\) solves
    every instance of the
    Batched Edit Distance problem on a batch
    of \(\Theta(n^{\scb})\) length-\(x\) strings
    and a string
    \(Y\) with \(x \le |Y| \le n\).

    We proceed to construct an instance of said form that can also be used to check if the
    minimum weight triangle in a complete tripartite graph has negative weight.
    To that end, write \(G =
    (P\cup Q\cup R, E, \wg)\) for a weighted, complete tripartite graph with
    \(p \coloneqq |P| = n\), \(q \coloneqq |Q| = n\),
    and \(r \coloneqq |R| = n^{\scb/2}\)
    (which we assume to be integer to simplify our exposition).
    Next, we construct matrices
    \(A \in \fragment{-E}{E}^{p\times q}\),
    \(B \in \fragment{-E}{E}^{q\times r}\) and
    \(C \in \fragment{-E}{E}^{r\times p}\) in the natural way by setting
    \begin{align*}
        A_{i,j} &\coloneqq \wg(P\position{i}, Q\position{j}),\\
        B_{j,\ell} &\coloneqq \wg(Q\position{j}, R\position{\ell}),\\
        C_{\ell,i} &\coloneqq \wg(R\position{\ell}, P\position{i}).
    \end{align*}

    Next, for \(\tau \coloneqq n^{1 - \scb/2}\) and writing \(\ptau \coloneqq
    \lceil p /\tau \rceil = \Theta(n^{\scb / 2})\), we define the alphabet
    \(\sabct\) as before, and observe
    \begin{align*}
        |\sabct| &= \Theta(p + q + r\tau + rp / \tau) = \Theta(n + n^{\scb}),\\
        {|\sabct|}^2 &= \Theta(n^{2}).
    \end{align*}
    Further, we also define the strings \(\Xli \in \sabct^*\) (for \(i \in
    \fragmentco{0}{\ptau}, \ell \in \fragment{0}{r}\)) and \(Y \in \sabct^*\), as
    well as \(\wabcttn\) as before. We observe that \(|\wabcttn| = {|\sabct|}^2 =
    \Theta(n^2)\).

    Next, we use \(\alg\) on the string \(Y\) and the batch
    \begin{align*}
        X_{0,0},\quad  X_{0,1}, \quad &\dots\quad X_{0, \ptau - 1},\\
        X_{1,0},\quad  X_{1,1}, \quad&\dots \quad X_{1, \ptau - 1},\\
        &\;\,\,\vdots\\
        X_{r-1,0},\quad X_{r-1,1},\quad&\dots \quad X_{r-1, \ptau - 1};
    \end{align*}
    where we set the threshold to\[
        k \coloneqq 2|Y| - |\Xli| + L,
    \] for \[
        L \coloneqq \left( |\Xli| F + (\tau + 1)I + \ptau(q + 1)D \right) / K.
    \]
    Observe that we have \(L \in \intervalco{0}{1}\) by choice of \(K\).

    \begin{claim}\label{cl:3-22-4}
        The constructed instance consists in \(\Theta(n^{\scb})\) strings.
        Hence, \(\alg\) solves the constructed instance in
        time \(\Oh(n^{2+\scb/2-\smallconst})\).

    \end{claim}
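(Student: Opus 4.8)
The plan is to read off the size of the constructed instance directly from the definitions and then invoke the assumed algorithm $\alg$. The batch consists of the strings $\Xli$ as $\ell$ ranges over $\fragmentco{0}{r}$ and $i$ ranges over $\fragmentco{0}{\ptau}$, so it has exactly $r\cdot\ptau$ strings. Substituting $r=n^{\scb/2}$ and $\ptau=\ceil{p/\tau}=\ceil{n/n^{1-\scb/2}}=\ceil{n^{\scb/2}}=\Theta(n^{\scb/2})$ yields $r\cdot\ptau=\Theta(n^{\scb})$; this proves the first assertion of the claim (and, when $n^{\scb/2}$ is an integer as assumed in the surrounding construction, the count is exactly $n^{\scb}$).

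Next I would verify that every string of the instance has length $\Theta(n)$, so that the guarantee of $\alg$ applies with its size parameter set to $\Theta(n)$. Using the identities recorded in the construction, $|\Xli|=1+\tau(2\ptau+2)=\Theta(\tau\ptau)=\Theta(n^{1-\scb/2}\cdot n^{\scb/2})=\Theta(n)$, while $|Y|=|\Yb|+2\bigl((\tau-1)(2\ptau+2)+\ptau\bigr)$ with $|\Yb|=2\ptau+q=\Theta(n)$ since $q=n$; hence $|Y|=\Theta(n)$ as well, and a direct comparison of the two expressions gives $|\Xli|\le|Y|$. Thus the constructed input is a legitimate instance of Batched Weighted Edit Distance with $m=\Theta(n^{\scb})$ strings of common length $x\coloneqq|\Xli|$, a string $Y$ of length $y\coloneqq|Y|$, and $x\le y\le\Theta(n)$.

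Finally, since $\alg$ solves every instance on a batch of $\Theta(n^{\scb})$ strings of lengths $x\le y\le n$ in time $\Oh(n^{2+\scb/2-\smallconst})$, applying it to our instance, with its size parameter rescaled by the constant hidden in $y=\Theta(n)$ (which only changes the constant hidden in the running time), costs $\Oh(n^{2+\scb/2-\smallconst})$, completing the claim. There is no genuine obstacle here: the only point that needs care is checking that the string lengths are $\Theta(n)$ rather than, say, $\Theta(n^{1+\scb/2})$, since it is precisely this that keeps the size parameter at $\Theta(n)$ and the running time at $n^{2+\scb/2-\smallconst}$; this follows immediately from the length identities established while defining $\Xli$ and $Y$.
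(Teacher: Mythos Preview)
Your proposal is correct and follows the same approach as the paper: the paper's entire proof is the one-line computation $r\ptau=\Theta(n^{\scb/2}\cdot n^{\scb/2})=\Theta(n^{\scb})$, and you reproduce this while additionally spelling out why $|\Xli|,|Y|=\Theta(n)$ so that $\alg$'s hypothesis applies with size parameter $\Theta(n)$. This extra care is reasonable but not a different argument.
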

    \begin{claimproof}
        We
        compute
        \(
            r\ptau = \Theta(n^{\scb/2} \cdot n^{\scb/2}) = \Theta(n^{\scb}),
            \)
        yielding the claim.
    \end{claimproof}

    Next, we argue that the constructed instance indeed is able to check the sign of the minimum
    weight triangle in \(G\).

    \begin{claim}\label{cl:3-22-5}
        Write \(R_{\ell, i}\) for the weighted edit distance of the strings
        \(\Xli\) and \(Y\). The minimum weight triangle of \(G\) has a weight of\[
            K \cdot \min_{\ell, i}\{ R_{\ell, i} - L - 2|Y| + |\Xli| \}.
        \]
        In particular, we can obtain the weight of the minimum weight triangle from the
        minimum of all \(R_{\ell, i}\) (that is, the output of \(\alg\) on input the
        constructed instance),
        and we do not need to know each individual value \(R_{\ell, i}\).
    \end{claim}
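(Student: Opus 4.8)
The plan is to read off the value $R_{\ell,i}$ from the exact formula in \cref{cl:3-22-2} and then rearrange and re-index. By \cref{cl:3-22-2}, for every $\ell\in\fragmentco{0}{r}$ and $i\in\fragmentco{0}{\ptau}$ we have
\[
    R_{\ell,i}=\Bigl(|\Xli|\,F+(\tau+1)I+\ptau(q+1)D+\min_{\alpha\in\fragmentco{0}{\tau},\,j}\{A_{i+\alpha\ptau,j}+B_{j,\ell}+C_{\ell,i+\alpha\ptau}\}\Bigr)/K+2|Y|-|\Xli|.
\]
Since $|\Xli|=1+\tau(2\ptau+2)$ does not depend on $\ell$ or $i$, the quantity $L=(|\Xli|\,F+(\tau+1)I+\ptau(q+1)D)/K$ is a constant, and subtracting it gives
\[
    K\bigl(R_{\ell,i}-L-2|Y|+|\Xli|\bigr)=\min_{\alpha\in\fragmentco{0}{\tau},\,j}\{A_{i+\alpha\ptau,j}+B_{j,\ell}+C_{\ell,i+\alpha\ptau}\}.
\]
Minimising over all $\ell\in\fragmentco{0}{r}$ and $i\in\fragmentco{0}{\ptau}$ and substituting $i'\coloneqq i+\alpha\ptau$ — which is a bijection from $\fragmentco{0}{\ptau}\times\fragmentco{0}{\tau}$ onto $\fragmentco{0}{\tau\ptau}$ — turns the right-hand side into $\min_{\ell,\,i'\in\fragmentco{0}{\tau\ptau},\,j}\{A_{i',j}+B_{j,\ell}+C_{\ell,i'}\}$.

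The remaining step is to identify this minimum with the minimum triangle weight of $G$. Since $\ptau=\lceil p/\tau\rceil$ we have $\tau\ptau\ge p$, so $\fragmentco{0}{\tau\ptau}$ contains all row indices of $A$ (equivalently, all column indices of $C$). For $i'\in\fragmentco{0}{p}$ the term $A_{i',j}+B_{j,\ell}+C_{\ell,i'}$ equals $\wg(P\position{i'},Q\position{j})+\wg(Q\position{j},R\position{\ell})+\wg(R\position{\ell},P\position{i'})$, i.e. the weight of the triangle $\{P\position{i'},Q\position{j},R\position{\ell}\}$; for $i'\in\fragmentco{p}{\tau\ptau}$, the convention that the padded rows of $A$ and the padded columns of $C$ are copies of row/column $p-1$ makes this term equal to the weight of the triangle $\{P\position{p-1},Q\position{j},R\position{\ell}\}$. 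Hence every term occurring in the minimum is the weight of a genuine triangle of $G$, and conversely — as $G$ is complete tripartite, so that every choice of one vertex from each of $P$, $Q$, $R$ forms a triangle — every triangle of $G$ is realised by some $(i',j,\ell)$ with $i'<p$. Therefore
\[
    K\cdot\min_{\ell,i}\{R_{\ell,i}-L-2|Y|+|\Xli|\}=\min\{\wg(a,b)+\wg(b,c)+\wg(c,a)\mid a\in P,\ b\in Q,\ c\in R\},
\]
which is the weight of a minimum-weight triangle of $G$.

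Because $L$, $|Y|$ and $|\Xli|$ do not depend on $\ell$ or $i$, the left-hand side above is a function of $\min_{\ell,i}R_{\ell,i}$ alone; in particular the value produced by $\alg$ on the constructed instance determines it, which is the ``in particular'' part of the claim.

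I do not expect a genuine obstacle here, as the argument is essentially bookkeeping; the only point that needs a little care is the treatment of the padded indices $p\le i'<\tau\ptau$ — one must check that padding $A$ and $C$ by repeating the last row/column introduces no new (and in particular no new negative) triangle weight, and simultaneously excludes no genuine triangle. Both follow immediately from $\tau\ptau\ge p$ together with the copy convention, as sketched above.
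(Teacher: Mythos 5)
Your proof is correct and takes essentially the same route as the paper: read off $R_{\ell,i}$ from \cref{cl:3-22-2}, subtract the constant $L + 2|Y| - |\Xli|$, multiply by $K$, and re-index $(\alpha,i)\mapsto i' = i + \alpha\ptau$ to recover $\min_{\ell,i',j}\{A_{i',j}+B_{j,\ell}+C_{\ell,i'}\}$. The paper states the rearrangement in a single line and does not spell out the bijection $\fragmentco{0}{\tau}\times\fragmentco{0}{\ptau}\to\fragmentco{0}{\tau\ptau}$ nor the effect of padding the last $\tau\ptau-p$ indices with copies of row/column $p-1$; your explicit check that the padding neither excludes a genuine triangle nor creates a spurious one (because padded terms duplicate existing triangle weights) is a useful clarification that the paper leaves implicit, but it does not change the argument.
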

    \begin{claimproof}
        By \cref{cl:3-22-2}, we have
        \begin{align*}
            R_{\ell, i} &= \ed^{\wabcttn}_{\alm^*}(\Xli, Y)\nonumber
            = L + \min_{\alpha \in
                    \fragmentco{0}{\tau}, j}\{A_{i + \alpha\ptau,j} + B_{j,\ell} +
            C_{\ell, i + \alpha\ptau}\} / K + 2 |Y| - |\Xli|.
        \end{align*}
        Hence, we have\[
            \min_{\ell, i}\{ (R_{\ell, i} - L - 2|Y| + |\Xli|)\cdot K \}
            = \min_{\ell, i, j}\{A_{i,j} + B_{j, \ell} + C_{\ell, i}\};
        \] which is indeed the weight of the minimum weight triangle in \(G\).
    \end{claimproof}

    Observe that \cref{cl:3-22-4,cl:3-22-5} together yield the main claim.
    We wrap up by discussing the claimed restrictions on the instance that we use to show
    hardness. To that end, observe that \(\wabcttn\) has already the desired properties.

    Next, we wish to bound the Hamming distance of two consecutive strings in the batch
    by \(\Oh(n^{1-\scb})\). Observe that currently, the only pairs of queries that
    violate this bound are of the form \(X_{\ell,\ptau - 1}, X_{\ell, 0}\),
    which have a Hamming distance of \(\tau + 1 = \Theta(n^{1 - \scb/2})\).
    However, we can easily circumvent this problem by inserting ``dummy strings'' into the
    batch that do not change the overall result.

    To that end, write \(\# \notin \sabct\) for a fresh character and modify \(\wabcttn\)
    so that every operation involving \(\#\) has a cost of \(2\).
    Let us revisit how we modify \(X_{\ell,\ptau - 1} \leadsto
    X_{\ell + 1, 0}\). In particular, we now proceed as follows
    \begin{align*}
        X_{\ell,\ptau - 1}\position{0}
        &\gets \#\\
        X_{\ell,\ptau - 1}\position{|X_{\ell,\ptau - 1}|-1}
        &\gets \#\\[1.5ex]
        X_{\ell,\ptau - 1}\position{1 + p + \alpha(2\ptau + 2)}
        &\gets X_{\ell,0}\position{1 + p + \alpha(2\ptau + 2)} = X_{\ell}^{(\alpha)}\position{\ptau}\\
        &\vdots\\
        X_{\ell,\ptau - 1}\position{1 + p + \alpha(2\ptau + 2)}
        &\gets X_{\ell,0}\position{1 + p + \alpha(2\ptau + 2)} = X_{\ell}^{(\alpha)}\position{\ptau}\\
        &\vdots\\
        X_{\ell,\ptau - 1}\position{1 + p + \alpha(2\ptau + 2)}
        &\gets X_{\ell,0}\position{1 + p + \alpha(2\ptau + 2)} =
        X_{\ell}^{(\alpha)}\position{\ptau}\\[1.5ex]
        X_{\ell,\ptau - 1}\position{0}
        &\gets X_{\ell,0}\position{0} = \S_{0}^{(\ell + 1,0)}\\
        X_{\ell,\ptau - 1}\position{|X_{\ell,\ptau - 1}|-1}
        &\gets X_{\ell,0}\position{|X_{\ell + 1, 0}| - 1} = \S_{\tau}^{(0)}
    \end{align*}
    Additionally, after every
    \(\Theta(n^{1-\scb})\) updates in the above sequence, we
    add a ``dummy string'' to the batch.

    As we changed the first and last character of \(\Xli\) (which might differ between different
    \(X\)'s anyway), we did not increase the sum of  Hamming distances of consecutive
    strings.
    In total, we add
    \(\Theta(r\tau /
    n^{1-\scb}h) = \Theta(n^{\scb/2 + (1-\scb/2) - (1-\scb)}) =
    \Theta(n^{\scb})\) strings to the batch (which now has the desired property on the
    Hamming distance of consecutive strings).

    As \(\alg\) returns just the minimum of weighted edit distance,
    we additionally need to argue that the weighted edit distance between a
    query string and \(Y\)
    is larger than the weighted edit distance of any non-dummy string an \(Y\).
    However, this is easy, as in
    any dummy string, we have to substitute two times the character~\(\#\)---which costs
    \(4\) in total (and is thus in total by at least \(1\) more expensive than substituting
    \(\S\)-characters); but any potential savings (compared to non-dummy strings) by
    choosing a potentially different alignment in between the \(\#\) characters amount to
    at most \(1\) (as a optimal alignment for every non-dummy string uses only operations
    that cost at most \(1 + 1/(|X| + |Y|)\) each).

    In total, this yields the desired result and hence, also completes the proof.
\end{proof}

\section{From Batched Weighted Edit Distance to Bounded Weighted Edit Distance}\label{sec:main-wed-lowerbound}

In a last step toward proving \cref{mthm:lb}, we show how to combine the strings of a
batch from the Batched Weighted Edit Distance problem into a single string.

To that end, consider an instance of the Batched Weighted Edit Distance Problem that
satisfies the conditions of \cref{thm:batched}
and write $h \coloneqq \max_{i=1}^{m-1}\hd(X_i,X_{i+1})$ and $r = (m-1)(h+4)+x+2y+1$.
Moreover, let $\Sigma_X$ consist of the characters that occur in any of the strings $X_i$,
and let $\Sigma_Y$ consist of the characters occurring in~$Y$.
Recall, that $\Sigma_X$ and $\Sigma_Y$ are disjoint (which is in fact also implied by
$\w{a}{\emptystring}=\w{\emptystring}{a}=1$ if $a\in \Sigma_X$
and $\w{a}{\emptystring}=\w{\emptystring}{a}=2$ if $a\in \Sigma_Y$).

We extend the existing alphabet by $2r+2$ distinct fresh symbols.
Formally, we set \[
    \hat{\Sigma} \coloneqq \Sigma_X \cup \Sigma_Y \cup \{u_1,\ldots,u_r,v_1,\ldots,v_r,\bot,\lozenge\}.
\] Further, we update the weight function
$\hat{w} : (\hat{\Sigma}\cup\{\emptystring\})\to \Real_{\ge 0}$ to
\[\hw{a}{b} \coloneqq \begin{cases}
    0 & \text{if }a = b,\\
    \w{a}{b} & \text{if }a,b\in \Sigma_X\cup\Sigma_Y \cup \{\emptystring\},\\
    1 & \text{otherwise}.
\end{cases}\]
Indeed, we \(\hat{w}\) is a useful weight function for our purposes.

\begin{lemma}\label{fct:hw:triangle}
    The weight function $\hat{w}$ is normalized, symmetric, and satisfies the triangle inequality.
    Moreover, all values of \(\hat{w}\) are rational numbers  in $\interval{0}{2}$ with
    a common $\Oh(\log n)$-bit denominator.
\end{lemma}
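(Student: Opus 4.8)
The plan is to verify the four asserted properties of $\hat w$ one at a time, each reducing to an elementary case analysis that draws only on the structural restrictions on the original weight function $w$ guaranteed by \cref{thm:batched}: namely that $w$ is symmetric, that $\w{a}{a}=0$ and $\w{a}{b}=\w{b}{a}\in\interval{1}{2}$ for all distinct $a,b\in\Sigma\cup\{\emptystring\}$ (so in particular $w$ is normalized), and that all values of $w$ are rationals sharing a common $\Oh(\log n)$-bit integer denominator $d$. Throughout I would abbreviate $\Sigma_{XY}\coloneqq\Sigma_X\cup\Sigma_Y\cup\{\emptystring\}$ and note $\Sigma_{XY}\subseteq\Sigma\cup\{\emptystring\}$, so that every statement above applies verbatim to the restriction of $w$ to $\Sigma_{XY}$; this restriction is exactly what the second case of the definition of $\hat w$ invokes.

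First I would dispatch the three routine properties together. Inspecting the definition, every value $\hw{a}{b}$ equals $0$, or $1$, or $\w{a}{b}$ with $a,b\in\Sigma_{XY}$; since the values of $w$ lie in $\{0\}\cup\interval{1}{2}$, this shows all values of $\hat w$ lie in $\interval{0}{2}$, and since $0=0/d$, $1=d/d$, and each $\w{a}{b}$ has the form $k/d$ for an integer $k$, the $\Oh(\log n)$-bit integer $d$ serves as a common denominator for $\hat w$ as well. Normalization is immediate: $\hw{a}{a}=0$ by the first case, and for $a\ne b$ either $\hw{a}{b}=\w{a}{b}\ge 1$ by normalization of $w$, or $\hw{a}{b}=1$. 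Symmetry follows because the three conditions ``$a=b$'', ``$a,b\in\Sigma_{XY}$'', and their complement are each symmetric in $a$ and $b$, the constants $0$ and $1$ are symmetric, and $w$ restricted to $\Sigma_{XY}$ is symmetric.

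The only part with any content is the triangle inequality $\hw{a}{b}\le\hw{a}{c}+\hw{c}{b}$, and even there the argument is a short case split rather than a genuine obstacle. If $a=b$, the left-hand side is $0$ and we are done since $\hat w\ge 0$. Otherwise $\hw{a}{b}\le 2$ (by the range bound established above); if $c\notin\{a,b\}$, then $\hw{a}{c}\ge 1$ and $\hw{c}{b}\ge 1$ by normalization, so their sum is at least $2\ge\hw{a}{b}$, while if $c$ equals $a$ or $b$, one of the two summands is $0$ and the other equals $\hw{a}{b}$, giving equality. The key observation making this work — the single thing I would flag as the ``idea'' of the proof — is that all nonzero values of $\hat w$ lie in the narrow band $\interval{1}{2}$, so any two of them sum to at least any single one; normalization together with the upper bound $2$ is precisely what is needed, and both were secured in the previous step. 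Consequently there is no hard part: the lemma is pure bookkeeping against the structure \cref{thm:batched} imposes on $w$.
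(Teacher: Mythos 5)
Your proof is correct and follows essentially the same route as the paper's: you read off the range $\interval{0}{2}$, the common denominator, normalization, and symmetry directly from the case definition of $\hat{w}$ together with the guarantees \cref{thm:batched} places on $w$, and then dispatch the triangle inequality by observing that every nonzero value of $\hat{w}$ lies in $\interval{1}{2}$, so $\hw{a}{c}+\hw{c}{b}\ge 1+1=2\ge\hw{a}{b}$ when $a,b,c$ are distinct, with the degenerate cases trivial. The only cosmetic difference is that you split the degenerate case into ``$a=b$'' versus ``$c\in\{a,b\}$'' whereas the paper lumps these as ``$a,b,c$ not distinct''; the substance is the same.
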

\begin{proof}
    By the conditions of \cref{thm:batched}, the weight function $w$ is normalized and
    symmetric, and all values of \(w\) are rational numbers in  $\interval{0}{2}$
    with a common $\Oh(\log n)$-bit denominator.

    Since all values $\hw{a}{b}$ are either equal to $0$ (if $a=b$), equal to $1$, or
    equal to $\w{a}{b}$,  the function $\hat{w}$ inherits all of the desired properties
    from \(w\).

    The triangle inequality holds trivially if the three characters $a,b,c$ are not distinct.
    If they are distinct, then $\hw{a}{b} \le 2 = 1 + 1 \le \hw{a}{c} + \hw{c}{b}$ holds
    by the construction of $\hat{w}$ and the conditions on $w$ listed in
    \cref{thm:batched}.
\end{proof}

As a next intermediate step, we define strings that lie in-between two consecutive strings
\(X_i\) and \(X_{i+1}\) in the batch. These in-between strings turn out to be useful as
the can be aligned to either \(X_i\) or \(X_{i+1}\) for roughly the same cost.

Formally, for $i\in \fragmentco{1}{m}$, we define $\Xb_i$ to be the string obtained from
$X_i$ by substituting with $\bot$ exactly $h$ characters
including all characters that contribute to $\hd(X_i,X_{i+1})$.
Similarly, we define the strings $\Xb_0$ and $\Xb_m$, which we obtain from $X_1$ and $X_m$, respectively,
by substituting with $\bot$ any $h$ characters.

Finally, also define the utility strings $U=u_1\cdots u_r$ and $V=v_1\ldots v_r$.
We are ready to combine the strings of the batch into a single pair of strings. We set
\[
    \hX \coloneqq \lozenge X_1 \lozenge \cdot \bigodot_{i=2}^{m} \left(U Y V \lozenge X_i \lozenge \right),\quad
    \hY \coloneqq \Xb_{0}\cdot \bigodot_{i=1}^{m} \left(U\lozenge Y \lozenge V
    \Xb_i\right),\quad\text{and}\quad \hk \coloneqq (m-1)(h+4)+ 2r + 2x + k.
\]

As a key lemma, we show that we can use the weighted edit distance of the
strings \(\hX\) and \(\hY\) to recover the minimum edit distance of any string \(X_i\) of
the batch to \(Y\).

\begin{lemma}\label{lem:equiv}
    We have $\min_{i=1}^m \wed(X_i,Y)\le k$ if and only if $\hwed(\hX,\hY) \le \hk$.
\end{lemma}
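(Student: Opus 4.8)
The plan is to establish the exact identity
\[
  \hwed(\hX,\hY)\;=\;(m-1)(h+4)+2r+2x+\min_{i=1}^{m}\wed(X_i,Y),
\]
which implies the lemma immediately, since $\hk=(m-1)(h+4)+2r+2x+k$ and therefore $\hwed(\hX,\hY)\le\hk$ if and only if $\min_{i=1}^{m}\wed(X_i,Y)\le k$. I would begin by collecting the elementary ``grid'' facts about $\hat w$. Because $\bot$ is a fresh symbol with $\hw{a}{\bot}=\hw{\bot}{a}=1$ for every $a\in\Sigma$, each $\Xb_i$ differs from $X_i$ (and from $X_{i+1}$; and $\Xb_0$ from $X_1$; and $\Xb_m$ from $X_m$) in exactly $h$ positions, all carrying $\bot$; since $\hat w$ is normalized, a short accounting over the diagonal and insertion edges of an alignment then gives $\wed(X_i,\Xb_i)=\wed(\Xb_i,X_{i+1})=\wed(\Xb_0,X_1)=\wed(\Xb_m,X_m)=h$. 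I would also record that $\wed(U,U)=\wed(V,V)=0$, that deleting or inserting a $\lozenge$ or a $\Sigma_X$- or $\bot$-character costs $1$ while a $\Sigma_Y$-character costs $2$ (so inserting $\Xb_i$ costs $x$ and inserting $U$ or $V$ costs $r$), that $\lozenge$, the $u_j$'s, the $v_j$'s, and the characters of $\Sigma_X\cup\Sigma_Y$ are pairwise distinct, so aligning one to another always costs at least $1$, and finally that $|\hY|-|\hX|=2r+x+y$.

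For the upper bound ``$\le$'' I would fix an index $i^\star$ attaining $\min_i\wed(X_i,Y)$ and write down the blockwise alignment sketched in the introduction: the $X_j$-slot of $\hX$ is aligned to $\Xb_{j-1}$ for $j<i^\star$ and to $\Xb_j$ for $j>i^\star$, while $X_{i^\star}$ is aligned (by an optimal alignment of cost $\wed(X_{i^\star},Y)$) to the $Y$ inside the $i^\star$-th ``$U\lozenge Y\lozenge V\Xb$''-block of $\hY$; the $m-1$ copies of $U$ and of $V$ in $\hX$ are matched identically to $m-1$ of the $m$ separators of $\hY$, and the $m-1$ copies of $Y$ in $\hX$ are matched to the $m-1$ copies of $Y$ in $\hY$ other than the one used by $X_{i^\star}$; the two $\lozenge$'s flanking $X_{i^\star}$ in $\hX$ are matched to the two $\lozenge$'s flanking that $Y$ in $\hY$, the remaining $2(m-1)$ $\lozenge$'s of $\hX$ are deleted, and the remaining $2(m-1)$ $\lozenge$'s of $\hY$ together with $\Xb_{i^\star-1}$, $\Xb_{i^\star}$, and the $U$ and $V$ of $\hY$'s $i^\star$-th block are inserted. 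Summing the contributions gives $(m-1)h+4(m-1)+2r+2x+\wed(X_{i^\star},Y)$, the claimed value; I would also check the boundary conventions $i^\star=1$ and $i^\star=m$, where the two inserted $\Xb$'s are still $\Xb_{i^\star-1}$ and $\Xb_{i^\star}$ and the count is unaffected.

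For the lower bound ``$\ge$'' I would take any alignment $\A:\hX\onto\hY$ with $\hwed_{\A}(\hX,\hY)\le\hk$ and first exploit the size of $r$. Since $|\hY|-|\hX|=2r+x+y$, the alignment has at least $2r+x+y$ insertions, so if $S$ is its total substitution weight and $d$ its number of deletions, then $S+2d\le\hwed_{\A}(\hX,\hY)-(2r+x+y)$; using $\hk=(m-1)(h+4)+2r+2x+k$ and the hypothesis $k<2y-x+1$ from \cref{thm:batched}, this is strictly less than $r=(m-1)(h+4)+x+2y+1$. Hence $\A$ deletes fewer than $r/2$ characters and uses fewer than $r$ substitutions, so it cannot ``waste'' a whole separator. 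Combining this budget bound with monotonicity of alignments and the fact that each separator has length $r$ and reads $u_1,\dots,u_r$ (resp.\ $v_1,\dots,v_r$) in this order, I would argue that $\A$ must align the $j$-th $U$-block of $\hX$ identically onto a single $U$-block of $\hY$ and the $j$-th $V$-block identically onto a single $V$-block, with both assignments strictly increasing in $j$: any deviation would force covering an $\Omega(r)$-length stretch of $\hY$ by fewer than $r$ mismatched characters, i.e.\ $\Omega(r)$ extra insertions or substitutions, exceeding the budget. Given this rigidity, I would use \cref{fct:split-alignment} (and the triangle inequality of $\hat w$, which holds by \cref{fct:hw:triangle}) to split $\A$ at the separator boundaries into independent sub-alignments on the ``slots''; a counting argument---$\hY$ has exactly one more $U$-block, one more $V$-block, one more $\Xb$-slot, and one more $Y$-slot than $\hX$---then forces exactly one $X_{i^\star}$ to be aligned to a $Y$-slot, every other $X_j$ to a neighbouring $\Xb$-slot, two $\Xb$-slots plus one $U$ and one $V$ to be inserted, and the $2(m-1)$ surplus $\lozenge$'s on each side to be edited. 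Plugging the building-block equalities of the first paragraph into the costs of these sub-alignments yields $\hwed_{\A}(\hX,\hY)\ge(m-1)(h+4)+2r+2x+\wed(X_{i^\star},Y)$, so $\wed(X_{i^\star},Y)\le k$; this completes ``$\ge$'' and the lemma.

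The main obstacle is the separator-rigidity step: ruling out every way an alignment of cost $\le\hk$ could fail to map each $\hX$-separator identically onto one $\hY$-separator (splitting a separator across two $\hY$-blocks, absorbing separator characters into $Y$- or $\Xb$-material, or permuting the separator order). The leverage is that $r$ has deliberately been chosen larger than the entire slack budget $\hwed_{\A}(\hX,\hY)-(2r+x+y)<r$, so any such deviation is immediately too expensive; turning this into a proof requires a careful but routine case analysis of how a single length-$r$ separator can sit relative to the long ``$\lozenge Y\lozenge V\Xb$'' stretches of $\hY$. Everything after the block structure is in place is arithmetic: the identity $(m-1)h+4(m-1)+2r+2x+k=\hk$ is forced by the definitions of $r$ and $\hk$.
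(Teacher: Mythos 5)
Your lower-bound argument has a genuine gap. After establishing separator rigidity — each $U$- and $V$-block of $\hX$ matched identically onto a block of $\hY$, in increasing order — you claim the counting argument forces exactly one $X_{i^\star}$ to be aligned to a $Y$-slot. It does not. The separators of $\hY$ read $U_1V_1U_2V_2\cdots U_mV_m$; removing one $U$ and one $V$ so that the remainder equals $(UV)^{m-1}$ can be done in two inequivalent ways: the removed pair is $U_iV_i$ (same block) or $V_iU_{i+1}$ (adjacent blocks). Your analysis covers only the first. In the second case, no $X_j$-slot of $\hX$ is aligned to a $Y$-slot at all; instead the $i$-th $Y$-slot of $\hX$ is aligned to the entire stretch $\lozenge Y\lozenge V\Xb_iU\lozenge Y\lozenge$ of $\hY$, and every $X_j$-slot maps to a single $\Xb$. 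The budget bound $S+2d<r$ that you extract from $k<2y-x+1$ does not rule this structure out; what rules it out (and what the paper's Case~2 does) is a separate cost computation: using $\hwed(Y,\lozenge Y\lozenge V\Xb_iU\lozenge Y\lozenge)=4+2r+x+2y$ together with the $(h+2)$- and $2$-unit building blocks, the sub-alignment costs force $\hwed_\A(\hX,\hY)\ge m(h+4)+2r+x+2y>\hk$, a contradiction, precisely because $h+4+2y-x>2y-x+1>k$.

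This overlooked structure is also why the exact identity $\hwed(\hX,\hY)=(m-1)(h+4)+2r+2x+\min_i\wed(X_i,Y)$ that you set out to prove is false in general: the adjacent-block alignment above always has cost at most $m(h+4)+2r+x+2y$, which is strictly less than your right-hand side whenever $\min_i\wed(X_i,Y)>h+4+2y-x$. The lemma's threshold equivalence still holds because $k<2y-x+1$ keeps the second case's cost above $\hk$, but you need the explicit case distinction to prove it. The remainder of your plan — the building-block costs, the use of \cref{fct:hw:triangle} and \cref{fct:split-alignment}, the budget argument establishing separator rigidity — is sound and closely mirrors the paper's approach.
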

\begin{proof}
    We start with auxiliary claims that formalize our intuition about the strings
    \(\Xb_i\).

    \begin{claim}\label{clm:x}
        For every $i\in \fragment{1}{m}$, we have
        $\hwed(\lozenge X_i \lozenge,\Xb_{i-1})=\hwed(\lozenge X_i \lozenge,\Xb_{i})=h+2$.
    \end{claim}
    \begin{claimproof}
        By construction, we can obtain $\Xb_i$ by substituting with $\bot$ exactly $h$
        characters of $X_i$.
        Since $\hw{a}{\bot} = 1$ for every $a\in \Sigma_X$, this means that $\hwed(X_i,\Xb_i)\le h$.
        From $\hw{\lozenge}{\emptystring}=1$, we conclude that $\hwed(\lozenge X_i \lozenge, \Xb_i) \le h+2$.

        Next, observe that $\lozenge X_i \lozenge$ contains two occurrences of $\lozenge$
        and $x$ occurrences of characters in $\Sigma_X$, whereas $\Xb_i$ contains $h$
        occurrences of $\bot$ and $x-h$ occurrences of characters in $\Sigma_X$.
        Thus, at most $x-h$ characters of $\lozenge X_i \lozenge$ can be matched;
        the remaining $h+2$ characters much be deleted or substituted, which costs at
        least one unit each.
        Consequently, we have $\hwed(\lozenge X_i \lozenge, \Xb_i) \ge h+2$.

        The argument regarding $\hwed(X_i,\Xb_{i-1})$ is exactly the same.
    \end{claimproof}

    \begin{claim}\label{clm:y}
        For every $i\in \fragment{0}{m}$,
        we have $\hwed(Y, \lozenge Y\lozenge V \Xb_i U \lozenge Y \lozenge)=4+2r+x+2y$
        and $\hwed(Y, \lozenge Y \lozenge) = 2$.
    \end{claim}
    \begin{claimproof}
        By \cref{fct:hw:triangle,fct:substring}, we have $\hwed(Y,\lozenge Y \lozenge) = \hwed(\emptystring,\lozenge\lozenge)
        = 2\hw{\emptystring}{\lozenge} = 2$.
        The same argument yields
        \begin{align*}
            \hwed(Y, \lozenge Y\lozenge V \Xb_i U \lozenge Y \lozenge)
            &=\hwed(\emptystring,\lozenge\lozenge V \Xb_i U \lozenge Y \lozenge)\\
            &= 4\hw{\emptystring}{\lozenge}+\hwed(\emptystring,V)+\hwed(\emptystring,U)
            + \hwed(\emptystring,\Xb_i)+\hwed(\emptystring,Y)\\
            &= 4 + 2r + x + 2y.
        \end{align*}
        As for the last equality, we utilize the fact that $\hw{\emptystring}{a} = \w{\emptystring}{a} = 2$ if $a\in \Sigma_Y$
        and $\hw{\emptystring}{a}=1$ if $a\in\hat{\Sigma}\sm \Sigma_Y$; this follows from
        the definition of $\hat{w}$ and the properties of $w$ listed in
        \cref{thm:batched}; in particular, $\w{\emptystring}{a}=\w{a}{\emptystring}$ if
        $a\in \Sigma_X$.
    \end{claimproof}

    \begin{claim}\label{clm:kk}
        For every $i\in \fragment{1}{m}$, we have
        $\hwed(\lozenge X_i \lozenge, \Xb_{i-1} U\lozenge  Y \lozenge V \Xb_i) \le 2r+2x+k$ if and only if $\wed(X_i,Y)\le k$.
    \end{claim}
    \begin{claimproof}
        As for the $\Leftarrow$ implication, observe that  if $\wed(X_i,Y)\le k$, then
        \begin{align*}
            &\hwed(\lozenge X_i \lozenge, \Xb_{i-1} U\lozenge  Y \lozenge V \Xb_i)\\
            &\quad\le \hwed(\emptystring,\Xb_{i-1}U) + \hw{\lozenge}{\lozenge} +
            \hwed(X_i,Y)+\hw{\lozenge}{\lozenge}+\hwed(\emptystring,V\Xb_i)\\
            &\quad= x+r + \wed(X_i,Y) + r+x \le 2r+2x+k.
        \end{align*}
        As for intermediate equality, we utilize the fact that $\hw{\emptystring}{a}=1$ if
        $a\in\hat{\Sigma}\sm \Sigma_Y$; this follows from the definition of $\hat{w}$ and
        the properties of $w$ listed in \cref{thm:batched}; in particular,
        $\w{\emptystring}{a}=\w{a}{\emptystring}$ if $a\in \Sigma_X$.
        Moreover, $\hwed(X_i,Y)=\wed(X_i,Y)$ because $\hat{w}$ and $w$ coincide on
        $\Sigma_X\cup\Sigma_Y\cup\{\emptystring\}$.

        Now, suppose that $\hwed(\lozenge X_i \lozenge, \Xb_{i-1} U\lozenge  Y \lozenge V \Xb_i) \le 2r+2x+k$
        and consider an underlying optimal alignment $\A$.
        Suppose that \(\A\) inserts precisely $q$ characters of $Y$.
        The overall number of insertions is at least
        \[|\Xb_{i-1} U\lozenge  Y \lozenge V \Xb_i|-|\lozenge X_i \lozenge|
        = 2r + x + y.\]
        Moreover, since $\hw{\epsilon}{a}=2$ if $a\in \Sigma_Y$, the total cost of these insertions
        is at least $q+2r+x+y$.

        Since $\Sigma_Y$ is disjoint with $\Sigma_X\cup\{\lozenge\}$, the remaining $y-q$ characters of $Y$
        must be be involved in substitutions, each costing at least $1$.
        Overall, the total cost of all insertions and of substitutions involving characters of $Y$
        is at least $2r+x+2y$.

        However the cost of $\A$ is at most \[
            2r+2x+k < 2r+2x+2y-x+1 = 2r+x+2y+1.
        \]
        Thus, the alignment cannot make any edit other than those listed above.
        In particular, none of the characters of $\lozenge X_i \lozenge$
        can be deleted or substituted for a character other than a character of $Y$.
        In particular, the both $\lozenge$s in $\lozenge X_i \lozenge$ must be aligned against
        characters of $\lozenge Y \lozenge$,
        which means that all the characters of $\Xb_{i-1} U$ and $V \Xb_i$ need to be inserted.
        Consequently,
        \[
            \hwed_\A(\lozenge X_i \lozenge, \Xb_{i-1} U\lozenge  Y \lozenge V \Xb_i) \ge 2r+2x + \hwed(\lozenge X_i \lozenge,\lozenge  Y \lozenge).
        \]
        Due to the assumption on the cost of $\A$,
        we conclude that $\hwed(\lozenge X_i \lozenge,\lozenge  Y \lozenge)\le k$.
        Since \cref{fct:hw:triangle,cor:greedy} imply $\hwed(\lozenge X_i \lozenge,\lozenge  Y \lozenge)=\hwed(X_i,Y)$,
        this yields $\wed(X_i,Y)=\hwed(X_i,Y)\le k$.
    \end{claimproof}

    We are now ready to proceed with the proof of \cref{lem:equiv}.
    First, suppose that $\wed(X_i,Y)\le k$ for some $i\in \fragment{1}{m}$.
    In that case, we use \cref{clm:x,clm:y,clm:kk} as follows:
    \begin{align*}
        \hwed(\hX,\hY) &\le \sum_{j=1}^{i-1} \left(\hwed(\lozenge X_j \lozenge, \Xb_{j-1}) + \hwed(U Y V, U \lozenge Y \lozenge V)\right) + \hwed(\lozenge X_i \lozenge,\Xb_{i-1}U \lozenge Y \lozenge V \Xb_{i}) \\ &\pushright{+ \sum_{j=i+1}^m \left(\hwed(U Y V, U\lozenge Y \lozenge V) + \hwed(\lozenge X_j \lozenge, \Xb_j)\right)}\\
                       &\le \sum_{j=1}^{i-1} \left(\hwed(\lozenge X_j \lozenge, \Xb_{j-1}) + \hwed(Y, \lozenge Y \lozenge)\right) + 2r + 2x + k \\ &\pushright{+ \sum_{j=i+1}^m \left(\hwed(Y, \lozenge Y \lozenge) + \hwed(\lozenge X_j \lozenge, \Xb_j)\right)}\\
                       & = \sum_{j=1}^{i-1} \left(h+2 + 2\right) +  2r + 2x + k + \sum_{j=i+1}^m \left(2+h+2\right) = (m-1)(h+4) + 2r + 2x + k = \hk.
    \end{align*}

    As for the converse implication, suppose that $\hwed(\hX,\hY) \le \hk$ and consider
    the underlying optimal alignment.
    Observe that we have\[
        \hk = (m-1)(h+4)+ 2r + 2x + k < 2r + (m-1)(h+4) + 2x + 2y - x + 1 = 3r.
    \]
    Thus, by the pigeonhole principle, there is a $q\in \fragment{1}{r}$ such that at most
    two edits involve the occurrences of $u_q$ and $v_q$ in $\hY$.

    Observe that these characters form a subsequence of $\hY$ equal to $(u_q v_q)^{m}$,
    whereas the subsequence of $\hX$ consisting of the occurrences of $u_q$ and $v_q$ is
    equal to $(u_qv_q)^{m-1}$. Thus, all the occurrences of $u_q$ and $v_q$ in $\hX$ must
    be matched exactly. The only way to achieve this is by leaving out two \emph{adjacent}
    characters of $(u_q v_q)^{m}$. By \cref{cor:greedy,fct:hw:triangle}, whenever an
    occurrence of $u_q$ or $v_q$ in $\hX$ is matched exactly, we can assume without loss
    of generality that the entire occurrences of $U$ or $V$, respectively, is matched
    exactly.
    Thus, it suffices to consider the following two cases:

    \subparagraph*{Case 1.} For some $i\in \fragment{1}{m}$, all but the $i$-th leftmost
    occurrences of $U$ and $V$ in $\hY$ are matched exactly.
    In this case, we use \cref{clm:x,clm:y} as follows:
    \begin{align*}
        \hwed_\A(\hX,\hY) &\le \sum_{j=1}^{i-1} \left(\hwed(\lozenge X_j \lozenge, \Xb_{j-1}) + \hwed(Y, \lozenge Y \lozenge)\right) + \hwed(\lozenge X_i \lozenge,\Xb_{i-1}U \lozenge Y \lozenge V \Xb_{i}) \\ &\pushright{+ \sum_{j=i+1}^m \left(\hwed(Y,\lozenge Y \lozenge) + \hwed(\lozenge X_j \lozenge, \Xb_j)\right)}\\
                          & = \sum_{j=1}^{i-1} \left(h+2 + 2\right) + \hwed(\lozenge X_i \lozenge,\Xb_{i-1}U \lozenge Y \lozenge V \Xb_{i}) + \sum_{j=i+1}^m \left(2+h+2\right) \\
                          & = (m-1)(h+4) + \hwed(\lozenge X_i \lozenge,\Xb_{i-1}U \lozenge Y \lozenge V \Xb_{i}).
    \end{align*}
    Since $\hwed_\A(\hX,\hY)\le \hk = (m-1)(h+4) + 2r + 2x + k$, we conclude that \[
        \hwed(\lozenge X_i \lozenge,\Xb_{i-1}U \lozenge Y \lozenge V \Xb_{i})\le 2r+2x+k.
    \]By \cref{clm:kk}, this implies $\wed(X_i,Y)\le k$ just as claimed.

    \subparagraph*{Case 2.} For some $i\in \fragmentco{1}{m}$, all but the $i$th leftmost
    occurrences of $U$ and the $(i+1)$-th leftmost occurrence of $V$ in $\hY$ are matched
    exactly. In this case, we use \cref{clm:x,clm:y} as follows:
    \begin{align*}
        \!\hwed_\A(\hX,\hY) &\le \sum_{j=1}^{i-1} \left(\hwed(\lozenge X_j \lozenge, \Xb_{j-1})+\hwed(Y,\lozenge Y \lozenge)\right) + \hwed(\lozenge X_i \lozenge, \Xb_{i-1}) + \hwed(Y, \lozenge Y \lozenge V \Xb_i U \lozenge Y \lozenge)\\
                            & \pushright{ +\hwed(\lozenge X_{i+1} \lozenge, \Xb_{i+1})+\sum_{j=i+2}^{m} \left(\hwed(Y,\lozenge Y \lozenge)+\hwed(\lozenge X_j \lozenge, \Xb_j)\right)}\\
                            & = \sum_{j=1}^{i-1}\left(h+2+2\right) + (h+2) + (4+2r+x+2y) + (h+2) + \sum_{j=i+2}^m (2+h+2) \\
                            & = (i-1)(h+4) + (h+4) + (2r+x+2y) + (h+4) + (m-i-1)(h+4) \\
                            & = m(h+4) + 2r + x + 2y \\
                            & < (m-1)(h+4) + 2r + 2x + 2y - x + 1 < (m-1)(h+4) + 2r + 2x+k = \hk.
    \end{align*}
    This contradicts the assumption that $\hwed_\A(\hX,\hY)\le \hk$, which concludes the proof.
\end{proof}

Having completed the analysis of our construction, we proceed with the proof of our main
hardness result.

\mthmlb
\begin{proof}
    For a proof by contradiction, suppose that there is an algorithm violating the
    theorem statement.
    Our goal is to derive an algorithm for the Batched Weighted Edit Distance problem
    violating \cref{thm:batched}
    with $\beta = ({1}/{\kappa}) -1 \in \interval{0}{1}$. The latter algorithm works as follows.

    Given an instance $(\Sigma, w, X_1,\ldots,X_m,Y,k)$ of the Batched Weighted Edit
    Distance problem (satisfying the conditions of \cref{thm:batched}),
    we construct an instance $(\hat{\Sigma},\hat{w}, \hX,\hY,\hk)$ of the bounded edit
    distance problem as defined above
    and solve it using the hypothetical algorithm.
    By \cref{lem:equiv}, the answer is valid for the original Batched Weighted Edit
    Distance problem.

    Let us analyze the running time, assuming that $x\le y \le n$, $m \le n^{\beta}$, and
    $h\le n^{1-\beta}$.
    Observe that we have $r = (m-1)(h+4)+x+2y+1 = \Oh(n)$, and thus
    \begin{align*}
        \left|\hat{\Sigma}\right| &= |\Sigma_X|+|\Sigma_Y| + 2r+2 \le x + (m-1)h + y + \Oh(n) = \Oh(n),\\
        \hk &= (m-1)(h+4)+2r+2x+k < (m-1)(h+4)+2r+x+2y+1 = \Oh(n),\\
        |\hX| &= (1+x+1)+(m-1)(r+y+r+1+x+1) = \Oh(nm) = \Oh(n^{1+\beta}),\\
        |\hY| &= x + m(r+1+y+1+r+x) = \Oh(nm) = \Oh(n^{1+\beta}).
    \end{align*}
    Hence, there is $N=\Theta(n^{1+\beta})$ such that $|\hX|,|\hY|\le N$
    and $\hk,|\hat\Sigma| \le N^{1/(1+\beta)} = N^{\kappa}$.

    The hypothetical algorithm works in time \[\Oh\left(N^{0.5 + 1.5\kappa -
        \smallconst}\right)=
    \Oh\left(n^{(1+\beta)(0.5+1.5\kappa-\smallconst)}\right)=\Oh\left(n^{0.5+0.5\beta +
1.5-(1+\beta)\smallconst}\right) = \Oh\left(n^{2+0.5\beta - \smallconst}\right)\]
    This dominates the reduction cost, which is $\Oh(n^{1+\beta})$ time assuming (without
    loss of generality) that $\smallconst < 0.5 \le 1-0.5\beta$.

    It remains to argue about the restrictions in the theorem statement. We have already
    noted that $|\hat\Sigma|\le N^{\kappa}$.
    The properties of the weight function $\hat{w}$ follow from \cref{fct:hw:triangle};
    the common denominator of all the values of $\hat{w}$ is $\Oh(\log n)=\Oh(\log N)$
    because $N = \Theta(n^{1+\beta})$ for $\beta\ge 0$.
\end{proof}

\footnotesize
\bibliographystyle{alphaurl}
\bibliography{main}

\end{document}